\newtheorem{thm}{Theorem}
\newtheorem{lemma}{Lemma}
\theoremstyle{definition}
\newtheorem{conjecture}{Conjecture}
\newcommand{\eqnref}[1]{Eq.~\eqref{#1}}
\newcommand{\QCA}{\mathrm{QCA}}
\newcommand{\inv}{\mathrm{inv}}
\newcommand{\ZZ}{\mathbb{Z}}
\newcommand{\QQ}{\mathbb{Q}}
\newcommand{\RR}{\mathbb{R}}
\newcommand{\BG}{{\mathrm{B}G}}
\newcommand{\bdy}{{\mathrm{bdy}}}
\newcommand{\id}{{\mathrm{id}}}
\DeclareMathOperator{\LAut}{LAut}
\DeclareMathOperator{\im}{im}
\newcommand{\blend}{\sim}
\newcommand{\BlendClass}[1]{[#1]}
\newcommand{\TwoWayBlend}{\leftrightarrow}
\newcommand{\BlendVia}[1]{\xrightarrow{#1}}
\newcommand{\InverseBlendVia}[1]{\xrightarrow{\overline{#1}}}
\newcommand{\repU}{\mathcal{U}}
\newcommand{\resU}{\widetilde{\repU}}
\newcommand{\bdyU}{\mathcal{V}}
\newcommand{\diffU}{\mathcal{W}}
\newcommand{\repV}{\mathcal{V}}
\newcommand{\repW}{\mathcal{W}}
\newcommand{\onsiteU}{\repU^\bullet}
\newcommand{\onsiteV}{\repV^\bullet}
\newcommand{\onsiteW}{\repW^\bullet}
\newcommand{\grpH}[3]{\mathcal{H}^{#1}[#2,#3]}
\newcommand{\grpHtwisted}[4]{\mathcal{H}^{#1}_{#2}[#3,#4]}
\newcommand{\singH}[3]{H^{#1}[#2,#3]}
\newcommand{\QFTAnom}[2]{\mathrm{Anom}^{\mathrm{QFT}}_{#1}(#2)}
\newcommand{\LatAnom}[2]{\mathrm{Anom}^{\mathrm{lattice}}_{#1}(#2)}
\newcommand{\LatGravAnom}[1]{\mathrm{Anom}^{\mathrm{lattice}}_{#1}}
\newcommand{\QFTGravAnom}[1]{\mathrm{Anom}^{\mathrm{QFT}}_{#1}}
\newcommand{\CLatAnom}[2]{\mathrm{CAnom}^{\mathrm{lattice}}_{#1}(#2)}
\newcommand{\QCAblendclass}[2]{Q_{#1}^{#2}}
\newcommand{\InvModelClass}[2]{P_{#1}^{#2}}
\newcommand{\invblendclass}[2]{I_{#1}^{#2}}
\theoremstyle{definition}
\newcommand{\unit}{\mathbb{I}}
\begin{document}
\title{Anomalies of global symmetries on the lattice}
\author{Yi-Ting Tu}
\email{yttu@umd.edu}
\affiliation{Condensed Matter Theory Center and Joint Quantum Institute, Department of Physics, University of Maryland, College Park, Maryland 20742, USA}
\author{David M. Long}
\email{dmlong@stanford.edu}
\affiliation{Condensed Matter Theory Center and Joint Quantum Institute, Department of Physics, University of Maryland, College Park, Maryland 20742, USA}
\affiliation{Department of Physics, Stanford University, Stanford, CA 94305, USA}
\author{Dominic V. Else}
\email{delse@perimeterinstitute.ca}
\affiliation{Perimeter Institute for Theoretical Physics, Waterloo, Ontario N2L 2Y5, Canada}

\begin{abstract}
    \emph{'t Hooft anomalies} of global symmetries play a fundamental role in quantum many-body systems and quantum field theory (QFT).
    In this paper, we make a systematic analysis of \emph{lattice anomalies}---the analog of 't Hooft anomalies in lattice systems---for which we give a precise definition.
    Crucially, a lattice anomaly is not a feature of a specific Hamiltonian, but rather is a topological invariant of the symmetry action. 
    The controlled setting of lattice systems allows for a systematic and rigorous treatment of lattice anomalies, shorn of the technical challenges of QFT. 
    We find that lattice anomalies  reproduce the expected properties of QFT anomalies in many ways, but also have crucial differences.
    In particular, lattice anomalies and QFT anomalies are not, contrary to a common expectation, in one-to-one correspondence, and there can be non-trivial anomalies on the lattice that are \emph{infrared (IR) trivial}: they admit symmetric trivial gapped ground states, and map to trivial QFT anomalies at low energies. 
    Nevertheless, we show that lattice anomalies (including IR-trivial ones) have a number of interesting consequences in their own right, including connections to commuting projector models, phases of many-body localized (MBL) systems, and quantum cellular automata (QCA). 
    We make substantial progress on the classification of lattice anomalies and develop several theoretical tools to characterize their consequences on symmetric Hamiltonians. 
    Our work places symmetries of quantum many-body lattice systems into a unified theoretical framework and may also suggest new perspectives on symmetries in QFT.
\end{abstract}

\maketitle

\section{Introduction}
\label{sec:introduction}

Symmetry plays a key role in the physics of both quantum field theories (QFTs) and quantum many-body lattice systems~\cite{Wen2007book}.
One of the most important properties that a global symmetry can have is a \emph{'t Hooft anomaly}, henceforth just \emph{anomaly} ~\cite{Hooft1980anomaly,Chen_1106,Kapustin_1403,Else2014,TongLecturesGauge,Arouca2022anomalylectures}. 
One way to think about an 
anomaly is as an obstruction to gauging the symmetry, such that it can act locally rather than globally. 
However, 
anomalies have many other wide-reaching consequences. 
In quantum field theory, 
anomalies can strongly constrain renormalization group (RG) flows~\cite{Hooft1980anomaly}. 
In quantum many-body systems, 
anomalies are responsible for protecting the surface states of symmetry-protected topological (SPT) phases~\cite{Senthil_1405}.

Despite their importance, 
anomalies remain somewhat mysterious. The key principle of an 
anomaly is that it should not viewed as a feature of a specific lattice Hamiltonian or QFT action. Rather, it is a universal property of the way the symmetry acts. 
The 
anomaly, as it were, is the stage in a theater, and the dynamics of the system is the specific play that gets performed on that stage.
However, existing field-theoretic treatments do not always make this fundamental property manifest.
Frequently, one starts with a specific QFT action and then derives the anomaly.

In this paper, we propose a precise definition for the 
anomaly of a global symmetry on the lattice.
This definition encapsulates the idea that an anomaly is a topological invariant of a symmetry action, and conversely that, upon fixing the symmetry group $G$, anomalies should be in one-to-one correspondence with topological classes of $G$-actions.
Our definition does not appeal to field theory nor background gauge fields, and is  explicitly phrased in terms of the symmetry action rather than any given Hamiltonian.
We also make considerable progress towards the general classification of such lattice anomalies, and the characterization of their physical consequences. Past works in this direction include Refs.~\cite{Chen_1106,Else2014,Kapustin_2401,Cheng2023LatticeAnomalyMatching,Zhang2023entanglers,Liu_2405,Seifnashri2025disentangling,Kawagoe_2507}.

One motivation for our work is as a step towards formulating a mathematically precise notion of 
anomaly for QFTs.
Indeed, there is a significant body of folklore on the nature and consequences of anomalies.
Lattice systems offer a well-controlled and mathematically precise setting to determine which conventional expectations regarding anomalies are actually logical consequences of a set definition, and which represent additional assumptions.

Aside from the motivation from QFT, lattice anomalies are of great interest in their own right.
For example, the Lieb-Schultz-Mattis (LSM) theorem~\cite{Lieb1961LSM,Tasaki2022LSMreview}---which forbids a topologically trivial symmetric ground state in a spin system with $\mathrm{SO}(3)$ and translation symmetry and half-integer spin per unit cell---can be viewed as a manifestation of the anomalous action of the symmetry group~\cite{Cheng2016Tranlsation,Cho2017LSMAnomaly,Jian2018LSMSPT,Cheng2023LatticeAnomalyMatching}. 

One of the main observations we will make is that the classification of lattice anomalies does not exactly correspond to the classification of anomalies in QFTs. 
There is a map from lattice to QFT anomalies: for a lattice system with a global symmetry, one should be able to construct a Hamiltonian whose low-energy, long-wavelength effective theory is governed by a QFT which inherits some anomaly from the symmetry of the lattice system. 
One of the contributions of this paper is to give a more precise characterization of this map. 
However, we find that there are lattice anomalies that map to \emph{trivial} QFT anomalies, and there are also QFT anomalies that cannot be realized via lattice anomalies~\cite{Kapustin_2401}. 
Lattice anomalies will also violate other commonly-held assumptions about QFT anomalies---for example, in QFT, an obstruction to gauging is believed to be equivalent to inability to have a trivial gapped ground state, either of which could be used as a definition of non-trivial anomaly. 
On the lattice, it turns out that these are not equivalent.

Our work also has relevance to topological phases of many-body-localized (MBL) systems~\cite{Anderson1958,Gornyi2005,Basko2006,Oganesyan2007,Schreiber2015,Smith2016,Sierant2024review}, which have a different structure to zero-temperature topological phases of quantum many-body systems~\cite{Huse2013order,Pekker2014eigord,Parameswaran2018MBLsymm,Chan2020SPT,Wahl2020mblto,Long2024}. 
In particular, while the boundary of invertible zero-temperature topological phases are characterized by QFT 
anomalies, we will show that the boundaries of invertible MBL phases are characterized by lattice anomalies. 

We also address so-called ``gravitational'' anomalies~\cite{Alvarez1984GravAnomaly, Ryu_1010,Stone_1201,Wen_1303,Thorngren_1404}, which are 't Hooft anomalies that do not depend on any global symmetry. 
We put the word ``gravitational'' in quotation marks, because it is something of a misnomer, especially in lattice systems, where there is no meaningful way to couple to gravity. 
Nevertheless, we show that the concept of a lattice gravitational anomaly---i.e.\ a lattice anomaly that does not depend on a global symmetry---is perfectly well-defined.

\section{Summary of results}
\label{sec:summary}

For convenience, we give a robust summary of our important results in this section, including the precise definition of a lattice anomaly.

\subsection{What is a symmetry? What is an anomaly? A lattice perspective}
\label{subsec:what_is_a_symmetry}

In order to define anomalies of global symmetries in quantum many-body lattice systems, the first step is to define \emph{symmetry}. It is a fundamental principle in quantum mechanics that a symmetry group $G$ acts via a unitary (or anti-unitary) representation of $G$ on the Hilbert space of the system (in general it is allowed to be a projective representation). However, this definition does not encode the concept of \emph{locality}.

Instead, the symmetry should be defined by its action on \emph{local operators} of the system~\cite{Kapustin_2401,Seifnashri2025disentangling}.
Consider a lattice system in which the sites are labeled by a set $\Lambda$, and each site $s \in \Lambda$ carries a finite-dimensional Hilbert space $\mathcal{H}_s$. 
Thus, the Hilbert space associated with a finite subset $S \subseteq \Lambda$ of sites can be written as a tensor product
\begin{equation}\label{eqn:TensorHilbertSpace}
    \mathcal{H}_S = \bigotimes_{s \in S} \mathcal{H}_s.
\end{equation}
A \emph{local operator} $a$ is an operator that acts on a finite set $S$ of lattice sites; thus it can be represented as a linear operator on $\mathcal{H}_S$.

Local operators can be added together, multiplied with each other, multiplied by complex numbers, and we can take their adjoint \(a \mapsto a^\dagger\). (Mathematically, this means that local operators form a \emph{$*$-algebra}.) The key feature of a symmetry is that it should map local operators to local operators while preserving all this structure. Thus, we define an \emph{automorphism} of the algebra $A$ of local operators to be a linear map
\begin{equation}
    U : A \to A
\end{equation}
such that
\begin{equation}
    U[ab] = U[a] U[b],
    \quad\text{and}\quad
    U[a]^\dagger = U[a^\dagger],
\end{equation}
and such that $U$ has an inverse $U^{-1}$ satisfying the same conditions.

To gain some intuition for these requirements, consider the case where the system is finite, i.e. the set $\Lambda$ is finite, in which case $A$ is just the algebra of all linear operators acting on the finite-dimensional Hilbert space of the system, $\mathcal{H}_\Lambda$. Then one can show that any automorphism $U : A\to A$ can be expressed in terms of a unitary operator $u$ such that $U[a] = u a u^\dagger$, which agrees with our conventional understanding of symmetries. However, in this paper we will usually want to work directly in the thermodynamic limit, in which case $\Lambda$ is infinite and only the action of the symmetry on local operators, expressed via the automorphism $U$, is well-defined.

So far the only locality restriction we have placed on the symmetry is that it must map local operators to local operators. However, $U[a]$ could be supported on sites that are arbitrarily far away from the support of $a$. This would allow us to consider symmetries such as reflection symmetry, for example.
However, in the rest of this paper we will choose not to consider such symmetries and require that we map operators supported near a point $\mathbf{x}$ in space to operators supported near the same point $\mathbf{x}$ in space.

This restriction motivates a definition of \emph{local automorphism}. Let us assume that the set of lattice sites $\Lambda$ is embedded in $\mathbb{R}^d$ for some $d$. We say that an automorphism $U$ of the algebra of local operators is a local automorphism, also known as a \emph{quantum cellular automaton (QCA)} or \emph{locality-preserving unitary}~\cite{Gross2012QCA,Farrelly2020qca,Arrighi2019qcareview}, if there exists some finite $r$ such that for any finite set $S \subseteq \Lambda$, and for any $a$ that is supported on $S$, then $U[a]$ is supported on $S^{+r}$, where we have defined $S^{+r}$ to be all the lattice sites in $\Lambda$ whose Euclidean distance from $S$ is at most $r$. We call $r$ the \emph{range} of $U$.

We call a symmetry action by local automorphisms a \emph{local representation of $G$}, henceforth known as a \emph{$G$-rep}.
This is a generalization of the familiar concept of a finite-dimensional unitary (projective) representation as commonly used in quantum mechanics, and in fact it reduces to this for $d=0$. 
Precisely, a $G$-rep in $d$ spatial dimensions comprises a set $\Lambda \subseteq \mathbb{R}^d$ [for technical reasons, we will demand that every ball $B \subseteq \mathbb{R}^d$ of finite radius has finite intersection with $\Lambda$], an assignment of local Hilbert spaces $\mathcal{H}_s$ to each $s \in \Lambda$, and a group homomorphism
\begin{equation}\label{eqn:GrepDef}
    \repU : G \to \mathrm{LAut}(A),
\end{equation}
where $A$ is the algebra of local operators, and $\mathrm{LAut}(A)$ is the group of local automorphisms of $A$. 
(If $G$ is continuous, one should require this homomorphism to be continuous, but this would require defining a sensible topology on $\mathrm{LAut}(A)$, which we will not attempt to do in this paper.) 
Later, we will also consider algebras of local operators \(A\) that are more general than the tensor product algebras discussed so far, and continue to call homomorphisms as in \eqnref{eqn:GrepDef} \(G\)-reps.

Many of the symmetries that one is most used to considering in lattice systems (such as, for example, spin rotation symmetry) are \emph{on-site symmetries}, in which $\repU_g$ has range zero for each $g \in G$. In that case, one can show that $\repU$ just corresponds to acting with acting with a unitary projective representation of $G$ at each site. As we will see, such symmetries should be viewed as ``non-anomalous''. To obtain anomalous symmetries on the lattice, one must consider non-on-site symmetries.

Even if a $G$-rep is non-on-site, we will nevertheless say that it is \emph{internal} if there exists a finite $r$ such that $\mathcal{U}_g$ has range $\leq r$ for all $g \in G$. (If $G$ is finite, this condition is automatically satisfied.) 
An example of a non-internal symmetry would be lattice translation symmetry, with $G = \mathbb{Z}$. For most of the results in this paper it will not be necessary to restrict to internal symmetries (although as we mentioned earlier, crystalline symmetries other than translation symmetry, such as reflection or spatial rotation symmetry, are excluded as they do not act via local automorphisms).

Just as with finite-dimensional unitary representations, given two $G$-reps $\repU$ and $\repV$, there is a sensible notion of the tensor-product $\repU \otimes \repV$. 
(Note that the tensor-product rep will have a set of lattice sites that is the union of the two sets of lattice sites of the individual reps.)
Physically we can view this as a ``stacking''  operation.

We now introduce the notion of an \emph{anomalous} symmetry. We are inspired by the following observation that is believed to hold in QFTs: two QFTs with global symmetry $G$ have the same 't Hooft anomaly if and only if there exists a spatial interface between the two QFTs that also retains the global symmetry $G$.
For lattice systems, the philosophy of this paper is that we should express the anomaly purely in terms of the action of the symmetry, i.e.\ in terms of the $G$-reps defined above.

We can formulate this by introducing the concept of \emph{blend equivalence} of two $G$-reps. We say that a $G$-rep $\repV$ is a \emph{blend} from $\repU$ to $\repW$
if there exists a finite $\xi$ 
such that 
\begin{itemize}
    \item The local sites and local Hilbert spaces associated to $\repV$ are the same as for $\repU$ on $(-\infty,-\xi) \times \mathbb{R}^{d-1}$; and
    \item The local sites and local Hilbert spaces associated to $\repV$ are the same as for $\repW$ on $(\xi,\infty) \times \mathbb{R}^{d-1}$; and
    \item $\repV_g(a) = \repU_g(a)$
    for any local operator $a$ supported on $(-\infty,-\xi) \times \mathbb{R}^{d-1}$; and
    \item $\repV_g(b) = \repW_g(b)$
    for any local operator $b$ supported on $(\xi,\infty) \times \mathbb{R}^{d-1}$.
\end{itemize}
This is illustrated in \autoref{fig:blend_illustration}.
(More precisely, this is the definition that makes sense for \emph{internal} symmetries; for non-internal symmetries one should allow $\xi$ to depend on $g$.)

\begin{figure}
    \centering
    \includegraphics[width=\linewidth]{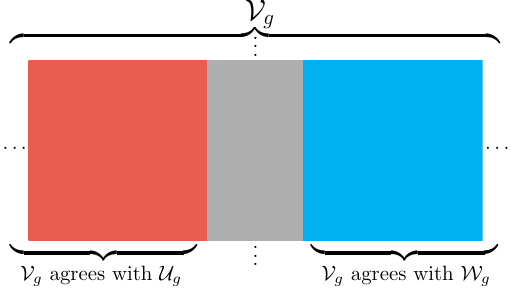}
    \caption{The \(G\)-rep \(\repV\) is a blend between \(\repU\) and \(\repW\) if \(\repV_g\) agrees with \(\repU_g\) in a left half-volume and with \(\repW_g\) in a right half-volume for all \(g \in G\). If a blend exists between \(\repU\) and \(\repW\), they are blend equivalent. A lattice anomaly is a blend equivalence class.}
    \label{fig:blend_illustration}
\end{figure}

Moreover, if there exists a blend from $\repU$ to $\repW$, we say that $\repU$ and $\repW$ are \emph{blend-equivalent}.  
This defines an equivalence relation on $G$-reps (see Appendix~\ref{appendix:BlendEquivalenceDef} for some technical details that we are glossing over here). 
Note that blend equivalence does not make sense in $d=0$, so we will restrict to $d \geq 1$ when discussing blend equivalence.

We are now in a position to give a general definition of anomaly on the lattice:
\begin{quote}
    A \emph{(lattice) anomaly} for a group $G$ is an equivalence class of $G$-reps under blend equivalence.
\end{quote}
We write $\LatAnom{d}{G}$ for the set of lattice anomalies of $G$ in $d$ spatial dimensions.
The trivial anomaly is the one that includes on-site symmetries. 
In general if a $G$-rep has trivial anomaly, then we say it is anomaly-free, otherwise we say it is anomalous.
(In particular, the trivial $G$-rep, which acts as the identity for all $g \in G$, is anomaly-free.)

Note that the stacking operation, i.e.\ tensor product, gives $\LatAnom{d}{G}$ the structure of an Abelian group.
To see that it has inverses, we use a \emph{folding argument}---taking any $G$-rep and applying a spatial reflection to one half of space gives a blend from that $G$-rep tensor its reflection to the trivial rep.

A 't Hooft anomaly is sometimes defined as an ``obstruction to gauging the symmetry''. 
Let us remark that if a $G$-rep has a non-trivial lattice anomaly in the sense defined here, then it implies that there is indeed an obstruction to gauging the symmetry on the lattice. 
A necessary condition for a symmetry to be gauged is that there is a way to implement the symmetry action on subregions of space (while preserving the group structure), including the half-volume $(0,\infty) \times \mathbb{R}^{d-1}$. 
If a $G$-rep is not blend equivalent to the trivial rep, this is exactly saying that such an action on the half-volume is impossible. 

Finally, as an aside, we note that in addition to blend equivalence, other definitions of equivalence of $G$-reps are possible. One definition that has previously appeared in the literature~\cite{Zhang2023entanglers,Seifnashri2025disentangling} is \emph{conjugacy equivalence.} We say that $G$-reps $\repU$ and $\repW$ are conjugacy-equivalent if there exists a QCA $V$ such that $V \repU_g V^{-1} = \repW_g$. (We still say that $G$-reps are conjugacy-equivalent if this equation holds up to tensoring with on-site reps.)
One can show that conjugacy equivalence implies blend equivalence. 
In general, blend equivalence does \emph{not} imply conjugacy equivalence because any $G$-rep that is conjugacy equivalent to an on-site rep is internal (recall the definition of ``internal'' above), while for non-compact groups $G$ (e.g.\ $G = \mathbb{Z}$), one can construct non-internal symmetries that are blend-equivalent to the trivial rep. 
We conjecture, however, that blend equivalence implies conjugacy equivalence for \emph{finite} groups; in $d=1$ this has recently been proven in Ref.~\cite{Seifnashri2025disentangling}.

\subsection{IR-trivial vs. IR-nontrivial anomalies}
\label{subsec:ir_trivial}

Another property often viewed as a manifestation of an anomaly is an inability to have a trivial ground state. 
An important point we make in this paper, which has not been well-appreciated until now, is that this is actually \emph{not} equivalent to the definition of anomaly given in the previous subsection.

Specifically, let us define a lattice anomaly of a symmetry $G$ to be \emph{IR-trivial} if there exists a $G$-rep with that anomaly such that there is a product state that is invariant under the $G$-rep. 
Clearly, the trivial anomaly is IR-trivial. 
However, in the course of this paper we will see examples of non-trivial anomalies that are also IR-trivial. 
In general, IR-trivial anomalies form a subgroup of $\LatAnom{d}{G}$.

\subsection{Relation between lattice anomalies and QFT anomalies}
\label{sec:IntroLatticeToQFT}

The relationship between lattice anomalies and the more familiar 't Hooft anomalies of QFT is captured by the following claim:
there is a group homomorphism
\begin{equation}
\label{eq:anomaly_homomorphism}
    \varphi : \LatAnom{d}{G} \to \QFTAnom{d}{G},
\end{equation}
where $\QFTAnom{d}{G}$ is the classification of QFT anomalies with symmetry $G$ in $d$ spatial dimensions (i.e.\ the \emph{spacetime} dimension is $d+1$).

This homomorphism has the following interpretation. Let $\alpha \in \LatAnom{d}{G}$ be a lattice anomaly. Consider a lattice system with a $G$ symmetry that has the given lattice anomaly, and consider a lattice Hamiltonian that is invariant under the symmetry. Suppose that the low-energy physics of this Hamiltonian can be described by an effective continuum QFT. Then the microscopic $G$ symmetry should act within this QFT. The anomaly for the $G$ symmetry in this effective QFT will be described by $\varphi(\alpha)$.

The QFT anomaly so defined should only depend on the blend class of the $G$-rep---that is, the map from $\LatAnom{d}{G}$ is well-defined.
To see this, note that if we have two $G$-reps $\mathcal{U}$ and $\mathcal{V}$ that are in the same blend class, and two lattice Hamiltonians $H$ and $K$ that are symmetric under $\mathcal{U}$ and $\mathcal{V}$, then we should be able to construct a Hamiltonian that blends between $H$ and $K$, and is symmetric under the $G$-rep that blends between $\mathcal{U}$ and $\mathcal{V}$. 
Then, by passing to the IR, we obtain a symmetric interface between two QFTs acted upon by a $G$ symmetry, and it is believed that this implies the two QFTs have the same anomaly.

This presentation of \(\varphi\) outlined above relies on intuition about emergent QFTs in the lattice, essentially because the codomain of \(\varphi\) involves QFTs.
In this paper, we will make progress towards a more lattice-based characterization of \(\varphi\) (although, as we will discuss, the precise construction will apply in a slightly different setting).

We can view  \eqnref{eq:anomaly_homomorphism} as describing a kind of ``UV-IR anomaly matching''. However, an important point is that, unlike for UV-IR anomaly matching between two QFTs, the source and target of $\varphi$ are not the same group, and in fact $\varphi$  will in general be neither injective nor surjective.
 
The lack of surjectivity means that there are some QFT anomalies for a group $G$ which cannot be realized as a $G$-rep on the lattice. 
For example, in $d=1$, there is a $\mathbb{Z}$-classified QFT anomaly for $G = \mathrm{U}(1)$ symmetry [i.e.\ the (1+1)-dimensional chiral anomaly], but it is believed that this cannot be realized as a $\mathrm{U}(1)$-rep on the lattice \cite{Kapustin_2401,Chatterjee_2409}. (However, Ref.~\cite{Chatterjee_2409} constructed a different symmetry on the lattice with a different group structure, i.e.\ it is \emph{not} a $\mathrm{U}(1)$ symmetry on the lattice, which nevertheless acts \emph{in the IR QFT} as a $\mathrm{U}(1)$ symmetry with an anomaly.)
 
The lack of injectivity (i.e.\ there exist non-trivial lattice anomalies $\alpha$ which do not descend to non-trivial QFT anomalies) follows from the existence of IR-trivial lattice anomalies. 
If there is a product state that is invariant under a $G$-rep, then it is usually possible (in particular, we show in Appendix~\ref{appendix:symmetric_gapped_hamiltonian} that it is \emph{always} possible if $G$ is a finite group, or if $G$ is a compact group and the $G$-rep is internal) to construct a Hamiltonian that is symmetric under that $G$-rep, such that the $G$-symmetric product state is the gapped ground state of that Hamiltonian. 
This implies that the IR description of this lattice Hamiltonian is described by the \emph{trivial} QFT, and in particular the anomaly of this QFT is trivial. 
Thus, IR-trivial lattice anomalies are necessarily contained in the kernel of the homomorphism $\varphi$. 
We expect the stronger statement to hold that a lattice anomaly is in the kernel of the homomorphism $\varphi$ \emph{if and only if} it is IR-trivial.
We will give some arguments for this proposition in the course of the paper, but not a full proof.

Thus, if we have a non-trivial, but IR-trivial, lattice anomaly, it may as well be a trivial anomaly from the point of view of the \emph{low-energy} physics. However, it can still have interesting consequences on the physics at higher energy scales. For example, we will discuss applications to MBL physics.

The fact that $\varphi$ is not an isomorphism might seem a bit unsettling, because a conventional wisdom about anomalies is that they are invariant under RG flow.
However, it is important to remember that under a strict conception of what ``RG flow'' means, it is not actually possible to define an RG flow from a lattice model to a QFT, because RG flow in a QFT is continuous, yet degrees of freedom in a lattice model have a fundamentally different character to those of a QFT and there is no way to continuously interpolate between the two. In lattice models, one does consider discrete RG transformations such as block spin decimation, but the crucial point is that the character of the degrees of freedom after the RG transformation is the same as before the RG transformation (there are just less of them). By contrast, passing from a lattice model to an IR description by a QFT involves a fundamental change in the character of the degrees of freedom, under which anomalies need not be preserved. We expect that RG transformations that remain within the setting of lattice systems \emph{will} preserve the lattice anomaly. For example, Ref.~\cite{Vidal2008MERA} defined an RG transformation for quantum many-body lattice systems known as ``entanglement renormalization''. In \autoref{sec:rg_invariance}, we argue that lattice anomalies are indeed preserved under entanglement renormalization.

Finally, let us be a little more specific about what exactly we mean by $\QFTAnom{d}{G}$. It specifically refers to those anomalies that can be canceled by an invertible TQFT in one higher dimension. Thus, we take it as given that $\QFTAnom{d}{G}$ is in one-to-one correspondence with invertible TQFTs in $d+1$ spatial dimensions ($d+2$ space-time dimensions). 
Recall that the definition of TQFT depends on some choice of tangential structure (e.g.\ orientation, spin structure, ...). 
The position we take is that the appropriate tangential structure should be determined by the nature of the lattice system that we started with. 
For example, so far we have been considering bosonic lattice systems with unitary $G$ symmetry; in that case the invertible TQFT should be defined for manifolds equipped with an orientation and a $G$ gauge field.
Such invertible TQFTs are the ones which are believed to classify the SPT phases in bosonic lattice systems. 
(Later we will talk about fermionic lattice systems and anti-unitary symmetries, which imply different tangential structures for the invertible TQFT.)
Note that, importantly, these are the properties of the invertible TQFT in one higher dimension that describes the anomaly; it does \emph{not} imply that the actual IR QFT in $d$ spatial dimensions only depends on the orientation and $G$ gauge field. 
In general, this IR QFT will not even be topological, i.e. it depends on the metric.

\subsection{Lattice anomalies, commuting models, and many-body localization (MBL)}
\label{subsec:intro_projector_phases}

QFT anomalies are believed to exhibit a one-to-one correspondence with symmetry-protected topological (SPT) phases in one higher dimension \cite{Senthil_1405}. Since, as we noted in the previous section, lattice anomalies are \emph{not} in one-to-one correspondence with QFT anomalies, it follows that lattice anomalies are \emph{not} in one-to-one correspondence with SPT phases in one higher dimension.

However, it turns out that there is a bulk-boundary correspondence between lattice anomalies and a notion of \emph{phase of matter} in one higher dimension, but the concept of what we mean by a phase of matter needs to be modified. 
First recall that zero-temperature topological phases of matter are defined in terms of the ground state; two ground states of gapped Hamiltonians are said to be in the same phase if the ground states are related by a finite-depth quantum circuit (or some generalization allowing for rapidly decaying tails).

\emph{Commuting models} can also be used to define a phase of matter. 
A commuting model is a collection $\{ h_i : i \in I \}$ of local Hermitian operators indexed by a set $I$, where $[h_i, h_j] = 0$ for all $i,j \in I$. (In particular, we can consider commuting projector models, in which the $h_i$'s are taken to be projectors.)
Specifically, construct the Hamiltonian
\begin{equation}
    H = \sum_{i \in I} h_i,
\end{equation}
whose ground state is necessarily gapped, with no correlations between regions that are further apart than the diameter of the $h_i$'s supports. 
Commuting models have been extensively studied since they allow one to construct exactly solvable models of many topological phases of matter (e.g. Ref.~\cite{Levin2005}).

 We say that commuting models $\{ h_i \}$ and $\{ h_i' \}$ are \emph{equivalent} if $h_i$ and $h_i'$ have the same ground-state subspace for each $i$. Thus, without loss of generality one could just replace each $h_i$ with a projector to get a commuting projector model.

We say that a commuting model is \emph{trivial} if each $h_i$ is supported on a single site, with the ground state subspace of $h_i$ on that site being one-dimensional. 
Trivial commuting models evidently have a non-degenerate ground state which is a product state, but the converse is not true---having a non-degenerate product-state ground state does not automatically imply the commuting model is trivial.
We say two commuting models are in the \emph{same phase} if they are related by a finite-depth quantum circuit, up to equivalence and stacking with trivial commuting models. Note that this is a different condition than the statement that the \emph{ground states} of the two commuting models are in the same phase.

We say that a commuting model $\{ h_i : i \in I \}$ is strongly symmetric under a $G$-rep $\repU_g$ if $\repU_g[h_i] = h_i$ for all $g \in G, i \in I$.
Now suppose we restrict to commuting models which are strongly symmetric, and are in the trivial commuting phase, i.e. the one containing the trivial commuting Hamiltonian if the symmetry is disregarded. 
Then we can say that two such commuting models are in the same \emph{symmetry-protected} commuting phase if there exists a finite-depth quantum circuit relating them in which all the gates can be taken to be individually invariant under the symmetry. 
This is analogous to the concept of SPT phase for ground states.

One of our main results is that, unlike ground state SPT phases, symmetry-protected commuting phases have a concept of ``boundary anomaly'' that is best captured by \emph{lattice anomalies}, not QFT anomalies. 
(This will also hold for commuting models without symmetry that satisfy an \emph{invertibility} property, once we introduce the concept of lattice gravitational anomaly, see \autoref{subsec:anti_unitary_etc} below.)
Physically, this is because the property of being a commuting model on the lattice is not something that is captured by a continuum QFT description.

A key feature of a QFT anomaly 
is that it forbids a symmetric trivial gapped ground state. 
There is an analogous property for lattice anomalies, but it is necessary to phrase it in terms of commuting models. 
Specifically, we will show that in a system with a $G$-rep that carries a non-trivial lattice anomaly (even if it is IR-trivial), it is impossible to find a commuting model which is strongly symmetric under the $G$-rep, \emph{and} is in the trivial phase if one disregards the symmetry. 
(In fact, as for ground states, there is a stronger result that the anomaly also forbids \emph{invertible} commuting phases.)

The concept of a commuting phase might seem somewhat abstract. 
However, there is a related notion of phase of matter that is also related to lattice anomalies and has great physical relevance, specifically the concept of a many-body localized (MBL) phase. 
Many-body localization is a phenomenon that occurs in generic strongly-disordered isolated quantum many-body systems, whereby there are an extensive number of quasi-local integrals of motion. 
An MBL phase is a family of Hamiltonians that can be continuously interpolated into each other while maintaining the property that the dynamics is MBL. 
One can talk about topological phases of MBL systems, which are not in one-to-one correspondence with ground-state MBL phases. 
We will argue that lattice anomalies naturally occur at the boundaries of topological MBL phases, and moreover that a non-trivial lattice anomaly forbids a symmetric MBL phase that lies in a  trivial MBL phase (or more generally any MBL phase that satsifies a condition known as \emph{short-range entangled} \cite{Long2024}) when we disregard the symmetry.
Again, the reason that lattice anomalies, rather than QFT anomalies, are related to MBL phases is that MBL is fundamentally a lattice phenomenon that does not have any continuum QFT description.

\subsection{Classification of lattice anomalies: results and conjectures}
Above, we have set up the general concept of a lattice anomaly. The other question is how to classify such anomalies. While we will not give a fully general answer to the question in this paper, we will establish various partial results and conjectures. Specifically:
\begin{itemize}
    \item We argue that there are a sequence of cohomological invariants that partially characterize non-trivial lattice anomalies, based on the classification of QCAs. We rigorously construct the first few of these invariants.
    \item We conjecture a more general classification based on homotopy theory and the idea that QCAs should define a generalized cohomology theory~\cite{Long2024}, and which in particular captures the invariants mentioned above. This can be viewed as analogous to the proposals that the classification of SPT phases of matter can be described by a generalized cohomology theory~\cite{Kitaev2013SRE2,Kitaev2015SRE3,Kitaev2019SRE4,Xiong2018minimalist,Gaiotto2019gencohomology,Kubota2025stablehomotopytheoryinvertible}. 
    \item We also make a separate conjecture that lattice anomalies can be understood in terms of TQFTs, specifically their \emph{Witt classes} (equivalence classes under the relation of supporting gapped interfaces). This generalizes a previous conjecture about the classification of QCAs~\cite{Haah2022,Shirley2022,Haah2023}.
    \item Finally, we establish an equivalence between: lattice anomalies in \(d\) dimensions; QCAs in \(d+1\) dimensions which commute with an on-site symmetry; and invertible commuting models in \(d+1\) dimensions, again with an on-site symmetry.
    This is to be viewed as a bulk-boundary correspondence between lattice anomalies and either of the other two objects.
    (Actually, the precise correspondence involves a different definition of lattice anomaly from the one given above, though it is closely related.)
\end{itemize}

\subsection{Anti-unitary symmetries, fermionic symmmetries, and gravitational anomalies}
\label{subsec:anti_unitary_etc}

Here we, describe generalizations of the concept of a $G$-rep that we introduced in \autoref{subsec:what_is_a_symmetry}. 
Firstly, there we restricted to \emph{unitary} symmetries, but we can also define an \emph{anti-unitary QCA}, which is a map from the algebra of local operators to itself that satisfies all the conditions to be a QCA, except that the linearity condition is replaced by anti-linearity, $U[\alpha a + \beta b] = \alpha^* U[a] + \beta^* U[b]$ where $a,b \in A$ and $\alpha, \beta \in \mathbb{C}$. 
Thus we can talk about symmetry groups containing anti-linear elements. 
In this case when we talk about the symmetry group $G$, we should also keep track of which elements of $G$ are supposed to be anti-unitary symmetries (in particular this involves specifying a homomorphism $\sigma : G \to \mathbb{Z}_2$), and then demand that anti-unitary elements of $G$ map to anti-unitary QCAs. 
As with unitary reps, a $G$-rep involving anti-unitary symmetries is said to be \emph{on-site} if $\mathcal{U}_g$ has range zero for all $g \in G$. 

In \autoref{subsec:what_is_a_symmetry}, it was implicit that we were talking about so-called bosonic systems (which means that any two operators supported on disjoint sets of lattice sites commute with each other). 
We can also easily discuss fermionic systems as well. 
Then the structure of the algebra of local operators is different, because local operators can have even or odd fermion parity, and if we have two fermion-parity-odd operators supported on disjoint sets, then they \emph{anti-commute} rather than commute.

For fermionic systems one should consider a \emph{fermionic symmetry group} $G_f$ that contains a central $\mathbb{Z}_2$ subgroup which we call $\mathbb{Z}_2^f$. 
For a $G_f$-rep, we demand that the generator of $\mathbb{Z}_2^f$ maps to the \emph{fermion parity automorphism} $\mathcal{P}$, which has the property that $\mathcal{P}(a_+) = a_+$ for fermion-parity-even operators $a_+$, and $\mathcal{P}(a_-) = -a_-$ for fermion-parity-odd operators $a_-$. 
One can distinguish between two different kinds of fermionic symmetries. 
If $G_f$ is just a direct product $G_f = G_b \times \mathbb{Z}_2^f$, then the $G_f$-rep is fully determined by the $G_b$-rep and we won't need to think about fermion parity explicitly. 
On the other hand, $G_f$ could also be a non-trivial extension of $G_b = G_f / \mathbb{Z}_2^f$ by $\mathbb{Z}_2^f$. 
In this case we will say that the symmetry has a non-trivial extension by fermion parity.
For fermionic $G_f$-reps there is also a natural definition of stacking; we defer the precise details to Appendix~\ref{appendix:fermionic_local_algebra}.

Finally, we note that we will also discuss \emph{lattice gravitational anomalies}. 
A lattice gravitational anomaly is a lattice anomaly that does not relate to a symmetry. 
Thus, what is ``anomalous'' in this case ends up being not the action of the symmetry on local operators, but the structure of the algebra of local operators itself. In a system carrying a lattice gravitational anomaly, the algebra of local operators does not simply have the form assumed in \autoref{subsec:what_is_a_symmetry} (or the fermionic version thereof), but rather some generalization. 
We defer the precise definitions to later in the paper.

\subsection{Outline of the paper}

We summarize the main results of our work, which also serves as an outline of the paper.

\begin{itemize}
    \item We make a comparison between lattice anomalies and QFT anomalies in one-dimensional systems in \autoref{sec:1DWarmUp}, highlighting broad similarities and specific differences in a simple setting.

    \item Several invariants which diagnose the lattice anomaly of a \(G\)-rep based on the group cohomology of \(G\) are defined in \autoref{sec:CohomologyInvariants}. These invariants are all constructed by considering restrictions of the symmetry action to subsets of the lattice, and can be viewed as generalizations of the invariants of Ref.~\cite{Else2014}.

    \item A detailed example of an anomalous two-dimensional \(\ZZ_2\)-rep based on the radical chiral Floquet circuit of Ref.~\cite{Po2017} is presented in \autoref{sec:em-qubit}. We explain how the anomaly is diagnosed by one of the cohomology invariants.

    \item The consequences of a model possessing a lattice-anomalous symmetry are studied in \autoref{sec:ConsequencesOfAnomalies}. While we find that lattice anomalies forbid symmetric trivial commuting Hamiltonians, they allow for symmetric trivial gapped ground states.
    \item The lattice analogue of gravitational anomalies in QFT (which do not require the presence of a symmetry) is introduced in \autoref{sec:LatticeGravitational}. 
    We associate lattice gravitational anomalies not to a symmetry action, but to the algebra of local operators itself.
    This notion maps homomorphically to QFT gravitational anomalies.

    \item \autoref{sec:BulkBoundaryCorrespondence} describes a bulk-boundary correspondence between (a modified notion of) lattice anomalies and {both: finite-depth circuit equivalence classes of symmetric QCAs (\(G\)-QCAs); and finite-depth circuit equivalence classes of symmetric \emph{invertible} commuting models.}
    A \(G\)-QCA is seen to pump a \(G\)-rep across the system, reminiscent of anomaly in-flow. 
    Similarly, a commuting model supports a lattice anomaly at its boundary.

    \item \autoref{sec:rg_invariance} argues that lattice anomalies are invariant under entanglement renormalization~\cite{Vidal2008MERA}.

    \item \autoref{sec:GrepFromGTQFT} and \autoref{sec:HomotopyTheory} outline more speculative conjectures regarding the complete classification of lattice anomalies, which we do not completely resolve: 
    \begin{itemize}
        \item In \autoref{sec:GrepFromGTQFT} we propose that there is a subgroup of lattice anomalies which correspond to the Witt classes of symmetry enriched TQFTs, generalizing a similar conjecture regarding QCAs.
    \item In \autoref{sec:HomotopyTheory} we describe how lattice anomalies can be fit into a homotopy-theoretic structure, such that all the cohomology invariants of \autoref{sec:CohomologyInvariants} naturally emerge.
    \end{itemize}
    \item We conclude and discuss future directions in \autoref{sec:Discussion}.
\end{itemize}

Additionally, the main text is accompanied by a series of appendices which contain additional technical or tangential details.

\section{A warm-up: The one-dimensional case}
\label{sec:1DWarmUp}

In this section, we discuss lattice anomalies and their relation to QFT anomalies in one spatial dimension. Most of the results described here are already known in some form in the literature, but we will emphasize the way in which they fit into our general conceptual framework.

Let $G$ be a discrete group. (One can also consider continuous groups fairly easily, but it would require introducing technicalities we do not wish to discuss.)
Let us discuss how to classify lattice anomalies of $G$.

To begin, we discuss how to classify one-dimensional bosonic QCAs.
A blend between QCAs is defined analogously to blends of \(G\)-reps in \autoref{subsec:what_is_a_symmetry}, and the complete classification of equivalence classes under blending is known in one-dimension.
Indeed, for any bosonic one-dimensional QCA $U$, one can define an index $\mathcal{I}(U)$, known as the GNVW index~\cite{Gross2012QCA}, valued in $\mathbb{Q}_\times$ (the group of positive rational numbers, with the group operation being multiplication). 
The GNVW index is a complete invariant for blend classes of one-dimensional QCAs: two QCAs $U$ and $V$ blend if and only if $\mathcal{I}(U) = \mathcal{I}(V)$.
All values of \(\mathcal{I}(U) \in \QQ_\times\) are achieved for some \(U\) on some one-dimensional lattice system.

The GNVW index extends to an invariant for one-dimensional bosonic \(G\)-reps.
Every $G$-rep $\repU$ induces a homomorphism $\mathcal{I}_\repU : G \to \mathbb{Q}_\times$ defined by \(\mathcal{I}_\repU(g) = \mathcal{I}(\repU_g)\).
We will overload terminology and also refer to the homomorphism $\mathcal{I}_\repU$ as the GNVW index of the representation $\repU$. 
It is clear that if two local representations of $G$ have different GNVW index, they are not blend-equivalent and therefore have distinct lattice anomalies.
For concreteness, consider for example the case of $G = \mathbb{Z}$, where $G$ acts as lattice translations. Since lattice translations have non-trivial GNVW index, this symmetry has a non-trivial lattice anomaly.

On the other hand, it is clear that this anomaly is IR-trivial in the sense described in Section \ref{subsec:ir_trivial}, since one can have product states that are invariant under translation symmetry. 
Moreover, recall that
QFT anomalies for bosonic systems for a $G$ symmetry in (1+1)~dimensions are known to be classified by $H^4(\BG, \mathbb{Z})$ \cite{Dijkgraaf1990,Chen2013}, where $\BG$ is the classifying space of $G$, and $H^{\bullet}$ denotes singular cohomology. In particular, if $G = \mathbb{Z}$ then $H^4(\BG, \mathbb{Z})$ is trivial, so there cannot be any QFT anomalies for this symmetry. Therefore, the lattice anomaly of lattice translation symmetry that we mentioned above maps to the \emph{trivial} QFT anomaly under the homomorphism of \eqnref{eq:anomaly_homomorphism}.

The reader may observe that to get a non-trivial GNVW index, it is necessary for the symmetry group $G$ to be non-finite and non-internal. However, the reader should not conclude from this that non-trivial lattice anomalies that map to trivial QFT anomalies necessarily involve non-finite symmetries like translation. While this is true in one dimension, we will see in higher dimensions that there are examples involving finite groups (\autoref{sec:em-qubit}).

Now, for \emph{fermionic} systems, the QFT anomalies for symmetry $G_f = \mathbb{Z} \times \mathbb{Z}_2^f$ have a $\mathbb{Z}_2$ classification \cite{Cheng_1501}. One can construct a corresponding $G$-rep on the lattice, in which the generator of $\mathbb{Z}$ acts as a ``Majorana translation'' (that is, if the fermionic degrees of freedom are built out a chain of Majorana modes $\gamma_i$, then the generator sends $\gamma_i \mapsto \gamma_{i+1}$). Thus, this is as simple example of a lattice anomaly that does map to a non-trivial QFT anomaly. However, note that lattice anomalies have a 
\(\QQ_\times\)
classification~\cite{Fidkowski2019FloquetFermion}, whereas the QFT anomaly classification collapses to $\mathbb{Z}_2$.

On the other hand, suppose we instead want to set $G_f = \mathbb{Z}_2 \times \mathbb{Z}_2^f$. The QFT anomalies still have a $\mathbb{Z}_2$ classification, but it is now impossible to realize a $\mathbb{Z}_2 \times \mathbb{Z}_2^f$ symmetry on the lattice which maps to the non-trivial anomaly\footnote{Specifically, this follows from the classification of fermionic QCAs in one dimension~\cite{Fidkowski2019FloquetFermion}. The classifying group contains no elements of finite order, so the generator of a $\mathbb{Z}_2$ symmetry must have trivial QCA index. However, this implies that the $\mathbb{Z}_2$ symmetry cannot permute the trivial and non-trivial one-dimensional fermionic topological phases, which is the key property of the QFT anomaly.}. Thus, this QFT anomaly is not in the image of the map from lattice anomalies.

Returning to the bosonic case, suppose we have a $G$-rep for which the GNVW index vanishes, with $G$ a discrete group. Reference~\cite{Else2014} gave an explicit procedure showing that for any such \(G\)-rep, one can derive a blend invariant index valued in $\mathcal{H}^3[G, \mathrm{U}(1)]$, where $\mathcal{H}^{\bullet}$ denotes group cohomology.  This index gives a partial classification of lattice anomalies for $G$ symmetries. 
For finite groups $G$ it is a mathematical fact that $\mathcal{H}^3[G, \mathrm{U}(1)] \cong H^4(\BG, \mathbb{Z})$. Thus, one expects that lattice anomalies classified by the index of Ref.~\cite{Else2014} valued in $\mathcal{H}^3[G, \mathrm{U}(1)]$ should map directly to the corresponding QFT anomalies. 
Moreover, the construction of Refs.~\cite{Chen2013,Else2014} show that all elements of $\mathcal{H}^3[G,\mathrm{U}(1)]$ can be realized by some $G$-rep on the lattice for finite groups $G$.

In fact, the procedure of Ref.~\cite{Else2014} can also be applied to  representations with non-trivial GNVW index in order to compute their corresponding QFT anomaly \cite{Kapustin_2401}. In this case we invoke the fact that for any GNVW index $\mathcal{I} : G \to \mathbb{Q}_\times$, there exists a local representation $\repV_\mathcal{I}$ whose GNVW index is $\mathcal{I}$, but which maps into a trivial QFT anomaly (which one can verify, for example, by the existence of a product state that is invariant under $\repV_\mathcal{I}$). 
Then for any \(G\)-rep with GNVW index $\mathcal{I}$, we can consider the \(G\)-rep $\repU \otimes \repV_{\mathcal{I}^{-1}}$, which has trivial GNVW index, and compute its QFT anomaly via the procedure of Ref.~\cite{Else2014}.
Since we know that $\repV_{\mathcal{I}^{-1}}$ has trivial QFT anomaly, it follows that $\repU \otimes \repV_{\mathcal{I}^{-1}}$ has the same QFT anomaly as $\repU$ itself.

The above procedure allows one to, for example, compute the QFT anomaly corresponding to $G = \mathrm{SO}(3) \times \mathbb{Z}$, where on the lattice $\mathbb{Z}$ corresponds to lattice translations, and each translation unit cell carries half-integer spin under $\mathrm{SO}(3)$.
Although this is still not a finite group, it remains true in this case that $H^4(\BG,\mathbb{Z}) \cong \grpH{3}{G}{\mathrm{U}(1)} \cong \mathbb{Z}_2$. 
[If one does not want to consider continuous groups, one could just replace $\mathrm{SO}(3)$ with a $\mathbb{Z}_2 \times \mathbb{Z}_2$ group comprising the $\pi$ rotations about the $x$, $y$ and $z$ axes.]
The fact that such a system does not admit a trivial gapped ground state is the famous Lieb-Schultz-Mattis (LSM) theorem~\cite{Lieb1961LSM,Tasaki2022LSMreview}. However, one can make a stronger statement: if the ground state is gapless, the QFT describing the low-energy states must carry the appropriate QFT anomaly~\cite{Cheng2016Tranlsation,Cho2017LSMAnomaly,Jian2018LSMSPT,Cheng2023LatticeAnomalyMatching}.

\section{Cohomology invariants}
\label{sec:CohomologyInvariants}

Many families of QFT anomalies are known to be classified by group cohomology. The same is true for lattice anomalies~\cite{Else2014}, and in this section we define simple blend-equivalence invariants for \(G\)-reps from the group cohomology of \(G\).

These invariants are based on knowing something about the classification of QCAs~\cite{Gross2012QCA,Freedman2020_2DQCA,Haah2022,Shirley2022}. Two QCAs are said to be \emph{equivalent} if they are blend-equivalent, defined analogously to blend equivalence of $G$-reps as in \autoref{subsec:what_is_a_symmetry}. Another notion of equivalence that is commonly used for QCAs is \emph{circuit equivalence}---two QCAs \(U,V\) are circuit equivalent if, possibly after stacking with ancillas, we have \(U=XV\) where \(X\) is a finite depth circuit (a product of finitely many layers of local unitary operators, where all operators in the same layer have disjoint support). We prove in \autoref{sec:GQCABlendToCircuit} that circuit equivalence is, in fact, identical to blend equivalence. 
We write $Q_d$ for the set of equivalence classes of QCAs on $\mathbb{R}^d$. The stacking operation endows this with the structure of an Abelian group.

We already mentioned the result that for bosonic systems, $Q_1 \cong \mathbb{Q}_\times$, corresponding to the GNVW index. It has also been shown that $Q_2$ is trivial  for bosonic systems \cite{Freedman2020_2DQCA}. In higher dimensions not very much is known rigorously, but it is believed that for bosonic systems, $Q_3$ at least contains a $\mathbb{Z}_8$ subgroup \cite{Haah2022,Shirley2022}.

The invariants of $G$-reps that we construct are of the form
\begin{equation}
    \grpH{p+1}{G}{\QCAblendclass{d-p}{}}.
\end{equation}
These invariants can be thought of as measuring an obstruction to blending a \(G\)-rep to an on-site \(G\)-rep arising from a symmetry defect of codimension \(p\). They can be defined for either unitary or antiunitary symmetries, and both in bosonic and fermionic systems. However, in this section we will not consider fermionic systems in which the symmetry has a non-trivial extension by fermion parity, as this would introduce additional complications. (For fermionic systems without a non-trivial extension by fermion parity, for which the fermionic symmetry group decomposes as $G_f = G_b \times \mathbb{Z}_2^f$, the ``$G$'' appearing throughout this section means $G_b$.)

Here, we explicitly work through these invariants one-by-one for \(p \leq 2\). In a later section, \autoref{sec:HomotopyTheory}, we will give a more sophisticated (but less rigorous) treatment which captures all invariants at once, at the cost of being substantially more abstract.

\subsection{The \texorpdfstring{$\grpH{1}{G}{\QCAblendclass{d}{}}$}{H1} invariant}

Analogously to the discussion in the one-dimensional case above, in $d$ dimensions a (unitary) $G$-rep $\repU$ induces a homomorphism $\omega : G \to \QCAblendclass{d}{}$. Since $\QCAblendclass{d}{}$ is an Abelian group, we can equivalently write this as an element of $\grpH{1}{G}{\QCAblendclass{d}{}}$. By construction, this \(\omega\) is invariant under blending of \(G\)-reps.

We can directly think of \(\omega\) as an obstruction to blending a \(G\)-rep with the trivial rep. Indeed, a necessary (but not sufficient) condition for a $G$-rep to blend with the trivial rep is that each $\mathcal{U}_g$ blends with the identity as a QCA for each $g \in G$, and $\omega$ is the obstruction to this.

Let us briefly describe how this result gets modified if we include anti-unitary symmetries.
The important point is the following. One can prove that if $U$ and $V$ are unitary QCAs, then $U V U^{-1}$ has the same index in $\QCAblendclass{d}{}$ (that is, the same equivalence class) as $V$ itself, which is why $\QCAblendclass{d}{}$ is Abelian~\cite{Haah2022,Freedman2020_2DQCA,Freedman2022group}. However, if $U$ is anti-unitary then this is not necessarily the case. Note, however, that since for any two anti-unitary QCAs $U,U'$, we have that $U^{-1} U'$ and $U^2$ are unitary, we get a canonical action of $\mathbb{Z}_2$ on $\QCAblendclass{d}{}$ which does not depend on the choice of a specific anti-unitary QCA. 
Note that in $d=1$, the $\mathbb{Z}_2$ action on $\QCAblendclass{1}{} \cong \mathbb{Q}_{\times}$ is actually trivial. For example, translations are invariant under complex conjugation in a product basis. 
However, the action can be non-trivial in $d > 1$. For example, Ref.~\cite{Shirley2022} argues that $\QCAblendclass{3}{}$ contains a $\mathbb{Z}_8$ subgroup, and one can verify that these elements of $\QCAblendclass{3}{}$ are odd under complex conjugation.

Now, following the notation that we introduced in \autoref{subsec:anti_unitary_etc}, let $\sigma : G \to \ZZ_2 = \{ 0, 1 \}$ be the homomorphism that keeps track of the unitary/anti-unitary elements of $G$. For any $g \in G$, we can write $\mathcal{U}_g =  \mathfrak{U}_g K^{\sigma(g)}$, where $\mathfrak{U}_g$ is a unitary QCA and $K$ denotes complex conjugation in a product basis. Then we define $\omega_g \in \QCAblendclass{1}{}$ to be the index of $\mathfrak{U}_g$. Moreover, since $\mathcal{U}_{g_1} \mathcal{U}_{g_2} = \mathcal{U}_{g_1 g_2}$, we find that
\begin{equation}
    \mathfrak{U}_{g_1} K^{\sigma(g_1)} \mathfrak{U}_{g_2} K^{\sigma(g_2)} =  \mathfrak{U}_{g_1 g_2} K^{\sigma(g_1, g_2)},
\end{equation}
which implies that
\begin{equation}\label{eqn:AntiunitaryOnQCA}
    {}^{\sigma(g_1)} \omega_{g_2} \omega_{g_1} = \omega_{g_1 g_2},
\end{equation}
where $^{n}\omega$, for $n \in \mathbb{Z}_2$ and $\omega \in \QCAblendclass{d}{}$, denotes the action of $\mathbb{Z}_2$ on $\QCAblendclass{d}{}$. We recognize this as defining an element of the twisted cohomology $\grpHtwisted{1}{\sigma}{G}{\QCAblendclass{d}{}}$, where the subscript $\sigma$ indicates that the  group cohomology is defined with respect to an action of $G$ on the coefficients $Q_d$; specifically, anti-unitary elements of $G$ act on $\QCAblendclass{d}{}$ through the canonical action of $\mathbb{Z}_2$ on $\QCAblendclass{d}{}$.

\subsection{The \texorpdfstring{$\grpH{2}{G}{\QCAblendclass{d-1}{}}$}{H2} invariant}
\label{sec:H2Invariant}

Suppose that the \(\grpH{1}{G}{\QCAblendclass{d}{}}\) invariant vanishes. For simplicity, we first assume that the $G$-rep is unitary and internal (we will extend to the general case later). Then we can restrict $\mathcal{U}_g$ to any subset $R \subseteq \mathbb{R}^d$.  However, the restriction may fail to remain a \(G\)-rep.

By ``restriction'' we mean that we fix a finite $r \geq 0$ and then for each $g$, we find a QCA $\resU_g$ such that $\resU_g$ acts as the identity on local operators supported on $R^c$, and the same as $\repU_g$ on local operators supported on $\mathrm{int}_r R$. Here we defined the complement $R^c = \mathbb{R}^{d} \setminus R$, and the \(r\)-interior $\mathrm{int}_r R = [(R^c)^{+r}]^c$, where
\begin{multline}
R^{+r} = \{ \mathbf{x} \in \mathbb{R}^d : \mbox{there exists $\mathbf{x}' \in R$} \\ \mbox{such that $|\mathbf{x} - \mathbf{x}'| \leq r$}  \}
\end{multline}
is the \(r\)-neighborhood of \(R\).
To show that such a restriction exists for any subset $R \subseteq \mathbb{R}^d$, we can use the circuit representation of $\mathcal{U}_g$, since the $\grpH{1}{G}{\QCAblendclass{d}{}}$ invariant being zero implies that $\mathcal{U}_g$ is in the trivial circuit equivalence class for all $g \in G$. Specifically, $\resU_g$ can be constructed by removing all the blocks of the circuit not supported within $R$. Note that a choice of restriction involves some arbitrary choice about how the QCA will act near the boundary of $R$, i.e.\ on $R \setminus \mathrm{int}_r R$.

\begin{figure}
    \centering
    \includegraphics{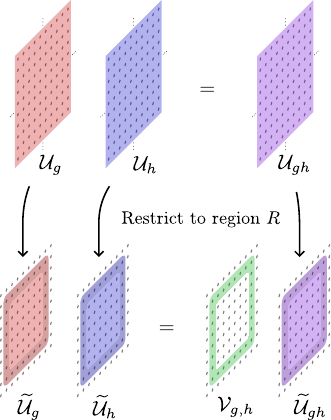}
    \caption{Depiction of the $\grpH{2}{G}{\QCAblendclass{d-1}{}}$ invariant for $d=2$. When the $G$-rep $\repU$ is restricted to a region $R$ with boundary, the homomorphism property only holds up to a one-dimensional QCA $\bdyU_{g,h}$ on $S\subseteq R$ near the boundary of $R$. The value of the cocycle $\omega_{g,h}$ is defined as the QCA index of $\bdyU_{g,h}$.}
    \label{fig:sym-restrict}
\end{figure}

Now, for $g,h\in G$, by the homomorphism property of $G$-reps we must have $\repU_g\repU_h=\repU_{gh}$, and hence the action of $\resU_g\resU_h$ is the same as that of $\resU_{gh}$ on $\mathrm{int}_{2r} R$.
However, this equality may not hold near the boundary.
 The homomorphism property must be modified as (\autoref{fig:sym-restrict})
\begin{equation}
    \label{eq:restricted_homomorphism}
    \resU_g\resU_h=\bdyU_{g,h}\resU_{gh},
\end{equation}
where $\bdyU_{g,h}$ is a QCA defined in the boundary strip $S\subseteq R$ consisting of sites near the boundary (specifically, $S = R \setminus \mathrm{int}_{3r} R$.)

Suppose we take $R$ to be a half-volume $(-\infty,a) \times \mathbb{R}^{d-1}$. Then $S$ is an infinite slab $(a-3r,a) \times \mathbb{R}^{d-1}$. By projecting this slab onto $\{ a \} \times \mathbb{R}^{d-1}$, we can treat $\bdyU_{g,h}$ as a $(d-1)$-dimensional QCA and calculate its index in $\QCAblendclass{d-1}{}$. Denote this index by $\omega_{g,h}$, and regard this as a 2-cochain $\omega$ in the group cohomology of $G$ with coefficients in $\QCAblendclass{d-1}{}$.

Note that $\omega$ cannot be an arbitrary function of $g$ and $h$. For $g,h,k\in G$, we have two ways to write the product of the three restricted actions:
\begin{equation}
    \label{eq:ghk_1}\resU_g\resU_h\resU_k=\bdyU_{g,h}\resU_{gh}\resU_k=\bdyU_{g,h}\bdyU_{gh,k}\resU_{ghk}
\end{equation}
and 
\begin{equation}\label{eq:H2closed}
    \resU_g\resU_h\resU_k=\resU_g\bdyU_{h,k}\resU_{hk}=\left(\resU_g\bdyU_{h,k}\resU_g^{-1}\right)\bdyU_{g,hk}\resU_{ghk}.
\end{equation}
Note that $\resU_g\bdyU_{h,k}\resU_g^{-1}$ is also a QCA on $S$ with index the same as that of $\bdyU_{h,k}$, which follows from the fact that \(\QCAblendclass{d-1}{}\) is Abelian and that composition of QCAs corresponds to multiplication in \(\QCAblendclass{d-1}{}\)~\cite{Haah2022,Freedman2020_2DQCA,Freedman2022group}. 
Therefore, by comparing Eqs.~(\ref{eq:ghk_1},~\ref{eq:H2closed}) we get
\begin{equation}\label{eq:H2closedClass}
    \omega_{g,h}\,\omega_{gh,k}=\omega_{g,hk}\,\omega_{h,k},
\end{equation}
which is the condition for a 2-cochain to be a cocycle.

The construction above seems to depend on a particular choice of restriction $\resU_g$. What happens if we choose a different restriction? If we have two choices of restriction $\resU_g$ and $\resU_g'$, then we must have $\resU_g = \repW_g \resU_g'$ for some QCA $\repW_g$ supported near the boundary of $R$, i.e. on $R \setminus \mathrm{int}_{2r} R$.
Then we have
\begin{equation}\label{eq:H2rechoose}
    \resU'_g\resU'_h\resU'^{-1}_{gh}
    = \diffU_g(\resU_g\diffU_h\resU_g^{-1})(\resU_g\resU_h\resU_{gh}^{-1})\diffU_{gh}^{-1}.
\end{equation}
Passing to blend classes, we have
\begin{equation}
\label{eq:2_coboundary}
    \omega'_{g,h} = \beta_g\,\beta_h\,\omega_{g,h}\,\beta^{-1}_{gh},
\end{equation}
where $\beta_g$ is the blend class of $\diffU_g$ and $\omega$ and $\omega'$ are the 2-cochains defined via $\resU_g$ and $\resU_g'$ respectively.
Therefore, $\omega'$ and $\omega$ differ by a coboundary \(\beta_g \beta_h \beta^{-1}_{gh}\), and are in the same $\grpH{2}{G}{\QCAblendclass{d-1}{}}$ cohomology class.

The calculation above also shows that the cohomology class is independent of the choice of half volume \(R = (-\infty,a)\times \RR^{d-1}\). Indeed, considering another choice of \(R' = (-\infty,a')\times \RR^{d-1}\), we can regard restrictions to \(R\) or \(R'\) as being different choices of restriction to \(R \cup R'\) (now with potentially wider boundary strips \(S^{(\prime)}\)). Then, by the previous paragraph, their associated \(\grpH{2}{G}{\QCAblendclass{d-1}{}}\) invariants are the same.

We emphasize that the vanishing of the $\grpH{2}{G}{\QCAblendclass{d-1}{}}$ class is a \emph{necessary}, but not \emph{sufficient} condition for the $G$-rep to be blend-equivalent to the trivial rep. The vanishing of the $\grpH{2}{G}{\QCAblendclass{d-1}{}}$ class ensures that it is possible to choose the restriction such that $\bdyU_{g,h}$ is a QCA in the trivial equivalence class. Being blend equivalent to the trivial rep would imply that it is possible to choose the restriction such that $\bdyU_{g,h}$ is exactly the identity QCA, which is a stronger condition.

The construction of \(\omega\) can be generalized to the case where $G$ is not an internal symmetry, i.e.\  the range of $\repU_g$ has no uniform upper bound that holds for all $g \in G$. In that case one cannot have a fixed boundary strip $S$ of finite width.
One way of working around this is to choose $R = (-\infty,0)\times \RR^{d-1}$ to be a half volume (corresponding say to the subset of \(\RR^d\) with first coordinate $x< 0$), and for each $g\in G$, we require $\resU_g$ to agree with $\repU_g$ for $x<-r_g$ (where $r_g$ depends on $g$) and act trivially for $x>0$.
Then for each $g,h\in G$, we have the one-dimensional QCA $\bdyU_{g,h}=\resU_g\resU_h\resU_{gh}^{-1}$ defined on the strip $-(r_g+r_h+r_{gh})\leq x \leq 0$.
Passing to blend classes defines an \(\omega_{g,h}\) for each pair \((g,h)\).

There is also no obstacle to extending this invariant to anti-unitary symmetries.
Again we want to restrict the action of each $\repU_g$ to some region.
However, since the identity QCA is always unitary, an anti-unitary QCA cannot be restricted to act only within a region $R$.
Nevertheless, in order to define the index of a boundary QCA, we do not actually need that the restricted QCA $\resU_g$ acts trivially outside $R$, but only that it has an on-site (i.e.\ range zero) action. In other words, we are computing the obstruction to constructing a blend between $\mathcal{U}$ and an on-site rep.
By using such a $\resU_g$, we can have well-defined $\omega_{g,h}\in\QCAblendclass{d}{}$.
On the other hand, the group cohomology formalism does need to change a little bit. Specifically, because in \eqnref{eq:H2rechoose} we have a conjugation by $\resU_g$, we need to take into account the anti-unitary action on $Q_{d-1}$. 
As a consequence, \eqnref{eq:H2closedClass} becomes [compare \eqnref{eqn:AntiunitaryOnQCA}]
\begin{equation}
        \omega_{g,h}\,\omega_{gh,k}={}^{\sigma(g)}\omega_{h,k} \omega_{g,hk},
\end{equation}
where $\sigma : G \to \mathbb{Z}_2 = \{ 0, 1 \}$ is the homomorphism keeping track of anti-unitarity, and we have used the action of $\mathbb{Z}_2$ on $\QCAblendclass{d-1}{}$ as in Eq.~\eqref{eqn:AntiunitaryOnQCA}. \eqnref{eq:2_coboundary} is modified similarly, and thus we
obtain an element of the twisted group cohomology $\grpHtwisted{2}{\sigma}{G}{\QCAblendclass{d-1}{}}$.

\subsection{The \texorpdfstring{$\grpH{3}{G}{\QCAblendclass{d-2}{}}$}{H3} invariant}
\label{sec:H3Invariant}

Suppose that the $\grpH{2}{G}{\QCAblendclass{d-1}{}}$ invariant vanishes. Then, assuming that $d \geq 2$, we can proceed further and derive an invariant valued in $\grpH{3}{G}{\QCAblendclass{d-2}{}}$.

The construction of the $\grpH{3}{G}{\QCAblendclass{d-2}{}}$ invariant is analogous to Ref.~\cite{Else2014}. If the $\grpH{2}{G}{\QCAblendclass{d-1}{}}$ invariant vanishes, then it is possible to define the restrictions $\resU_g$ on a region $R$ such that all of the $\mathcal{V}_{g,h}$s defined by \eqnref{eq:restricted_homomorphism}, which act on the boundary strip $S$, have trivial $(d-1)$-dimensional QCA class. This means that they can themselves be restricted to a subregion $R' \subseteq S$, giving restrictions $\widetilde{\mathcal{V}}_{g,h}$. 
However, the QCAs \(\widetilde{\mathcal{V}}_{g,h}\) themselves must satisfy some consistency conditions descending from the fact that \(\repU_g\) is a \(G\)-rep.
The equality of Eqs.~(\ref{eq:ghk_1},~\ref{eq:H2closed}) must hold  up to a QCA $\mathcal{W}_{g,h,k}$ acting within $\partial R' = (S\setminus R')^{+\xi} \cap R'$ [where \(\xi = O(r)\)]. Subsequently taking the $(d-2)$-dimensional blend class of $\mathcal{W}_{g,h,k}$ defines a 3-cochain $\omega_{g,h,k} \in \QCAblendclass{d-2}{}$. This cochain is, in fact, a cocycle, and its cohomology class is independent of the choice of restriction \(\widetilde{\mathcal{V}}_{g,h}\). Thus, in order to be able to choose all \(\mathcal{W}_{g,h,k}\) to be the identity, we must have that this cohomology class is trivial. If it is nontrivial, then \(\repU_g\) is anomalous.

The proof that \(\omega_{g,h,k}\) is a cocycle and defines an element of $\grpH{3}{G}{\QCAblendclass{d-2}{}}$ which is independent of the choice of restriction is tedious but straightforward, since the algebraic steps are exactly the same as for the $\grpH{3}{G}{\mathrm{U}(1)}$-valued invariant in $d=1$ described in Ref.~\cite{Else2014}. (For some details, see Appendix~\ref{appendix:H3_details}.)

We also remark that the original invariant of Ref.~\cite{Else2014} valued in $\grpH{3}{G}{\mathrm{U}(1)}$ for $d=1$ can be viewed as a special case of the $\grpH{3}{G}{\QCAblendclass{d-2}{}}$ invariant if we formally define ``$\QCAblendclass{-1}{} \cong \mathrm{U}(1)$'' (c.f.~\autoref{sec:HomotopyTheory}).

\subsection{Higher invariants}
\label{subsec:HigherInvariants}

By extrapolating the pattern seen above, one might expect that if the $\grpH{3}{G}{\QCAblendclass{d-2}{}}$ invariant is trivial, then one can define an $\grpH{4}{G}{ \QCAblendclass{d-3}{}}$ invariant, and so on. However, actually defining such an invariant in terms of restrictions of \(G\)-reps is not so easy. (In the case of $d=2$, the $\grpH{4}{G}{\mathrm{U}(1)}$ invariant has recently been constructed in Refs.~\cite{Kapustin_2505,Kawagoe_2507}; it might be possible to apply the same methods to construct the $\grpH{4}{G}{\QCAblendclass{d-3}{}}$ invariant for general $d$.)

In \autoref{sec:HomotopyTheory} we will give a more general perspective on the classification of lattice anomalies through a homotopy-theoretic approach.

\section{A lattice anomaly in two dimensions}\label{sec:em-qubit}

In this section, we describe an example of an anomalous \(\ZZ_2\)-rep in two dimensions with a non-trivial $\grpH{2}{\ZZ_2}{\QCAblendclass{1}{}}$ class.
This example is based on the radical chiral Floquet (CF) cycle introduced in Ref.~\cite{Po2017}. In its original context, this was a finite depth unitary circuit which, when restricted to a finite region, was observed to result in chiral edge dynamics which transport a fractional number of degrees of freedom per cycle. The construction is heavily based around a picture of \(\ZZ_2\) topological order~\cite{Wen2007book,Kitaev1997toric,Kitaev2003FaultTolerant,Kitaev2006BeyondAnyons}, and the radical CF circuit also results in an anyon permutation in the toric code model. Since its introduction, the radical CF circuit has been extended in several ways~\cite{Potter2017DynamicallyEnriched,Sullivan2023TwistNetwork,Roberts2023GeometricRadical,Fidkowski2019FloquetFermion}, and it notably underlies the construction of Floquet codes~\cite{Aasen2022,Hastings2021dynamically,Davydova2024DynamicAutomorphism}.

The picture of \(\ZZ_2\) topological order is not necessary to understand the lattice anomaly structure of the radical CF cycle. However, it does provide an example of a Hamiltonian which is symmetric under the cycle. It is useful to keep this example in mind for \autoref{sec:ConsequencesOfAnomalies}, which considers general consequences for a Hamiltonian which is symmetric under an anomalous symmetry.

In \autoref{sec:EMSwapDef} we define the radical CF cycle, using a presentation entirely in terms of spin degrees of freedom. (Reference~\cite{Po2017} maps to an equivalent formulation in terms of Majoranas.)
We then slightly modify the standard radical CF model so that it is, in fact, a \(\ZZ_2\)-rep in \autoref{sec:EMSwapRep}.  Then, in \autoref{sec:EMSwapAnomaly}, we demonstrate that the radical CF model has a nontrivial lattice anomaly, despite the fact that there are no QFT anomalies for a \(\ZZ_2\) global symmetry in two dimensions.

\subsection{The radical chiral Floquet cycle}
\label{sec:EMSwapDef}

\begin{figure*}
    \centering
    \includegraphics{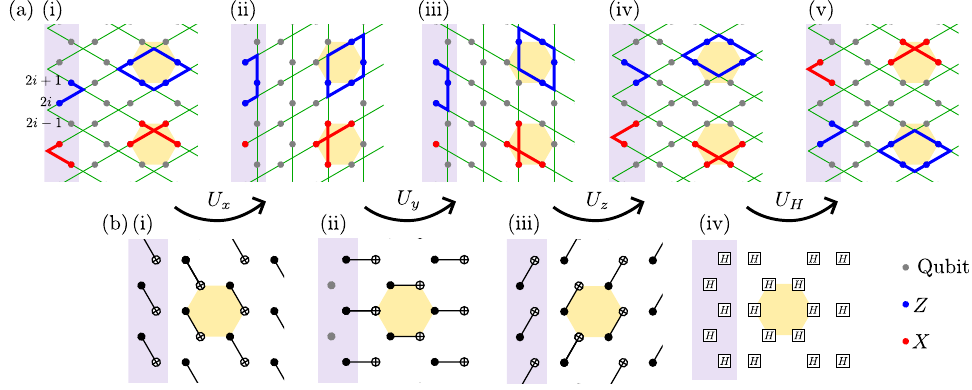}
    \caption{The radical CF circuit~\cite{Po2017} in the qubit picture. Each panel represents the lattice, with the left side being the boundary of the half plane $R$ (with the boundary strip $S$ highlighted in purple). (a)~Operators \(A_h\) (red) and \(B_h\) (blue) are associated to each hexagon, and correspond to the usual toric code plaquette and star stabilizers on a virtual lattice (green). The CNOT gates in (b) result in successive rotations of the star and plaquette operators by \(\pi/3\) around the hexagon. The final Hadamard gate swaps Pauli \(X\) and \(Z\) operators. The action on the boundary operators, which may be viewed as partial star and plaquette operators supported in \(S\), is also shown.
        }
    \label{fig:em-stab}
\end{figure*}

We first describe the radical CF circuit.
The circuit acts on a hexagonal lattice of qubits, with one qubit on each vertex.
The circuit is
\begin{equation}
    U = U_H U_z U_y U_x,
\end{equation}
where \(U_H\) is a tensor product of Hadamard gates on all qubits, and \(U_{x,y,z}\) are edge coverings of the lattice with CNOT gates of different orientations (\autoref{fig:em-stab}).

It is instructive to understand the action of this circuit on the toric code model. For each hexagon in the lattice, define operators \(A_h\) (a product of Pauli \(X\) operators on the upper four qubits) and \(B_h\) (a product of Pauli \(Z\) operators on the lower four qubits), as shown in \autoref{fig:em-stab}(a.i). We can define a virtual square lattice with qubits on edges, also shown in \autoref{fig:em-stab}(a.i), so that it becomes clear that \(A_h\) and \(B_h\) are associated to stars and plaquettes of the virtual lattice, respectively.

The corresponding commuting projector Hamiltonian, 
\begin{equation}
    H_{\mathrm{TC}} = \sum_h \frac{1}{2}(\mathbb{I}-A_h) + \frac{1}{2}(\mathbb{I}-B_h),
\end{equation}
is the well-known toric code model of \(\ZZ_2\) topological order~\cite{Kitaev1997toric,Kitaev2003FaultTolerant,Kitaev2006BeyondAnyons}. It hosts anyonic quasiparticles, \(e\) (corresponding to \(A_h\) excitations) and \(m\) (corresponding to \(B_h\) excitations), which have nontrivial mutual braiding statistics. The composite quasiparticle formed by fusing \(e\) and \(m\) is labeled \(\epsilon\). \(A_h\) and \(B_h\) are called the stabilizers of the model.

The three layers of CNOT gates in \(U\) each rotate the operators \(A_h\) and \(B_h\) around their hexagons by \(\pi/3\) anticlockwise. This is illustrated in \autoref{fig:em-stab}(a.ii-iv). The composition \(U_z U_y U_x\) results in \(A_h\) and \(B_h\) changing places, so that \(U_z U_y U_x[A_h]\) is now positioned in the lower half of the hexagon. Applying a Hadamard gate to every qubit exchanges \(X\) and \(Z\), resulting in
\begin{equation}
    U[A_h] = B_h,\quad U[B_h] = A_h.
\end{equation}
In other words, \(U\) exchanges the $e$ and $m$ excitations for each hexagon.
The Hamiltonian \(H_{\mathrm{TC}}\) is symmetric under \(U\).

\subsection{The \texorpdfstring{\(e\)-\(m\) exchange $\ZZ_2$}{e-m exchange Z2}-rep}
\label{sec:EMSwapRep}

Before proceeding to extract the anomaly of the radical CF circuit by examining a restriction to a half volume, we first slightly modify it so that it becomes a $\ZZ_2$-rep.
Note that $U$ is not a $\ZZ_2$ symmetry by itself.
Even though $U^2$ fixes all bulk toric code operators, it gives each toric code eigenstate a phase factor that depends on the anyon distributions (and so in particular is not the identity).

These phase factors can be intuitively derived by tracking the motion of \(A_h\) and \(B_h\) stabilizers throughout two periods of the evolution. $U^2$ can be decomposed into six steps: $U_x$, $U_y$, $U_z$, $U_HU_xU_H$, $U_HU_yU_H$, $U_HU_zU_H$. Since conjugation by $U_H$ reverses the control and target of CNOTs, the last three evolution steps can be viewed as the continuation of the first three evolution steps. That is, each of the six steps consist of a product of CNOT gates, successively rotated by \(\pi/3\) around the center of a hexagon. All six steps make a full \(2 \pi\) rotation.
Following \(A_h\) and \(B_h\), we see that they rotate around each other, returning to their initial positions after all six steps.
If both \(A_h\) and \(B_h\) are excited, so that the hexagon hosts both an \(e\) and an \(m\) anyon, this results in a braid of these two anyons, and an associated \(-1\) phase.
This shows that for each hexagon, a $-1$ phase is produced if and only if there is an $\epsilon$ anyon on that hexagon.
One can also derive this phase directly (see Appendix~\ref{sec:em-QCA}).

To make the radical CF circuit an actual $\ZZ_2$-rep, we can simply correct this phase by multiplying $U$ by a circuit that produces a phase of $i$ for each hexagon occupied by $\epsilon$ (noting that this circuit commutes with $U$). That is, we define
\begin{subequations}\label{eq:em-rep}
\begin{align}
    \repU_1&=\id \\
    \repU_x&=U\cdot \prod_{h}i^{\frac{1-A_h}{2}\frac{1-B_h}{2}},
\end{align}
\end{subequations}
where we denote \(\ZZ_2 = \{1,x\}\), and the circuit \(\prod_{h}i^{\frac{1-A_h}{2}\frac{1-B_h}{2}}\) acts by conjugation.
We that have $\repU_x^2=\id=\repU_1$ is the identity, and the action of $\repU_x$ on the stabilizers is the same as that of $U$.

We refer to \(\repU: \ZZ_2 \to \QCA_2\) as the \(e\)-\(m\) exchange \(\ZZ_2\)-rep, due to its action on the toric code stabilizers.

The full action of $U$, $U^2$ and $\resU_x^2$ (where \(\resU_x\) is a restriction of \(\repU_x\)) is shown in Appendix~\ref{sec:em-QCA}.

\subsection{Lattice anomaly}
\label{sec:EMSwapAnomaly}

We now examine the restriction of \(\repU\) to a half-volume, and extract its \(\grpH{2}{\ZZ_2}{\QCAblendclass{1}{}}\) invariant as in \autoref{sec:H2Invariant}. We find that this invariant is non-trivial, so that the \(e\)-\(m\) rep is (lattice) anomalous, even though there is no QFT anomaly for a global \(\ZZ_2\) symmetry in two dimensions.

Denote the half plane with \(x>0\) by $R$, and the restriction of \(\repU\) to \(R\) by \(\resU\), for which we include all gates contained completely inside $R$. The boundary strip on which \(\resU\) can differ from \(\repU\) or the trivial \(\ZZ_2\)-rep is highlighted in purple in \autoref{fig:em-stab}, and denoted by \(S\).

The restricted operator $\resU_x^2=\bdyU_{x,x}$ is no longer the identity.
A straightforward calculation shows that $\bdyU_{x,x}$ is nontrivial only on $S$, and the action is (with the labeling of the qubits shown in the upper left of \autoref{fig:em-stab})
\begin{equation}
    \bdyU_{x,x}:
    \begin{aligned}
        Z_{2i}&\mapsto Z_{2i+1}\\
        X_{2i}&\mapsto X_{2i}X_{2i+1}X_{2i+2}\\
        Z_{2i+1}&\mapsto Z_{2i+1}Z_{2i+2}Z_{2i+3}\\
        X_{2i+1}&\mapsto X_{2i+2}.
    \end{aligned}
\end{equation}
This QCA differs from a translation by one site by a finite depth circuit. Specifically, \(\bdyU_{x,x}\) is equal to a translation towards positive \(i\) composed with a circuit of CNOTs on every edge with controls on the even sites and targets on the odd sites.
As a one-dimensional QCA on the boundary strip, it has GNVW index \(\omega_{x,x} = \mathcal{I}(\bdyU_{x,x}) = 2\)~\cite{Gross2012QCA}, and it is straightforward to see \(\omega_{g,h} = \mathcal{I}(\bdyU_{g,h}) = 1\) otherwise (as at least one of \(g\) or \(h\) must be trivial).
This cocycle belongs to a nontrivial cohomology class in \(\grpH{2}{\ZZ_2}{\QCAblendclass{1}{}}\cong \grpH{2}{\ZZ_2}{\QQ_\times}\). Thus, the \(e\)-\(m\) exchange \(\ZZ_2\)-rep has a lattice anomaly.

Indeed, demanding that \(\omega_{g,h} = \beta_g \beta_h/\beta_{gh}\)
implies $\beta_1=1$ and $\beta_x=\sqrt{2}\not\in\QQ_\times$, which is impossible for a QCA with a rational GNVW index. 
This \(\sqrt{2}\) is the origin of the ``radical'' name for the radical CF circuit~\cite{Po2017}.

Once again, it is revealing to consider the action of this restriction on the toric code model.
A key concept in our discussion will be the \emph{boundary algebra}---the algebra of operators on the boundary which commute with all bulk stabilizers \(A_h\) and \(B_h\). Here, we discuss this concept somewhat intuitively, and defer a more general and precise definition to \autoref{sec:ConsequencesOfAnomalies}, where the boundary algebra will continue to play a central role.

The boundary operators---local operators that commute with all the bulk toric code stabilizers---are generated by the \emph{broken stars} and \emph{broken plaquettes} shown in red and blue within the boundary strip in \autoref{fig:em-QCA}. Note that here we are not choosing a gapped boundary for the model, which would involve including a commuting subset of the boundary operators in the Hamiltonian.

Explicitly, the boundary algebra is generated by
\begin{equation}
    Z_{2i} Z_{2i+1}, \quad\text{and}\quad
    X_{2i-1} X_{2i}.
\end{equation}
This is isomorphic to the algebra of symmetric operators for the on-site symmetry \(\overline{X} = \prod_i \sigma^x_i\) in a chain of qubits, where we denote the Pauli operators on this second chain by \(\sigma^\alpha_i\) to distinguish them from the boundary algebra operators. The identification is
\begin{equation}
    Z_{2i} Z_{2i+1} \leftrightarrow \sigma^z_{i}\sigma^z_{i+1}, \quad\text{and}\quad
    X_{2i-1} X_{2i} \leftrightarrow \sigma^x_i.
\end{equation}

As \(\resU_x\) maps the bulk toric code model back to itself, it also must map boundary operators to boundary operators. Indeed, we find (\autoref{fig:em-QCA})
\begin{equation}
    \resU_{x}:
    \begin{aligned}
        Z_{2i} Z_{2i+1}&\mapsto X_{2i+1}X_{2i+2},\\
        X_{2i-1}X_{2i}&\mapsto Z_{2i} Z_{2i+1}.
    \end{aligned}
\end{equation}
This corresponds to the Kramers-Wannier (KW) duality in the \(\sigma^\alpha\) variables,
\begin{equation}
    \resU_{x}:
    \begin{aligned}
        \sigma^z_{i}\sigma^z_{i+1}&\mapsto \sigma^x_{i+1},\\
        \sigma^x_i&\mapsto \sigma^z_{i}\sigma^z_{i+1}.
    \end{aligned}
\end{equation}
Indeed, this KW duality squares to a translation by one site, or to \(\bdyU_{x,x}\) in the original variables. While there is no QCA on the \emph{boundary strip} \(S\) which squares to \(\bdyU_{x,x}\), this KW duality on the \emph{boundary algebra} does square to \(\bdyU_{x,x}\).

It is not coincidence that the toric code is symmetric under \(\repU\), and that the boundary algebra of the toric code admits an automorphism which squares to \(\bdyU_{x,x}\). In \autoref{sec:ConsequencesOfAnomalies}, we will show that a commuting Hamiltonian can only be symmetric under an anomalous symmetry if its boundary algebra is compatible with the anomaly, in a sense which generalizes the \(e\)-\(m\) exchange and toric code example.

\section{Consequences of lattice anomalies}
\label{sec:ConsequencesOfAnomalies}

In this section, we will address the consequences implied by a lattice Hamiltonian being symmetric under an anomalous $G$-rep.

As we have discussed, in the cases where the anomaly maps to a non-trivial QFT anomaly via the map \eqnref{eq:anomaly_homomorphism}, we have the implication that, if the low-energy physics is governed by a QFT, that QFT must carry the corresponding QFT anomaly. The presence of a nontrivial QFT anomaly has many known consequences~\cite{TongLecturesGauge}.

The situation for an IR-trivial lattice anomaly, which admits a trivially gapped ground state, is less familiar. Nevertheless, we will show that non-trivial lattice anomalies do have sharply defined consequences if one specifically considers Hamiltonians which are the sum of local terms that all commute with each other (such as commuting projector models, for example). Such Hamiltonians have been widely studied in the literature due to their exact solvability. Another practical reason to study this class of Hamiltonians is that any Hamiltonian exhibiting many-body localization (MBL)~\cite{Sierant2024review} is in this class. Thus, the results that we establish will provide strong constraints on localizability in the presence of an anomalous symmetry.

The key concept underlying our results is the \emph{boundary algebra} associated with a boundary truncation of a commuting model. Our definition of the boundary algebra is similar in spirit to that of Ref.~\cite{Jones_2307}, although we will not guarantee that our definitions precisely coincide with theirs in all cases.
It is also closely related to the concept of boundary algebra of a QCA introduced in Refs.~\cite{Freedman2020_2DQCA,Haah2023}.

We will give the precise definition of boundary algebra below. Here we will just mention that given a Hamiltonian that is the sum of commuting terms
\begin{equation}
H = \sum_{i \in I} h_i,
\end{equation}
the boundary algebra is defined with respect to a truncation
\begin{equation}
\widetilde{H} = \sum_{i \in \widetilde{I}} h_i,
\end{equation}
where $\widetilde{I} \subseteq I$. Moreover, there
is a notion of \emph{invertibility} of the boundary algebra.
One of our main results is:
\begin{quote}
    Consider a $G$-rep, and suppose that there is a commuting model with a symmetric boundary truncation  (i.e.\ $\widetilde{H}$ is invariant under the symmetry) such that the boundary algebra is invertible. Then the $G$-rep is blend equivalent to an on-site $G$-rep. That is, the lattice anomaly of the $G$-rep is trivial.
\end{quote}

\newcommand{\TrivCommutingModel}{trivial}
\newcommand{\ATrivCommutingModel}{a trivial}
A sufficient  condition for the boundary algebra to be invertible is that the commuting model is 
\emph{\TrivCommutingModel}, i.e.\ all the local commuting terms that appear in the Hamiltonian act on single sites, such that there is a unique ground state on each site. In fact, it is also sufficient for the commuting model to be invertible: that is, one can stack the commuting model with another one such that the combined commuting model can by transformed by a finite-depth circuit into  \ATrivCommutingModel{} commuting model. One can show (see Appendix~\ref{appendix:InvertibleIffQCAPreparable}) that this is equivalent to saying that there exists a QCA (as opposed to a finite-depth circuit) that transforms the original commuting model to \ATrivCommutingModel{} commuting model.

Thus, symmetric invertible commuting models are forbidden by a non-trivial lattice anomaly. It is pleasing to note the parallel with the result for QFT anomalies, in which case the expected result is that the anomaly is incompatible with symmetric invertible \emph{states}. However, the result for IR-trivial lattice anomalies is weaker---for example, non-invertible commuting models may not necessarily have non-invertible ground states.

These results also have important implications for MBL phases. Recall that an MBL system is characterized by a set of (quasi-)local integrals of motion (LIOMs) which all commute with each other~\cite{Huse2014fullyMBL}. These LIOMs should satisfy a completeness property, which in particular implies that the simultaneous eigenstates of all the LIOMs are non-degenerate. We say that the MBL system is symmetric (which also excludes the case of spontaneous symmetry breaking) if the LIOMs are invariant under the symmetry. Thus, we can apply the theorem described above, and we conclude that if the symmetry has a non-trivial lattice anomaly, then it is impossible to have a symmetric MBL phase that satisfies a condition known as \emph{short-range entangled (SRE)} \cite{Long2024} (which includes, in particular, any MBL phase that is in the trivial MBL phase if we disregard the symmetry).

Now, consider a $G$-rep with a non-trivial lattice anomaly. The above results show this symmmetry is incompatbile with a class of commuting models / MBL phases. So which commuting models / MBL phases \emph{are} compatible with the symmetry? In particular, the boundary algebra in this case must be non-invertible. Later in this section, we will take some steps towards answering this question, with a specific focus on the case of the $\grpH{2}{G}{\QCAblendclass{d-1}{}}$ anomaly described in \autoref{sec:H2Invariant}. In particular, we will identify a necessary condition on the boundary algebra in terms of the $\grpH{2}{G}{\QCAblendclass{d-1}{}}$ class.

\subsection{Local algebras}
\label{subsec:local_algebras}
\newcommand{\tl}{\triangleleft}
The boundary algebra will be an example of a general structure we refer to as a \emph{local algebra}. The idea of a local algebras is that they generalize the algebras of local operators we have been discussing until now, while retaining the concept that elements of the algebras are supported on subsets of the lattice sites.

First we recall the concept of a $*$-algebra~\cite{Murphy1990CstarBook}. A $*$-algebra is a vector space over $\mathbb{C}$ equipped with abstract multiplication (not necessarily commutative) and adjoint operations satisfying natural compatibility conditions. It must contain a multiplicative identity, which we refer to as the \emph{unit} and denote by $\unit$.

Then,
we define a \emph{(bosonic) local algebra} over a set $X$ equipped with a distance function $d(\cdot,\cdot)$ (i.e.\ a metric) to be a $*$-algebra $A$ equipped with a relation ``$a \tl S$'', where $a \in A, S \subseteq X$ (in words $a \tl S$ means ``$a$ is supported on $S$'') which satisfies the following properties:
\begin{enumerate}
    \item If $a \tl S$ and $S \subseteq S'$ then $a \tl S'$.
    \item If $a \tl S$ and $b \tl S$ then $a + b \tl S$ and $ab \tl S$.
    \item If $a \tl S$, then $a^{\dagger} \tl S$.
    \item $a$ satisfies $a \tl \emptyset$ if and only if $a = \alpha \unit$ for some $\alpha \in \mathbb{C}$, where $\emptyset$ is the empty set.
    \item $a \tl X$ for all $a \in A$.
    \item For any $a \in A$, and any $S \subseteq X$ with $a \tl S$, there exists a bounded set $S' \subseteq S$ with $a \tl S'$.
    \item \label{item:finiteness_condition} \label{item:last_before_commutator} (``Locally finite-dimensional'') For any bounded set $S \subseteq X$, the subalgebra $A\{ S \}$ is finite-dimensional (as a vector space).
    \item\label{item:commutator_condition}If $a \tl S$ and $b \tl S'$ with $S$ and $S'$ disjoint, then $[a,b] = 0$.
\end{enumerate}
Here we have defined the subalgebra $A\{ S \} := \{ a \in A : a \tl S \}$. We will sometimes use the shorthand notation \(A_S\) in place of \(A\{S\}\).

 Note these conditions are roughly equivalent to the concept of a ``local net of algebras'', as defined in Ref.~\cite{Jones_2307} for example\footnote{In particular, any local algebra $A$ defines a local net of algebras in the sense of Ref.~\cite{Jones_2307} by assigning $A\{ S \}$ to each bounded set $S$, except that in our definitions, we do not assume a norm on $A$, so in particular we cannot say it is a $C^*$-algebra (although we do indeed expect that in all cases of physical interest, $A$ can be endowed with a norm such that the norm closure of $A$ is a $C^*$-algebra). Conversely, given any local net of algebras in a $C^*$-algebra $A$, we can define a local algebra $A' = \cup_{S \subseteq X \mathrm{bounded}} A\{ S \}$ with $a \tl S \Leftrightarrow$ ($\exists$ bounded $S' \subseteq S$ with $a \in A\{ S' \}$), except that the axioms of Ref.~\cite{Jones_2307} do not necessarily imply the locally finite-dimensional condition.}.
Condition \ref{item:finiteness_condition} encodes the fact that we are working with lattice systems as opposed to, say, quantum field theories.

We call an isomorphism $\varphi : A \to B$ between local algebras on the same set $X$ an \emph{ultra-local isomorphism} if for all $a \in A$, $S \subseteq X$, we have $a \tl S \Leftrightarrow \varphi(a) \tl S$. We also define $\varphi$ to be an \emph{local isomorphism} if there exists some finite $r$ such that for all $a \in A$, $S \subseteq X$ we have that $a
\tl S \implies (\varphi(a) \tl S^{+r} \mbox{ and } \varphi^{-1}(a) \tl S^{+r})$.

If $A$ and $B$ are two local algebras on $X$, we can define the tensor product $A \otimes B$, which we also describe as stacking. An element $c \in A \otimes B$ satisfies $c \tl S$ if and only if $c$ can be written as a finite sum
\begin{equation}
c = \sum_{j=1}^n a_j \otimes b_j,
\end{equation}
with $a_j \tl S$ and $b_j \tl S$.

We can also define the concept of a fermionic local algebra which will apply to fermionic systems (see Appendix~\ref{appendix:fermionic_local_algebra} for the details).
Below, when we refer to a ``local algebra'' without specifying bosonic of fermionic, the statements we make will apply to either one.

Finally, we define the concept of an \emph{on-site algebra}.
Firstly, we say that a $*$-algebra $A$ is a full bosonic matrix algebra if there exists a finite-dimensional Hilbert space  $V$  such that $A$ is isomorphic to the algebra of linear operators on $V$. (See Appendix~\ref{appendix:fermionic_local_algebra} for the definition of a full fermionic matrix algebra.)
Then, we say that a (bosonic/fermionic) local algebra $A$ over a set $X$ is an on-site algebra if there exists $\Lambda \subseteq X$ such that $A$ is ultra-locally isomorphic to a (bosonic/fermionic) tensor product
\begin{equation}
    A \cong \bigotimes_{x \in \Lambda} A_x,
\end{equation}
where $A_x$ is supported on $\{ x \}$ and is a full (bosonic/fermionic) matrix algebra. In order for the locally finite-dimensional condition to be satisfied, we should take the set $\Lambda$ to be locally finite, i.e.\ all bounded subsets of $\Lambda$ are finite. Thus, on-site local algebras correspond to the algebras of local operators that we have previously been considering (\autoref{subsec:what_is_a_symmetry}). Moreover, QCAs correspond to local automorphisms of on-site algebras.

\subsection{Commuting models and boundary algebras}
\label{sec:CommutingModelsBoundaryAlgebra}

We consider Hamiltonians of the form
\begin{equation}
\label{eq:commuting_hamiltonian}
    H = \sum_{i \in I} h_i,
\end{equation}
which are built from local Hermitian operators $h_i$ in an on-site algebra over $X$ (for example we could take $X = \mathbb{R}^d$), where $i$ ranges over a countable index set $I$, and $h_i$ and $h_j$ commute for any $i,j \in I$. (In the fermionic case, we require that the $h_i$'s be fermion-parity-even.) Technically, in an infinite system we should view $H$ as a formal linear sum.
Without loss of generality, we can assume that the $h_i$'s are positive (i.e.\ their eigenvalues are non-negative). We will require the model to be frustration-free: that is, there is at least one state which is a zero eigenstate of $h_i$ for all $i \in I$. Finally, we require that there exists a finite $r$ such that $h_i$ is supported on a set $X_i$ of diameter at most $r$ for all $i \in I$. If $\{ h_i \}$ satisfies all of the conditions described above, we will refer to it as a \emph{commuting model}.

To define boundary algebra, we first introduce a precursor notion called \emph{projected algebra}. A boundary algebra will be a projected algebra satisfying a certain condition. Let $A$ be an on-site algebra. The projected algebra $A(H)$ is formally defined for any Hamiltonian $H$ that is the sum of positive commuting terms and is frustration-free.
To give intuition about the construction, let us first restrict to the bosonic case and
suppose that $A$ is the algebra of operators acting on a finite-dimensional Hilbert space $\mathcal{H}$.
Now consider the subspace $\mathcal{H}(H) \leq \mathcal{H}$ comprising those states which are annihilated by all the local terms of $H$, which is just the kernel of \(H\).
[The frustration-free condition ensures that there is at least one such state, so that $\mathcal{H}(H)$ is at least one-dimensional.]
We will refer to states in $\mathcal{H}(H)$ as \emph{projected states}.
Then the projected algebra $A(H)$ is defined to be the algebra of operators on $\mathcal{H}(H)$.

We can give an equivalent construction of the projected algebra as follows. Define $\mathcal{A}(H)$ to be the algebra of operators $a$ on $\mathcal{H}$ which are block diagonal in $\mathcal{H}(H)$ and its orthogonal complement in $\mathcal{H}$. Equivalently, $\mathcal{A}(H)$ is the algebra of operators $a$ on $\mathcal{H}$ such that $a\ket{\psi}, a^{\dagger} \ket{\psi} \in \mathcal{H}(H)$ for all $\ket{\psi} \in \mathcal{H}(H)$. Next, we define $\mathcal{B}(H) \leq \mathcal{A}(H)$ to comprise those operators $b$ which annihilate the projected states, that is, $ b \ket{\psi} = b^{\dagger} \ket{\psi} = 0$ for all $\ket{\psi} \in \mathcal{H}(H)$. Then one can show that $A(H) \cong \mathcal{A}(H) / \mathcal{B}(H)$, where we view $\mathcal{B}(H)$ as a two-sided ideal of $\mathcal{A}(H)$ and take the ring quotient. This construction also allows $A(H)$ to inherit the structure of a local algebra from $A$ if we declare that $a \in A(H)$ is supported on a set $S \subseteq X$ if and only if there exists a representative $a' \in \mathcal{A}(H)$ such that $a'$ is supported on $S$. 
This version of the definition can also easily be generalized to
infinite systems and fermionic systems, as well as to more general local algebras than on-site algebras
(Appendices~\ref{appendix:infinite_system}-\ref{appendix:AlternativeProjectedAlgebra}).

Now, consider a commuting Hamiltonian $H = \sum_{i \in I} h_i$, and choose a restriction
\begin{equation}
\widetilde{H} = \sum_{i \in \widetilde{I}} h_i
\end{equation}
where $\widetilde{I} \subseteq I$, and a set $W$ such that $h_i$ is supported on $W$ for all $i \in I$. Recall that $A_W$ is the subalgebra of $A$ comprising all the operators supported on $W$.
We say that the projected algebra $A_W(\widetilde{H})$ is a \emph{boundary algebra} if there exists a finite $r$ such that all elements of $A_W(\widetilde{H})$ [which recall, are by definition supported on $W$] are supported on $W \cap (W^c)^{+r}$.
We observe that the minimal choice of $W$ is to set $W = X_{\widetilde{H}}$, where $X_{\widetilde{H}}$ is the union of the supports of all \(h_i\) with \(i \in \widetilde{I}\).
For any larger $W$, $A_W(\widetilde{H})$ only differs from $A_{X_{\widetilde{H}}}(\widetilde{H})$ by stacking with the on-site algebra $A_{W \setminus X_{\widetilde{H}}}$.
See \autoref{fig:bdy-alg} for a graphical presentation of the boundary algebra of a simple model.

 In any ``sufficiently nice'' commuting model $\{ h_i \}$ in an on-site algebra over $\mathbb{R}^d$, we expect that after considering all possible choices of restrictions, there should be exactly one boundary algebra (up to some concept of equivalence) for ``nice'' subsets $W \subseteq \mathbb{R}^d$ (for example, if $W$ is a half-volume) that does not depend on the choice of $W$. However, we will not try to identify the precise conditions under which this occurs (but see Ref.~\cite{Jones_2307}). In any case, the general results we prove below do not depend on the uniqueness of the boundary algebra. 

We say that a boundary algebra $A_W(\widetilde{H})$ is \emph{invertible} if there exists a local algebra $B$  supported on $W \cap (W^c)^{+r'}$ for some finite $r'$, such that $A_W(\widetilde{H}) \otimes B$ is locally isomorphic to an on-site algebra.  For example, consider the case where the index set $I$ is just the set of all lattice sites $\Lambda$, and each $h_i$ is supported only on the site $i$ and has a non-degenerate ground state on that site. (We referred to this in the introduction of \autoref{sec:ConsequencesOfAnomalies} as a \TrivCommutingModel~commuting model.) 
Then for any set $W \subseteq \mathbb{R}^d$, we can define 
\begin{equation}
\widetilde{H}_W = \sum_{i \in \Lambda \cap W} h_i
\end{equation}
Then we see that indeed $A_W(\widetilde{H}_W)$ is the trivial algebra, hence it is on-site, hence it is invertible.

Finally, let us note that there is a close connection between the boundary algebra of a commuting model as defined here, and the concept of the boundary algebra of a QCA introduced in Refs.~\cite{Freedman2020_2DQCA,Haah2023} (which is always invertible). Specifically, the connection is that if two commuting models $\{ h_i \}$ and $\{ h_i' \}$ are related through the action of a QCA $U$, i.e. $h_i' = U[h_i]$, then the corresponding boundary algebras differ (after modding out by tensoring with on-site algebras) by tensoring with the boundary algebra of the QCA, as we show in \autoref{sec:BoundaryGRep}.
In particular, consider a commuting model that is invertible (recall from the introduction of \autoref{sec:ConsequencesOfAnomalies} that this is equivalent to preparability from \ATrivCommutingModel{} commuting model by a QCA). Then its boundary algebra is invertible and equivalent to the boundary algebra of the QCA that relates it to \ATrivCommutingModel{} commuting model. It is also interesting to ask a converse question: if a commuting model has invertible boundary algebra, is the commuting model necessarily invertible? We leave the answer for future work.

\begin{figure}
    \centering
    \includegraphics{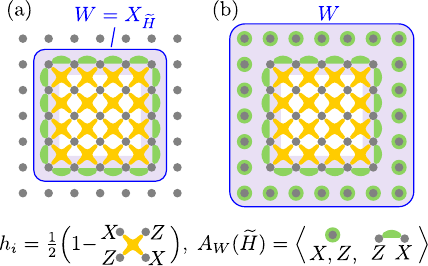}
    \caption{
    A simple example of a boundary algebra which is not on-site is provided by the \emph{Wen plaquette model}---a qubit commuting projector model with terms \(h_i\) as shown~\cite{Wen2003Plaquette}. Here $I$ is the set of plaquettes and $\widetilde{I}$ the 16 plaquettes near the center. The purple strip indicates $W\cap (W^c)^{+r}$. (a) The minimal choice of $W=X_{\widetilde{H}}$ results in a boundary algebra generated by representative Pauli operators \(ZX\) on adjacent qubits. (b) A non-minimal choice of $W$ results in a boundary algebra related to (a) by adding single-site ancillas.
    \label{fig:bdy-alg}}
\end{figure}

\subsection{Constraints from anomalous symmetries on symmetric commuting models}
\label{subsec:symmetric_constraints}

Given a group $G$, then we say that a commuting model $\{ h_i \}$ is \emph{strongly symmetric} under a $G$-rep $\repU$ if each $h_i$ is symmetric, i.e. $\repU_g[h_i] = h_i$ for all \(g \in G\). 
We say that a commuting model is \emph{weakly symmetric} if the Hamiltonian
\begin{equation}
    H = \sum_{i \in I} h_i
\end{equation}
is symmetric.

We say that a commuting model admits a \emph{symmetric boundary algebra} on a set $W \subseteq \Lambda$ if there exists a subset $\widetilde{I} \subseteq I$ such that the restricted Hamiltonian
\begin{equation}
    \widetilde{H} = \sum_{i \in \widetilde{I}} h_i
\end{equation}
is symmetric,
and $A_W(\widetilde{H})$ satisfies the condition to be a boundary algebra. Note that for strongly symmetric models, any choice of restricted Hamiltonian will be symmetric. For weakly symmetric models, it may be possible to find a restricted Hamiltonian that is symmetric (for example if the symmetry just permutes the $h_i$'s over finite orbits), or it may be impossible (for example, in the case of commuting models that are weakly symmetric under translation symmetry). 

It turns out that an anomalous symmetry implies powerful constraints on symmetric boundary algebras. More precisely:

\begin{thm}\label{thm:ObstructionToInvertible}
Consider a $G$-rep $\mathcal{U}$ on an on-site algebra $A$ over $X$, and suppose that there exists an invertible symmetric boundary algebra $A_W(\widetilde{H})$ for some set $W \subseteq X$. Then there exists a $G$-rep $\widetilde{\mathcal{U}}$ acting on an on-site algebra supported on $(W^c)^{+r}$ (for some finite \(r\)) which agrees with $\mathcal{U}$ in $W^c$.
\begin{proof}
The local isomorphism appearing in the definition of invertibility can be extended to a local isomorphism $A(\widetilde{H}) \otimes B \cong M$, where $M$ is an on-site algebra supported on $(W^c)^{+r}$. 
$\widetilde{H}$ being symmetric implies that $\mathcal{U}_g$ defines a local action of $G$ on $A(\widetilde{H})$. 
Tensoring with any \(G\)-rep on  $B$ (for example, the trivial action) and carrying the action through the isomorphism we get an action on $M$. 
This action agrees with $\mathcal{U}_g$ on $W^c$.
\end{proof}
\end{thm}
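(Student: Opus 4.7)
The plan is to descend $\repU$ to the projected algebra and then transport it through the invertibility isomorphism to an on-site algebra $M$ supported on $(W^c)^{+r}$. First, I would exploit the symmetry of $\widetilde{H}$: because each $h_i$ with $i \in \widetilde{I}$ is fixed by $\repU_g$, the automorphism $\repU_g$ preserves the joint kernel of the $h_i$'s and hence also preserves the two-sided ideal $\mathcal{B}(\widetilde{H})$ of operators annihilating this kernel. Therefore $\repU_g$ descends to a local $*$-automorphism of $A(\widetilde{H}) = \mathcal{A}(\widetilde{H})/\mathcal{B}(\widetilde{H})$, giving a well-defined $G$-rep on the projected algebra.

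Next, I would promote the invertibility isomorphism to one on all of $A(\widetilde{H})$. Every operator in $A_{W^c}$ trivially commutes with $\widetilde{H}$ (whose terms are supported in $W$) and is manifestly not in $\mathcal{B}(\widetilde{H})$, so $A_{W^c}$ embeds in $A(\widetilde{H})$ as a subalgebra commuting with $A_W(\widetilde{H})$. Thus, locally, $A(\widetilde{H})$ splits as $A_{W^c} \otimes A_W(\widetilde{H})$. Combining the local isomorphism $A_W(\widetilde{H}) \otimes B \xrightarrow{\sim} N$ given by invertibility with the identity on $A_{W^c}$ produces a local isomorphism
\begin{equation}
    \Phi : A(\widetilde{H}) \otimes B \;\xrightarrow{\sim}\; A_{W^c} \otimes N =: M,
\end{equation}
where $M$ is on-site and supported on $(W^c)^{+r}$. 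I would then endow $B$ with any $G$-rep (the trivial one is simplest) and transport the action across $\Phi$, defining $\widetilde{\repU}_g := \Phi \circ (\repU_g \otimes \id_B) \circ \Phi^{-1}$.

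Locality of $\widetilde{\repU}_g$ follows from composing local maps. To verify agreement with $\repU$ on $W^c$, I would observe that for $a \in A_{W^c}$, $\Phi^{-1}$ sends $a$ to $a \otimes \unit_B$, so $(\repU_g \otimes \id_B)(a \otimes \unit_B) = \repU_g(a) \otimes \unit_B$, and applying $\Phi$ recovers $\repU_g(a)$ viewed inside $M$. The main obstacle is a careful accounting of supports near $\partial W$: one must verify that the naive extension of the invertibility isomorphism by the identity on $A_{W^c}$ is still a genuine local isomorphism across the interface, and that the local splitting $A(\widetilde{H}) \cong A_{W^c} \otimes A_W(\widetilde{H})$ holds up to the boundary-strip width. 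Both reduce to the fact that $A_W(\widetilde{H})$ is genuinely concentrated in a neighborhood of $\partial W$, which is precisely the defining condition for it to be a boundary algebra, so no serious new input is required beyond bookkeeping.
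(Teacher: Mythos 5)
Your proposal is correct and takes essentially the same approach as the paper's (very terse) proof: descend the symmetry to the projected algebra, extend the invertibility isomorphism by the identity on $A_{W^c}$ to obtain the on-site algebra $M$, tensor a trivial $G$-rep onto $B$, and transport the action through the isomorphism. The only quibble is that the hypothesis gives weak symmetry of the restricted Hamiltonian ($\repU_g[\widetilde{H}]=\widetilde{H}$) rather than invariance of each individual $h_i$, but since the terms are positive this still forces $\repU_g$ to preserve the joint kernel and hence the ideal $\mathcal{B}(\widetilde{H})$, so your descent step is unaffected.
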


In particular, this implies that if $X = \mathbb{R}^d$ and we have a symmetric invertible boundary algebra for $W$ a half-volume, then it follows that $\mathcal{U}$ is blend-equivalent to the identity-rep, and so has trivial anomaly.

Note that the provided proof requires for \(A\) to be bosonic and for the \(G\)-rep to be unitary. However, we remark that this result also holds for $G$-reps containing anti-unitary elements,  (in which case one finds that $\mathcal{U}$ is blend-equivalent to an on-site rep rather than the identity-rep). The only significant change is that the existence of a $G$-action on $B$ is a somewhat more non-trivial statement. Let us focus on the case that $W$ is a half-volume $(0,\infty) \times \mathbb{R}^{d-1}$. Then it is clear that the spatial reflection of $A_W(H)$ (about some coordinate other than the first one), which we call $\overline{A_W(\widetilde{H})},$ is a blend inverse for $A_W(\widetilde{H})$. Therefore (up to tensoring with on-site algebras, which always admit a $G$-rep by acting with complex conjugation for the anti-unitary elements of $G$) we are free to take $B = \overline{A_W(\widetilde{H})}$ as the inverse algebra of $A_W(\widetilde{H})$. Then $B$ clearly admits an action of $G$ simply by taking the spatial reflection of the action on $A_W(\widetilde{H})$.

By using fermionic local algebras, the result also holds for fermionic systems. In the case of non-trivial extension by fermion parity one needs to use the same argument as in the anti-unitary case above to obtain a fermionic \(G_f\)-rep on \(B\). In this argument, we also need to be able to construct fermionic $G_f$-reps on on-site algebras. In particular, as long as the fermionic symmetry group $G_f$ is a compact Lie group (which includes finite groups as a special case), we have the following fact (see Appendix \ref{appendix:fermionic_symmetries}), which is sufficient for the proof to work: for any fermionic on-site algebra $A$, there exists a fermionic on-site algebra $B$ with the same support, such that there is a fermionic representation of $G_f$ on $A \otimes B$. (Recall the definition of $G_f$-reps for fermionic symmetries in Section \ref{subsec:anti_unitary_etc}.)

Finally, we note that since we have not given a precise definition of the continuity condition that a $G$-rep should satisfy for continuous groups $G$, the theorem cannot be regarded as having been rigorously proven in the case of continuous groups; however, we expect that it should hold given suitable definitions.

\subsection{Constraints from the \texorpdfstring{\(\grpH{2}{G}{\QCAblendclass{d-1}{}}\)}{H2} anomaly}
\label{subsec:H2_constraints}

The above results show that non-trivial lattice anomalies act as an obstruction to invertible commuting models. In this section we will examine what properties a \emph{non-invertible} commuting model should have in order to be compatible with a non-trivial lattice anomaly.

Our main tool here will continue to be the boundary algebra, and more specifically its group of \emph{local automorphisms}. Recall the definition of local isomorphism from \autoref{subsec:local_algebras}. A local automorphism is a local isomorphism from a local algebra to itself. Thus, QCAs are local automorphisms of on-site algebras, but for non-on-site algebras we also have a group of local automorphisms. (Other authors have also called local automorphisms QCAs even for non-on-site algebras~\cite{Ma2024}.)

The key idea is as follows. Consider a bosonic $G$-rep $\repU$; in this section we will assume that the $\grpH{1}{G}{\QCAblendclass{d}{}}$ anomaly vanishes. Now consider a commuting Hamiltonian $H$ which admits a restriction $\widetilde{H}$ that is symmetric under $\repU$, and such that $A_{W_0}(\widetilde{H})$ is a boundary algebra for some set $W_0$ (and hence $A_{W_0}(\widetilde{H})$ is a symmetric boundary algebra). 
Now choose a set $W \supseteq W_0$ such that there is a restriction $\widetilde{\mathcal{U}}_g$ that acts as the identity on operators supported on $W^c$ and like $\mathcal{U}_g$ on operators supported on $W_0$. We also require that $\mathcal{V}_{g_1,g_2} := \resU_{g_1} \resU_{g_2} \resU_{g_1 g_2}^{-1}$ is supported on $W_0^c$. (For non-internal symmetries one must take $W$ to depend on $g$; for simplicity we will not consider this case, but the arguments below carry through similarly.)
Then $\widetilde{H}$ is also symmetric under $\resU_g$, and therefore $\resU_g$ induces a local automorphism of the boundary algebra $A_W(\widetilde{H})$.

We introduce the group $\mathcal{L}$ of \emph{stable} blend equivalence classes of local automorphisms, in which one is allowed to tensor with arbitrary on-site algebras with the same support as $A_W(\widetilde{H})$ (see Appendix~\ref{appendix:stable_laut} for the precise definition). In particular this ensures that $\mathcal{L}$ does not depend on the choice of $W$.  Note that $\mathcal{L}$ is a group, but unlike the group of blend equivalence classes of QCAs, in general it need not be an \emph{Abelian} group. Moreover, there is a group homomorphism $i : Q_{d-1} \to \mathcal{L}$ (by considering local automorphisms that act only on the on-site part). In fact, one can show that the image of $i$ is contained in the center of $\mathcal{L}$.

Thus, by considering the action of the restricted symmetry $\resU_g$ on the boundary algebra, we obtain a map $\beta : G \to \mathcal{L}$. This map need not be a homomorphism, and depends on the arbitrary choice of restriction, but the induced map $\widetilde{\beta} : G \to \mathcal{L}/i(Q_{d-1})$ is a homomorphism and does not depend on the choice of restriction.

We focus on constraints arising from the \(\grpH{2}{G}{\QCAblendclass{d-1}{}}\) invariant of \(\repU\). We find that there is a constraint that relates this to $\widetilde{\beta}$. Indeed, suppose $\repV_{g_1,g_2}$ has \((d-1)\)-dimensional QCA class $\omega(g_1, g_2) \in Q_{d-1}$. Then it must be the case that
\begin{equation}\label{eq:cocycle_induction}
   i (\omega(g_1, g_2)) = \beta(g_1) \beta(g_2) \beta(g_1 g_2)^{-1}.
\end{equation}
This can be regarded as a relation between the cohomology class $[\omega] \in\grpH{2}{G}{\QCAblendclass{d-1}{}}$ and $\widetilde{\beta}$ which is independent of the choice of cocycle representative for $[\omega]$ and of the choice of lift $\beta$ for $\widetilde{\beta}$.
We will refer to this relation as the \emph{boundary trivialization condition}.

Equation~\eqref{eq:cocycle_induction} reveals features of the boundary algebra, and hence of the commuting model.
First, a nontrivial \([\omega]\) implies that the boundary algebra cannot be locally isomorphic to an on-site algebra. For an on-site algebra we have $\mathcal{L} \cong \QCAblendclass{d-1}{}$, and no non-trivial anomaly classes in $\grpH{2}{G}{\QCAblendclass{d-1}{}}$ can satisfy the boundary trivialization condition. This is also, of course, a corollary of Theorem~\ref{thm:ObstructionToInvertible}.

Second, if $i$ is injective  (which will be the case in the examples we consider below), then \(i(\QCAblendclass{d-1}{}) \cong \QCAblendclass{d-1}{}\), and \eqnref{eq:cocycle_induction} defines a map from homomorphisms \(\widetilde{\beta}: G \to \mathcal{L}/i(\QCAblendclass{d-1}{})\) into $\grpH{2}{G}{\QCAblendclass{d-1}{}}$. 
In particular, if $\widetilde{\beta}$ is trivial then it implies that the $\grpH{2}{G}{\QCAblendclass{d-1}{}}$ anomaly is trivial. Thus, we have a more refined conclusion than the statement that the boundary algebra is not an on-site algebra: if the $\grpH{2}{G}{\QCAblendclass{d-1}{}}$ anomaly is non-trivial, then the quotient \(\mathcal{L}/i(\QCAblendclass{d-1}{})\) must support nontrivial homomorphisms from \(G\).

In order to make these considerations more concrete, let us consider the case $d=2$
and suppose the boundary algebra $A_W(\widetilde{H})$ is locally isomorphic (up to tensoring with on-site algebras) to the subalgebra of $\mathbb{Z}_2$ symmetric operators in an on-site algebra with a 2-dimensional Hilbert space on each site, acted upon by a faithful on-site $\mathbb{Z}_2$-rep. We call this local algebra $A^{\ZZ_2}$.
It follows from Refs.~\cite{Jones_2309,Ma2024} that there is a homomorphism $\mathcal{I} : \mathcal{L} \to \mathcal{L}'$, where $\mathcal{L}' \leq \mathbb{R}_\times$ (it is likely that this is in fact an \emph{isomorphism}, i.e.\ the index fully classifies $\mathcal{L}$, but this has not technically been proven because Ref.~\cite{Ma2024} did not consider stable equivalence classes.) $\mathcal{I}$ is referred to as the ``index'' and agrees with the GNVW index in the sense that if we view $Q_{d-1}$ as $\mathbb{Q}_\times$, then $i' = \mathcal{I} \circ i$ (which is a homomorphism from $Q_{d-1} \to \mathcal{L}'$) corresponds to the inclusion $\mathbb{Q}_\times \leq \mathbb{R}_\times$.
Therefore, a necessary condition for a symmetric boundary algebra with symmetry $G$ is that \eqnref{eq:cocycle_induction} is satisfied, but where we replace $i$ with $i' = \mathcal{I} \circ i$. [If $\mathcal{I}$ is in fact an isomorphism, then this is equivalent to \eqnref{eq:cocycle_induction}. Either way, we at least obtain a \emph{necessary} condition.]

Now suppose that we set $G = \mathbb{Z}_2 = \{ 1, x \}$, and consider a $G$-rep in two dimensions with the property that upon restriction, it squares to a translation on the boundary with GNVW index 2. This corresponds to a non-trivial element of $\grpH{2}{\mathbb{Z}_2}{\QCAblendclass{1}{}}$. Now suppose that $\beta(x)$ is a local automorphism with index $\sqrt{2}$. Then one can show that the boundary trivialization condition (in the version with $i'$) is indeed satisfied. This is precisely what happens in the the Hamiltonian discussed in \autoref{sec:em-qubit}, which is symmetric under an anomalous $\mathbb{Z}_2$-rep. It has a boundary algebra that is locally isomorphic to $A^{\ZZ_2}$ and thus supports a ``Kramers-Wannier'' type automorphism, which has index $\sqrt{2}$ and is the ``square-root'' of a translation with index 2. If the boundary algebra instead were an on-site algebra, then only rational indices would be permitted and there would be no way to satisfy the boundary trivialization condition.

More generally, we say that a commuting model whose ground state is in the same phase as the toric code is a \emph{simple realization} of the toric code phase if the following properties hold (in particular, they hold for the toric code itself, viewed as a commuting model):
\begin{itemize}
\item The commuting model admits a boundary algebra which is locally isomorphic (up to tensoring with on-site algebras) to $A^{\ZZ_2}$.
\item The Kramers-Wannier duality automorphisms of the boundary algebra (i.e.\ those with non-rational index)  have the property that when considering the induced actions on boundary states\footnote{In a finite system, by ``boundary states'' we mean the projected states defined in \autoref{sec:CommutingModelsBoundaryAlgebra}, in the case where the projected algebra is a boundary algebra. One can show (also in infinite systems, using the methods of Appendix~\ref{appendix:infinite_system}) that an automorphism of the boundary algebra induces an automorphism of boundary states.}, any one of them turns a state that corresponds to an $e$-condensed boundary for the bulk ground state into an $m$-condensed boundary, and vice versa.
\end{itemize}
(We do not know of any cases where the first condition is satisfied but the second is not, but we do not have a proof.)

Suppose we have a commuting model giving a simple realization of the toric code, and suppose that the boundary algebra postulated in the definition of simple realization is symmetric under a $\mathbb{Z}_2$-rep $\repU$. Since $i' : \QCAblendclass{1}{} \to \mathcal{L}'$ is injective, we obtain a map from homomorphisms $\widetilde{\beta} : \mathcal{L}'/i(\QCAblendclass{d-1}{})$ 
into $\grpH{2}{\mathbb{Z}_2}{\mathcal{L}'}$ as described earlier. In this case the map is easy to construct, as follows.
Let $\gamma$ be the index of $\beta(x)$ (and note that for different choices of lift $\beta$ for $\widetilde{\beta}$, the resulting $\gamma$'s only differ by multiplication by rational numbers). The fact that $x^2 = 1$ implies that $\gamma^2$ must be a rational number. Thus, $\gamma^2$ determines an element of $\mathbb{Q}_\times / \mathbb{Q}_\times^2 \cong \grpH{2}{\mathbb{Z}_2}{\mathbb{Q}_\times}$ which only depends on $\widetilde{\beta}$.

Therefore, \(\repU\) has a nontrivial \(\grpH{2}{\ZZ_2}{\QCAblendclass{1}{}}\) class if and only if $\gamma$ is irrational, i.e.\ the symmetry induces Kramers-Wannier duality on the boundary. If the $\grpH{2}{\ZZ_2}{\QCAblendclass{1}{}}$ class is non-trivial, then invoking the second condition for being a simple realization, the symmetry in the bulk \(\repU_x\) must exchange $e$ and $m$ anyons. (Consider a string operator creating an \(e\) anyon and terminating at an \(e\)-condensing boundary. As the boundary switches to \(m\)-condensing, the bulk anyon must also become an \(m\).)
On the other hand, if $\repU$ has a trivial $\grpH{2}{\mathbb{Z}_2}{\QCAblendclass{d-1}{}}$ anomaly, then $\gamma$ is rational and \(\repU_x\) necessarily does \emph{not} exchange the $e$ and $m$ anyons. Thus, \emph{under the assumption of a simple realization with a symmetric boundary algebra}, we can in some sense view the permutation of anyons as obeying an anomaly matching condition relating it to the anomaly of the microscopic symmetry.

However, we emphasize that this result \emph{cannot} extend to all commuting models which have a ground state in the same phase as the toric code (let alone non-commuting Hamiltonians). 
Indeed, there are commuting projector models with the same ground state topological order as the toric code in which the \(e\)-\(m\) exchange is implemented by an on-site symmetry of the Hamiltonian~\cite{Heinrich2016}, which is certainly not anomalous. 
This is another demonstration that focusing only on the low energy theory is not enough to diagnose consequences of an \(\grpH{2}{\ZZ_2}{\QCAblendclass{1}{}}\) anomaly, reflecting its IR-triviality.

A consequence of the above dicussion is that the model of Ref.~\cite{Heinrich2016} is not a simple realization of the toric code.
Indeed, the construction of the model with on-site \(e\)-\(m\) exchange is reminiscent of a string-net construction based on the Ising fusion algebra. 
From this perspective it can be shown that a natural symmetric boundary algebra for the model is locally isomorphic to the fusion algebra for a chain of Ising anyons, which is not locally isomorphic to \(A^{\ZZ_2}\).

Finally, let us mention that the results described above can be readily extended to any symmetry group $G$, and the topological phase of the quantum double of any finite Abelian group~\cite{Kitaev2003FaultTolerant} (Appendix~\ref{appendix:SimpleRealizationDouble}).
One finds that under the condition of a ``simple realization'', the $\grpH{2}{G}{\QCAblendclass{1}{}}$ invariant is completely determined by the action of $G$ on the anyons on top of the ground state. In fact, interestingly it \emph{only} depends on the quantum dimensions of the symmetry defects corresponding to the anyon permutations induced by $G$. (For instance, the results described above for the case of the toric code follow from the fact that the \(e\)-\(m\) exchange defect has quantum dimension $\sqrt{2}$.)

\subsection{Implications for MBL phases}
\label{subsec:MBLImplications}

The results described in \autoref{subsec:symmetric_constraints}-\ref{subsec:H2_constraints} apply for any commuting model. However, a particularly physically relevant application is to many-body localized (MBL) phases~\cite{Sierant2024review}. 
This application is important because, unlike general commuting models, MBL phases are actually more-or-less stable to arbitrary local perturbations~\cite{Imbrie2016MBL,Sierant2024review}. 
Strict stability is likely not true for \(d>1\) and with random disorder~\cite{deRoeck2017avalanche}, and some authors have expressed skepticism also for \(d=1\)~\cite{Suntajs2020,Schulz2020,Sels2021obstruction,Sels2023dilute,Morningstar2022avalanche,Sels2021bathavalnce}. Nonetheless, this statement of stability certainly applies to an effective description that describes the evolution under the Hamiltonian for astronomical timescales~\cite{Suntajs2020,Morningstar2022avalanche,Sels2021bathavalnce,Long2023phenomenology,deRoeck2023rigorous,deRoeck2024absence}.

Let us briefly review an algebraic definition of MBL which is useful for our discussion. The phenomenological theory of MBL is based around local integrals of motion (LIOMs)~\cite{Huse2014fullyMBL}, which may emerge in strongly inhomogeneous systems. In the general theory of MBL, LIOMs are typically quasi-local, in the sense that they are not exactly supported on any bounded set in space, but instead have tails that decay exponentially. For mathematical convenience, in this paper we will confine ourselves to systems in which the LIOMs can be taken to be \emph{strictly local}. We expect that most of the results in this paper could be generalized to hold for quasi-local operators, but rigorous proofs may be more difficult.

Thus, we define a LIOM \(\ell\) to be any local operator which commutes with the Hamiltonian, \([H, \ell]=0\). The set of all LIOMs forms a subalgebra $L$ of the algebra $A$ of all local operators. For MBL systems, $L$ is taken to be \emph{commutative},  \emph{complete}, and \emph{locally generated}; if these conditions are satisfied we will refer to $L$ as an \emph{MBL LIOM algebra}. Commutative means that $[\ell,\ell'] = 0$ for all $\ell,\ell' \in L$.
We say that \(L\) is complete if for any $a \in A$ such that \([a,\ell] = 0\) for all \(\ell \in L\), we have $a \in L$. Locally generated means that there exists a finite $r$ and a generating set $\{ l_i \}$ for $L$ such that each $l_i$ is supported on a set of diameter $\leq r$.

MBL phases are characterized by their LIOM algebras. We say that an MBL LIOM algebra is \emph{trivial} if $L$ has a set of generators that are each supported on a single site. We say that two LIOM algebras $L$ and $L'$ are equivalent if they are related by the action of a finite-depth circuit. (Again, in the general theory of MBL phases one is really supposed to consider \emph{quasi-local} transformations of local operators rather than strictly finite-range QCAs, of which finite-depth circuits are a special case.) We say that two systems are in the \emph{same MBL phase} if they have equivalent LIOM algebras. We also treat systems as in the same MBL phase if they are related by stacking with an on-site algebra carrying a trivial MBL LIOM algebra. The \emph{trivial MBL phase} is the one that contains trivial MBL LIOM algebras.

For any commutative local algebra which is a subalgebra of $A$, one can show that it admits a simultaneous eigenstate, i.e. a state which is an eigenstate of $l$ for all $l \in L$. In particular, suppose $L$ is locally generated, with generating set $\{ l_i : i \in I \}$ and choose one such state, which we henceforth will call the \emph{preferred state}. Let $\lambda_i$ denote the eigenvalues of the generators $l_i$ in the preferred state. Then we see that $\{ h_i : i \in I \}$, where
\begin{equation}
    h_i = \Delta_i^\dagger \Delta_i, \quad \Delta_i = l_i - \lambda(l_i)
\end{equation}
satisfies all the conditions to be a commuting model as defined in \autoref{sec:CommutingModelsBoundaryAlgebra}, with a ground state given by the preferred state. 
Thus, we can apply all our results about commuting models in this context. In particular, we can define projected algebras and boundary algebras. 
It is important to remember, however, that the mapping from MBL LIOM algebras to commuting models is not unique, because it depends both on a choice of generators and on a choice of preferred state. This non-uniqueness will pose some complications for some of our arguments.

 We say that an MBL LIOM algebra $L$ corresponds to a \emph{symmetric} MBL phase under a $G$-rep $\mathcal{U}$ if $\mathcal{U}$ fixes every element of $L$. (The case where $\mathcal{U}$ preserves $L$ but acts non-trivially on it physically describes a \emph{spontaneous symmetry breaking} MBL phase, see below.) In that case we see that the commuting model constructed above will always be strongly symmetric.

An immediate application is to \emph{short-range entangled (SRE)} MBL phases.
For a trivial LIOM algebra we can define a commuting model as described above, using the single-site generators and any choice of preferred state, which gives \ATrivCommutingModel{} commuting model.
More generally, we call an MBL LIOM algebra $L$ SRE if there exists a QCA $U$ such that $U[L]$ is a trivial MBL LIOM algebra~\cite{Long2024}. One can show that SRE implies invertibility in the sense that there exists an MBL LIOM algebra $\overline{L}$ such that $L \otimes \overline{L}$ is in the trivial MBL phase. However, unlike for commuting models, we do not have a proof that invertibility implies SRE.

For any SRE MBL LIOM algebra, there exists a choice of generators for $L$ which defines a commuting model with invertible boundary algebra. Therefore, from Theorem~\ref{thm:ObstructionToInvertible}, we find that if we have a $G$-rep $\repU$ such that $L$ describes a \emph{symmetric} MBL phase, i.e.\ $\repU$ fixes $L$, then $\mathcal{U}$ is anomaly free. In other words: if we have an MBL system that is symmetric with respect to an anomalous $G$-rep $\repU$, then $L$ cannot be SRE.

We remark that if the $G$ symmetry is \emph{non-Abelian}, it is generally believed that symmetric MBL is not possible \cite{Potter2016Multiplet,Protopov2017su2,Ware2021nonabelian}. In this case, symmetry under a $G$-rep with $G$ non-Abelian can be viewed as an obstruction to MBL, even in the absence of any anomaly. However, as we will see, if there is an anomaly it can also constrain \emph{spontaneous symmetry-breaking} MBL phases in certain ways. Such phases are \emph{not} obstructed simply by $G$ being non-Abelian (at least if $G$ is finite).

\subsubsection{Spontaneous symmetry breaking}

A lattice anomaly prevents having a symmetric SRE MBL phase. Does it also prevent having a trivial or SRE MBL phase in which the symmetry is spontaneously broken? This is definitely not true for all anomaly invariants. For example, in $d=1$, there is a $\mathbb{Z}_2$-rep on a qubit chain with a non-trivial $\grpH{3}{\mathbb{Z}_2}{\mathrm{U}(1)}$ anomaly, where the generator of the $\mathbb{Z}_2$ is the ``CZX circuit'' (described, for example, in Ref.~\cite{Chen_1106}). In that case one can show that the Pauli $Z_i$ operators are odd under the action of $\mathbb{Z}_2$. Therefore, we can form the MBL Hamiltonian
\begin{equation}
\label{eq:ising_mbl1}
    H = \sum_i J_i Z_i Z_{i+1},
\end{equation}
which is a $\mathbb{Z}_2$-symmetric MBL Hamiltonian in which we view the $\mathbb{Z}_2$ as spontaneously broken (as demonstrated, for example, by the fact that all the eigenstates are two-fold degenerate). In terms of the LIOM algebra, which here is generated by the $Z_i$ operators, this is reflected in the fact that the symmetry preserves the algebra (maps the LIOM algebra to itself) but acts non-trivially on it.

More generally, let us ask which kind of anomalies are compatible with \emph{trivial spontaneous symmetry breaking}.
Specifically, we note that the key feature of Hamiltonians such as Eq.~\eqref{eq:ising_mbl1} is that there are a set of on-site LIOMs which are odd rather than even under the symmetry.  More generally, if we have an MBL LIOM algebra $L$ that is preserved but not fixed by the symmetry, and there is a set of on-site LIOM generators $\{l_i\}$, and for each $g\in G$ and each $l_i$ we have that $\repU_g[l_i]$ is still supported on the same site as $l_i$, then we say that $L$ exhibits trivial spontaneous symmetry breaking.

Generalizing the discussion above for the CZX symmetry, one can verify explicitly that the $\grpH{d+2}{G}{ \mathrm{U}(1)}$ anomaly is compatible with MBL with trivial spontaneous symmetry breaking for any finite group $G$, using the representation of the anomalous symmetry described in Refs.~\cite{Chen2013,Else2014} for example. Moreover, it seems likely that for fermions, the $\grpH{d+1}{G}{\QCAblendclass{0}{}} = \grpH{d+1}{G}{\mathbb{Z}_2}$ anomaly also admits trivial spontaneous symmetry breaking MBL in general (one can see this in $d=0$ for example\footnote{Of course, there is no real spontaneous symmetry-breaking in $d=0$, but the criterion in terms of the symmetry transformation of the LIOMs can be applied equally well in $d=0$.}). However, we conjecture that the $\grpH{p+1}{G}{ \QCAblendclass{d-p}{}}$ anomalies do \emph{not} admit trivial spontaneous symmetry breaking MBL Hamiltonians for $0 \leq p \leq d-1$, although we will only prove it for $p\in\{0,1\}$.
(We remark that this is a closely analogous result to the classification of SPTs protected by average symmetry \cite{Ma_2209, Ma_2305}, which by a bulk-boundary correspondence should correspond to QFT anomalies of average symmetries. By contrast, the results presented here can be understood as statements about the classification of \emph{lattice} anomalies of average symmetries, in the sense that if we have an MBL LIOM algebra with trivial spontaneous symmetry breaking, then using the on-site LIOMs we can define an \emph{ensemble} of on-site commuting models which are permuted by the symmetry.)

For $p=0$ (and $d \geq 2$) note that if the $\grpH{1}{G}{\QCAblendclass{d}{}}$ anomaly is non-trivial, it means that there exists a $g \in G$ such that $\repU_g$ acts as a non-trivial QCA (i.e.\ a non-trivial element of $Q_d$). We know that any on-site set of generators defines an on-site boundary algebra. Moreover, by assumption the QCA maps an on-site set of generators to an on-site set of generators. On the other hand, the result of \autoref{sec:BoundaryGRep} shows that the corresponding boundary algebras should differ by tensoring with the boundary algebra of the QCA, which is a contradiction. This argument can be adapted to the case of $d=1$ as well.

Now let us consider the argument for $p=1$ and $d \geq 2$ (for simplicity we restrict to unitary symmetries). Fix some set $Y \subseteq \mathbb{R}^d$, and choose $W \supseteq Y$. Given a choice of LIOM generators and preferred state we can construct a commuting model $\{ h_i : i \in I \}$; however in general it will not be strongly or weakly symmetric. If we define the restricted Hamiltonian
\begin{equation}
    \widetilde{H} = \sum_{i \in I : h_i \tl Y} h_i,
\end{equation}
then $\widetilde{H}$ will not in general be invariant under the symmetry. Therefore, rather than an automorphism of the boundary algebra, the restricted symmetry $\resU_g$ induces a local isomorphism $\repV_g^{\widetilde{H}}:A_W(\widetilde{H}) \to A_W(\mathcal{U}_g[\widetilde{H}])$.

On the other hand, for any $\widetilde{H}$ which is the sum of positive \emph{on-site} terms $h_i$ such that there is a unique ground state on each site, we have a canonical ultra-local isomorphism $\eta_{\widetilde{H}} : A_S \to A_W(\widetilde{H})$, where $S = W \setminus Y$.
Hence we can define
\begin{equation}
    \repW_g^{\widetilde{H}} = (\eta_{\repV_g(\widetilde{H})})^{-1} \circ \repV_g^{\widetilde{H}} \circ \eta_{\widetilde{H}},
\end{equation}
which is a local automorphism of $A_S$.  Moreover, if
\begin{equation}
    \resU_{g_1} \resU_{g_2} = \resU_{g_1 g_2}\Omega_{g_1, g_2} 
\end{equation}
where $\Omega_{g_1, g_2}$ is a QCA supported on $S$, then it follows that
\begin{equation}
    \repV_{g_1}^{\repU_{g_2}[\widetilde{H}]} \repV_{g_2}^{\widetilde{H}} =   \mathcal{V}_{g_1 g_2}^{\widetilde{H}} \eta_{\widetilde{H}} \Omega_{g_1, g_2} (\eta_{\widetilde{H}})^{-1}.
\end{equation}
Translating this into a statement about $\repW$, we obtain that
\begin{equation}
\label{eq:W_rep}
    \repW_{g_1}^{\repU_{g_2}[\widetilde{H}]} \repW_{g_2}^{\widetilde{H}} = \repW_{g_1 g_2}^{\widetilde{H}} \Omega_{g_1, g_2},
\end{equation}
The key property that we will need is that although $\repW_{g_1}^{\repU_{g_2}[\widetilde{H}]}$ may not be the same as $\repW_{g_1}^{\widetilde{H}}$, they at least have the same blend class.

In fact, we can show that for any two on-site Hamiltonians $H$ and $H'$, $\repW_g^{\widetilde{H}}$ and $\repW_g^{\widetilde{H}'}$ have the same blend class (where $\widetilde{H}$ and $\widetilde{H}'$ are the restrictions to $X$). Indeed, let us take $W \subseteq \RR^d$ to be the region where the first coordinate $x_1 \geq -\xi$ and $X$ to be the region where $x_1 \geq 0$. Now, consider a third on-site model $H''$ such that the local terms in $H''$ coincide with those of $H'$ for $x_2 < 0$, and coincide with those of $H$ for $x_2 \geq 0$. Then upon defining the restriction $\widetilde{H}''$ to $X$, we see that for some finite $r$, $\repW_g^{\widetilde{H}''}$ will act like $\repW_g^{\widetilde{H}}$ for $x_2 < -r$ and like $\repW_g^{\widetilde{H}'}$ for $x_2 > r$, which shows that they are blend equivalent.

Therefore, if we let $\beta_g$ be the $(d-1)$-dimensional blend class of $\repW_{g_1}^{\widetilde{H}}$, and $\omega_{g_1, g_2}$ be the blend class of $\Omega_{g_1, g_2}$, it follows from \eqnref{eq:W_rep} that
\begin{equation}
    \omega_{g_1, g_2} = \beta_{g_1} \beta_{g_2} \beta_{g_1, g_2}^{-1},
\end{equation}
and hence the $\grpH{2}{G}{\QCAblendclass{d-1}{}}$ invariant must be trivial.

\subsubsection{The toric code MBL phase and anyon dualities}
\label{sec:ToricCodeMBL}

At this point, we have many negative results, and obstructions to MBL phases possessing anomalous symmetries. What classes of phases \emph{can} have an anomalous symmetry?

For the case of non-trivial \(\grpH{2}{G}{\QCAblendclass{d-1}{}}\) invariant, we have seen that it supports neither a symmetric SRE MBL phase, nor a trivial spontaneous symmetry breaking MBL phase. Therefore, we next seek a topologically ordered model which supports an anomalous symmetry. Indeed, the example of \autoref{sec:em-qubit} demonstrates that the MBL LIOM algebra generated by the star and plaquette operators of the toric code has its star and plaquette generators swapped by an anomalous \(\ZZ_2\)-rep with a nontrivial \(\grpH{2}{\ZZ_2}{\QCAblendclass{1}{}}\) class. Thus, topological order is indeed one way to have an MBL system which is compatible with an anomalous symmetry. (Though since the symmetry acts non-trivially on the LIOM algebra, we should view the symmetry as spontaneously broken.)

In this example, the anomalous symmetry leaves the ground state of the toric code invariant, but induces a non-trivial anyon automorphism on anyon excitations above the ground state.
We conjecture that this link between the \(\grpH{2}{\ZZ_2}{\QCAblendclass{1}{}}\) class and anyon automorphisms is intrinsic within the entire toric code MBL phase. That is:
\begin{conjecture}
\label{conj:em_swap}
Consider an MBL LIOM algebra $L$ that lies in the toric code MBL phase (that is, $L$ is equivalent by a finite-depth circuit to the algebra generated by the star and plaquette operators of the toric code). Suppose \(\repU\) is a \(\ZZ_2\)-rep which preserves $L$, i.e.\ \(\repU_g[L] = L\). Suppose furthermore that there is a pure state $\ket{\Psi}$ which is a simultaneous eigenstate of all the elements of $L$ and is left invariant by $\repU$. Then \(\repU\) acts as an $e$-$m$ exchange on isolated anyons on top of $\ket{\Psi}$ if and only if the \(\grpH{2}{\ZZ_2}{\QCAblendclass{1}{}}\) class of \(\repU\) is nontrivial.
\end{conjecture}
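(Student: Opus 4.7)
The strategy is to reduce the conjecture to a statement about the local automorphism induced on the toric code boundary algebra, and then invoke the boundary trivialization condition \eqnref{eq:cocycle_induction} from \autoref{subsec:H2_constraints}. By hypothesis, there is a finite-depth circuit $V$ with $V[L] = L_{\mathrm{TC}}$, the algebra generated by the standard toric code stabilizers $\{A_h, B_h\}$. Conjugating by $V$ replaces $\repU$ by $\repU' = V \repU V^{-1}$, which preserves $L_{\mathrm{TC}}$ and fixes $V\ket{\Psi}$. Since conjugation by a finite-depth circuit is a blend equivalence, $\repU$ and $\repU'$ share the same $\grpH{2}{\ZZ_2}{\QCAblendclass{1}{}}$ class, and because $V$ has bounded range, they also induce the same permutation on isolated anyons. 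Thus I may assume $L = L_{\mathrm{TC}}$ and that $\ket{\Psi}$ is a toric code ground state. A minor subtlety here is that $V\ket{\Psi}$ need not be the ``standard'' toric code ground state on the square lattice, but it is still a simultaneous eigenstate of the stabilizers; the analysis of the boundary algebra in \autoref{sec:EMSwapAnomaly} is insensitive to this choice of signs.

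With this reduction, the restriction of the corresponding commuting model to the half-plane $R = (-\infty,0)\times\RR$ has, by the explicit computation of \autoref{sec:EMSwapAnomaly}, a boundary algebra which is (up to stacking with on-site algebras) locally isomorphic to the local algebra $A^{\ZZ_2}$ of \autoref{subsec:H2_constraints}, with generators identified as $X_{2i-1}X_{2i} \leftrightarrow \sigma^x_i$ and $Z_{2i}Z_{2i+1} \leftrightarrow \sigma^z_i \sigma^z_{i+1}$. Because $\repU'$ fixes $L_{\mathrm{TC}}$, the restriction $\resU_x$ descends to a local automorphism $\beta(x)$ of this boundary algebra, and \eqnref{eq:cocycle_induction} becomes $\omega(x,x) = \beta(x)^2$ in $\QQ_\times/(\QQ_\times)^2 \cong \grpH{2}{\ZZ_2}{\QQ_\times}$, using the Jones index of Refs.~\cite{Jones_2309,Ma2024} to identify $\mathcal{L}'$ for $A^{\ZZ_2}$. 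Since Kramers-Wannier duality has index $\sqrt{2}$ while any automorphism preserving $\sigma^x$ and $\sigma^z \sigma^z$ sectors has rational index, the conjecture reduces to the following bulk-boundary correspondence: $\repU'$ exchanges $e$ and $m$ on isolated bulk anyons if and only if $\beta(x)$ is a Kramers-Wannier-type duality of $A^{\ZZ_2}$.

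To establish this correspondence I would argue along the lines sketched for simple realizations in \autoref{subsec:H2_constraints}. Open $e$- and $m$-strings terminating transversely on $\partial R$ descend to elements of the boundary algebra, and their endpoint operators generate the $e$- and $m$-condensing boundary conditions, which in $A^{\ZZ_2}$ correspond to the two symmetry-breaking ground states and their dual (disordered) counterparts. A local automorphism preserving the $e$- and $m$-endpoint classes must act within the commutative subalgebra of endpoint labels, forcing $\beta(x)$ to have rational index; conversely, a KW duality must exchange these two classes. Combined with the fact that $\repU'$ acts on anyon endpoints as it acts on the corresponding bulk anyons, this gives the if-and-only-if. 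The converse direction (nontrivial class $\Rightarrow$ $e$-$m$ exchange) then follows contrapositively: trivial anyon action implies rational $\beta(x)$ implies trivial $\omega$.

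The hard part is rigorously implementing this final bulk-boundary correspondence beyond the setting of commuting projector models with a canonical string structure. Specifically, one must (i) give an intrinsic definition of ``acts as $e$-$m$ exchange on isolated anyons'' that depends only on $L$ and $\ket{\Psi}$, not on a preferred presentation, and is manifestly invariant under finite-depth circuit conjugation; (ii) prove that any such $L$ admits a system of truncated string operators whose endpoint algebras faithfully map into the boundary algebra with the correct $e$/$m$ labeling; and (iii) show that the KW-vs.\ non-KW dichotomy for $\beta(x)$ is dictated entirely by its action on these endpoints. Step (ii) is essentially the claim that the toric code MBL phase is a ``simple realization'' in the sense of \autoref{subsec:H2_constraints}, and is the part of the conjecture one might most reasonably doubt; a proof would likely require extending the boundary-algebra technology of Refs.~\cite{Jones_2307,Jones_2309,Ma2024} to arbitrary frustration-free commuting models in the toric code phase, including a stability result showing that boundary algebras are finite-depth-circuit invariants up to on-site stacking.
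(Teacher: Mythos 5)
The statement you are addressing is labeled a \emph{conjecture} in the paper, and the paper itself offers only supporting arguments, not a proof; your proposal follows essentially the same route as those arguments and inherits the same unresolved gap. The reduction to the standard toric code LIOM algebra by conjugating with the finite-depth circuit $V$, the appeal to the boundary trivialization condition with the index of Refs.~\cite{Jones_2309,Ma2024}, and the identification of a Kramers-Wannier-type automorphism (index $\sqrt{2}$) with $e$-$m$ exchange are all exactly the paper's strategy in \autoref{sec:ToricCodeMBL} and \autoref{subsec:H2_constraints}.

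The genuine gap is the sentence ``Because $\repU'$ fixes $L_{\mathrm{TC}}$, the restriction $\resU_x$ descends to a local automorphism $\beta(x)$ of this boundary algebra.'' Preservation of the LIOM \emph{algebra} $L_{\mathrm{TC}}$ does not imply that the specific generating set $\{A_h,B_h\}$ is fixed or permuted by $\repU'$: a star may be mapped to a sum of products of stars and plaquettes. In that case the standard restriction $\widetilde{H}$ (all stabilizers supported in the half-plane) is not symmetric, its ground-space projectors are not permuted by $\resU_x$, and no automorphism $\beta(x)$ of the boundary algebra computed in \autoref{sec:EMSwapAnomaly} is induced --- so the boundary trivialization condition cannot be applied as you state it. This is precisely the obstruction the paper identifies; its partial workaround (Appendix~\ref{appendix:TCBoundaryStructure}) is to pass to the over-complete, manifestly permuted family of projectors $\mathbb{P}_{v,g}$, $\mathbb{P}_{p,g}$ onto the ground spaces of $\repU_g[A_v]$ and $\repU_g[B_p]$, at the cost of producing a boundary algebra of the form $A^{\ZZ_2}\otimes A_{\mathrm{cons}}$ with an extra locally constrained factor. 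Showing that $A_{\mathrm{cons}}$ can be removed --- or that it cannot host a local automorphism squaring to a translation other than Kramers-Wannier --- is exactly what the paper cannot do, and is why the statement remains a conjecture. You correctly flag this in your closing item (ii), but the body of your argument asserts the conclusion of (ii) as though it followed from the explicit computation for the unperturbed toric code, which it does not.
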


In this conjecture, if we in particular have a symmetric MBL Hamiltonian $H$ that is built as the sum of elements of $L$ and has non-degenerate ground state, then we could take $\ket{\Psi}$ to be that ground state.
Further, if we suppose that the low-energy degrees of freedom in \(H\) are described by a topological quantum field theory (TQFT), then the condition that \(\repU\) act as an \(e\)-\(m\) exchange for anyon excitations on top of $\ket{\Psi}$ is equivalent to the statement that \(\repU\) descends to the \(e \leftrightarrow m\) anyon automorphism in this TQFT.

In \autoref{subsec:H2_constraints} we mentioned an example of a commuting projector model with a ground state in the same phase as the toric code which does not obey the relation between $e$-$m$ exchange and the $\grpH{2}{\mathbb{Z}_2}{ \QCAblendclass{1}{}}$ invariant.
Indeed, that model has an on-site realization of the $e$-$m$ exchange anyon automorphism~\cite{Heinrich2016}.
While that particular counterexample involves a commuting projector Hamiltonian which is not MBL-izable, we suspect that it would also be possible to have an MBL Hamiltonian whose ground state is in the toric code phase, but which does not obey Conjecture \ref{conj:em_swap} by virtue of not being in the same MBL phase as the toric code Hamiltonian. (Recall that there can be distinct MBL phases even when their ground states are in the same phase \cite{Long2024}.)

Furthermore, we can also show that Conjecture \ref{conj:em_swap} need not hold in an MBL Hamiltonian with ground state in the toric code phase if we only demand that $\repU$ fixes the ground state and don't place any requirement on how it acts on the LIOMs. Any two Hamiltonians whose ground states are both in the toric code phase should have their ground states related by a finite-depth circuit $V$ (or at the very least, a generalization with near-exponentially decaying tails~\cite{Hastings2005Quasiadiabatic,Hastings2012Locality}). In particular, let $V$ be a finite-depth circuit that transforms the ground state of the model with an on-site $e$-$m$ exchange symmetry into the ground state of the usual toric code.
Conjugating the on-site \(e\)-\(m\) exchange symmetry by this \(V\) gives a \(\ZZ_2\)-rep which fixes the usual toric code ground state (but not the LIOM algebra) and which still acts on anyon excitations as \(e\)-\(m\) exchange.
However, this symmetry is non-anomalous, as it is conjugate to an on-site \(\ZZ_2\)-rep. 

Let us outline some arguments in favor of Conjecture \ref{conj:em_swap}. If $L$ is in the MBL phase of the toric code, there is a finite depth circuit \(V\) such that \(l_i = V[l^{\mathrm{TC}}_i]\), where \(l^{\mathrm{TC}}_i\) is the ``standard'' generating set for the toric code LIOM algebra, i.e.\ the star and plaquette operators, and $\{ l_i \}$ is a generating set for $L$. Then $\{ l_i\}$ defines a simple realization of the toric code phase (recall the definition of simple realization from \autoref{subsec:H2_constraints}).
If \(\repU\) fixes or permutes the \(l_i\), then the boundary algebra is symmetric, hence the reasoning described in \autoref{subsec:H2_constraints} applies, and the \(\grpH{2}{\ZZ_2}{\QCAblendclass{1}{}}\) class of \(\repU\) determines whether \(\repU\) exchanges \(e\) and \(m\) anyons in \(H\).

However, this is not sufficient to establish Conjecture \ref{conj:em_swap}, because in general, the preservation of the LIOM algebra by the symmetry is not sufficient to show that the specific generating set $\{ l_i \}$ is merely fixed or permuted under the symmetry. We can proceed a bit further using the following argument.
By conjugating with the QCA \(V\) from the previous paragraph, we can always consider \(\repU\) to be acting on the toric code LIOM algebra. It can be shown (Appendix~\ref{appendix:TCBoundaryStructure}) that for any such $\repU$ that preserves the toric code LIOM algebra and the toric code ground state (i.e.\ the $+1$ eigenstate of all the toric code LIOMs), there is a way to construct a set of projectors that \emph{almost} gives a simple realization of the toric code, in the sense that the boundary algebra is the tensor product of the boundary algebra of a simple realization with additional local algebra that can be obtained from an on-site algebra by imposing \emph{local constraints}. We suspect there should be ways to improve this construction so that these additional local constraints vanish. In any case, if there are cases where Conjecture \ref{conj:em_swap} does \emph{not} hold, it must be that these constraints are (a)~unavoidable and (b)~make it possible to implement a local automorphism which squares to a translation, but which is not the Kramers-Wannier duality.

Finally, we note that there is a natural extension of our conjecture to the MBL phases containing other Abelian quantum double models (Appendix~\ref{appendix:TCBoundaryStructure}).

\section{Lattice gravitational anomalies}
\label{sec:LatticeGravitational}

Suppose we have a system without any symmetries at all. In this case, obviously one cannot have a non-trivial $G$-rep. On the other hand, QFTs can have still have 't Hooft anomalies that are not related to any symmetry; these are usually referred to as ``gravitational anomalies''. (In our view this terminology is fairly misleading, but we will still adopt it here in order to follow the previous literature.) The question we now wish to address is: how can we model such anomalies on the lattice?

The answer (previously suggested by works such as Refs.~\cite{Haah2023,Chen2023}) is that a lattice gravitational anomaly should correspond to a system in which the algebra of local operators is \emph{not} an on-site algebra, as we have been considering until now, but rather a more general local algebra (recall the definition of local algebra from \autoref{subsec:local_algebras}). In this section we will restrict ourselves to \emph{invertible} gravitational anomalies  (although in principle one can consider ``non-invertible gravitational anomalies'' as well~\cite{Kong_1405,Ji_1905}). Therefore, we should require the algebra of local operators to be an invertible local algebra. We previously discussed invertibility in \autoref{sec:CommutingModelsBoundaryAlgebra} in the context of boundary algebras. Here we will state that a local algebra $A$ over $\mathbb{R}^d$ is invertible if there exists a local algebra $\overline{A}$ over $\mathbb{R}^d$ such that $A \otimes \overline{A}$ is locally isomorphic to an on-site algebra.
(Reference~\cite{Haah2023} previously referred to invertible local algebras as ``invertible subalgebras'', but we will not follow this terminology here.) 

Next, we need a definition of when two invertible local algebras define equivalent gravitational anomalies. In particular, in order to have as close a parallel as possible to anomalies of symmetries, we want a concept of ``blend equivalence'' of local algebras. We say that two invertible local algebras $A$ and $B$ blend across a set $R \subseteq \mathbb{R}^d$ if there exists a finite $r$ and an invertible local algebra $C$ over $\mathbb{R}^d$ such that $C \{ \mathrm{int}_r R \}$ is ultra-locally isomorphic to $A \{ \mathrm{int}_r R \}$ and $C \{ \mathrm{int}_r R^c \}$ is ultra-locally isomorphic to $ B \{ \mathrm{int}_r R^c \}$. Recall that $A \{ Y \}$ denotes the subalgebra of $A$ comprising all of the elements of $A$ that are supported on $Y$. 
We prove in \autoref{sec:ConjAnomBlendIffConjugate}
that two invertible local algebras over $\mathbb{R}^d$ are blend-equivalent over $(-\infty,0) \times \mathbb{R}^{d-1}$ if and only if they are \emph{stably equivalent}: that is, they are locally isomorphic up to tensoring with on-site algebras.
Stable equivalence classes of invertible local algebras over $\mathbb{R}^d$ form an Abelian group which we call $\LatGravAnom{d}$, the group of lattice gravitational anomalies in $d$ spatial dimensions.

An important result about lattice gravitational anomalies is that they are in one-to-one correspondence with equivalence classes of QCAs in one higher dimension~\cite{Haah2023}.
(In fact, there is a similar correspondence for lattice anomalies of symmetries as well---we will describe this in \autoref{sec:BulkBoundaryCorrespondence}.) 
Thus, one can translate results about QCAs into results about gravitational anomalies, and vice versa.

As with anomalies of symmetries, by considering Hamiltonians defined inside an invertible local algebra, and then passing to the IR QFT description, one expects there to be a homomorphism
\begin{equation}
    \varphi : \LatGravAnom{d} \to \QFTGravAnom{d},
\end{equation}
where $\QFTGravAnom{d}$ denotes the group of gravitational anomalies of QFTs in $d$ spatial dimensions.
In fact, we can give a much more precise construction of this map, as follows. 
(In \autoref{sec:BulkBoundaryCorrespondence} we provide an analogous construction for symmetry anomalies as well.)

First of all, we already mentioned that there is an isomorphism
\begin{equation}
    \alpha: \LatGravAnom{d} \to \QCAblendclass{d+1}{}.
\end{equation}
Then, we note that there is a homomorphism \(e_*\) from $Q_{d+1}$
to $\invblendclass{d+1}{}$, the group of invertible topological phases without symmetry, defined simply by acting with the QCA on a product state.
Finally, one expects $\invblendclass{d+1}{}$ to be in isomorphic to the group of  QFT gravitational anomalies in $d$ spatial dimensions, by passing to the anomalies describing their edges. 
Thus, the commutative square
\begin{equation}
\begin{tikzcd}
\label{eq:gravitational_square}
 \QCAblendclass{d+1}{} \arrow[r,"e_*"] 
 &\invblendclass{d+1}{} \arrow{d}{\text{edge}} \\
 \LatGravAnom{d} \arrow{u}{\alpha} \arrow[r, dashed,"\varphi"] & \QFTGravAnom{d}
\end{tikzcd}.
\end{equation}
 defines a homomorphism $\varphi :  \LatGravAnom{d} \to \QFTGravAnom{d}$. 

In \eqnref{eq:gravitational_square},
the $\alpha$ map has been proved to be an isomorphism, while the $\mathrm{edge}$ map is at least \emph{expected} to be an isomorphism. However, $e_*$ is in general not an isomorphism. Therefore,
as in the case of anomalies of symmetries, the homomorphism $\varphi$ is in general neither injective nor surjective. For example, Refs.~\cite{Haah2022,Shirley2022} identified QCAs in three spatial dimensions which are believed to be in non-trivial equivalence classes. Thus, these would correspond to non-trivial lattice gravitational anomalies in two spatial dimensions. However, there are no non-trivial QFT gravitational anomalies in two spatial dimensions, so we see that $\varphi$ is not injective in this case.  
On the other hand, in one spatial dimension there exists a QFT gravitational anomaly corresponding to the boundary of the Kitaev $E_8$ invertible topological phase \cite{Kitaev2006BeyondAnyons,Kitaev2011E8,Lan2016E8,Long2024Edge}, but it cannot be in the image of the map $\varphi$ because there are no non-trivial QCAs in two spatial dimensions. 
For an example of a non-trivial QFT gravitational anomaly in three spatial dimension which \emph{does} descend from an invertible local algebra (equivalently, a non-trivial invertible phase in four spatial dimensions which can be constructed from a product state through a QCA), see Ref.~\cite{Chen2023}.

We then have the following result, which is similar to that of \autoref{subsec:symmetric_constraints}, about the obstruction to invertible commuting Hamiltonians/MBL phases in a system with a lattice gravitational anomaly. Let $A$ be an invertible local algebra over $\mathbb{R}^d$. We can then consider commuting models $\{ h_i : i \in I\}$, where each $h_i$ is a positive\footnote{In a general local algebra, $a \in A$ is called \emph{positive} if $a = \gamma^\dagger \gamma$ for some $\gamma \in A$. For invertible local algebras this is equivalent to saying that $a \otimes \unit$ maps to a positive operator under the local isomorphism from $A \otimes \overline{A}$ to an on-site algebra.} element of $A$.  Given a formal sum $H = \sum_{i \in I} h_i$,
the projected algebra $A(H)$ can be defined just as well in an invertible local algebra as in an on-site algebra.  Then our main result is:
\begin{thm}
\label{thm:gravitational_thm}
Let $A$ be an invertible local algebra over $\mathbb{R}^d$,  $W \subseteq \mathbb{R}^d$ be a half-volume and suppose there exists a Hamiltonian $\widetilde{H}$ that is a formal sum of commuting elements of $A$ that are supported on $W$, such that $A(\widetilde{H})$ is invertible and supported on $(W^c)^{+r}$ for some finite $r$. Then $A$ has trivial lattice gravitational anomaly.

    \begin{proof}
        In Appendix~\ref{appendix:AlternativeProjectedAlgebra}, we prove that $A(\widetilde{H}) \{ W^c \} $ is ultra-locally isomorphic to $A \{ W^c \}$. Thus, $A(\widetilde{H})$ defines a blend from $A$ to the trivial algebra. Therefore, $A$ has trivial lattice gravitational anomaly.
    \end{proof}
\end{thm}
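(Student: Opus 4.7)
The plan is to use the projected algebra $A(\widetilde{H})$ itself as the witness $C$ for a blend between $A$ and the trivial local algebra, taking the blending region $R = W^c$. Invertibility of $C$ is granted by hypothesis, so the remaining task is to verify the two ultra-local isomorphisms required by the definition of blend: $A(\widetilde{H})\{\mathrm{int}_r W^c\} \cong A\{\mathrm{int}_r W^c\}$, and $A(\widetilde{H})\{\mathrm{int}_r W\} \cong \mathrm{trivial}$.

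The ``trivial side'' is essentially immediate from the support hypothesis combined with axiom 4 of a local algebra (\autoref{subsec:local_algebras}). For any $S \subseteq \mathrm{int}_r W$, an element of $A(\widetilde{H})\{S\}$ is simultaneously supported on $S$ and on $(W^c)^{+r}$, hence on their intersection $\emptyset$ (invoking the intersection property, which for invertible local algebras reduces to the on-site case by tensoring with an inverse), so it must be a scalar multiple of $\unit$. Thus $A(\widetilde{H})$ is ultra-locally trivial deep in $W$.

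The ``$A$ side'' is the substantive step, and I would isolate it as a lemma deferred to an appendix. The natural candidate map is the quotient projection $a \mapsto [a]$ from $\mathcal{A}(\widetilde{H})$ onto $A(\widetilde{H})$. Well-definedness on $A\{W^c\}$ is easy: any representative supported sufficiently far from $W$ commutes with every $h_i$ (which are all supported in $W$), and hence lies in $\mathcal{A}(\widetilde{H})$. Supports are preserved by construction, and surjectivity is essentially tautological from the definition of support in the projected algebra in terms of representatives.

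The main obstacle is injectivity: no nonzero $a \in A\{W^c\}$ should vanish upon projection, i.e.\ land in $\mathcal{B}(\widetilde{H})$. Morally, $\mathcal{B}(\widetilde{H})$ is ``generated by $\widetilde{H}$'' and therefore confined to $W$, but making this precise for an abstract invertible local algebra is the crux of the argument. The approach I would take is to exploit invertibility: tensor $A$ with its inverse $\overline{A}$ to obtain a local algebra ultra-locally isomorphic to an on-site algebra, extend $\widetilde{H}$ trivially into $\overline{A}$ (which changes the projected algebra only by stacking with an on-site factor), and then argue on a concrete tensor-product ground space that operators supported strictly in $W^c$ act faithfully. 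With this injectivity lemma in hand, $A(\widetilde{H})$ is a valid blend from $A$ to the trivial algebra across $W^c$, so they lie in the same class in $\LatGravAnom{d}$, completing the proof.
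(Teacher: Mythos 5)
Your proposal is correct and follows essentially the same route as the paper: the entire content is the ultra-local isomorphism $A(\widetilde{H})\{W^c\} \cong A\{W^c\}$, which exhibits $A(\widetilde{H})$ itself as a blend from $A$ to the trivial algebra across $W^c$. The only (minor) divergence is that you establish this isomorphism directly from the $\mathcal{A}(\widetilde{H})/\mathcal{B}(\widetilde{H})$ definition of the projected algebra, handling injectivity by tensoring with an inverse algebra and a faithful state on $W^c$, whereas the paper obtains it from the alternative $\overline{\mathbb{P}A\mathbb{P}}$ net representation in Appendix~\ref{appendix:AlternativeProjectedAlgebra}; both are valid, and your explicit treatment of the "trivial side" support/intersection issue is actually slightly more careful than the paper's.
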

This result may be easier to interpret if we think about it in terms of commuting models on $A \otimes \overline{A}$, where $\overline{A}$ is is the inverse algebra. If we have a commuting model $\{ h_i \}$ on $A \otimes \overline{A}$ which is decoupled between $A$ and $\overline{A}$, i.e.\  each of the local terms acts inside either $A$ or $\overline{A}$, then Theorem~\ref{thm:gravitational_thm} implies that if this commuting model has invertible boundary algebra then the lattice gravitational anomaly of $A$ is trivial. To see this, suppose we have restricted commuting Hamiltonians $\widetilde{H}$ and $\widetilde{H}'$, which are both supported on $W$ and are built out of local terms in $A$ and $\overline{A}$ respectively. Then we prove in Appendix~\ref{appendix:AlternativeProjectedAlgebra} that $(A \otimes \overline{A})(\widetilde{H} \otimes \unit + \unit \otimes \widetilde{H}')$ is ultra-locally isomorphic to $A(\widetilde{H}) \otimes \overline{A}(\widetilde{H}')$. If this is an invertible boundary algebra, then it is invertible and supported on $(W^c)^{+r}$, which implies that $A(\widetilde{H})$ is invertible and supported on $(W^c)^{+r}$, and then we can apply Theorem~\ref{thm:gravitational_thm}.

It is worth noting that, in addition to gravitational anomalies, invertible local algebras also allow us to realize QFT anomalies of symmetries that cannot be realized by a $G$-rep acting on an on-site algebra. As an example, consider the QCA in three spatial dimensions described in Refs.~\cite{Haah2022,Fidkowski2023}. It is invariant under complex conjugation in an appropriate basis; from this one can show that the corresponding invertible local algebra in two spatial dimensions supports an anti-linear automorphism, i.e.\ a local representation of time-reversal symmetry. 
One can construct a symmetry-respecting commuting projector model inside this algebra whose ground state has a topological order that is known to carry a non-trivial QFT anomaly for time-reversal symmetry (but no gravitational anomaly)~\cite{Vishwanath2013ThreeD,Burnell2014ThreeFermion}.
Thus, in this case the lattice gravitational anomaly maps in the IR to trivial QFT gravitational anomaly, but the time-reversal action on the non-trivial invertible local algebra maps in the IR to a QFT anomaly of time-reversal symmetry. It has previously been argued that this QFT anomaly \emph{cannot} be realized through the local action of time-reversal symmetry on an on-site algebra~\cite{Else2014}.

This observation motivates extending the notion of \(G\)-rep to pairs \((A, \repU)\) where \(A\) is an invertible local algebra and \(\repU: G \to \LAut(A)\) is a homomorphism. A \(G\)-rep \((C, \repV)\) is a blend of \((A, \repU)\) and \((B, \repW)\) across \(R\) if \(C\) is a blend from \(A\) to \(B\) and \(\repV\) agrees with \(\repU\) in \(\mathrm{int}_r R\) (using the ultra-local isomorphism of \(C\{\mathrm{int}_r R\}\) and \(A\{\mathrm{int}_r R\}\)) and \(\repV\) similarly agrees with \(\repW\) in \(\mathrm{int}_r R^c\). Lattice anomalies are blend classes of \(G\)-reps (as in Appendix~\ref{appendix:BlendEquivalenceDef}). They form a group under stacking, still denoted by \(\LatAnom{d}{G}\).

The above example involved an anti-unitary symmetry. By contrast, one can show that in bosonic systems (and for fermionic systems in which there is no non-trivial extension by fermion parity), if the symmetry $G$ is unitary then there there are no additional QFT anomalies of $G$ (with trivial gravitational anomaly) that can be realized through invertible local algebras, beyond those that can realized through $G$-reps acting on on-site algebras. To see this, consider the homomorphism
\begin{equation}
\phi : \LatAnom{d}{G} \to \LatGravAnom{d}
\end{equation}
corresponding to forgetting about the $G$-rep. Under the assumptions mentioned at the start of this paragraph, there is also a homomorphism 
\begin{equation}
    \psi : \mathrm{Anom}_d^{\mathrm{lattice}} \to \LatAnom{d}{G}
\end{equation}
in which we just take the $G$ symmetry to act trivially. This is a right-inverse to $\phi$, i.e. $\phi \circ \psi = \id$, which means that $\phi$ is surjective.
From this it follows that
\begin{multline}
    0 \to \mathrm{SPAnom}_d^{\mathrm{lattice}}(G) \to \LatAnom{d}{G} \\ \xrightarrow{\phi} \LatGravAnom{d} \to 0.
\end{multline}
is a short exact sequence,
where
$\mathrm{SPAnom}_d^{\mathrm{lattice}}(G)$  comprises the anomalies that require the $G$ symmetry to be non-trivial ($\mathrm{SPAnom}$ stands for ``symmetry-protected anomaly''). Moreover, the existence of the right-inverse to $\phi$ also implies that this short exact sequence splits. Hence we obtain the conclusion that
\begin{equation}
\label{eq:lat_anom_decomposition}
   \LatAnom{d}{G}  = \mathrm{SPAnom}_d^{\mathrm{lattice}}(G) \oplus \mathrm{Anom}_d^{\mathrm{lattice}}.
\end{equation}
By a similar argument one also has that 
\begin{equation}
\label{eq:qft_anom_decomposition}
   \QFTAnom{d}{G}  = \mathrm{SPAnom}_d^{\mathrm{QFT}}(G) \oplus \mathrm{Anom}_d^{\mathrm{QFT}}.
\end{equation}
and one can argue that the homomorphism from $\LatAnom{d}{G}$ to $\QFTAnom{d}{G}$ should map the first (second) component of \eqnref{eq:lat_anom_decomposition} into the first (second) component of \eqnref{eq:qft_anom_decomposition}.
From this we see that any $G$ anomaly of a QFT (without gravitational anomaly) that can be obtained from a $G$ action on an invertible local algebra can also be obtained from a $G$ action on an on-site algebra. 

\section{Bulk-boundary correspondence for lattice anomalies}
\label{sec:BulkBoundaryCorrespondence}

There is a close link between QFT anomalies in \(d\) dimensions and symmetry protected states (\(G\)-SPTs) in \(d+1\) dimensions. 
Indeed, there is a one-to-one bulk-boundary correspondence between \(G\)-SPT phases, and more generally invertible phases with \(G\)-symmetry, and the anomalies describing their edges.

In this section, we will show that there is a similar bulk-boundary correspondence that is applicable to \(d\)-dimensional lattice anomalies; however, we must consider invertible phases of \((d+1)\)-dimensional \emph{commuting models}. (Recall the definition of commuting model from \autoref{sec:CommutingModelsBoundaryAlgebra}.) We will also see how invoking bulk-boundary correspondence allows for a more precise formulation of the map from lattice anomalies to QFT anomalies, generalizing the discussion in \autoref{sec:LatticeGravitational}.

\subsection{The case without symmetry: lattice gravitational anomalies on the boundary of commuting models}

Let us first consider the case without symmetry. In that case the relevant lattice anomalies are the lattice gravitational anomalies
of \autoref{sec:LatticeGravitational},
whose classification group we called $\LatGravAnom{d}$. 
We already observed that there is an isomorphism $\LatGravAnom{d} \cong \QCAblendclass{d+1}{}$, where $\QCAblendclass{d+1}{}$ is the group of equivalence classes of QCAs in $d+1$ spatial dimensions. 
In fact, we show that there is a three-way isomorphism
\begin{equation}
\label{eq:three_way_isomorphism}
    \LatGravAnom{d} \cong \InvModelClass{d+1}{} \cong \QCAblendclass{d+1}{}.
\end{equation}
Here $\InvModelClass{d+1}{}$ is the group of invertible phases of commuting models. 
We say that two commuting models are in the same phase if they can be connected by a series of the following moves:
\begin{itemize}
    \item $\{ h_i : i \in I \} \sim \{ U[h_i] : i \in I \}$, where $U$ is any finite-depth circuit.
        \item $\{ h_i : i \in I \} \sim \{ h_i' : i \in I \}$ where each $h_i'$ has the same ground-state subspace as $h_i$; and
    \item Stacking with trivial commuting models (recall the definition of trivial commuting models from \autoref{sec:CommutingModelsBoundaryAlgebra}). The stacking moves will in general change the size of the label set $I$.
\end{itemize}
[Note that we take commuting models to be defined by the (unordered) \emph{set} of local terms; thus two commuting models that differ only by a permutation of the labels in $I$ are identified.]

Phases of commuting models form a commutative monoid; the invertible elements of this monoid are the invertible phases.

To see \eqnref{eq:three_way_isomorphism}, we note that there is a homomorphism $\gamma : \InvModelClass{d+1}{} \to \LatGravAnom{d}$, by evaluating the boundary algebra of the commuting model. (Note that, while we have not established the existence and uniqueness of the boundary algebra for a general commuting model, it is easy to show in the case of \emph{invertible} commuting models.)
Meanwhile, there is a homomorphism $\mu : \QCAblendclass{d+1}{} \to \InvModelClass{d+1}{}$ defined by acting with QCAs on a trivial commuting model.
Finally, we recall from the previous section that there is an isomorphism $\alpha : \LatGravAnom{d} \to \QCAblendclass{d+1}{}$. 
We obtain a commutative diagram
\begin{equation}\label{eqn:Grav_QPAnom_Triangle}
    \begin{tikzcd}
        & \QCAblendclass{d+1}{} \arrow[ld,"\alpha^{-1}"'] \arrow[dd,"\mu"]
        \\ \LatGravAnom{d} & 
        \\ & \InvModelClass{d+1}{} \arrow[lu,"\gamma"]
    \end{tikzcd}.
\end{equation}
(we show commutativity in \autoref{sec:BoundaryGRep}). 
The commutativity of the diagram is sufficient to conclude that $\mu$ must be injective. The surjectivity of $\mu$ follows from Appendix~\ref{appendix:InvertibleIffQCAPreparable}; hence we find that $\mu$ is an isomorphism. This then implies that $\gamma$ is an isomorphism as well.

Instead of commuting models, one can also make similar statements about MBL phases (recall our definition of MBL phases in terms of LIOM algebras from \autoref{subsec:MBLImplications}). 
There is a subtlety, however, because in general the classification of SRE MBL phases in spatial dimension $d+1$ is isomorphic to the quotient $\QCAblendclass{d+1}{} / \QCAblendclass{d+1}{(0)}$, where $\QCAblendclass{d+1}{(0)}$ correspond to the QCAs that can preserve an on-site LIOM algebra~\cite{Long2024} (these were referred to as ``generalized permutations'' in Ref.~\cite{Long2024}). In fact, we conjecture that $\QCAblendclass{d+1}{(0)}$ is always trivial in spatial dimensions $d \geq 1$, but we do not have a proof of this.

In the rest of this section, we will extend the statements of this subsection to the case with symmetry.

\subsection{Bulk-boundary correspondence with symmetry: summary}

In this subsection, we summarize the results presented in the subsequent subsections. We would like to extend \eqnref{eq:three_way_isomorphism} to
\begin{equation}
\label{eq:wrong_G_correspondence}
\quad \LatAnom{d}{G} \cong \InvModelClass{d+1}{G} \cong \QCAblendclass{d+1}{G} \quad \mbox{\emph{(incorrect)}},
\end{equation}
where $Q_{d+1}^G$ is the group of $G$-QCAs in $d+1$ dimensions (i.e. equivalence classes of QCAs which commute with an on-site $G$ symmetry), and $\InvModelClass{d+1}{G}$ is the group of invertible phases that are symmetric under an on-site $G$ symmetry.

However, \eqnref{eq:wrong_G_correspondence} is likely at best correct for lattice anomalies of \emph{internal} \(G\)-reps (even for internal $G$-reps we do not have a proof).
It turns out that the restriction that the QCA should be symmetric under an \emph{on-site} $G$-rep limits the class of \(G\)-reps \(\repU\) which can occur on their boundary to those which have a \emph{conjugacy inverse}---all of which are internal.
(We cannot confirm that any internal \(G\)-rep has a conjugacy inverse.)
The \(G\)-rep \(\repU\) is conjugacy invertible if there is another \(G\)-rep \(\overline{\repU}\) and a local isomorphism \(V\) such that
\begin{equation}
    \repU \otimes \overline{\repU} = V \onsiteV V^{-1},
\end{equation}
where \(\onsiteV\) is an on-site \(G\)-rep. In this setting, it is natural to introduce a notion of \emph{conjugacy anomaly}: the equivalence class of a conjugacy-invertible \(G\)-rep under conjugation by local isomorphisms and stacking with on-site reps. We denote the group of conjugacy anomalies by \(\CLatAnom{d}{G}\).

Similarly, in order for the equivalence to hold, \(\InvModelClass{d+1}{G}\) should be taken as the group of \emph{\(G\)-invertible commuting phases}. A symmetric commuting phase consists of commuting models which are equivalent under the action of finite depth circuits of symmetric gates and stacking with trivial symmetric models.
These form a monoid under stacking, and the \(G\)-invertible phases are the invertible elements of the monoid. Note that it is possible that these are a proper subset of the $G$-symmetric commuting phases which become invertible if we forget about the symmetry. For phases of ground states, it is generally believed that these two notions coincide, but there is no proof for phases of commuting models.

To define $\QCAblendclass{d+1}{G}$, we consider \emph{circuit classes} of \(G\)-QCAs. We say two \(G\)-QCAs are circuit equivalent if they are related by multiplication with a symmetric circuit and stacking ancillas. 
\(\QCAblendclass{d+1}{G}\) is the group of circuit equivalence classes in \(d+1\) dimensions, with the group operation induced by stacking \(G\)-QCAs.

Thus, the version of \eqnref{eq:wrong_G_correspondence} we prove is
\begin{equation}\label{eqn:conj_G_correspondence}
    \CLatAnom{d}{G} \cong \InvModelClass{d+1}{G} \cong \QCAblendclass{d+1}{G}.
\end{equation}

The bulk-boundary correspondence for conjugacy anomalies also allows us to make a more precise characterization of the map from lattice anomalies to QFT anomalies. Expressed in terms of \(G\)-QCAs, we have a commuting square
\begin{equation}
\begin{tikzcd}
 \QCAblendclass{d+1}{G} \arrow[d,swap,"\beta"] \arrow[r,"e_*"] 
 &\invblendclass{d+1}{G} \arrow[d,"\text{edge}"] \\
 \mathrm{CAnom}^{\mathrm{lattice}}_d(G) \arrow[r,"\varphi'"] & \QFTAnom{d}{G}
\end{tikzcd}.
\label{eqn:LatticeToQFTSquare}
\end{equation}
Here, \(\invblendclass{d+1}{G}\) is the group of circuit equivalence classes of invertible states in \(d+1\) dimensions with \(G\)-symmetry. 
The left vertical map 
\(\beta\) is an isomorphism which sends a circuit class of \(G\)-QCAs to the conjugacy anomaly describing its boundary.
Similarly, the right vertical map is the map from a bulk ground state invertible phase to the QFT anomaly describing its edge. 
The lower horizontal map \(\varphi'\) sends a lattice conjugacy anomaly to the QFT anomaly describing the low energy theory of some Hamiltonian symmetric under a \(G\)-rep in the conjugacy class. 
It is similar to \(\varphi\) discussed in \autoref{sec:IntroLatticeToQFT}, except that its domain is conjugacy anomalies. 
Finally the top horizontal map $e_*$ sends a \(G\)-QCA to a state prepared by that QCA. Given the isomorphism $Q_{d+1}^G \cong P_{d+1}^G$, we could equivalently phrase $e_*$ in terms of commuting models, in which case it sends a commuting model to its ground state.

The commutativity of the diagram expresses the compatibility of the lattice and QFT anomaly bulk-boundary correspondences. We obtain the same edge QFT anomaly for an invertible bulk model by either sending the bulk model to its ground state and then the edge QFT anomaly, or first sending the bulk model to its edge conjugacy anomaly and then sending that conjugacy anomaly to its IR QFT anomaly.

The square in Eq.~\eqref{eqn:LatticeToQFTSquare} can also allow us to bypass arguments involving QFTs and \(\varphi'\), which we treat with less rigor than lattice objects. Indeed, we may regard Eq.~\eqref{eqn:LatticeToQFTSquare} as a rigorous definition of \(\varphi'\). 
This allows us to, for instance, precisely identify the IR trivial lattice anomalies (under the assumption that \(G\)-SPT phases form a group under stacking). 
Recall that, in an on-site algebra, we call an anomaly IR trivial if it admits a gapped symmetric Hamiltonian with a unique product ground state. In a more general invertible algebra, we can call an anomaly IR trivial if it admits a gapped symmetric Hamiltonian with a unique \emph{invertible} ground state. 
Intuitively, this is a property of a low-energy effective model which should be captured by QFT, but as stated it is actually a lattice definition. 
Indeed, under the assumption mentioned above, we can prove that a conjugacy anomaly is IR trivial if and only if it is in the kernel of \(e_* \beta^{-1}\). 
Then, provided a precise definition of \(\varphi'\) maintains commutativity in Eq.~\eqref{eqn:LatticeToQFTSquare} and assuming that the ``edge'' map is an isomorphism\footnote{One subtlety is that it is conceivable that in order for the edge map from $\invblendclass{d+1}{G} \to \QFTAnom{d}{G}$ to be an isomorphism, one should consider equivalences of invertible states under a generalization of finite-depth circuits allowing for rapidly decaying tails, rather than equivalences by strictly local circuits. However in the rest of the paper we do not adopt such a definition of $\invblendclass{d+1}{G}$.}, we can conclude that the IR trivial conjugacy anomalies are exactly those which descend to trivial QFT anomalies.

The primary goal of the remainder of this section is to prove Eqs.~(\ref{eqn:conj_G_correspondence},~\ref{eqn:LatticeToQFTSquare}).
We provide the precise definitions of conjugacy anomaly, \(G\)-QCA, \(G\)-invertible model and the groups appearing in \eqnref{eqn:conj_G_correspondence} in \autoref{sec:CommutingBoundaryAnomalies}.
Sections~\ref{sec:BoundaryGRep}-\ref{sec:InvertibleAlgebraToGQCA} prove \eqnref{eqn:conj_G_correspondence}.
In \autoref{sec:BoundaryGRep}, we establish a generalization of the commuting triangle \eqnref{eqn:Grav_QPAnom_Triangle} that includes symmetry, which reduces the proof of \eqnref{eqn:conj_G_correspondence} to showing the map \(\QCAblendclass{d+1}{G} \to \CLatAnom{d}{G}\) is an isomorphism.
As an independently useful preliminary result, we show two \(G\)-QCAs are circuit equivalent if and only if they blend via a blend with an on-site symmetry in \autoref{sec:GQCABlendToCircuit}, and
\autoref{sec:InvertibleAlgebraToGQCA} then completes the proof of \eqnref{eqn:conj_G_correspondence}.
\autoref{sec:ConjAnomToQFT} completes the derivation of Eq.~\eqref{eqn:LatticeToQFTSquare} and characterizes the IR trivial conjugacy anomalies. 
Finally, \autoref{subsec:ConjugacyToBlend} addresses the relation between conjugacy anomalies and lattice anomalies defined by blends, as considered in the rest of this work. Indeed, similar constructions leading to Eq.~\eqref{eqn:LatticeToQFTSquare} can be repeated for more general lattice anomalies, in which case the \(G\)-QCAs cannot be taken to have on-site symmetries. We leave this to Appendix~\ref{sec:BlendBulkBoundary}.

In the remainder of the section, we will distinguish on-site \(G\)-reps [usually \((d+1)\)-dimensional] from more general \(G\)-reps [usually \(d\)-dimensional] by decorating on-site \(G\)-reps with a dot: \(\onsiteU, \onsiteV, \onsiteW\). Unless stated otherwise, we take \(d\geq 1\).

\subsection{Conjugacy anomalies, \texorpdfstring{\(G\)}{G}-QCAs, and \texorpdfstring{\(G\)}{G}-invertible models}
\label{sec:CommutingBoundaryAnomalies}

In this subsection, we establish our precise definitions of the groups \(\CLatAnom{d}{G}\), \(\QCAblendclass{d+1}{G}\), and \(\InvModelClass{d+1}{G}\) appearing in \eqnref{eqn:conj_G_correspondence}.

\emph{Conjugacy anomalies}.---
We begin with \(\CLatAnom{d}{G}\).
As in \autoref{sec:LatticeGravitational}, we extend the notion of \(G\)-rep to pairs \((A, \repU)\), where \(A\) is an invertible local algebra and \(\repU : G \to \LAut(A)\) is a homomorphism. 
We say that \(G\)-reps \((A, \repU)\) and \((B, \repW)\) are \emph{conjugate} if there are on-site on-site \(G\)-reps \((M_{1,2},\onsiteV_{1,2})\) and a local isomorphism
\begin{equation}
    V : A \otimes M_1 \to B \otimes M_2
\end{equation}
such that
\begin{equation}\label{eqn:ConjugacyEquivalence1}
    \repU \otimes \onsiteV_1 = V ( \repW \otimes \onsiteV_2 ) V^{-1}.
\end{equation}
We say that \((A,\repU)\) is \emph{conjugacy invertible} if there is an \((\overline{A}, \overline{\repU})\) such that \((A \otimes \overline{A}, \repU \otimes \overline{\repU})\) is conjugate to an on-site \(G\)-rep \((M, \onsiteV)\). 
Denote the equivalence class of the \(G\)-rep \((A, \repU)\) under conjugacy by \([A,\repU]\).

The \emph{conjugacy anomaly} of a conjugacy invertible \(G\)-rep \((A,\repU)\) is its equivalence class under conjugacy. This definition mirrors our definition of lattice anomaly, except we have used a finer equivalence relation on \(G\)-reps and restricted to a strict subset of all \(G\)-reps (for example, a translation \(\ZZ\)-rep is not conjugacy invertible, nor is any other non-internal symmetry).
Stacking of \(G\)-reps induces an Abelian group structure on the set of conjugacy anomalies, which we denote \(\CLatAnom{d}{G}\).

\emph{\(G\)-QCAs}.---
Next, we define \(\QCAblendclass{d+1}{G}\). We define a \emph{\(G\)-QCA}~\cite{Gong2020MPU,Zhang_2010,Zhang2023entanglers} to be a pair \((U, \onsiteU)\), where \(U\) is a QCA, \(\onsiteU\) is an on-site \(G\)-rep, and \(\onsiteU_g U = U \onsiteU_g\) for all \(g \in G\). We have suppressed the dependence on the on-site algebra \(M\) over \(\RR^{d+1}\) on which \(U\) and \(\onsiteU_g\) act.
The \(G\)-QCA \((X, \onsiteU)\) is a \emph{symmetric circuit} if \(X\) is equal to a finite depth unitary circuit, and every gate in the circuit decomposition commutes with \(\onsiteU_g\).

Any two \(G\)-QCAs \((U, \onsiteU)\) and \((W, \onsiteW)\) can be considered to act on the same on-site algebra with the same \(G\)-rep by stacking with on-site \(G\)-reps \(\onsiteV_{1,2}\) such that
\begin{equation}\label{eqn:OnSiteCommonGRep}
    \onsiteU \otimes \onsiteV_{1} = \onsiteW \otimes \onsiteV_{2}.
\end{equation}
Then we say \((U, \onsiteU)\) and \((W, \onsiteW)\) are circuit equivalent if there are \(\onsiteV_{1,2}\) such that \eqnref{eqn:OnSiteCommonGRep} holds and \(U\) and \(W\) are related by a symmetric circuit \((X, \onsiteU \otimes \onsiteV_{1})\) as
\begin{equation}\label{eqn:CircuitEquivalence}
    U \otimes \id = X (W \otimes \id),
\end{equation}
where the \(\id\) factors act on the ancillas carrying \(\onsiteV_{1,2}\).
Denote the circuit class of a \(G\)-QCA \((U, \onsiteU)\) by \([U, \onsiteU]\). The set of equivalence classes of \((d+1)\)-dimensional \(G\)-QCAs is \(\QCAblendclass{d+1}{G}\).

Equation~\eqref{eqn:CircuitEquivalence} is, indeed, an equivalence relation, and stacking of \(G\)-QCAs induces a well-defined Abelian group operation on \(\QCAblendclass{d+1}{G}\), with the inverse of \([U, \onsiteU]\) being \([U^{-1}, \onsiteU]\). 
This follows from the observation that \(U \otimes U^{-1} = (U \otimes \id) \mathrm{SWAP} (U^{-1} \otimes \id) \mathrm{SWAP}\) is a symmetric finite depth circuit under \(\onsiteU \otimes \onsiteU\).
Indeed, an individual swap gate between two sites with the same on-site \(G\)-rep is symmetric [the \(\mathrm{SWAP}\) layer], and the conjugation of such a gate by the \(G\)-QCA \(U \otimes \id\) is also symmetric and finitely supported [the \((U \otimes \id)\mathrm{SWAP}(U^{-1} \otimes \id)\) layer].

Similarly, we note that \(G\)-QCAs which differ by multiplication by a symmetric circuit on the right, \((U, \onsiteU)\) and \((UY, \onsiteU)\), are also circuit equivalent. 
This follows from observing that \((UYU^{-1}, \onsiteU)\) is a symmetric circuit, as seen by conjugating gates in \(Y\) by \(U\).

\emph{\(G\)-invertible commuting models}.---
Finally, we define \(\InvModelClass{d}{G}\).
A strongly \(G\)-symmetric commuting model is a pair \((H,\onsiteU)\), where \(H = \sum_{i \in I} h_i\) (properly \(H = \{h_i : i \in I\}\)) is a commuting model in an on-site algebra, \(\onsiteU\) is an on-site \(G\)-rep on said algebra, and \(\onsiteU_g[h_i] = h_i\) for all \(h_i\).
Call \((H,\onsiteU)\) trivial if each \(h_i\) is supported on a single site and has a unique ground state as a single-site operator.

Circuit equivalence can also be defined for commuting models.
Define \((H,\onsiteU)\) and \((K,\onsiteW)\) to be circuit equivalent if there are trivial models \((H', \onsiteV_1)\) and \((K', \onsiteV_2)\) such that \eqnref{eqn:OnSiteCommonGRep} holds and there is a symmetric circuit \((X, \onsiteU\otimes\onsiteV_1)\) such that the set of ground spaces of \(X[h_i] \in X[H \otimes \unit + \unit \otimes H']\) is the same as the set of ground spaces of \(k_{i} \in K \otimes \unit + \unit \otimes K'\).
That is, there is a bijective pairing of terms between the two Hamiltonians, such that the paired terms have the same ground spaces. 
One can check that this is, indeed, an equivalence relation. Denote the equivalence class of \((H, \onsiteU)\) by \([H, \onsiteU]\).

The symmetric model \((H, \onsiteU)\) is \emph{\(G\)-invertible} if there is another symmetric model \((\overline{H},\overline{\repU}^\bullet)\) such that \((H\otimes \unit + \unit\otimes \overline{H}, \onsiteU \otimes \overline{\repU}^\bullet)\) is circuit equivalent to a trivial model. Denote the set of circuit equivalence classes of \(G\)-invertible commuting models in \(d+1\) dimensions by \(\InvModelClass{d+1}{G}\).
Stacking of commuting models induces an Abelian group operation on \(\InvModelClass{d+1}{G}\).

\subsection{Boundary \texorpdfstring{\(G\)}{G}-reps}
\label{sec:BoundaryGRep}

To begin setting up the triple equivalence \eqnref{eqn:conj_G_correspondence}, we construct the commuting triangle corresponding to \eqnref{eqn:Grav_QPAnom_Triangle}:
\begin{equation}\label{eqn:G_QPAnom_Triangle}
    \begin{tikzcd}
        & \QCAblendclass{d+1}{G} \arrow[ld,"\beta"'] \arrow[dd,"\mu"]
        \\ \CLatAnom{d}{G} & 
        \\ & \InvModelClass{d+1}{G} \arrow[lu,"\gamma"]
    \end{tikzcd}.
\end{equation}
The maps \(\beta\), \(\mu\), and \(\gamma\) are all familiar: \(\beta\) sends a QCA to its boundary algebra, and \(\gamma\) does the same for a commuting model, while \(\mu\) sends a QCA to the commuting model it prepares by acting on a trivial model. We only need to enrich these maps with symmetry, and check that the diagram commutes.

\subsubsection{Defining the maps}

Before we we can prove the commutativity of \eqnref{eqn:G_QPAnom_Triangle}, we must present more precise definitions of \(\beta\), \(\gamma\), and \(\mu\). 
We will also argue that \(\mu\) is surjective.

\emph{The map \(\beta\)}.--- Beginning with \(\beta\), we define the \emph{boundary \(G\)-rep} for a \(G\)-QCA, generalizing the notion of boundary algebra. Recall that if \(U\) is a \((d+1)\)-dimensional QCA on an on-site algebra \(M\) and \(R = (-\infty,0]\times \RR^d\) is a half-volume, then the \(d\)-dimensional boundary algebra \(A\) is characterized by the ultra-local isomorphism
\begin{equation}\label{eqn:GQCABoundaryAlgebra}
    U[M\{ R\}] \cong M\{\mathrm{int}_r R\} \otimes A,
\end{equation}
where \(r\) is the range of \(U\), and \(A\) is the subalgebra of \(U[M\{ R\}]\) supported in \((\mathrm{int}_r R)^c\). 
If \((U, \onsiteU)\) is a \(G\)-QCA, then \(A\) carries a symmetry \(\repU\) inherited from the on-site \(G\)-rep \(\onsiteU\).
Letting \(\onsiteU_R\) be the restriction of \(\onsiteU\) to \(R\), the symmetry of \(U\) implies 
\begin{equation}
\label{eq:QCA_transformed_symmetry}
    U \onsiteU_R = (\onsiteU_{\mathrm{int}_r R} \otimes \repU) U,
\end{equation}
where both sides of the equation are considered as maps from \(M\{R\}\) to \(M\{\mathrm{int}_r R\} \otimes A\) and \(\repU\) is a homomorphism \(\repU: G \to \LAut(A)\).
That is, \((A, \repU)\) is a \(G\)-rep, identified as the boundary \(G\)-rep of \((U, \onsiteU)\). 
Further, the \(G\)-rep \((A,\repU)\) has a conjugacy inverse obtained via the boundary algebra of \(U\) when restricted to \(R^c\), as we will prove in \autoref{sec:InvertibleAlgebraToGQCA}.
Then \(\beta\) is the induced map between circuit classes of \(G\)-QCAs and conjugacy anomalies. 
We leave the proof that \(\beta\) is well defined to \autoref{sec:InvertibleAlgebraToGQCA}, where we simultaneously prove that it is an isomorphism.
For now, we assume these results.

\emph{The map \(\mu\)}.--- Next, we define \(\mu\).
Given any \(G\)-QCA \((U, \onsiteU)\) and any trivial model \((H_\mathrm{triv}, \onsiteU \otimes {\onsiteU}^*)\), where \({\onsiteU}^*\) is the complex conjugate representation for \(\onsiteU\), 
then we get a \(G\)-invertible model \(((U\otimes \id)[H_\mathrm{triv}], \onsiteU\otimes {\onsiteU}^*)\). 
We stack the on-site \(G\)-rep with its complex conjugate to ensure that there is at least one trivial symmetric model. For instance, the sum of on-site projectors onto the maximally entangled state between the tensor factors.
This construction induces a map $\mu$ on circuit classes.
Indeed, if \((X(U\otimes \id), \onsiteU \otimes \onsiteV)\) is circuit equivalent to \((U, \onsiteU)\), then
\begin{multline}
    \left(X(U\otimes \id)[H_\mathrm{triv} \otimes \unit + \unit \otimes H^{\text{ancilla}}_{\mathrm{triv}}], \right. \\
    \left. \onsiteU \otimes{\onsiteU}^* \otimes \onsiteV \otimes {\onsiteV}^*\right)
\end{multline}
(\(U \otimes \id\) acts as the identity on every tensor factor other than that which carries \(\onsiteU\), and \((H^{\mathrm{ancilla}}_{\mathrm{triv}}, \onsiteV\otimes {\onsiteV}^*)\) is a trivial model)
is circuit equivalent to \(((U\otimes \id)[H_\mathrm{triv}], \onsiteU \otimes{\onsiteU}^*)\), so that \(\mu\) maps circuit classes to circuit classes if $H_\mathrm{triv}$ is fixed.

Further, choosing a different \(H'_{\mathrm{triv}}\) instead of \(H_{\mathrm{triv}}\) also results in a commuting model which is circuit equivalent, as there is an on-site symmetric circuit \(Y\) which swaps the ground states of \(H_{\mathrm{triv}}\) and \(H'_{\mathrm{triv}}\) and fixes the orthogonal complement of these two states in each on-site Hilbert space.
Then, dropping the tensor product with the identity and the \(G\)-rep for brevity, \(U[H_{\mathrm{triv}}]\) and \(UY[H'_{\mathrm{triv}}] = (UYU^{-1})U[H'_{\mathrm{triv}}]\)  have commuting terms with the same ground spaces, so \(U[H_{\mathrm{triv}}]\) and \(U[H'_{\mathrm{triv}}]\) are circuit equivalent. 
Thus, $\mu$ is well-defined. 
It also respects stacking of \(G\)-QCAs, so defines a homomorphism.

The map \(\mu\) is also surjective.
That is, any \(G\)-invertible commuting projector model is related by a \(G\)-QCA to a trivial model.
In the case without symmetry, this is the result of Appendix~\ref{appendix:InvertibleIffQCAPreparable}.
That appendix can be generalized to include an on-site symmetry.
Indeed, the entire calculation as presented carries through identically by replacing commuting models with \(G\)-symmetric models and finite depth circuits with symmetric circuits\footnote{The final step in Appendix~\ref{appendix:InvertibleIffQCAPreparable} was left implicit, namely showing that if we have a local isomorphism between two on-site algebras, then those algebras can be taken to be the same. The symmetry-enriched analogue is to show that if two on-site \(G\)-reps are conjugate, then they can be taken to be the same after rearranging sites. This follows from a generalization of Ref.~\cite[Appendix~A]{Long2024}, replacing the matching of on-site Hilbert space dimensions with a matching of symmetry representations.}.

\emph{The map \(\gamma\)}.--- Finally, \(\gamma: \InvModelClass{d+1}{G} \to \CLatAnom{d}{G}\) is, similarly to \(\beta\), a symmetry-enrichment of the boundary algebra map.
Let \((H, \onsiteU)\) be a \(G\)-invertible model, and define a restriction \(\widetilde{H}\) which includes all terms with support in \(R^c = (0, \infty) \times \RR^d\).
As the terms in \(H\) have bounded support, this means that \(\widetilde{H}\) is supported in \((\mathrm{int}_r R)^c\) for some \(r\).
Then we have a projected algebra \(M(\widetilde{H})\).
As \(\widetilde{H}\) has no support in \(\mathrm{int}_r R\), we have an ultra-local isomorphism
\begin{equation}\label{eqn:CommutingModelBoundaryAlg}
    M(\widetilde{H}) \cong M\{\mathrm{int}_r R\} \otimes A',
\end{equation}
where \(A'\) is the subalgebra of \(M(\widetilde{H})\) supported in \((\mathrm{int}_r R)^c\).
That is, \(A'\) is the boundary algebra for \(\widetilde{H}\), where we take \(W = (\mathrm{int}_r R)^c\).

Similarly to  \eqnref{eq:QCA_transformed_symmetry}, this algebra inherits a representation of \(G\), and becomes a \(G\)-rep \((A', \repU')\).
This follows from the fact that \(H\) is strongly symmetric, so the boundary algebra of \(\widetilde{H}\) carries a symmetry action, defined by the action of \(\onsiteU_W\) (the restriction of the symmetry action to \(W\)).

We must show that \(\gamma\) is well-defined and a homomorphism.
In particular, that \((A',\repU')\) is conjugacy invertible, and that its conjugacy class is the same if we use a circuit equivalent \((K, \onsiteW)\).

By Appendix~\ref{appendix:AlternativeProjectedAlgebra}, stacking restricted models stacks their boundary algebras, and this extends immediately to the boundary \(G\)-reps. So, if \(\gamma\) is well-defined, it is a homomorphism.

To show that $\gamma$ is well-defined, we need to show that the conjugacy equivalence class of the boundary $G$-rep is invariant under the following operations: (a) stacking the bulk commuting model with a trivial model; (b) replacing the bulk commuting model $\{ h_i \}$ with another one $\{ h_i' \}$ such that $h_i$ and $h_i'$ have the same ground state subspace; (c) acting on the bulk commuting model with a symmetric circuit; and (d) changing the choice of restriction for the bulk commuting model. (a) and (b) are obvious. (d) is obvious for a \emph{trivial} commuting model because different choices of restriction just amount to stacking the boundary $G$-rep with on-site reps. Moreover, once we know (a), (b), and (c), then (d) will follow for an arbitrary $G$-invertible commuting model by stacking with the inverse model. Thus, it only remains to prove (c). Consider a commuting model 
\((H, \onsiteU)\), a symmetric restriction \((\widetilde{H},\onsiteU)\), and let \((X, \onsiteU)\) be a symmetric circuit.  Then \(X\) induces a local isomorphism between the projected algebras \(X: M(\widetilde{H}) \to M(X[\widetilde{H}])\). However, this map does not necessarily preserve the decomposition of \eqnref{eqn:CommutingModelBoundaryAlg}, so does not immediately define a local isomorphism between the boundary algebras. Let \((Y, \onsiteU)\) be a restriction of \((X, \onsiteU)\), such that \(Y[\widetilde{H}] = X[\widetilde{H}]\), and \(Y\) acts as the identity in \(M\{\mathrm{int}_{\xi} R\}\) for some \(\xi\), which exists as \(X\) is a symmetric circuit. Then \(Y\) defines a local isomorphism of boundary algebras:
\begin{equation}
    Y : M_{\mathrm{int}_{\xi} R}(\widetilde{H}) \to M_{\mathrm{int}_{\xi} R}(X[\widetilde{H}]).
\end{equation}
Further, \(Y\) commutes with the \(\onsiteU\) action, and so acts as a conjugacy equivalence for the boundary \(G\)-reps.

As \(\gamma\) is a well-defined homomorphism from \(G\)-invertible models to conjugacy classes of \(G\)-reps, it follows that its image consists of conjugacy-invertible \(G\)-reps, as an inverse for \(\gamma([H, \onsiteU])\) is given by \(\gamma([H, \onsiteU]^{-1})\). Thus, \(\gamma\) takes values in conjugacy anomalies.

\subsubsection{Commutativity}

In this section, we will prove commutativity of \eqnref{eqn:G_QPAnom_Triangle}, taking it as given that the map $\beta : \QCAblendclass{d+1}{G} \to \CLatAnom{d}{G}$ induced by evaluating the boundary $G$-rep of a $G$-QCA is well-defined (which we later prove in \autoref{sec:InvertibleAlgebraToGQCA}).
The commutativity of \eqnref{eqn:G_QPAnom_Triangle} 
follows from showing that the boundary \(G\)-reps \((A, \repU)\) [of \((U, \onsiteU)\)] and \((A', \repU')\) [of \((U[H_{\mathrm{triv}}], \onsiteU)\)] are ultra-locally conjugate. Indeed, the commutativity of \eqnref{eqn:G_QPAnom_Triangle} requires comparing the boundary algebras of \((U, \onsiteU)\) and \(((U\otimes\id)[\widetilde{H}_{\mathrm{triv}}], \onsiteU \otimes {\onsiteU}^*)\), the latter of which we show in this subsection agrees with the boundary algebra for \((U\otimes\id, \onsiteU \otimes {\onsiteU}^*)\), which itself is related by stacking ancillas to \((U, \onsiteU)\).

We use that there is an ultra-local isomorphism (Appendix~\ref{appendix:AlternativeProjectedAlgebra})
\begin{equation}
    \eta : M\{R\} \to M(\widetilde{H}),
\end{equation}
whenever \(\widetilde{H}\) is supported in \(R^c\) and has a nondegenerate ground state on \(M\{R^c\}\) (so any restriction of a trivial model \(\widetilde{H}_{\mathrm{triv}}\), as in the definition of \(\gamma\), works).
The map is defined by sending an operator \(a \in M\{R\}\) to the equivalence class of \(a \otimes \unit_{R^c}\) in \(M(\widetilde{H})\).

Acting with \(U\) defines local isomorphisms
\begin{equation}
    U: M\{R\} \to U[M\{R\}] \cong M\{\mathrm{int}_r R\} \otimes A
\end{equation}
and
\begin{equation}
    U: M(\widetilde{H}) \to  M(U[\widetilde{H}]) \cong M\{\mathrm{int}_r R\} \otimes A'.
\end{equation}
Thus, we get a local isomorphism
\begin{equation}
    U \eta U^{-1}: M\{\mathrm{int}_r R\} \otimes A \to M\{\mathrm{int}_r R\} \otimes A',
\end{equation}
and one can check from the definitions that this isomorphism is actually ultra-local, and hence restricts to an ultra-local isomorphism
\begin{equation}
    \eta' : A \to A'.
\end{equation}

When \((U, \onsiteU)\) is a \(G\)-QCA and \((U[\widetilde{H}],\onsiteU)\) is a strongly symmetric model, then \(\eta'\) is also a conjugacy equivalence for their boundary \(G\)-reps \((A, \repU)\) and \((A',\repU')\).
Indeed, \(\repU\) and \(\repU'\) each descend from the on-site \(G\)-reps \((M\{R\}, \onsiteU_R)\) and \((M(\widetilde{H}), \onsiteU)\), which are ultra-locally conjugate through \(\eta\). 
Applying \(U\) to these \(G\)-reps maps them to \((M\{\mathrm{int}_r R\} \otimes A^{(\prime)}, \onsiteU_{\mathrm{int}_r R} \otimes \repU^{(\prime)})\). 
Then we have that \(\eta'\) conjugates \(\repU\) to \(\repU'\).
That is, the boundary \(G\)-reps for \((U, \onsiteU)\) and \((U[\widetilde{H}], \onsiteU)\) are ultra-locally conjugate.

This proof also applies to gravitational anomalies, as in \eqnref{eqn:Grav_QPAnom_Triangle}, given that lattice gravitational anomalies agree with conjugacy anomalies without symmetry (\autoref{sec:ConjAnomBlendIffConjugate}).

Note that in this section we have described the boundary $G$-rep for a $G$-QCA applied to a trivial commuting model. However, a corollary of this is that for any $G$-symmetric commuting model (not necessarily invertible), acting with a $G$-QCA has the effect of tensoring the boundary $G$-rep with the boundary $G$-rep of the $G$-QCA. Indeed, acting with a $G$-QCA gives the same $G$-symmetric commuting phase as tensoring with a trivial $G$-symmetric commuting model and acting on that with the $G$-QCA instead, as \(U\otimes \id\) is circuit equivalent to \(\id \otimes U\).

\subsubsection{Isomorphism}

We have established that \eqnref{eqn:G_QPAnom_Triangle} commutes and that \(\mu\) is surjective.
It follows from simple diagram chasing and the assumption that \(\beta\) is an isomorphism that \(\mu\) and \(\gamma\) are also isomorphisms.
Indeed, if \(\beta\) is an isomorphism, then \(\mu\) must also be injective, and thus an isomorphism. 
Then \(\gamma = \beta \mu^{-1}\) is also an isomorphism.

Thus, we have reduced the triple equivalence \eqnref{eqn:conj_G_correspondence} to showing that \(\beta\) is an isomorphism. This is established in the following \autoref{sec:GQCABlendToCircuit} and \ref{sec:InvertibleAlgebraToGQCA}.

\subsection{Blending of \texorpdfstring{\(G\)}{G}-QCAs implies circuit equivalence}
\label{sec:GQCABlendToCircuit}

Before proving that \(\beta\) is an isomorphism, we establish an independently useful preliminary result.
We prove that the existence of a blend of \(G\)-QCAs implies circuit equivalence of those \(G\)-QCAs.

We demand that the blend in question also uses an on-site symmetry, and will exploit that we do not assume that there is a uniform bound on the local Hilbert space dimension.

We recall the notation for a blend introduced in Appendix~\ref{appendix:BlendEquivalenceDef}. If a \(G\)-rep \(\repV\) in an on-site algebra is a blend between \(\repU\) and \(\repW\) across a fixed half volume \(R\), we write \(\repU \BlendVia{\repV} \repW\). Similarly, we write \(U \BlendVia{V} W\) if \(V\) is a blend between QCAs \(U\) and \(W\).

A \(G\)-QCA \((V,\onsiteV)\) is a blend between \((U, \onsiteU)\) and \((W, \onsiteW)\) across \(R\), written
\begin{equation}
    (U, \onsiteU) \BlendVia{(V,\onsiteV)} (W, \onsiteW),
\end{equation}
if \(\onsiteU \BlendVia{\onsiteV} \onsiteW\) is a blend of \(G\)-reps and \(U \BlendVia{V} W\) is a blend of QCAs.

Suppose we have a blend \((U, \onsiteU) \BlendVia{(V,\onsiteV)} (W, \onsiteW)\) of \((d+1)\)-dimensional \(G\)-QCAs
across \(R = (-\infty,0] \times \RR^d\), where all the \(G\)-reps are on-site. We prove that \((U, \onsiteU)\) and \((W, \onsiteW)\) are circuit equivalent. It is sufficient to show that \((U \otimes W^{-1}, \onsiteU \otimes \onsiteW)\) is circuit equivalent to \((\id, \onsiteU \otimes \onsiteW)\), as circuit equivalence classes form a group under stacking. Absorbing the \(W\) factors into \(U\), we just need to prove that the existence of a blend \((U, \onsiteU) \BlendVia{(V,\onsiteV)} (\id, \onsiteU)\) implies that \((U, \onsiteU)\) is a symmetric circuit. We can even take \(\onsiteV = \onsiteU\) by including additional on-site ancillas in \(\onsiteU\).

First, we recall an argument from Ref.~\cite{Haah2023} which shows that the existence of a blend across \(R\) implies the existence of a blend across any \(R_x = (-\infty, x] \times \RR^d\). For \(x< 0\), just take the \(R = R_0\) blend and compress all sites in \((x, 0] \times \{\vec{y}\}\) to \((x,\vec{y})\). The local Hilbert space dimension grows in doing this, and grows more as \(x\) decreases, but it remains finite for any finite \(x\). 
For \(x>0\), we note that multiplying \((V^{-1},\onsiteU)\) with \((U, \onsiteU)\) provides a blend
\begin{equation}
    (\id, \onsiteU) \BlendVia{(V^{-1} U,\onsiteU)} (U, \onsiteU)
\end{equation}
across \(R\).
The same compression trick can be used to squash this to a blend across \(R_x\) for \(x >0\). Note that the range \(r\) of all these blends can be the same. We add still more ancillas to the system so that all of the blends with \(x \in 3r \ZZ\) are supported in the same \(G\)-rep, which we continue to denote \(\onsiteU\).

Denote the blend across \(R_x\) by \((V_x,\onsiteU)\). We have that
\begin{equation}
    U = (U V_0^{-1}) V_0.
\end{equation}
The \(V_0\) factor is nontrivial only on \(R_{r}\) while the \(U V_0^{-1}\) factor is nontrivial only on \(R_{-2r}^c\). This is the beginning of a circuit decomposition of \(U\)~\cite{Long2024}.

Focusing on negative \(x\), we further split up \(V_0\) as
\begin{equation}
    V_0 = V_{-3r}(V_{-3r}^{-1} V_0) = V_{-3r} \gamma_0,
\end{equation}
where \(V_{-3r}\) has support at most on \(R_{-2r}\) (and so does not touch \(U V_0^{-1}\)), and \(\gamma_0\) has support at most on \((-5r,r] \times \RR^d\). Iterate this splitting, defining \(V_{-3r} = (V_{-3r} V_{-6r}^{-1}) V_{-6r} = \gamma_{-1} V_{-6r}\) and so on (alternating whether the \(\gamma\) factor appears on the left or right) until we have a decomposition
\begin{equation}
    V_0 = \left( \prod_{n=-\infty}^{-1} \gamma_{2n+1} \right) \left( \prod_{n=-\infty}^{0} \gamma_{2n} \right)
\end{equation}
where each \(\gamma_n\) has support at most on \(((3n-5)r,(3n+1)r] \times \RR^d\). Thus, each \(\gamma_n\) has disjoint support from any other \(\gamma_m\) with \(|n-m| \geq 2\), and the expression for \(V_0\) is a finite-depth circuit of \(d\)-dimensional QCAs. Repeating the same exercise for positive \(x\), we get a similar circuit for \(U\),
\begin{equation}\label{eqn:GammaDecomposition}
    U = \left( \prod_{n=-\infty}^{\infty} \gamma_{2n+1} \right) \left( \prod_{n=-\infty}^{\infty} \gamma_{2n} \right).
\end{equation}
We also observe that each of the \(\gamma_n\) are symmetric under \(\onsiteU\), as they are products of symmetric QCAs. The first few steps of this procedure are shown in \autoref{fig:circuit_from_blend}(a).

\begin{figure}
    \centering
    \includegraphics[width=\linewidth]{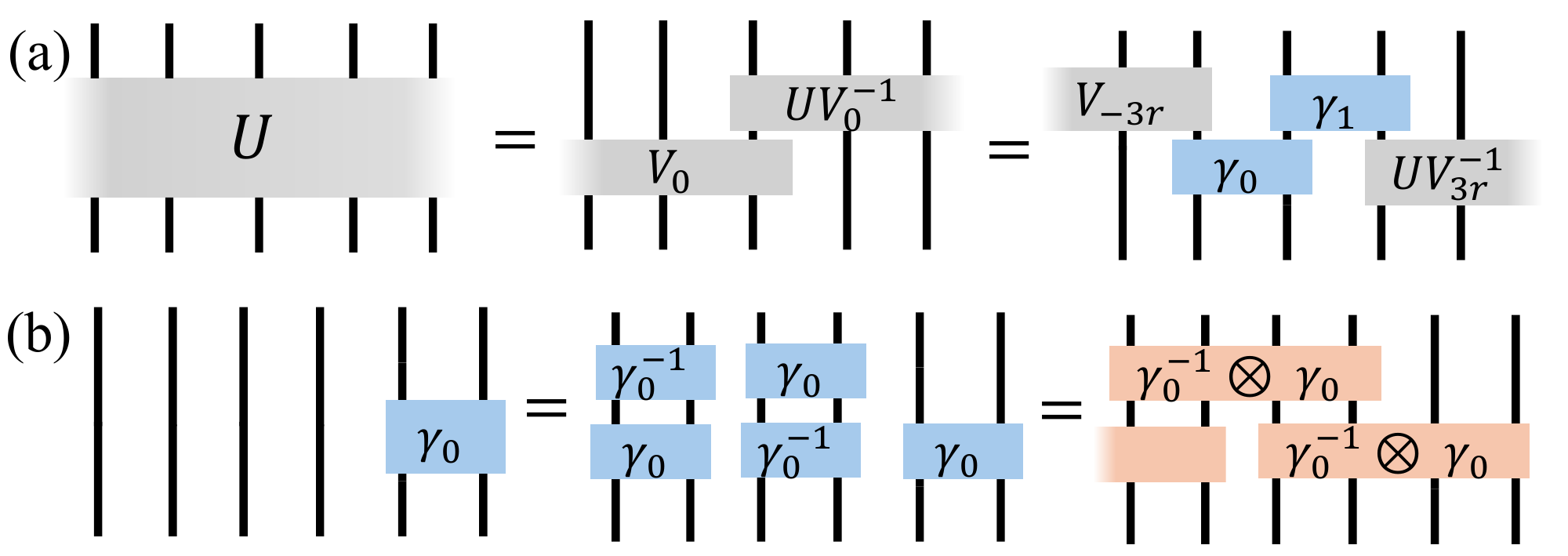}
    \caption{(a)~Given blends \(V_x\) from a \(G\)-QCA \(U\) to the identity across \((-\infty,x]\times \RR^{d}\), a depth-two circuit of \(d\)-dimensional \(G\)-QCAs \(\gamma_x\) can be constructed for \(U\). (b)~Each of the \(\gamma_x\) \(G\)-QCAs can be expressed as a \((d+1)\)-dimensional finite depth circuit composed of symmetric gates, using the fact that \(\gamma_x^{-1} \otimes \gamma_x\) is a finite depth circuit. These gates can be stacked together to form a symmetric circuit decomposition for \(U\).}
    \label{fig:circuit_from_blend}
\end{figure}

The decomposition Eq.~\eqref{eqn:GammaDecomposition} is not yet a symmetric circuit for \(U\), because the gates are actually \(d\)-dimensional QCAs. Unless \(d=0\), this is not what we mean by a symmetric circuit. However, we can exploit the fact that these \(d\)-dimensional QCAs are sitting in an ambient \((d+1)\)-dimensional space to write them as symmetric circuits using an Eilenberg swindle.

Let the restriction of \(\onsiteU\) to the support of \(\gamma_n\) be \(\onsiteU_n\). For \(n >0\), introduce further ancillas which copy \(\onsiteU_n\) on the support of \(\gamma_{n + 2m}\) for integers \(m \geq 1\). Similarly, introduce copies of \(\onsiteU_n\) for \(n\leq 0\) to the support of \(\gamma_{n - 2m}\). Focusing on \(n > 0\), we have
\begin{equation}
    \gamma_n = \gamma_n \prod_{m=1}^\infty \mathbb{T}^{6mr}[\gamma_n^{-1} \gamma_n],
\end{equation}
where \(\mathbb{T}^{6mr}\) is a translation QCA from the support of \(\gamma_n\) to the new tensor factors carrying \(\onsiteU_n\) in the support of \(\gamma_{n + 2m}\). The equality above is simply because \(\gamma_n^{-1} \gamma_n = \id\). However, we can also pair this product as [\autoref{fig:circuit_from_blend}(b)].
\begin{equation}
    \gamma_n = \prod_{m=1}^\infty \mathbb{T}^{6(m-1)r}[ \gamma_n \otimes \mathbb{T}^{6r}[\gamma_n^{-1}]] = \prod_{m=1}^\infty \nu_{nm},
\end{equation}
and \(\gamma_n \otimes \mathbb{T}^{6r}[\gamma_n^{-1}]\) is a symmetric circuit implemented by swap gates conjugated by \(\gamma_n\) The swap gates are individually symmetric because \(\onsiteU_n\) is on-site. The same procedure can be repeated for \(n\leq 0\), this time translating toward negative \(x\). All the factors \(\nu_{nm}\) obtained from this procedure can be arranged into a finite depth circuit.

The result is a finite depth circuit decomposition
\begin{equation}
    U = \left( \prod_{n=-\infty}^{\infty}  \Upsilon_{2n+1} \right) \left( \prod_{n=-\infty}^{\infty} \Upsilon_{2n} \right)
\end{equation}
where each \(\Upsilon_n\) is, for \(n>0\), a finite stack of \(\nu_{n'm}\) circuits with \(0< n'\leq n\). For \(n \leq 0\), \(\Upsilon_n\) is a finite stack of \(\nu_{n'm}\) circuits with \(n \leq n' \leq 0\).
Thus, \(U\) is a symmetric circuit.

Of course, our construction introduces many additional ancillas. Another strategy may remove some of them, but we expect that having no uniform bound on the local Hilbert space dimensions is required for circuit and blend equivalence to agree.

\subsection{Isomorphism between conjugacy anomalies and circuit classes of \texorpdfstring{\(G\)}{G}-QCAs}
\label{sec:InvertibleAlgebraToGQCA}

The proof that the boundary map \(\beta\) is an isomorphism is completed by constructing an inverse isomorphism, \(\alpha\).
The isomorphism \(\alpha\) can be interpreted through a picture of \emph{pumping}: given a conjugacy invertible \(G\)-rep, we construct a \(G\)-QCA which pumps that \(G\)-rep to its boundary. Precisely, that \(G\)-rep is the boundary \(G\)-rep of the constructed \(G\)-QCA---which is precisely the statement that \(\alpha\) is the inverse of \(\beta\).

Our method is essentially a generalization of Ref.~\cite{Haah2023} to include symmetry. A related construction, applying only to anomalies of $G$-reps acting on on-site algebras, appears in Ref.~\cite{Zhang2023entanglers}.

We now define \(\alpha\).
Recall that a \(G\)-rep \((A, \repU)\) is conjugacy invertible if there exists another pair \((\overline{A}, \overline{\repU})\) such that \((A \otimes \overline{A}, \repU \otimes \overline{\repU})\) is conjugate by a local isomorphism \(V\) to an on-site \(G\)-rep \((M, \onsiteV)\),
\begin{subequations}\label{eqn:InvertibleAlgebraTrivialize}
\begin{align}
    V : A \otimes \overline{A} &\to M \\
    V ( \repU \otimes \overline{\repU}) V^{-1} &= \onsiteV.
\end{align}
\end{subequations}

Given a \(d\)-dimensional conjugacy anomaly \([A,\repU]\), choose any representative \((A, \repU)\), any inverse \((\overline{A}, \overline{\repU})\), and a trivializing \(V\) as in Eq.~\eqref{eqn:InvertibleAlgebraTrivialize}. Define an on-site \(G\)-rep in \(\RR^{d+1}\) by
\begin{equation}\label{eqn:OnSitePumpRep}
    \onsiteU = \bigotimes_{z \in \ZZ} \onsiteV_{z} 
    = \bigotimes_{z \in \ZZ} V_z( \repU_z \otimes \overline{\repU}_z )V_z^{-1},
\end{equation}
where the \(z\) subscript indicates the position of the operators along the last dimension in \(\RR^{d+1}\), and the on-site algebra is \(\bigotimes_{z \in \ZZ} M_z\). Define a QCA \(S\) via the circuit shown in \autoref{fig:InvAlgPumpingMap}(a). Explicitly, first apply \(V_z^{-1}\) to split \((M_z,\onsiteV_z)\) into \((A_z \otimes \overline{A}_z, \repU_z \otimes \overline{\repU}_z)\). Then translate all the \((A_z, \repU_z )\) to \(z+1\), and subsequently apply \(V\) to map \((A_{z-1} \otimes \overline{A}_{z}, \repU_{z-1} \otimes \overline{\repU}_{z})\) back to \((M_z,\onsiteV_z)\). The total \(G\)-QCA \((S, \onsiteU)\) so described is indeed symmetric, but cannot necessarily be constructed from symmetric gates. The map \(\alpha\) is defined as
\begin{subequations}
\begin{align}
    \alpha: \CLatAnom{d}{G} &\to \QCAblendclass{d+1}{G} \\
    [A, \repU] &\mapsto [S, \onsiteU]
\end{align}
\end{subequations}
where square brackets denotes the appropriate equivalence class.
We show below that this is a well-defined map on equivalence classes.

\begin{figure}
    \centering
    \includegraphics[width=\linewidth]{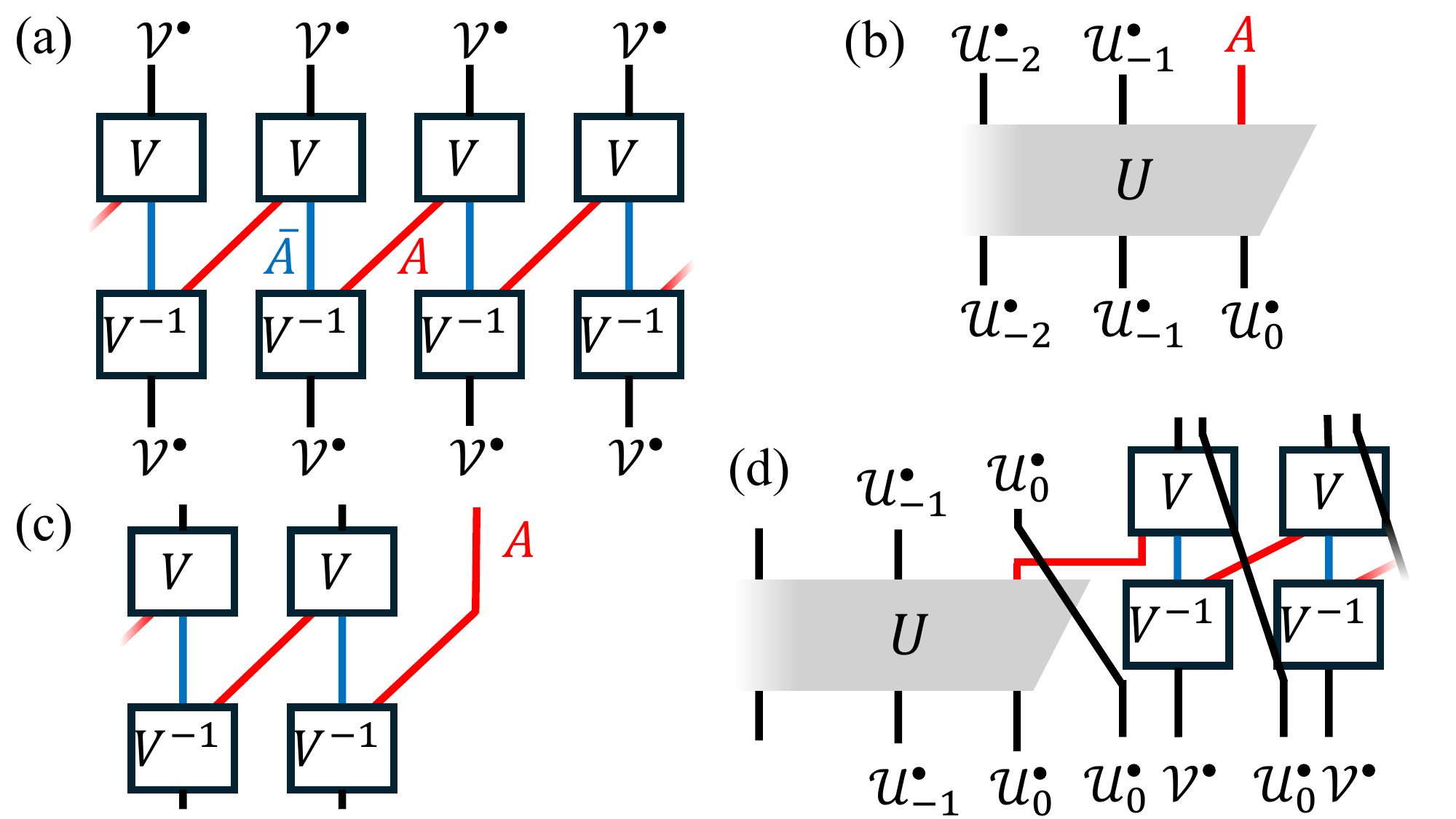}
    \caption{Illustrations of the constructions defining \(\alpha\) and \(\beta\). (a)~\emph{Defining \(\alpha\).} Construct a \(G\)-QCA from an invertible \(G\)-rep by applying the local isomorphism \(V^{-1}\) from \eqnref{eqn:InvertibleAlgebraTrivialize} to split the on-site algebra into \(A\) and \(\overline{A}\), then translate \(A\) in the \(z\) direction and apply \(V\) to recombine \(A\) and \(\overline{A}\) into an on-site algebra. (b)~\emph{Defining \(\beta\).} The boundary algebra map sends a \(G\)-QCA to a conjugacy-invertible \(G\)-rep by applying the QCA to a half volume. The boundary algebra is the part of the image algebra located near the boundary. The boundary algebra carries a \(G\)-symmetry, making it a \(G\)-rep. (c)~\emph{Calculating \(\beta\alpha\).} The boundary \(G\)-rep of (a) is the input \(G\)-rep \((A,\repU)\), up to stacking with an on-site \(G\)-rep. (d)~\emph{Calculating \(\alpha\beta\).} There is a symmetric blend between \((U,\onsiteU)\) and the result of (a) applied to the boundary \(G\)-rep of \((U,\onsiteU)\) stacked with a translation of \(\onsiteU_0\) on-site representations towards negative \(z\). This translation can be blended with the trivial \(G\)-QCA (\autoref{fig:translation_blend}), so there is a symmetric blend between these \(G\)-QCAs.}
    \label{fig:InvAlgPumpingMap}
\end{figure}

Further, we claim that \(\alpha\) is an isomorphism. The inverse map 
for \(\alpha\) will be the map \(\beta\) which sends the circuit class of a \(G\)-QCA \((U, \onsiteU)\) to the conjugacy class of its boundary \(G\)-rep, as in Eqs.~(\ref{eqn:GQCABoundaryAlgebra},~\ref{eq:QCA_transformed_symmetry}). This is illustrated in \autoref{fig:InvAlgPumpingMap}(b), where we coarse-grain so that \(U\) has range \(r=1\) and lattice sites are at integer values of \(z\). Choosing a larger value of \(r\) only stacks the boundary \(G\)-rep with on-site \(G\)-reps.

Supposing for now that \(\alpha\) and \(\beta\) are well defined on equivalence classes, we can identify representatives which show that they are inverses. This is best illustrated graphically. 
\autoref{fig:InvAlgPumpingMap}(c) gives an explicit demonstration that the boundary \(G\)-rep of \((S, \onsiteU)\) is
\((A \otimes M, \repU \otimes \onsiteV)\), so that a representative of \(\beta\alpha([A, \repU])\) is just a stack of \((A, \repU)\) with an on-site \(G\)-rep, the result of which is also in \([A, \repU]\). 
\autoref{fig:InvAlgPumpingMap}(d) shows a symmetric blend between a \(G\)-QCA \((U, \onsiteU)\) and a representative of \(\alpha\beta([U, \onsiteU])\) stacked with a translation of \(\onsiteU_0\), the on-site rep at \(z=0\) in \(\onsiteU\). 
In \(d+1\geq 2\), this translation is blend equivalent with the identity so, \autoref{sec:GQCABlendToCircuit} implies that  \(\alpha\beta([U, \onsiteU])=[U, \onsiteU]\). 
Indeed, the blend of the translation to the identity is accomplished by making the translation turn a corner at the blend interface, as we illustrate in \autoref{fig:translation_blend} and explain now. 
Note first that a translation of all of \(\onsiteU_0\) along \(z\) is a stack of many one-dimensional translations along \(z\). 
Let the first lattice dimension be indexed by \(x\). 
For \(x \geq 0\), fold all one-dimensional translations at \(z=1\) so that they translate their associated on-site \(G\)-rep in the \(+x\) direction, with \(z=1\) constant. 
For \(x<0\), fold them so that they translate in the \(-x\) direction at constant \(z=1\). 
This produces a large density of translations and ancillas at large positive and negative \(x\) for \(z=1\), but the result still has finite local dimension everywhere in \(\RR^{d+1}\). 
The local dimension is not uniformly bounded, but this is still enough to apply \autoref{sec:GQCABlendToCircuit}.

\begin{figure}
    \centering
    \includegraphics[width=\linewidth]{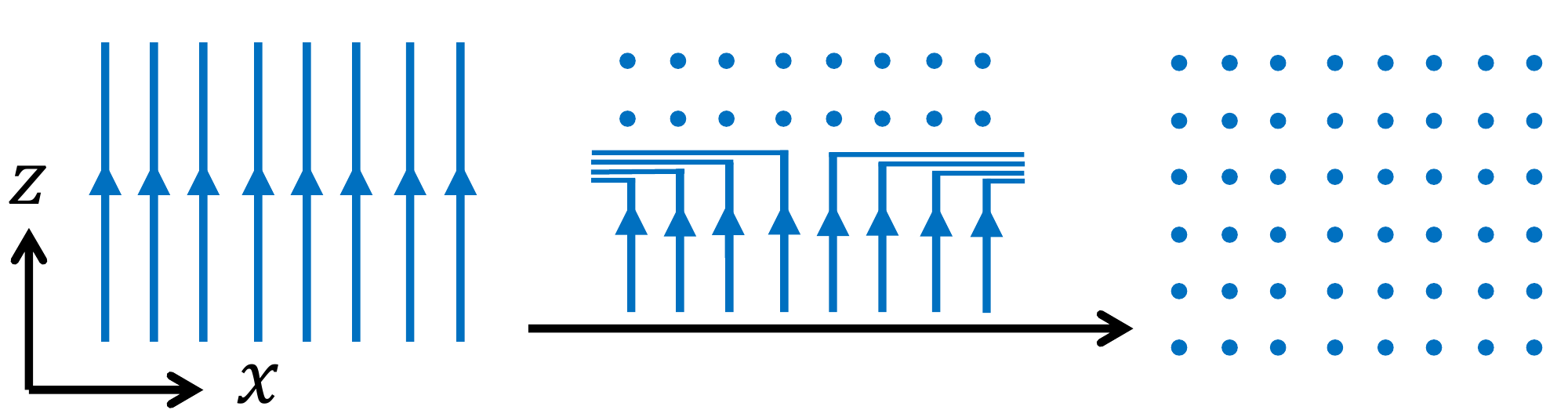}
    \caption{A blend from a translation in the \(z\) direction (left) to the identity (right) along the \(z\) axis in two or more dimensions can be constructed by turning the translations by a right angle at \(z=0\), so that they run along the \(x\) direction (middle). Note that this requires introducing additional ancillas in the \(z=0\) strip, with more ancillas being required for larger \(|x|\).}
    \label{fig:translation_blend}
\end{figure}

Now, if we show that \(\alpha\) and \(\beta\) are well-defined homomorphisms between conjugacy anomalies and circuit classes of \(G\)-QCAs, we have that they are mutual inverses, and thus isomorphisms.

\begin{figure}
    \centering
    \includegraphics[width=\linewidth]{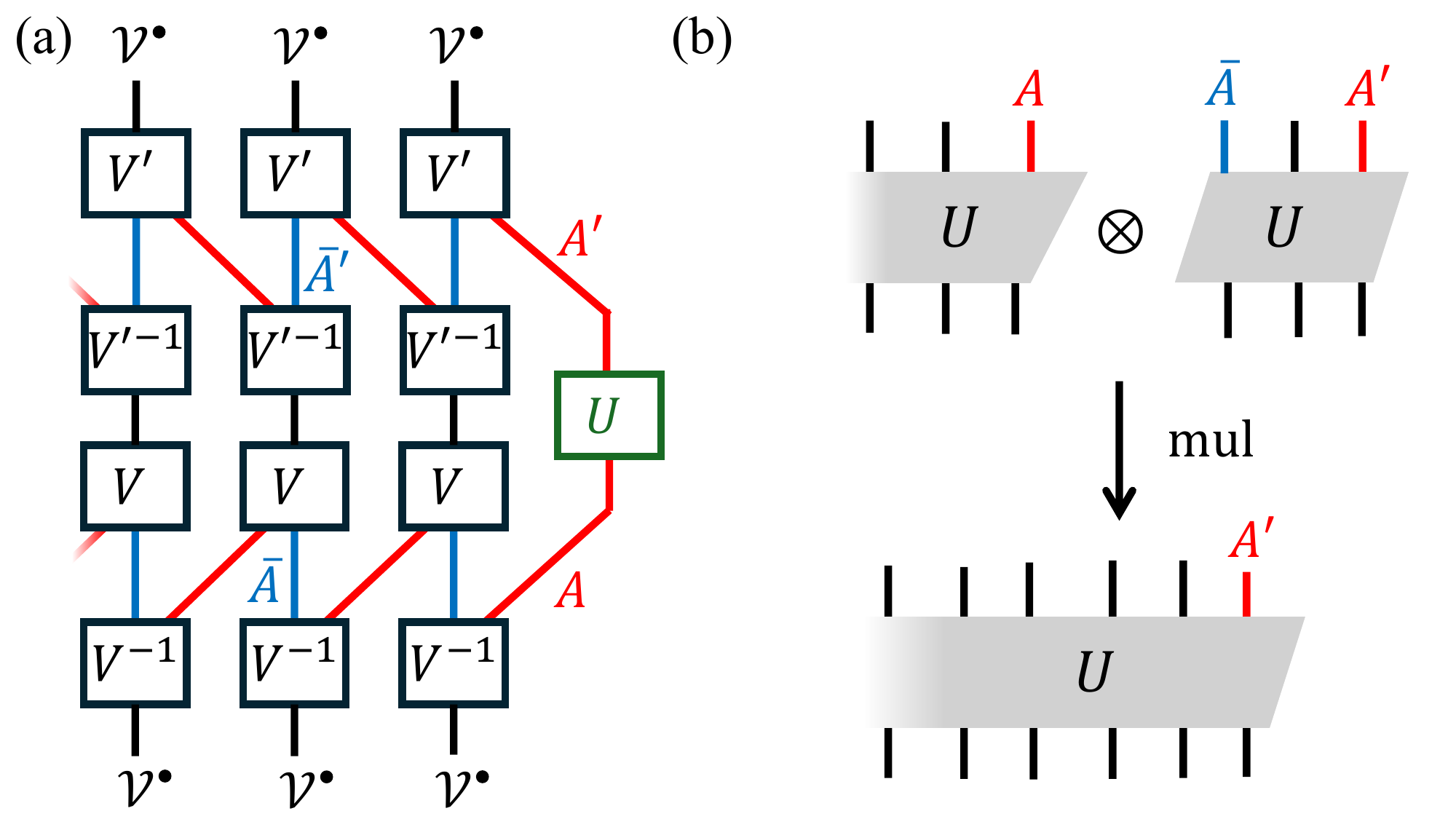}
    \caption{The constructions of \autoref{fig:InvAlgPumpingMap} induce well-defined maps \(\alpha\) and \(\beta\) between conjugacy classes of invertible \(G\)-reps and blend classes of on-site \(G\)-QCAs (and hence circuit classes). (a)~If \(S\) and \(S'\) are \(G\)-QCAs constructed from conjugate \(G\)-reps \((A,\repU)\) and \((A',\repU')\) as in \autoref{fig:InvAlgPumpingMap}(a), there is a symmetric blend from \(S'^{-1}S\) to the identity. \(U: A \to A'\) implements the conjugacy equivalence. (b)~Stacking the boundary algebra \(A\) of \((U, \onsiteU)\) with \(U[M_{(0,z]\times \RR^d}]\) does not change its conjugacy class. However, \(U[M_{(0,z]\times \RR^d}]\) consists of \(\overline{A}\) (an inverse of \(A\)) stacked with \(A'\) (the boundary algebra obtained from a blend at another position), as seen by multiplying the algebras produced from the QCA \(U\) applied to different regions, regarded as subalgebras of \(M\). Thus, the boundary algebra is a blend invariant.}
    \label{fig:InvAlgWellDefined}
\end{figure}

First we show \(\alpha\) is well-defined. Let \((A', \repU')\) be conjugate to \((A,\repU)\), via the local isomorphism
\begin{equation}
    U : (A,\repU) \to (A',\repU').
\end{equation}
Let the associated QCAs in the construction of \(\alpha\) be denoted \(S\) and \(S'\) respectively. We need that \(S'^{-1}S\) is a symmetric circuit. By \autoref{sec:GQCABlendToCircuit}, it suffices to find a symmetric blend with an on-site symmetry between \(S'^{-1}S\) and \(\id\). \autoref{fig:InvAlgWellDefined}(a) demonstrates such a blend. Note that, as a special case, this calculation shows that \(\alpha\) does not depend on the choice of inverse \(\overline{A}\) or isomorphism \(V\), as seen by taking \(A' = A\).

Now we show \(\beta\) is well-defined. That is, if \((X,\onsiteU)\) is a symmetric circuit acting on the on-site algebra \(M\), then the boundary algebra of \((XU, \onsiteU)\) is conjugate to the boundary algebra of \((U, \onsiteU)\). 
Indeed, this is an immediate consequence of the observation that the conjugacy class of the boundary algebra is independent of the coordinate \(z\) defining the blend. 
Observe first that if we define the opposite boundary algebra via
\begin{equation}
    \overline{\eta}_z: U[M\{R_z^c\}] \to \overline{A} \otimes M\{\mathrm{int}_r R_z^c\} ,
\end{equation}
then \(\overline{A}\) (as the notation suggests) is an inverse for the boundary algebra \(A\). Indeed, there is an ultra-local conjugacy equivalence
\begin{equation}
    \mathrm{mul}: A \otimes \overline{A} \to M\{R_{-r}^c \cap R_{z+r}\}
\end{equation}
induced by multiplication when regarding the boundary algebras as subalgebras of \(M\)~\cite{Haah2023}.
The same discussion applies to the boundary \(G\)-reps.
Then, in Eqs.~(\ref{eqn:GQCABoundaryAlgebra},~\ref{eq:QCA_transformed_symmetry}), if we replace \(M\{R_0\}\) with \(M\{R_z\}\) for some \(z >0\), then the new \((A',\repU')\) differs from \((A,\repU)\) only by stacking a \(G\)-rep \(\overline{A} \otimes A' \cong U[M\{R^c_0 \cap R_{z}\}]\) [\autoref{fig:InvAlgWellDefined}(b)], which we explicitly see has trivial conjugacy anomaly. 
Thus, for \((XU, \onsiteU)\), we may blend \(X \BlendVia{\widetilde{X}}\id\) with the identity near \(z> 0\) (so \(\widetilde{X}\) acts trivially for large positive \(z\)) and calculate \(A\) on the opposite side of that blend. The result is manifestly the same as that obtained with \((U, \onsiteU)\), so the boundary algebra of \((XU, \onsiteU)\) is conjugate to the boundary algebra of \((U, \onsiteU)\).

Finally, it is straightforward to see that \(\alpha\) and \(\beta\) induce homomorphisms, as their constructions directly provide that
\begin{subequations}
\begin{align}
    &\alpha([A \otimes B, \repU \otimes \repW]) = [\alpha(A , \repU ) \otimes  \alpha(B, \repW)], \text{ and } \\
    &\beta([U \otimes W, \onsiteU \otimes \onsiteW]) = [\beta(U , \onsiteU ) \otimes  \beta(W, \onsiteW)]
\end{align}
\end{subequations}
for particular choices of \(\overline{A}\) and \(V\).

This concludes the proof that the group of conjugacy anomalies is isomorphic to the group of circuit classes of on-site \(G\)-QCAs. Following on from \autoref{sec:BoundaryGRep}, it also completes the proof of the triple isomorphism \eqnref{eqn:G_QPAnom_Triangle}.

As an aside, we briefly describe the case of \(G\)-reps with \(d=0\). An appropriate notion of equivalence for zero-dimensional \(G\)-reps is simply conjugacy equivalence of finite-dimensional projective representations, with no stacking of ancillas. Equivalence classes of zero-dimensional \(G\)-reps only form a commutative monoid, and not an Abelian group, and as such there can be no isomorphism between conjugacy classes of zero-dimensional \(G\)-reps and circuit classes of one-dimensional \(G\)-QCAs. Instead, the group of circuit classes of one-dimensional \(G\)-QCAs is isomorphic to the Grothendieck group (group of fractions) for the monoid of \(G\)-reps~\cite{Long2024}.

\subsection{Conjugacy invertible \texorpdfstring{\(G\)}{G}-reps blend if and only if they are conjugate}
\label{sec:ConjAnomBlendIffConjugate}

A corollary of the results of \autoref{sec:GQCABlendToCircuit} and \autoref{sec:InvertibleAlgebraToGQCA} is that conjugacy invertible \(G\)-reps over \(\RR^d\) (\(d \geq 1\)) have a conjugacy invertible blend if and only if they are conjugate.

Suppose we have a blend of conjugacy invertible \(G\)-reps 
\begin{equation}
    (A , \repU ) \BlendVia{(C, \repV)} (B, \repW),
\end{equation}
where \((C, \repV)\) is also conjugacy invertible.
Then applying \(\alpha\) gives a blend of \((d+1)\)-dimensional \(G\)-QCAs
\begin{equation}
    (S_A , \onsiteU ) \BlendVia{(S_C, \onsiteV)} (S_B, \onsiteW)
\end{equation}
[constructed as in \autoref{fig:InvAlgPumpingMap}(a)] along an axis which is not the \(z\)-axis. However, a blend along any axis implies two on-site \(G\)-QCAs are circuit equivalent. Thus, the injectivity of \(\alpha\) implies that \((A , \repU )\) and \((B, \repW)\) are conjugate.

In the other direction, suppose \((A , \repU )\) and \((B, \repW)\) are conjugate. Then there is a blend of \(G\)-QCAs \((S_A , \onsiteU ) \BlendVia{(V, \onsiteV)} (S_B, \onsiteW)\) along any axis (as these \(G\)-QCAs are circuit equivalent), which we take to be anything other than the \(z\)-axis. Then the boundary \(G\)-rep of \((V, \onsiteV)\) along \(z\) is a blend between \((A , \repU )\) and \((B, \repW)\) up to stacking on-site reps, which can subsequently be removed away from the interface of the blend.

In particular, this result shows that lattice gravitational anomalies, defined as blend classes of invertible algebras in \autoref{sec:LatticeGravitational}, are the same as gravitational conjugacy anomalies, defined in terms of local isomorphism classes.

\subsection{Conjugacy anomalies and QFT anomalies}
\label{sec:ConjAnomToQFT}

The isomorphism between conjugacy anomalies and circuit classes of \(G\)-QCAs (or \(G\)-invertible models) is reminiscent of the connection between QFT anomalies and invertible states with \(G\) symmetry---henceforth invertible \(G\)-states. The square Eq.~\eqref{eqn:LatticeToQFTSquare} makes this analogy precise. Further, it provides a mathematically well-defined link between conjugacy anomalies and QFT anomalies, or at least to an invertible \(G\)-state in one higher dimension, which is usually taken to be equivalent to a QFT anomaly. 
Given a conjugacy anomaly, apply \(e_* \alpha\) to obtain an invertible \(G\)-state. 
It is generally accepted that such an invertible \(G\)-state is described by an invertible \(G\)-TQFT, the data of which could in principle be extracted from the fusion and braiding relations (and higher-dimensional variants thereof) of symmetry defects on top of this state. This invertible \(G\)-TQFT is then identified with a QFT anomaly in one lower dimension, which is the anomaly describing the edge of the state.
Thus, we bypass the intuitively defined IR map \(\varphi'\). The only remaining technical step is to construct \(e_*\).

We define an evaluation map \(e\) from \(G\)-QCAs to invertible \(G\)-states by acting with the \(G\)-QCA on a symmetric product state~\cite{Long2024}. Such a symmetric product state always exists (after adding suitable ancillas) when the symmetry action \(\onsiteU\) is on-site. Indeed, for unitary symmetries, we can include additional ancilla degrees of freedom carrying the complex conjugate representation \(\repU^{\bullet*}\), and then the product of maximally entangled states between the on-site reps \(\onsiteU_x\) and \(\repU^{\bullet*}_x\) is symmetric. For antiunitary symmetries, decompose \(\onsiteU_g = \mathfrak{U}^\bullet_g K^{\sigma(g)}\) where \(K\) is on-site and antiunitary (acting as complex conjugation in some basis) and \(\mathfrak{U}^\bullet_g\) is unitary. Then \((\mathfrak{U}^\bullet\otimes \mathfrak{U}^{\bullet*}) K^{\sigma}\) also has a symmetric product state, given by the maximally entangled state between the two factors with real coefficients in the basis fixed by \(K\).

Denote the symmetric product state constructed in the previous paragraph by \(\ket{0}\). The map \(e\) is defined as
\begin{equation}
    e(U, \onsiteU) = (U\ket{0}, \onsiteU ),
\end{equation}
where the pair \((U\ket{0}, \onsiteU )\) is an invertible \(G\)-state.
Indeed, being more precise, an invertible \(G\)-state in an on-site algebra is a pair \((\ket{\psi}, \onsiteU)\) where \(\ket{\psi}\) is a state in an on-site algebra \(M\), \(\onsiteU\) is an on-site \(G\)-rep on \(M\), and we require that \(\onsiteU_g \ket{\psi} = \ket{\psi}\) for all \(g \in G\) and that there exists a \((\ket{\overline{\psi}}, \overline{\repU}^\bullet)\) such that \((\ket{\psi} \otimes \ket{\overline{\psi}}, \onsiteU \otimes \overline{\repU}^\bullet)\) is related by a finite depth circuit (and stacking ancillas) to the state \(\ket{0}\).

The evaluation map \(e\) induces a well-defined map \(e_*\) between circuit classes of \(G\)-QCAs and circuit classes of invertible \(G\)-states. Indeed, \(e(XU, \onsiteU) = (XU\ket{0}, \onsiteU)\) is explicitly circuit equivalent to \((U\ket{0}, \onsiteU)\). It also maps a tensor product of \(G\)-QCAs to a tensor product of states, and so induces a homomorphism between circuit classes,
\begin{equation}
    e_* : \QCAblendclass{d+1}{G} \to \invblendclass{d+1}{G}.
\end{equation}

We now argue that the square Eq.~\eqref{eqn:LatticeToQFTSquare} commutes. We remark that the lower horizontal map \(\varphi'\) has a more intuitive definition in terms of descent of anomalies to the IR QFT defining the low-energy degrees of freedom of a symmetric Hamiltonian \(H\). As such, our argument is similarly intuitive. 

Begin with a conjugacy invertible \(G\)-rep \((A, \repU)\) and let \(H\) be any Hamiltonian (not necessarily commuting) in \(A\) such that \(\repU_g[H] = H\) for all \(g \in G\). We understand \(\varphi'([A,\repU])\) to be the anomaly describing the low energy degrees of freedom of \(H\). We aim to show that \(H\) is an edge Hamiltonian for the invertible \(G\)-state \(e(S_A, \onsiteU)\) (a representative of \(e_*\alpha([A, \repU])\)), and so \(\varphi'([A,\repU])\) should describe the QFT anomaly at the edge of this state. Indeed, letting \(H_{\mathrm{triv}}\) be a sum of on-site projectors each annihilating the single-site states making up \(\ket{0}\), we observe that \(U [H_{\mathrm{triv}}]\) is a symmetric commuting model for \(e(S_A, \onsiteU)\). 
This commuting model is that appearing in the construction of the isomorphism \(\mu\) from \(G\)-QCAs to \(G\)-invertible models.
The boundary \(G\)-rep of \(S_A [H_{\mathrm{triv}}]\) is locally isomorphic to \((A,\repU)\) by \eqnref{eqn:G_QPAnom_Triangle}, and so can support the symmetric Hamiltonian \(H\).

This construction can be used to show that:
\begin{quote}
    A conjugacy anomaly \([A, \repU]\) is in \(\ker e_* \alpha\) if and only if there is a \(G\)-invertible state \((A, \repU, \ket{\psi})\).
\end{quote}

A symmetric state $\ket{\psi}$ in an on-site $G$-rep $(A,\onsiteU)$ is said to be $G$-invertible if there exists an on-site algebra $(B,\onsiteV)$ and a finite-depth circuit $U$ on $A \otimes B$ composed of gates symmetric under $\onsiteU \otimes \onsiteV$ such that $U(\ket{\psi} \otimes \ket{\phi})$ is a product state. (This should be contrasted with ``invertible $G$-state'', in which $U$ is required to be a finite-depth circuit, but is not required to commute with the symmetry.)

We extend this to a general conjugacy-invertible $G$-rep $(A, \repU)$ as follows: a state $\ket{\psi}$ is a \(G\)-invertible state if it is symmetric under $\repU$, and there is another $G$-rep \((\overline{A}, \overline{\repU})\) and a state $\overline{\ket{\psi}}$ symmetric under $\overline{\repU}$ such that \((\overline{A}, \overline{\repU})\) is a conjugacy inverse for \((A, \repU)\) demonstrated via the local isomorphism \(V\) [that is, $V$ applied to $( A \otimes \overline{A}, \repU \otimes\overline{\repU})$ maps it to an on-site $G$-rep], and further that \(V(\ket{\psi} \otimes \ket{\overline{\psi}})\) is a product state. (Note that a representative of a conjugacy anomaly possessing such a \(G\)-invertible state implies all representatives have such a state, constructed by applying the conjugacy equivalence to \(\ket{\psi}\).)
The analogous definition for an invertible $G$-state would only require that \(V(\ket{\psi} \otimes \ket{\overline{\psi}})\) is a product state for some local isomorphism that maps $V$, $A \otimes \overline{A}$ to an on-site algebra, without any requirement on how $V$ maps $\repU \otimes \overline{\repU}$.

We expect that \(G\)-invertible states are in fact the same as invertible \(G\)-states. (Note that it would be sufficient to prove this for on-site $G$-reps, and then the result for general conjugacy-invertible $G$-reps would follow as a corollary.) Indeed, it is inherent in all the standard proposed classifications of $G$-symmetric phases of invertible states that the classification forms a group (for example, this is true under the conjecture that the classification is given by a generalized cohomology theory, see \autoref{sec:HomotopyTheory}), which means that all the phases are $G$-invertible.
(However, we are not aware of any rigorous proof on the lattice.)
The existence of an invertible \(G\)-state for a conjugacy anomaly generalizes the notion of IR trivial anomaly introduced in \autoref{subsec:ir_trivial} to anomalies with a nontrivial gravitational part.
Thus, assuming that \(G\)-invertible states and invertible \(G\)-states are the same, we have the following rephrasing of our stated result:
\begin{quote}
    A conjugacy anomaly \([A, \repU]\) is in \(\ker e_* \alpha\) if and only if it is IR trivial.
\end{quote}

We now proceed to the proof of the original statement. Suppose that \([A,\repU]\) is in \(\ker e_* \alpha\). That is, that \(e(S_A, \onsiteU) = X \ket{0}\), where \((X, \onsiteU)\) is a symmetric circuit (possibly after adding ancillas). Then we have that \(X^{-1} S_A\) fixes \(\ket{0}\),
\begin{equation}
    X^{-1} S_A \ket{0} = \ket{0}.
\end{equation}
Now let \((X^{-1}S_A)|_R: M\{R\} \to M\{\mathrm{int}_\xi R\} \otimes A'\) be a restriction of \(X^{-1}S_A\) to \(R = (-\infty,0]\times \RR^d\). The boundary algebra \(A'\) is locally isomorphic to \(A\), by the fact that \(\beta\) is the inverse of \(\alpha\). Let the isomorphism be \(V: A' \to A\). Then acting with the restriction \((\id \otimes V)(X^{-1}S_A)|_R\) on \(\ket{0}_R\) (the restriction of \(\ket{0}\) to \(R\)) gives
\begin{equation}
    (\id \otimes V)(X^{-1}S_A)|_R \ket{0}_R = \ket{0}_{\mathrm{int}_\xi R} \otimes \ket{\psi},
\end{equation}
where we used that \((X^{-1}S_A)|_R\) fixes \(\ket{0}\) far from the boundary, and \((A, \repU, \ket{\psi})\) is an \(G\)-invertible state. Indeed, an inverse may be constructed by considering a restriction of \(X^{-1}S_A\) to \(R^c\). 

Conversely, suppose that there is a \(G\)-invertible state \((A, \repU, \ket{\psi})\). 
Let an inverse for \(\ket{\psi}\) be \((\overline{A}, \overline{\repU}, \ket{\overline{\psi}})\), and construct \((S_A, \onsiteU)\) as in \autoref{fig:InvAlgPumpingMap}(a) using \((\overline{A}, \overline{\repU})\) and the conjugacy equivalence \(V: A\otimes \overline{A} \to M\) such that 
\begin{equation}
    V (\ket{\psi} \otimes \ket{\overline{\psi}}) = \ket{\emptyset},
\end{equation}
where \(\ket{\emptyset}\) is a symmetric product state in \((M,\onsiteV)\). We see that \(S_A \ket{1} = \ket{1}\) fixes the product state \(\ket{1} = \bigotimes_{z \in \ZZ} \ket{\emptyset}\). 
On each site, there is a symmetric gate in \((M, \onsiteU)\) which swaps the state in \(\ket{0}\) and \(\ket{1}\), and fixes the orthogonal complement to this subspace. Let \(Y\) be the on-site symmetric circuit which is a tensor product of these gates. Then we have
\begin{equation}\label{eqn:OnSiteConjFix0}
    Y S_A Y^{-1} \ket{0} = \ket{0}.
\end{equation}
Define \(X = Y S_A Y^{-1} S_A^{-1}\), which is a symmetric circuit composed of gates from \(Y\) and \(S_A\) conjugating gates from \(Y\). From \eqnref{eqn:OnSiteConjFix0}, we have \(X S_A \ket{0} = \ket{0}\), so that \([A, \repU] \in \ker e_* \alpha\).
This completes the proof.

More basic properties of \(\varphi'\) can also be re-expressed in terms of other paths around the square~\eqref{eqn:LatticeToQFTSquare}. In particular, recall that \(\varphi'\) is, in general, neither surjective nor injective.  The vertical maps in the diagram \eqref{eqn:LatticeToQFTSquare} \emph{are} injective---indeed, they are isomorphisms (or at least \(\beta\) is). However, the map from \(G\)-QCAs to invertible \(G\)-states, \(e_*\), is neither surjective nor injective~\cite{Long2024}, and thus neither is \(\varphi'\).

\subsection{From conjugacy anomalies to blend anomalies}
\label{subsec:ConjugacyToBlend}

So far, this section has been concerned with conjugacy anomalies. Based on the isomorphisms of \eqnref{eqn:conj_G_correspondence}, this is the correct setting to understand the link between \(d\)-dimensional anomalies and \((d+1)\)-dimensional \(G\)-QCAs with on-site symmetries. However, conjugacy anomalies and conjugacy-invertible \(G\)-reps are too restrictive to capture many cases of interest. For instance, the translation \(\ZZ\)-rep in any dimension is not conjugacy invertible, and thus there can be no Lieb-Schultz-Mattis type connections for conjugacy anomalies~\cite{Cheng2016Tranlsation,Cho2017LSMAnomaly,Jian2018LSMSPT,Cheng2023LatticeAnomalyMatching}. Moreover, the invariants constructed in \autoref{sec:CohomologyInvariants} were blend invariants, not only conjugacy invariants, and we can only prove the obstruction results in \autoref{sec:ConsequencesOfAnomalies} for a non-trivial blend class.
It is for this reason that the majority of this paper is concerned with anomalies defined by blend equivalence, rather than conjugacy.

Here, we discuss the extension of a bulk-boundary correspondence to anomalies as used in the rest of this work. For the sake of clarity, we refer to these as blend anomalies to distinguish them from conjugacy anomalies.

As a first observation, we comment that gravitational conjugacy anomalies are the same as gravitational blend anomalies. Taking \(G=1\), a gravitational anomaly in both cases is just an invertible local algebra. \autoref{sec:ConjAnomBlendIffConjugate} shows that blending of such algebras defines the same equivalence relation as local isomorphism stabilized by stacking on-site algebras, so there is no distinction to be made in this case. 

Further, recent results imply that blend anomalies of one-dimensional bosonic \(G\)-reps with finite \(G\) also agree with conjugacy anomalies. 
Reference~\cite{Seifnashri2025disentangling} shows that, for this class of symmetries, two \(G\)-reps in an on-site algebra blend if and only if they are conjugate.
An immediate consequence is that all such \(G\)-reps are conjugacy invertible: this follows from the observation that all \(G\)-reps in an on-site algebra have a blend inverse, given by the reflection of the \(G\)-rep through the blend interface.
Thus, the symmetry-protected conjugacy anomalies (those defined in on-site algebras) of finite bosonic symmetries are the same as symmetry-protected blend anomalies.
The factorization between symmetry-protected and gravitational anomalies in \eqnref{eq:lat_anom_decomposition} then implies that all conjugacy anomalies of finite bosonic symmetries in \(d=1\) agree with blend anomalies, as we already know that gravitational anomalies do not distinguish between blends and conjugacy.
It seems plausible that conjugacy and blend anomalies always coincide for finite groups, even in higher dimensions or for antiunitary symmetries.

More generally, there is a homomorphism
\begin{equation}
    q: \CLatAnom{d}{G} \to \LatAnom{d}{G}
\end{equation}
defined by sending a conjugacy class to the blend class of a representative. As all invertible \(G\)-reps in the same conjugacy class also blend (by \autoref{sec:ConjAnomBlendIffConjugate}), \(q\) is well-defined.

Further, as QFT anomalies are expected to be blend invariants, we get the same map from conjugacy anomalies to QFT anomalies if we first map to blend classes, and only then to the IR QFT anomaly, \(\varphi' = \varphi q\). As a commuting diagram,
\begin{equation}
    \begin{tikzcd}
     \mathrm{CAnom}^{\mathrm{lattice}}_d(G) \arrow[r,"\varphi^\prime"] \arrow[d,swap,"q"] 
     &\QFTAnom{d}{G} \\
     \LatAnom{d}{G} \arrow[ur,"\varphi"] & 
    \end{tikzcd}.
    \label{eqn:CAnomToBlend}
\end{equation}
This can be attached to the bottom of the square in Eq.~\eqref{eqn:LatticeToQFTSquare}, resulting in another commutative square
\begin{equation}
\begin{tikzcd}
 \QCAblendclass{d+1}{G} \arrow[r,"e_*"] \arrow[d,swap,"q \beta"] 
 &\invblendclass{d+1}{G} \arrow[d,"\text{edge}"] \\
 \LatAnom{d}{G} \arrow[r,"\varphi"] & \QFTAnom{d}{G}
\end{tikzcd},
\label{eqn:BlendSquare}
\end{equation}
where the left vertical map \(q \beta\) is no longer an isomorphism for all \(G\). 

Describing \(q\beta\) as defining a bulk-boundary correspondence is problematic. There is no clear way to construct either a left or right inverse for this map, meaning that there is no clear way to send a general blend anomaly to a bulk \(G\)-QCA. This occurs due to the restriction to on-site symmetries for \(G\)-QCAs. While this is natural on the \((d+1)\)-dimensional side, it essentially forces us to consider conjugacy anomalies on the \(d\)-dimensional side. Relaxing the requirement of on-site symmetries, and considering a more general class of \(G\)-QCAs, we can construct an injective map from blend anomalies to blend classes of \(G\)-QCAs which acts as a right inverse for the boundary algebra map. The construction is very similar to that presented here, and so is left to Appendix~\ref{sec:BlendBulkBoundary}. We comment that this alternative construction also produces a commuting square similar to Eq.~\eqref{eqn:LatticeToQFTSquare}.

\section{RG invariance of lattice anomalies}
\label{sec:rg_invariance}
As we have seen, passing from lattice systems to the effective QFT description does not necessarily preserve the anomaly, since lattice anomalies can map to trivial QFT anomaly. However, as we previously argued in \autoref{sec:IntroLatticeToQFT}, going from a lattice model to a QFT cannot strictly be viewed as RG. Instead, one should consider RG transformations that remain within the setting of lattice models.

In particular, in this section, we argue that lattice anomalies are preserved under the entanglement renormalization of Ref.~\cite{Vidal2008MERA}. Entanglement renormalization consists of three stages: a unitary disentangling, a projection to reduce the number of degrees of freedom, and a rescaling. The unitary disentangling manifestly preserves the lattice anomaly, since the $G$-rep simply gets conjugated by a finite-depth circuit. We will now discuss the other two steps.

First, we consider the effect of rescaling
sites in space such that a site located at position $\mathbf{x} \in \mathbb{R}^d$ moves to position $s \mathbf{x}$, where $s > 0$ is a rescaling factor. 
It seems very likely that lattice anomalies are invariant under such a procedure.
Roughly, one expects this to hold since the classification of lattice anomalies is expected to be discrete, so that rescalings that can be implemented continuously cannot lead to a change in the anomaly. 
One can also give more rigorous proofs in certain cases. For example, the result from \autoref{sec:GQCABlendToCircuit} that blend equivalence is the same as circuit equivalence for QCAs implies that the rescaling of a QCA is always blend/circuit equivalent to the original QCA, which implies that all the cohomological invariants discussed in \autoref{sec:CohomologyInvariants} are scale-invariant.
Thus, the lattice anomaly of \autoref{sec:em-qubit} does not flow to zero under entanglement renormalization, despite being IR trivial.
A similar argument for $G$-QCAs, combined with the bulk-boundary correspondence results of \autoref{sec:BulkBoundaryCorrespondence}, shows that the conjugacy anomalies discussed in \autoref{sec:BulkBoundaryCorrespondence} are also scale-invariant. In particular, since conjugacy equivalence implies blend equivalence, this result implies scale invariance of the blend equivalence class for conjugacy-invertible anomalies. This includes all lattice gravitational anomalies.

Finally, we discuss the effect on the lattice anomaly of the projection step. For our purposes it is helpful to phrase the projection step in the language of the projected algebra introduced in \autoref{sec:CommutingModelsBoundaryAlgebra}. Let $A$ be an on-site algebra, and let $\{ h_i : i \in I \}$ be a commuting model in $A$ (we could just take the $h_i$'s to be projectors). Moreover, let us assume this commuting model is strongly symmetric, i.e.\ each $h_i$ is symmetric under the $G$-rep. 
This is necessary in order to ensure that the RG transformation we are constructing is compatible with the symmetry.\footnote{One could also allow the $h_i$s to get permuted amongst themselves by the symmetry. For internal symmetries, this would not affect the conclusions. However, in the case of non-internal symmetries---such as translation symmetry---where the orbits of the permutation are not supported on bounded regions, the lattice anomaly of the permuting symmetry need not be preserved under the projection.}
Then a way to coarse-grain, i.e. reduce the number of degrees of freedom, is to replace $A$ with the projected algebra $A(H)$, where $H = \sum_i h_i$. Moreover, due to the symmetry condition we imposed, we see that $\repU$ induces a $G$-rep on $A(H)$, which we call \(\repU_{A(H)}\). Thus, the effect of coarse-graining on the $G$-rep consists of replacing $(A,\repU)$ by $(A(H),\repU_{A(H)})$.

Now, we say that a commuting model $\{ h_i \}$ on $\mathbb{R}^d$ defines a \emph{separated projection} if there exists an $r$ such that each $h_i$ is supported on a set $X_i \subseteq \mathbb{R}^d$ of diameter $\leq r$, and moreover $X_i$ and $X_j$ are disjoint for $i \neq j$. One can verify that the projection step in the conventional formulation of entanglement renormalization~\cite{Vidal2008MERA} precisely corresponds to a separated projection. 

Our key result is that
\begin{quote}
    If $\{ h_i \}$ defines a separated projection, then $(A(H), \repU_{A(H)})$ carries the same lattice anomaly as $(A,\repU)$.
\end{quote}
To see this, first note that $A(H)$ is clearly locally isomorphic to an on-site algebra. Moreover, if we consider a half volume $R$ and then define $\widetilde{H} = \sum_{i \in I : X_i \subseteq R} h_i$, then $(A(\widetilde{H}), \repU_{A(H)})$ defines a blend from $(A,\repU)$ to $(A(H), \repU_{A(H)})$.

We can extend this result to include lattice gravitational anomalies if we allow $A$ to be a general invertible local algebra. In this case, the condition of a separated projection needs to be replaced with a condition we call \emph{short-range entangling projection}. A commuting model $\{ h_i \}$ in an invertible local algebra $A$ is said to define a short-range entangling projection if there exists a local algebra $\overline{A}$ and a local isomorphism $\eta : A \otimes \overline{A} \to M$, where $M$ is an on-site algebra, such that $\{ \eta(h_i \otimes \unit) \}$ defines a separated projection. Then one can show that
\begin{quote}
    If $\{ h_i \}$ defines a short-range entangling projection, then $(A(H), \repU_{A(H)})$ carries the same lattice anomaly as $(A,\repU)$.
\end{quote}
To see this note that if we define $\widetilde{H}$ as before, then we have the local isomorphisms
\begin{equation}
A(\widetilde{H}) \otimes \overline{A} \cong (A \otimes \overline{A})(\widetilde{H} \otimes \unit) \cong M(\eta(\widetilde{H} \otimes \unit)),
\end{equation}
which is locally isomorphic to an on-site algebra. Hence we see that $A(\widetilde{H})$ is an invertible local algebra, and $(A(\widetilde{H}), \repU_{A(\widetilde{H})})$ defines a blend from $(A,\repU)$ to $(A(H),\repU_{A(H)})$.

One might ask what is the physical reason to restrict to separated/short-range-entangling projections. The problem is that a non-short-range-entangling projection is too drastic of an operation to deserve to be called an RG transformation. To illustrate this, consider a system in two spatial dimensions with symmetry $G = \mathbb{Z}^2 \times \mathrm{SO}(3)$, where the $\mathbb{Z}^2$ symmetry acts as translation symmetry, and the $\mathrm{SO}(3)$ symmetry acts on-site, but with half-integer spin per unit cell. It is well known that such a system carries a lattice anomaly, and in fact this lattice anomaly maps to a non-trivial QFT anomaly. Nevertheless, it is possible to construct a symmetric commuting projector model whose ground state is in the topological phase of the $\mathbb{Z}_2$ toric code.\footnote{Start with a loop-condensate model for the toric code with the constraint that there must be an odd number of occupied edges incident on each vertex (in order to realize the required translation symmetry fractionalization), and decorate occupied edges with \(\mathrm{SO}(3)\) SPTs.} The fact that the IR TQFT carries an anomaly of the $G$ symmetry can be attributed to the symmetry fractionalization~\cite{Zaletel_1410}. Now, if we were allowing non-short-range entangling projections, one could simply project into the ground state of the commuting projector model, in which case the coarse-grained algebra $A(H)$ would contain only scalar multiples of the unit---such an algebra cannot carry a non-trivial $G$-rep. Hence, the lattice anomaly upon coarse graining becomes trivial. Because the original lattice anomaly corresponded to a non-trivial QFT anomaly, this shows that not only would such a coarse-graining not preserve the lattice anomaly, it would not even preserve the QFT anomaly. Thus, we should refrain from allowing such coarse-grainings. This is analogous to the situation in QFT, where integrating out gapped TQFTs may not preserve the anomaly and is not a legitimate RG operation.

\section{Towards classification of lattice anomalies: connections with TQFTs}
\label{sec:GrepFromGTQFT}

So far, we have provided particular invariants associated to \(G\)-reps, which can be regarded as partially classifying lattice anomalies. 
The full classification of lattice anomalies---in the sense of a computable presentation of all the groups \(\LatAnom{d}{G}\)---is not yet within reach. 
To move towards a complete classification of lattice anomalies, we require a more systematic framework for identifying obstructions to blending two \(G\)-reps. This section and \autoref{sec:HomotopyTheory} provide partial progress towards this goal through two complementary perspectives, and present conjectures regarding the structure of the full classification.

In this section we present a conjecture for the classification of a large subgroup of lattice anomalies which extends the conjectured classification of QCAs. Namely, we propose that there is an injective map from the \emph{Witt group} of \(G\)-TQFTs to the group of lattice anomalies in the same dimension.

The Witt group of \(G\)-TQFTs is formed by considering \(d\)-dimensional symmetry-enriched topological orders  modulo an equivalence relation which identifies those phases which can admit gapped symmetric interfaces between them, possibly after stacking with an invertible phase (so all SPTs are regarded as trivial). Then our conjecture is formulated as
\begin{equation}
    \LatAnom{d}{G} \geq \mathrm{Witt}(G\text{-TQFT}_d).
    \label{eqn:LatticeAnomalyClassification}
\end{equation}
We remark that this conjecture is manifestly compatible with the statement that there should be a homomorphism from $\LatAnom{d}{G}$ to $\QFTAnom{d}{G}$. Indeed, there is evidently a map from $\mathrm{Witt}(G\text{-TQFT}_d)$ to $\QFTAnom{d}{G}$ since in any $G$-TQFT, one can compute its QFT anomaly, and moreover $G$-TQFTs that support a symmetric gapped interface must have the same $G$-anomaly.

As a concrete example of the conjecture, consider the case $d = 2$, $G = \mathbb{Z}_2$. Then there is a $G$-TQFT which is the IR theory of the toric code, in which the symmetry exchanges $e$ and $m$ excitations. If we forget about the symmetry, it is easy to have a gapped boundary by condensing $e$ or $m$ at the boundary. But it is impossible to make this boundary symmetric (without allowing the symmetry to become spontaneously broken, which one presumably should not allow in defining Witt classes) because the symmetry will always exchange $e$- and $m$-condensed boundaries. Thus, this $G$-TQFT is in the non-trivial Witt class. This would correspond to the lattice anomaly of the $\mathbb{Z}_2$-rep in two dimensions described in \autoref{sec:em-qubit}.

Our conjecture can be viewed as a generalization of a previous proposal that QCAs without any symmetry in \(d+1\) dimensions are classified, up to blend equivalence, by the Witt group of TQFTs in \(d\) dimensions \cite{Haah2022,Haah2023,Shirley2022}. TQFTs form an Abelian monoid under stacking, and the Witt group is the quotient of this monoid formed by identifying those phases which can share a gapped boundary. This is essentially a quotient by blending. The result is a group, because the reflection of a TQFT functions as an inverse element, as seen by a similar folding construction used to find the blend inverse of a \(G\)-rep in Appendix~\ref{appendix:BlendEquivalenceDef}. Now, given any TQFT in \(d\) dimensions, there is a commuting Walker-Wang Hamiltonian \cite{Walker_1104} in \(d+1\) spatial dimensions constructed from the data of that TQFT.\footnote{The original Walker-Wang construction was only in 3 spatial dimensions. However, presumably there should be generalizations in any dimension.} It is conjectured (and known to be true in some examples) that these Hamiltonians can be prepared from a trivial Hamiltonian (a sum of terms with disjoint support) by a QCA. Thus, we obtain a map from \(d\)-dimensional TQFTs to the QCA which prepares their \((d+1)\)-dimensional Walker-Wang model.

An issue with this argument is that it is not particularly obvious whether any Walker-Wang model can actually be realized from a trivial Hamiltonian by a QCA.
That is, whether the Walker-Wang model, viewed as a commuting model in the sense of previous sections, will be invertible.
Here we give a slightly modified version of the argument in terms of boundary algebras rather than relying on the invertibility of the bulk commuting model. Since the Walker-Wang model is a commuting model, one can define its boundary algebra as defined in \autoref{sec:CommutingModelsBoundaryAlgebra}. This defines a monoid homomorphism from the monoid of TQFTs to the monoid of local algebras. It seems likely (though we will not prove it here) that the boundary algebras of two Walker-Wang models corresponding to TQFTs in the same Witt class are equivalent.
In particular, since Witt classes form a group, which has inverses, it follows that the boundary algebra of any Walker-Wang model is invertible (which may not necessarily imply that the models themselves are invertible).
Moreover, we obtain a group homomorphism
\begin{equation}
    \phi : \mathrm{Witt}(\text{TQFT}_d) \to \LatGravAnom{d}
\end{equation}
from the Witt group into the group of lattice gravitational anomalies---that is, equivalence classes of invertible local algebras in $d$ dimensions, which is isomorphic to \(\QCAblendclass{d+1}{}\).

Meanwhile, we can also construct a map in the other direction as follows. Suppose that any invertible local algebra supports at least one commuting Hamiltonian. The ground state of this Hamiltonian could support topological order which could not occur in a commuting model on a tensor product algebra. For instance, two-dimensional topological orders with nontrivial topological central charge can be realized with commuting Hamiltonians in an invertible local algebra~\cite{Haah2022,Haah2023}. If there are two different topological orders that can occur in the same invertible local algebra, then one postulates that they will necessarily be Witt-equivalent (although it not entirely clear how to prove this). Thus, one obtains a homomorphism
\begin{equation}
    \psi :  \LatGravAnom{d} \to \mathrm{Witt}(\text{TQFT}_d)
\end{equation}
and one can argue that $\phi$ and $\psi$ should be inverses of each other.

We propose a natural extension of this conjecture to \(G\)-anomalies. One expects that there should be a symmetry-enriched version of the Walker-Wang construction that takes the data of the $G$-TQFT in $d$ spatial dimensions and constructs a commuting projector model with on-site $G$ symmetry in $d+1$ spatial dimensions. The bulk symmetry will induce automorphisms of the boundary algebra of this Walker-Wang model.
We hypothesize that the boundary algebra of the $G$-enriched Walker-Wang model should be equivalent to the boundary algebra of the Walker-Wang model built from the underlying TQFT (forgetting about the $G$ symmetry), and in particular is invertible. Hence, we have a $G$ action on an invertible local algebra by local automorphisms, which is the most general definition of $G$-rep~(\autoref{sec:LatticeGravitational}). Moreover, we expect that if we have two $G$-TQFTs in the same Witt class, then the corresponding $G$-reps will be in the same blend equivalence class. Thus we obtain a homomorphism
\begin{equation}
    \phi :  \mathrm{Witt}(G\text{-TQFT}_d) \to \LatAnom{d}{G}.
\end{equation}

Constructing the map going the other way takes a bit more care than in the case without symmetry (actually, it is possible that that case also has the same issues and they have just not been noticed previously). Naively we just want to consider commuting models that are symmetric under the $G$-rep, but this does not uniquely determine a $G$-Witt class. For example, in $d=2$, with toric code topological order and an on-site $\mathbb{Z}_2$ symmetry, the symmetry might or might not exchange $e$ and $m$. In particular Ref.~\cite{Heinrich2016} constructs such a commuting model with the symmetry exchanging $e$ and $m$ (we previously referred to this example in \autoref{subsec:H2_constraints}).

Instead, one can specifically focus on the commuting models whose boundary algebra is the same as the boundary algebra of a Levin-Wen model \cite{Levin2005,Jones_2307}.
We will call these \emph{Levin-Wen-like models}---such models can be viewed as a generalization of the concept of a ``simple realization of the toric code phase'' that we described in \autoref{subsec:H2_constraints}.
Moreover, we require that the boundary algebra can be chosen to be symmetric under the $G$-rep. Then looking at how the $G$-rep acts on the IR topological phase of the commuting model defines a $G$-TQFT, and one postulates that if two Levin-Wen-like models are compatible with the same $G$ anomaly class, then the corresponding $G$-TQFTs must be Witt equivalent. This defines a homomorphism
\begin{equation}
\label{eq:G_psi}
    \psi : \LatAnom{d}{G} \to \mathrm{Witt}(G\text{-TQFT}_d),
\end{equation}
and one expects that $\psi$ and $\phi$ are inverses of each other.

One issue is that there may be elements of $\LatAnom{d}{G}$ which are not compatible with \emph{any} Levin-Wen-like commuting models, or indeed any commuting models at all with ground states in which the symmetry is not spontaneously broken. This is evidently what happens in $d=1$, in which there is no topological order and any anomaly that descends to non-trivial QFT anomaly necessarily implies the ground state of any gapped Hamiltonian (which in particular includes commuting Hamiltonians) spontaneously breaks the symmetry. There may also be examples in $d > 1$~\cite{CZhangPrivateComm}. In these cases, one must replace $\LatAnom{d}{G}$ in \eqnref{eq:G_psi} with a subgroup corresponding to those anomaly classes that support the desired commuting projector models. This why the ``$\leq$'' appears in the statement of the conjecture, \eqnref{eqn:LatticeAnomalyClassification}.

\section{Towards classification of lattice anomalies: homotopy theory}
\label{sec:HomotopyTheory}

In this section, we adapt methods which have been successful in the classification of invertible states to the classification of lattice anomalies. 
In particular, the classification of invertible states (and hence, QFT anomalies) can be treated through a systematic analysis of \emph{symmetry defects}, which has also been described as an analysis of decorated domain walls~\cite{Xiong2018minimalist,Gaiotto2019gencohomology,Wang2020DecoratedDomain,Wang2021DomainWall}. This analysis is most compactly phrased using some technology from homotopy theory and higher categories, specifically 
Kitaev's \(\Omega\)-spectrum proposal~\cite{Kitaev2013SRE2,Kitaev2015SRE3,Kitaev2019SRE4,Xiong2018minimalist,Gaiotto2019gencohomology,Kubota2025stablehomotopytheoryinvertible}. In this section we will describe an analogous approach to understanding lattice anomalies.

The $\Omega$-spectrum proposal for QCAs and the application to the classification of lattice anomalies is described through the language of blends in \autoref{sec:Defects}.
In \autoref{sec:HomotopyEquiv}, we show explicitly how this homotopy-theoretic perspective reproduces the cohomology invariants identified in \autoref{sec:CohomologyInvariants}. In \autoref{sec:SpectrumMap} we examine how the \(\Omega\)-spectrum structure interfaces with the similar structure proposed for QFT anomalies.

Throughout this section, we use the notation \(U \BlendVia{V} W\) to denote that the QCA \(V\) is a blend between \(U\) and \(W\) across a half volume.

\subsection{The homotopy-theoretic classification of lattice anomalies}
\label{sec:Defects}

The homotopy-theoretic classification of lattice anomalies systematically organizes obstructions, arising from different (co)dimensions, to blending two \(G\)-reps.
The \(\grpH{1}{G}{\QCAblendclass{d}{}}\) invariant detects such an obstruction. Indeed, if two \(G\)-reps \(\repU\) and \(\repW\) have different values in \(\grpH{1}{G}{\QCAblendclass{d}{}}\), there is some \(g
\in G\) such that \(\repU_g\) and \(\repW_g\) cannot even blend while remaining a QCA. We regard this as a codimension-0 obstruction: the blend fails due to the impossibility of blending certain \(d\)-dimensional QCAs.
The \(\grpH{2}{G}{\QCAblendclass{d-1}{}}\) invariant then detects a codimension-1 obstruction. It measures when a putative blend fails to be a \(G\)-rep because of the action of a \((d-1)\)-dimensional QCA at the interface between \(\repU\) and \(\repW\).

Continuing down in dimension, we anticipate that there may be obstructions to constructing a \(G\)-rep blending \(\repU\) and \(\repW\) at any codimension \(p\). When all these obstructions vanish, a blending \(G\)-rep can be constructed.

Constructing invariants which witness each of these codimension-\(p\) obstructions becomes increasingly complicated. It is tempting to say that the invariant for the codimension-\(p\) obstruction should be valued in \(\grpH{p+1}{G}{\QCAblendclass{d-p}{}}\). 
This is almost correct, but the general homotopy-theoretic perspective that we describe below implies that for \(p \geq 2\) there may be \emph{higher trivializations}, which render some putative obstructions in \(\grpH{p+1}{G}{\QCAblendclass{d-p}{}}\) trivial; and further that there may be \emph{higher  inconsistencies}, which  prevent putative \(\grpH{p+1}{G}{\QCAblendclass{d-p}{}}\) obstructions from actually being realized. 

The situation is in many ways analogous to the classification of QFT anomalies. In that case one similarly obtains invariants valued in $\grpH{p+1}{G}{\invblendclass{d-p}{}}$ (for the corresponding SPT phases, these have been interpreted, for example, in terms of ``decorated domain wall'' constructions~\cite{Xiong2018minimalist,Gaiotto2019gencohomology,Wang2020DecoratedDomain,Wang2021DomainWall}) and the same increasingly-complicated tower of higher inconsistencies and trivializations emerges. 
These conditions can all be summarized in a succinct formalism by incorporating tools from higher categories and homotopy theory.

In \autoref{subsec:InftyGroupoid} we explain the \(\Omega\)-spectrum proposal for QCAs. 
In \autoref{subsec:OmegaSpectrumClassification} we conjecture that lattice anomalies are classified by a \emph{generalized cohomology} theory constructed from the \(\Omega\)-spectrum of QCA.
In \autoref{subsec:AHSSClassification} we explain how to recover the cohomology invariants of \autoref{sec:CohomologyInvariants} from the \(\Omega\)-spectrum.

\subsubsection{\texorpdfstring{\(\Omega\)}{Omega}-spectra from blends}
\label{subsec:InftyGroupoid}

The core of the homotopy-theoretic approach relies on the proposal that QCAs form an \emph{\(\Omega\)-spectrum}~\cite{Long2024}. 
In this subsection, we make a formulation of this proposal in terms of blends of QCAs. 
However, to state the conjecture in its original topological context, we need a topology on the space of QCAs. For the purpose of our discussion, we will not need to be precise about the details of this topology. We denote the topological space of QCAs in \(d\) dimensions by \(\QCA_d\).

The \(\Omega\)-spectrum proposal states that the space of loops of \((d+1)\)-dimensional QCAs beginning and ending at the identity, written \(\Omega\QCA_{d+1}\), is homotopy equivalent to \(\QCA_d\). In a topological formulation, this is conjectural, but it is much more straightforward to justify in terms of blends. 

To make sense of a ``space of loops'' in terms of blends, we construct a category in which the objects are \((d+1)\)-dimensional QCA, and the arrows (1-morphisms) are blends between \((d+1)\)-dimensional QCAs across a half volume \(R = (-\infty,0]\times \RR^d\) normal to the first axis. But blends are themselves \((d+1)\)-dimensional QCA, and so we can construct a higher category by defining arrows between arrows (2-morphisms) as blends between blends, but this time along the second axis. 
That is, if we have blends \(U \BlendVia{A}W\) and \(U \BlendVia{B} W\), then a 2-morphism between \(A\) and \(B\) is a third blend \(U \BlendVia{C} W\) such that \(C\) is also a blend between \(A\) and \(B\) across a half-volume normal to the second axis.
This can be continued, defining 3-morphisms between 2-morphisms, and so on, until we have gone through all \(d+1\) axes [up to the \((d+1)\)-morphisms]. 
However, we can then go even further by defining \((d+2)\)-morphisms to be homotopies (continuous deformations) between \((d+1)\)-morphisms, \((d+3)\)-morphisms to be homotopies between those, and so on. Thus, we are constructing a so-called $\infty$-category, in which there are $n$-morphisms for all $n$.

This category must also come with a notion of composition of blends. An intuitive definition for the composition of blends $U\BlendVia{A} V$ and $V\BlendVia{B} W$ would be to make a blend from \(U\) to \(V\) at one interface, and subsequently from \(V\) to \(W\) at another interface.
However, our blends are defined along a fixed half volume \(R\), so it is inconvenient to use multiple interfaces [\autoref{fig:qca-blend}(a)].
Instead, we define the composition $U\BlendVia{A} V \BlendVia{B} W$ to be $U\BlendVia{A V^{-1} B} W$.
If the blending interfaces can be deformed, then this definition is equivalent (related by an invertible 2-morphism) to the intuitive one based on using separate interfaces, as sketched in \autoref{fig:qca-blend}(b).
Further, this definition makes all blends invertible: the inverse blend of $U\BlendVia{A} V$, denoted as $V\InverseBlendVia{A}U$, can be realized by $V\BlendVia{V A^{-1} U}U$. Composing with $U\BlendVia{A} V$ on either side will result in a trivial blend $U\BlendVia{U}U$ or $V\BlendVia{V}V$.

\begin{figure}
    \centering
    \includegraphics[]{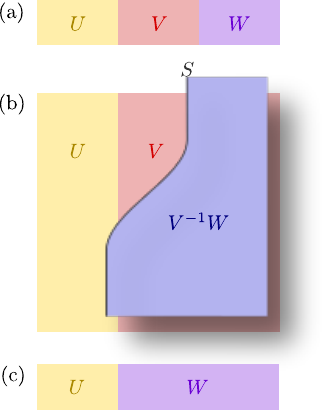}
    \caption{Demonstration that defining the composition of blends $U\BlendVia{A} V \BlendVia{B} W$ as $U\BlendVia{A V^{-1} B} W$ is equivalent to the intuitive picture of compositing blends by concatenation, assuming that the blending interface of $B$ can be deformed. (a)~The intuitive composition of blends. (b)~A blend between (a) and (c) oriented vertically is achieved by multiplying with a curved blend \(\id \BlendVia{V^{-1} B(S)} V^{-1} W\). This demonstrates that (a) and (c) are related by a 2-morphism, and thus equivalent. (In $d=1$, the vertical axis is to be interpreted as homotopy.) Visual stacking indicates multiplication. (c)~The composition realized by $A V^{-1} B$.}
    \label{fig:qca-blend}
\end{figure}

We thus have a higher category with morphisms of all degrees, all of which are invertible.
Such a category is called an \emph{\(\infty\)-groupoid}, and we denote it \(\Pi(\QCA_{d+1})\). Note that we have not rigorously proven that the composition rules defined above give rise to an $\infty$-groupoid. Formally, this would mean that various coherence conditions for compositions of $n$-morphisms are satisfied, which could be quite non-trivial to verify. We merely state it as a conjecture that $\Pi(\QCA_{d+1})$ forms an $\infty$-groupoid.

For the sake of intuition, the category \(\Pi(\QCA_{d+1})\) can (like any $\infty$-groupoid) essentially be regarded as a topological space (up to weak homotopy equivalence). Indeed, it is natural to conjecture that this construction produces the so-called fundamental \(\infty\)-groupoid of \(\QCA_{d+1}\)---the higher category where objects are QCAs, 1-morphisms are paths between QCAs, 2-morphisms are homotopies between paths, and so on. This object is also usually denoted \(\Pi(X)\) for some space \(X\) [or \(\Pi_\infty(X)\)], and contains all the same information (again up to weak homotopy equivalence) as the full space \(X\) itself. The distinction in our construction is that we have replaced the paths appearing in the fundamental \(\infty\)-groupoid with blends.

The loop space (or loop \(\infty\)-groupoid) of \(\Pi(\QCA_{d+1})\) is obtained by fixing an object, the identity QCA, and constructing a new category with objects given by the \(1\)-morphisms from the identity to the identity, and higher \(n\)-morphisms being the \((n+1)\)-morphisms from \(\Pi(\QCA_{d+1})\). So objects in the loop space are blends from the identity to itself, and arrows in the loop space are vertical (along the second coordinate) blends between these blends.

However, a blend from the identity to itself in \((d+1)\)-dimensions is naturally identified as a QCA in \(d\)-dimensions (with a sea of ancillas on which the QCA acts trivially). Similarly, blends between such blends are naturally blends between \(d\)-dimensional QCA, and so on. We have that the loop space is equivalent to the \(\infty\)-groupoid of QCA in one dimension lower,
\begin{equation}
    \Omega \Pi(\QCA_{d+1}) \simeq \Pi(\QCA_d).
    \label{eqn:OmegaSpecCat}
\end{equation}
This is called the \(\Omega\)-spectrum condition for the sequence of $\infty$-groupoids \((\Pi(\QCA_d))_{d\in \ZZ}\), where we define \(\Pi(\QCA_d)\) for \(d<0\) as \(\Omega^{-d}\Pi(\QCA_0)\). Thus \(\QCAblendclass{d}{} = \pi_{-d}(\QCA_0)\) for \(d<0\), where \(\QCA_0\) is the space of projective unitary operators (stabilized by stacking with ancillas)~\cite{Long2024}.

Any sequence of \(\infty\)-groupoids or topological spaces \((Y_d)_{d\in\ZZ}\) which has \(Y_d \simeq \Omega Y_{d+1}\) (in the case of topological spaces, ``$\simeq$'' means weak homotopy equivalence) is called an \(\Omega\)-spectrum.
Invertible states have also been proposed to form an \(\Omega\)-spectrum, \(\inv_d \simeq \Omega\inv_{d+1}\). As for QCA, this is reasonably straightforward to argue heuristically when using blends~\cite{Gaiotto2019gencohomology}.

Schematically, passing from a topological definition of equivalence (using continuous paths and homotopies) to blend equivalence is accomplished by realizing a path of QCAs \(U(t)\) (\(t\in [0,1]\)) as a texture in space. Suppose we can describe this path locally around a point \((x,\vec{y}) \in \RR \times \RR^d\) as \(U(x,\vec{y},t)\). Then the blend corresponding to this path is
\begin{equation}
    V = \left\{
    \begin{array}{l l}
        U(x,\vec{y},0) \quad&  x<0,\\
        U(x,\vec{y},x/\xi) \quad&  0\leq x \leq \xi, \\
        U(x,\vec{y},1) \quad& \xi < x,
    \end{array}
    \right.
\end{equation}
where \(\xi \gg r\) is some length which is much larger than the range of any QCA in the path \(U(t)\). Of course, this is not rigorous, because we cannot make a continuous interpolation in space on the lattice.

Nonetheless, concepts from topology and homotopy theory can be applied to \(\Pi(\QCA_d)\). Higher homotopy groups of \(\Pi(\QCA_d)\) are defined as isomorphism classes in \(\Omega^n \Pi(\QCA_d)\), and the \(\Omega\)-spectrum condition implies that these are the same as blend equivalence classes in a lower dimension \(\QCAblendclass{d-n}{}\).
We can also make sense of a notion of homotopy of maps \(f\) and \(g\) from some space \(X\) to \(\Pi(\QCA_{d+1})\). Just as in topology, this is just a map \(F\) from \(X \times [0,1]\) [or, if one prefers, an \(\infty\)-functor from \(\Pi(X \times [0,1])\) to \(\Pi(\QCA_{d})\)] such that \(F\) is equal to \(f\) when restricted to \(X\times\{0\}\) and is equal to \(g\) when restricted to \(X\times \{1\}\). Denote the homotopy equivalence classes of maps from \(X\) to \(\Pi(\QCA_d)\) by
\begin{equation}
    \pi_0[X, \QCA_d].
\end{equation}

\subsubsection{Proposed classification for lattice anomalies by generalized cohomology}
\label{subsec:OmegaSpectrumClassification}

Very rich mathematical theory has been built up around \(\Omega\)-spectra.
One of their key applications is that they serve as coefficients for \emph{generalized cohomology theories}.
That is, collections of maps \((h^d)_{d\in\ZZ}\) from topological spaces to Abelian groups which obey most of the axioms of cohomology theories.
(Generalized cohomology is not required to assign trivial groups for \(d \neq 0\) to the topological space consisting of a single point, as ordinary cohomology is.)
In fact, every generalized cohomology theory can be expressed as
\begin{equation}
    h^d(X) \cong \pi_0[X, Y_d]
\end{equation}
for some \(\Omega\)-spectrum \(Y_d\), and conversely every \(\Omega\)-spectrum defines a generalized cohomology theory in this way.

An extension of the \(\Omega\)-spectrum proposal for invertible states is that QFT anomalies---themselves corresponding to invertible \(G\)-states---are classified by\footnote{The conjecture Eq.~\eqref{eqn:SPT_gencohom} must be modified to include \emph{twisted} generalized cohomology groups for antiunitary symmetries or for fermionic systems when the group \(G\) involves a nontrivial extension of a bosonic symmetry by fermion parity. The same is true for lattice anomalies in Eq.~\eqref{eqn:QCA_gencohom}. We will not describe these cases, beyond noting that they can be dealt with.}
\begin{equation}
    \QFTAnom{d}{G} \cong h_{\inv}^{d+1}(\BG) :=\pi_0[\BG, \inv_{d+1}].
    \label{eqn:SPT_gencohom}
\end{equation}
Here, \(\BG\) is the classifying space of the group \(G\).
For discrete groups, it is defined by the property that its fundamental group is \(G\), \(\pi_1(\BG) \cong G\), and that all its other homotopy groups are trivial.
In this case, it is helpful to describe \(\BG\) as an $\infty$-groupoid with a single object, \(*\), and 1-morphisms for each \(g \in G\) which obey \(* \BlendVia{g} * \BlendVia{h}* = * \BlendVia{gh} *\).

There is a direct generalization of \eqnref{eqn:SPT_gencohom} to the case of lattice anomalies~\cite{Long2024}. We conjecture that there is an isomorphism
\begin{equation}
    \LatAnom{d}{G} \cong h_{\QCA}^{d+1}(\BG) :=\pi_0[\BG, \QCA_{d+1}].
    \label{eqn:QCA_gencohom}
\end{equation}

We motivate this conjecture as follows.
One can show that any \(G\)-rep in an on-site algebra \(\repU : G \to \QCA_d\) induces a map (i.e. a functor of $\infty$-groupoids) $\repU : \BG \to \Pi(\QCA_{d+1})_\id$, where \(\Pi(\QCA_{d+1})_\id\) is the full subcategory of \(\Pi(\QCA_{d+1})\) containing the single object \(\id\).
That is, \(\Pi(\QCA_{d+1})_\id\) contains \(\id\), all 1-morphisms \(\id \BlendVia{U} \id\), all 2-morphisms between those, and so on.
Indeed, each 1-morphism in \(\BG\) maps to some blend $\id \BlendVia{\repU_g} \id$ in $\Pi(\QCA_{d+1})$ (which, recall, is equivalent to a $d$-dimensional QCA), and the composition law for the \(G\)-rep ensures that the functor respects the composition of 1-morphisms:
\begin{equation}
    \left(\id \BlendVia{\repU_g} \id \BlendVia{\repU_h}  \id \right) = \left(\id \BlendVia{\repU_g\repU_h} \id\right) = \left(\id\BlendVia{\repU_{gh}} \id\right).
\end{equation}
In the first equality we used the composition rule in $\Pi(\QCA_{d+1})$.

A blend between \(G\)-reps can be used to construct a homotopy between their associated \(\BG\) maps.
Thus, we have a map from blend classes of \(G\)-reps in on-site algebras---symmetry protected lattice anomalies---to homotopy classes of maps from \(\BG\) to $\Pi(\QCA_{d+1})_\id$.
Also considering maps \(\BG \to \Pi(\QCA_{d+1})\) (not just the identity full subcategory) extends this construction to lattice gravitational anomalies, which are known to be classified by the connected components of \(\Pi(\QCA_{d+1})\), i.e.\ $Q_{d+1} \cong \LatGravAnom{d}$ (see \autoref{sec:LatticeGravitational}).
Thus, we have a map \(\LatAnom{d}{G} \to h^{d+1}_\QCA(\BG)\), which we conjecture to be an isomorphism.

While in this paper we have mainly focused on discrete groups, it seems likely that the conjecture \eqnref{eqn:QCA_gencohom} will still hold for continuous groups as well. The argument goes as follows. For each $d$, $\QCA_d$ is not just a topological space---it comes with a composition operation, turning it into a topological group.
Just as homotopy types of topological spaces give rise to $\infty$-groupoids, topological \emph{groups} give rise to $\infty$-groups (which are $\infty$-groupoids with a multiplication operation which satisfies some coherence conditions). In the blend model of $\Pi(\QCA_d)$, the composition of QCAs gives $\Pi(\QCA_d)$ the structure of an $\infty$-group. Moreover, in this model it is clear that the delooping $\mathrm{B} \Pi(\QCA_{d})$ [the \(\infty\)-groupoid with one object, morphisms given by objects in \(\Pi(\QCA_{d})\) with their multiplication operation giving the composition law, and higher \(n\)-morphisms being the \((n-1)\)-morphisms of \(\Pi(\QCA_{d})\)] satisfies
\begin{equation}
\label{eq:BPiQCA}
    \mathrm{B}\Pi(\QCA_d) \simeq \Pi(\QCA_{d+1})_{\mathrm{id}}.
\end{equation}
Let us assume that the blend model is indeed capturing the homotopy type of $\QCA_d$, so that \eqnref{eq:BPiQCA} holds as a homotopy equivalence of topological spaces:
\begin{equation}
    \mathrm{B} \QCA_d \simeq (\QCA_{d+1})_{\mathrm{id}},
\end{equation}
where the left-hand side now refers to the classifying space of a topological group, and the right hand side is the connected component of the identity in \(\QCA_{d+1}\).
Now consider a continuous group homomorphism $\repU : G \to \QCA_d$. Then it induces a continuous map
\begin{equation}
    f_{\mathrm{\repU}} : \BG \to \mathrm{B} \QCA_d \simeq (\QCA_{d+1})_{\mathrm{id}}.
\end{equation}
Thus, we expect that we can classify $G$-reps by classifying the homotopy classes of such maps, as in the discrete case.

The conjecture \eqnref{eqn:QCA_gencohom} is compatible with the bulk-boundary relation between lattice anomalies in \(d\) dimensions and \(G\)-QCAs in \(d+1\) dimensions discussed in \autoref{sec:BulkBoundaryCorrespondence}, and the proposal that symmetry enriched QCA are similarly classified by generalized cohomology~\cite{Long2024}.
While we have only proved the bulk-boundary correspondence is an isomorphism for \emph{conjugacy anomalies} and \(G\)-QCAs with on-site \(G\)-reps, we suspect there is a more general relationship which includes lattice anomalies defined by blends, and which remains an isomorphism.

\subsubsection{Generalized cohomology invariants}
\label{subsec:AHSSClassification}

Equation~\eqref{eqn:QCA_gencohom} is remarkable, as it implies that the classification of lattice anomalies can be found using only knowledge of QCA without symmetry, via \(h^{d+1}_\QCA\), and the abstract group \(G\), via \(\BG\). It is a direct generalization of the partial classification of \(G\)-reps by ordinary group cohomology, \(\grpH{d+2}{G}{\mathrm{U}(1)}\). 
This is revealed by a computation of \(h_{\QCA}^{d+1}(\BG)\) using the \emph{Atiyah-Hirzebruch spectral sequence} (AHSS). 
The AHSS is a general method for computation of generalized cohomology groups, but in the context of QCAs it can be thought of as organizing obstructions to blending of different codimensions.
The AHSS iteratively constructs a sequence of ``pages'', beginning with the \emph{\(E_2\) page}:
\begin{align}
    E_2^{p, q} &= \singH{p}{\BG}{h^q_\QCA(\mathrm{pt})},
    \label{eqn:InfintePSum}
\end{align}
where the coefficient groups \(h^{q}_\QCA(\mathrm{pt})\) are the values of the generalized cohomology for a point, \(\mathrm{pt}\). 
That is, \(h^{q}_\QCA(\mathrm{pt}) \cong \QCAblendclass{q}{}\).

The idea is that \emph{to a first approximation}, one can define a sequence of invariants that partially classify elements of \(h^{d+1}_\QCA(\BG)\). These invariants are valued in $E_2^{p+1,d-p}$ for $p \in \{-1, 0, 1, \ldots\}$, where the $p$-th invariant is well-defined only if all the earlier invariants vanish. 
The \(p=-1\) invariant encodes the gravitational part of the anomaly, while the $p \in \{0,1,2\}$ invariants correspond to the cohomological invariants discussed in \autoref{sec:CohomologyInvariants}, since the singular cohomology groups \(\singH{p+1}{\BG}{\QCAblendclass{d-p}{}}\) agree with the group cohomology of \(G\), \(\grpH{p+1}{G}{\QCAblendclass{d-p}{}}\). 
Further, the familiar \(\grpH{d+2}{G}{\mathrm{U}(1)}\) invariant for finite groups appears when setting \(p=d+1\), which gives
\begin{multline}
    E_2^{d+2,-1} = \grpH{d+2}{G}{\QCAblendclass{-1}{}} \\
    \cong \grpH{d+2}{G}{\QQ/\ZZ} \cong \grpH{d+2}{G}{\mathrm{U}(1)}.
\end{multline}
We used that \(\QCAblendclass{-1}{} \cong \QQ/\ZZ\)~\cite{Long2024}, and that \(\grpH{p}{G}{\QQ/\ZZ} \cong \grpH{p}{G}{\mathrm{U}(1)}\) for finite \(G\).

An important point, and the reason why we stated ``to a first approximation'' above, is that there are in general 
\emph{higher inconsistencies} which cause certain elements of $\grpH{p+1}{G}{\QCAblendclass{d-p}{}}$ to be physically inconsistent, i.e.\ they cannot be realized, and \emph{higher trivializations} which make some non-trivial elements correspond to a trivial lattice anomaly.
This is handled by subsequent \emph{pages} of the spectral sequence. In particular, there are further differentials
\begin{equation}
    \mathrm{d}^{p,q}_2 : E_2^{p,q} \to E_2^{p+2,q-1}.
\end{equation}
 We are required to keep only the elements that are in the kernel of these differentials, and mod out by the elements that are in the image of a differential.
 This forms an \(E_3\) page
\begin{equation}
    E_3^{p,q} = \frac{\ker \mathrm{d}^{p,q}_2}{\im \mathrm{d}^{p-2,q+1}_2},
\end{equation}
This is iterated with differentials \(\mathrm{d}^{p,q}_r: E_r^{p,q} \to E_{r+1}^{p+r,q-r+1}\) on the \(E_r\) page. At a fixed $p,q$, eventually all further differentials become trivial and $E_{n+1}^{p,q} = E_n^{p,q}$, which defines an $E_\infty$ page in which which the actual invariants appear.

These differentials handle the tower of consistency conditions we would encounter in a more direct attempt at the classification. The homotopy theoretic approach organizes them all in a structured way.

\subsection{Equivalence of homotopy and cohomology invariants}
\label{sec:HomotopyEquiv}

The previous section described several invariants taking values in \(\grpH{p+1}{G}{\QCAblendclass{d-p}{}}\). In the homotopy context, the \(\QCAblendclass{d-p}{}\) blend classes are to be interpreted as \(p\)-times iterated \emph{loops} of blends in \(\Pi(\QCA_{d})\) via the \(\Omega\)-spectrum condition~\eqref{eqn:OmegaSpecCat}. 
A cocycle representative \(\omega_{g,h,...,k}\) for an element of \(\grpH{p+1}{G}{\QCAblendclass{d-p}{}}\)
is to be interpreted as the homotopy class of a \(p\)-dimensional surface in \(\Pi(\QCA_{d})\).
This surface is properly a network of (higher) blends in \(\Pi(\QCA_{d})\) in the shape of the surface of a \((p+1)\)-simplex (e.g.\ a triangle for \(p=1\), or the surface of a tetrahedron for \(p=2\)).
Given a \(G\)-rep \(\repU\), the vertices in this network are \(\id\), \(\repU_g\), \(\repU_{gh}\), ..., \(\repU_{gh...k}\).
Edges are blends between these vertices, faces are blends between (compositions of) edges, and so on.
The homotopy class of the surface so constructed measures the obstruction to filling in the interior of the \((p+1)\)-simplex with some codimension-\(p\) blend consistent with all the codimension-\((p-1)\) faces.

As one might expect, these invariants reproduce those presented in \autoref{sec:CohomologyInvariants}.
In \autoref{sec:HomotopyH2}-\ref{sec:HomotopyH31D}, we explicitly demonstrate that the homotopy invariants generalize our previously defined invariants, as well as other well-known invariants for zero-dimensional \(G\)-reps (projective representations of \(G\)) and one-dimensional \(G\)-reps~\cite{Else2014}. These sections also serve as a more explicit example for the construction of the homotopy invariants.

For the purpose of identifying the \(\grpH{p+1}{G}{\QCAblendclass{d-p}{}}\) invariants in the homotopy language, we will assume that \(G\) is discrete.

\subsubsection{ \texorpdfstring{$\grpH{2}{G}{\QCAblendclass{d-1}{}}$}{H2} class in \texorpdfstring{$d\geq 1$}{higher than one dimension}}
\label{sec:HomotopyH2}

In \autoref{sec:H2Invariant}, the $\grpH{2}{G}{\QCAblendclass{d-1}{}}$ invariant was defined using symmetry restrictions \(\resU_g\) of a \(d\)-dimensional \(G\)-rep \(\repU_g\) to a half-volume \(R\). A cocycle $\omega_{g,h} \in \QCAblendclass{d-1}{}$ is specified by the blend class of \((d-1)\)-dimensional QCAs \(\bdyU_{g,h}\), defined by
\begin{equation}
    \resU_g\resU_h=\bdyU_{g,h}\resU_{gh}.
\end{equation}
Here, we show that the cohomology class defined this way is equivalent to that constructed through the homotopy-theoretic formalism.

In terms of \(\Pi(\QCA_{d})\), a cocycle \(\omega'_{g,h}\) for the \(\grpH{2}{G}{\QCAblendclass{d-1}{}}\) invariant is interpreted as the homotopy class corresponding to a triangle
\begin{equation}\label{eqn:H2BlendLoop}
    \begin{tikzcd}
        & \repU_g \arrow[dr,"\repU_g \resU_h"] \\
        \id \arrow[ur,"\resU_{g}"] \arrow[rr,"\resU_{gh}"] && \repU_{gh}
    \end{tikzcd}.
\end{equation}
Specifically, \(\omega'_{g,h}\) is the homotopy class of a loop around the triangle,
\begin{equation}\label{eq:H2blends}
    \id \BlendVia{\resU_g} \repU_g \BlendVia{\repU_g\resU_h} \repU_{gh} \InverseBlendVia{\resU_{gh}} \id.
\end{equation}
Evaluating the blend associated to this loop using the composition and inversion rules in \(\Pi(\QCA_{d})\), we find that the blending QCA is precisely \(\bdyU_{g,h}\):
\begin{equation}
    (\resU_g)\repU_g^{-1}(\repU_g\resU_h) \repU_{gh}^{-1} (\repU_{gh}\resU_{gh}^{-1}) = \bdyU_{g,h}.
\end{equation}
Thus, the homotopy class \(\omega'_{g,h}\) of this simplex is just the blend class of \(\bdyU_{g,h}\), which was the original definition of the cocycle \(\omega_{g,h}\). One can also check that choosing new restrictions \(\resU_g\) results in cohomologous cocycles, as in \autoref{sec:H2Invariant}.

We can also obtain an intuitive picture for the homotopy invariant if we assume that blend interfaces can be freely deformed (\autoref{fig:qca-blend}).
Then, we can use the intuitive picture of composing blends by concatenating successive blend interfaces, so that the homotopy definition of \(\bdyU_{g,h}\) becomes that shown in \autoref{fig:domain-walls}(a). We can also equivalently think of it in terms of a rule for fusing two domain walls, as shown in Figure \autoref{fig:domain-walls}(b).

\begin{figure}
    \centering
    \includegraphics[]{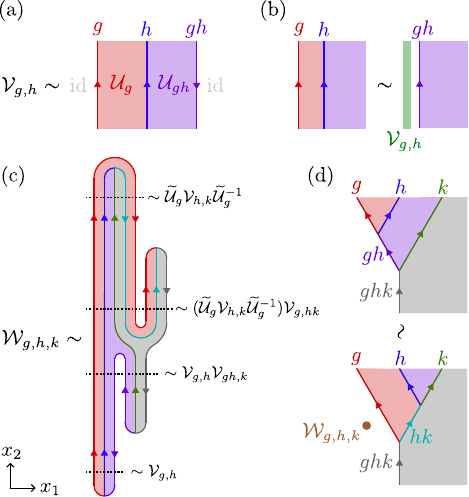}
    \caption{The homtopy picture of the $\grpH{2}{G}{\QCAblendclass{d-1}{}}$ and $\grpH{3}{G}{\QCAblendclass{d-2}{}}$ classes with composition of blends illustrated by concatenation in space, so that blend interfaces are separated as in \autoref{fig:qca-blend}. 
    (a)~Visualization of the loop of blends \eqnref{eq:H2blends} with each blend separated in \(x_1\). 
    (b)~Interpretation of the $\grpH{2}{G}{\QCAblendclass{d-1}{}}$ class as a rule for fusing two domain walls. (c)~Visualization of the network of blends defining the surface appearing in the \(\grpH{3}{G}{\QCAblendclass{d-2}{}}\) invariant.
    Compositions of blends (1-morphisms) are separated along \(x_1\), and compositions of blends-of-blends (2-morphisms) are separated along \(x_2\).
    The indicated cuts correspond to loops of 1-morphisms in the complex, related by the 2-morphisms.
    Note that the middle two cuts merely give two different decompositions of the same QCA.
    Colors are consistent with (d). (d)~Interpretation of $\grpH{3}{G}{\QCAblendclass{d-2}{}}$ as a rule for doing an F-move of domain walls.}
    \label{fig:domain-walls}
\end{figure}

\subsubsection{\texorpdfstring{$\grpH{3}{G}{\QCAblendclass{d-2}{}}$}{H3} class in \texorpdfstring{$d\geq 2$}{higher than two dimensions}}
\label{sec:HomotopyH3}

In \autoref{sec:H3Invariant}, the $\grpH{3}{G}{\QCAblendclass{d-2}{}}$ invariant was defined by finding $\mathcal{W}_{g,h,k}$ such that
\begin{equation}\label{eqn:WSymmetryDef}
    \widetilde{\bdyU}_{g,h}\widetilde{\bdyU}_{gh,k}=\mathcal{W}_{g,h,k}(\resU_g\widetilde{\bdyU}_{h,k}\resU_g^{-1})\widetilde{\bdyU}_{g,hk}
\end{equation}
and defining the cocycle \(\omega_{g,h,k}\) to be the blend class of \(\mathcal{W}_{g,h,k}\).
Here, $\widetilde{\bdyU}_{g,h}$ is a restriction of the boundary QCA $\bdyU_{g,h}$ to a half-volume in the second coordinate \(x_2\).

In the homotopy picture, the cocycle defining the \(\grpH{3}{G}{\QCAblendclass{d-2}{}}\) invariant is associated to the homotopy class of a network of blends in the shape of the surface of a tetrahedron.
The \((d-2)\)-dimensional QCA associated to this surface is evaluated in Appendix~\ref{appendix:3Simplex}.
We find it is precisely \(\mathcal{W}_{g,h,k}\), so that the homotopy theoretic definition of \(\grpH{3}{G}{\QCAblendclass{d-2}{}}\) agrees with the original definition.

This calculation is more simply understood graphically. In \autoref{fig:domain-walls}(c) we illustrate the network associated to the \(\grpH{3}{G}{\QCAblendclass{d-2}{}}\) representative (with all but two dimensions suppressed). 
We illustrate the composition of blends as concatenation (\autoref{fig:qca-blend}) for clarity.
\autoref{fig:domain-walls}(c) is to be understood as a vertical (\(x_2\)) sequence of junctions between horizontal (\(x_1\)) blends from \(\id\) to \(\id\) (marked by dashed lines). 
The horizontal blends themselves arise from the loops of edges defining \(\bdyU_{g,h}\), or compositions thereof.
Thus, the vertical junctions are, in the homotopy picture, faces interpolating between these loops.
The entire network describes a surface in \(\Pi(\QCA_d)\), and the homotopy class of this surface defines the cocycle representative for the \(\grpH{3}{G}{\QCAblendclass{d-2}{}}\) invariant.
Alternatively, this network can be interpreted as a rule for performing an F-move of domain walls---making the F-move is equivalent to multiplying by this network [\autoref{fig:domain-walls}(d)].

\subsubsection{\texorpdfstring{$\grpH{2}{G}{\QCAblendclass{-1}{}}$}{H2} class in \texorpdfstring{$d= 0$}{zero dimensions}}
\label{sec:HomotopyH20D}

In this subsection, we show that the $\grpH{2}{G}{\QCAblendclass{-1}{}}\cong \grpH{2}{G}{\QQ/\ZZ}$ class in zero dimensions captures the usual $\grpH{2}{G}{\mathrm{U}(1)}$ classification of projective representations via the inclusion $\mathbb{Q}/\mathbb{Z} \to \mathbb{R}/\mathbb{Z} \cong \mathrm{U}(1)$. In fact, our calculation also functions as a proof that the only \(\grpH{2}{G}{\mathrm{U}(1)}\) invariants for discrete \(G\) that are achieved in a finite-dimensional Hilbert space are those in the image of the induced map \(\grpH{2}{G}{\QQ/\ZZ} \to \grpH{2}{G}{\mathrm{U}(1)}\). 

Note that in \(d=0\), instead of blends, we need to reinterpret the loop \eqnref{eq:H2blends} as a continuous path in $\QCA_0$, which is just the space of projective unitaries (stabilized by stacking ancillas~\cite{Long2024}).

Let $\repU$ be a zero-dimensional $G$-rep.
As $\QCAblendclass{0}{} \cong \pi_0(\QCA_0)$ is trivial, for each $g\in G$ we can pick a path $\gamma_g:[0,1]\to \QCA_0$ such that $\gamma_g(0)=\id$ and $\gamma_g(1)=\repU_g$.
The path can be lifted to $\tilde\gamma_g:[0,1]\to \mathrm{U}(N)$ for some $N$.\footnote{For infinite groups, one may need to allow for \(N = N_g\) to depend on the group element, with different \(N_g\) all related by stacking or removing ancillas. Any two unitaries \(\tilde\gamma_g(t)\) and \(\tilde\gamma_h(t)\) can still be multiplied by considering them to act on a common Hilbert space with all the ancillas from both \(N_g\) and \(N_h\).}
The topological counterpart of \eqnref{eq:H2blends} is
\begin{equation}\label{eq:H2loop}
    V(t)=\begin{cases}
        \tilde\gamma_g(3t) & \text{if }0\leq t<\frac{1}{3}, \\
        \tilde\gamma_g(1)\tilde\gamma_h(3t-1) & \text{if }\frac{1}{3}\leq t < \frac{2}{3}, \\
        \tilde\gamma_g(1) \tilde\gamma_h(1) \tilde\gamma_{gh}^{-1}(3t-2) & \text{if }\frac{2}{3}\leq t \leq 1, \\
    \end{cases}
\end{equation}
which is a path in $\mathrm{U}(N)$ from the identity to $\tilde\gamma_g(1) \tilde\gamma_h(1) \tilde\gamma_{gh}^{-1}(1)=\omega_{g,h}^\mathrm{U}$, the value of the $\grpH{2}{G}{\mathrm{U}(1)}$ class of $\repU$.

We claim that the winding number of the composition of paths \eqnref{eq:H2blends} in \(\QCA_0\) is given by \(\omega_{g,h}^\mathrm{U}\), up to a coboundary. 
We calculate the winding number using \(V(t)\). However, \(V(t)\) is not a loop in \(\mathrm{U}(N)\), so we must modify it to a loop such that it defines the same loop of projective unitaries.
This can be done by correcting the phase of \(V(t)\) with $\widetilde{V}(t)=V(t)(\omega_{g,h}^\mathrm{U})^{-t}$.
Then the $\QQ/\ZZ$-valued winding number is
\begin{multline}
    \omega_{g,h}=\frac{i}{2\pi N}\int_0^1 dt \operatorname{Tr}\left[\widetilde V^\dagger(t)\frac{d}{dt}\widetilde V(t)\right]\\
    =\frac{i}{2\pi N}\Big[\int_0^1 ds \operatorname{Tr}\big(\tilde\gamma_g^\dagger(s)\tilde\gamma'_g(s)\big)+\int_0^1 ds \operatorname{Tr}\big(\tilde\gamma_h^\dagger(s)\tilde\gamma'_h(s)\big)\\
    -\int_0^1 ds \operatorname{Tr}\big(\tilde\gamma_{gh}^\dagger(s)\tilde\gamma'_{gh}(s)\big)
    -\operatorname{Tr}\big(\log\omega_{g,h}^\mathrm{U}\big)\Big],
\end{multline}
where \(\gamma'\) is the derivative of \(\gamma\), and the result depends only on \(V\) (not \(\widetilde{V}\)) up to addition by integers.
The first three terms form a $\RR/\ZZ$-valued 2-coboundary (note that its value is independent of $N$).
Thus, regarding \(\omega\) as a cocycle in $\grpH{2}{G}{\mathrm{U}(1)}$ by the inclusion \(\QQ/\ZZ \to \RR/\ZZ\), we have that \(\omega\) and \(\omega^{\mathrm{U}}\) are cohomolgous:
\begin{equation}
    \omega \sim \frac{1}{2\pi i}\log \omega^\mathrm{U}.
\end{equation}
Therefore, $\omega\in\grpH{2}{G}{\QQ/\ZZ}$ is equivalent to the usual $\omega^\mathrm{U}\in\grpH{2}{G}{\mathrm{U}(1)}$ invariant for projective representations of finite groups. As a corollary, any value of \(\omega^{\mathrm{U}}\) which is realized in a finite dimensional model must be cohomologous to a \(\QQ/\ZZ\)-valued \(\omega\).

\subsubsection{\texorpdfstring{$\grpH{3}{G}{\QCAblendclass{-1}{}}$}{H3} class in \texorpdfstring{$d= 1$}{one dimension}}
\label{sec:HomotopyH31D}

In this subsection, we show that the $\grpH{3}{G}{\QCAblendclass{-1}{}}\cong \grpH{3}{G}{\QQ/\ZZ}$ class in one dimension (\(d=1\), \(p=2\)) reduces to the $\grpH{3}{G}{\mathrm{U}(1)}$ class of Ref.~\cite{Else2014}, again via the inclusion $\mathbb{Q}/\mathbb{Z} \to \mathbb{R}/\mathbb{Z} \cong \mathrm{U}(1)$.
Again, a consequence of our calculation is that the only values of \(\grpH{3}{G}{\mathrm{U}(1)}\) which are achieved by models with finite local Hilbert space dimension and discrete \(G\) are in the image of \(\grpH{3}{G}{\QQ/\ZZ} \to \grpH{3}{G}{\mathrm{U}(1)}\).

In one dimension, we can define $\bdyU_{g,h}$ using blends [that is, \eqnref{eq:H2blends}] as before, but the blending of such blends along the next axis needs to be reinterpreted as continuous homotopy.
That is, we need to reinterpret \autoref{fig:domain-walls}(c) as a continuous loop in $\QCA_0$.

Let $\repU$ be a one-dimensional $G$-rep with trivial \(\grpH{1}{G}{\QCAblendclass{1}{}}\) class.
Since $\grpH{2}{G}{\QCAblendclass{0}{}}$ is trivial, we can find a path $\gamma_{g,h}:[0,1]\to \QCA_0$ for each \(g,h \in G\) such that $\gamma_{g,h}(0)=\id$ and $\gamma_{g,h}(1)=\bdyU_{g,h}$.
The path can be lifted to $\tilde\gamma_{g,h}:[0,1]\to \mathrm{U}(N)$.

Moreover, as $\tilde\gamma_{g,h}(t)$ acts on a subsystem of the original lattice, we can evaluate the action of the restrictions \(\resU_k\) on the local unitary operator \(\tilde\gamma_{g,h}(t)\). We define 
\begin{equation}
    ^k\tilde\gamma_{g,h}(t)=\resU_k[\tilde\gamma_{g,h}(t)] =\resU_k\tilde\gamma_{g,h}(t)\resU_k^{-1},
\end{equation}
which is to be interpreted a unitary in $\mathrm{U}(N')$ with a possibly larger $N'$ (related to \(N\) by stacking ancillas).

Similar to \eqnref{eq:H2loop}, for $g,h,k\in G$ we can write the continuous counterpart of \autoref{fig:domain-walls}(c) as
\begin{equation}\label{eq:H3loop}
    W(t)=\begin{cases}
        \tilde\gamma_{g,h}(4t) & \text{if }0\leq t<\frac{1}{4}, \\
        \tilde\gamma_{g,h} \tilde\gamma_{gh,k}(4t-1) & \text{if }\frac{1}{4}\leq t<\frac{1}{2}, \\
        \tilde\gamma_{g,h} \tilde\gamma_{gh,k} \tilde\gamma_{g,hk}^{-1}(4t-2) & \text{if }\frac{1}{2}\leq t<\frac{3}{4}, \\
        \tilde\gamma_{g,h} \tilde\gamma_{gh,k} \tilde\gamma_{g,hk}^{-1} {}^g\tilde\gamma_{h,k}^{-1}(4t-3) & \text{if }\frac{3}{4}\leq t\leq 1, \\
    \end{cases}
\end{equation}
where we write $\tilde\gamma_{g,h}(1)=\tilde\gamma_{g,h}$ for brevity.
$V$ is a path from the identity to $\tilde\gamma_{g,h} \tilde\gamma_{gh,k} \tilde\gamma_{g,hk}^{-1} {}^g\tilde\gamma_{h,k}^{-1}$, which, as \(W\) is a lift of a loop in \(\QCA_0\), must be proportional to the identity. Indeed, $\tilde\gamma_{g,h} \tilde\gamma_{gh,k} \tilde\gamma_{g,hk}^{-1} {}^g\tilde\gamma_{h,k}^{-1}$ is a cocycle representative of the \(\mathrm{U}(1)\)-valued 3-cocycle $\omega^{\mathrm{U}}_{g,h,k}$ of Ref.~\cite{Else2014}.
To make \(W\) a loop of unitaries, define $\widetilde{W}(t)=W(t)(\omega^{\mathrm{U}}_{g,h,k})^{-t}$.

We claim the \(\QQ/\ZZ\)-valued winding number of $\widetilde{W}$ produces an equivalent cocycle.
One subtlety is that the expression of $^g\tilde\gamma_{h,k}(t)$ involves $\resU_g$ which is not an operator on $\mathrm{U}(N)$. Nevertheless, we can truncate $\resU_g$ to $\resU'_g$ acting on $\mathrm{U}(N')$ with some larger $N'$ (which corresponds to truncating the QCA to a finite patch) such that $^g\tilde\gamma_{h,k}(t)=\resU'_g\tilde\gamma_{h,k}(t)\resU^{\prime -1}_g$ is still satisfied. Then we can interpret $\widetilde{W}$ as a loop in $\mathrm{U}(N')$ and calculate its $\QQ/\ZZ$-valued winding number as before, resulting in
\begin{multline}
    \omega_{g,h,k}=\frac{i}{2\pi N}\int_0^1 dt \operatorname{Tr}\left[\widetilde{W}^\dagger(t)\frac{d}{dt}\widetilde{W}(t)\right]\\
    =\beta_{g,h}+\beta_{gh,k}-\beta_{g,hk}-\beta_{h,k}-\frac{i}{2\pi N}\operatorname{Tr}\big(\log\omega_{g,h,k}^\mathrm{U}\big),
\end{multline}
where the first four terms form a $\RR/\ZZ$-valued 3-coboundary with (note that the value is independent of the choice of $N$)
\begin{equation}
    \beta_{g,h}=\frac{i}{2\pi N}\int_0^1 ds \operatorname{Tr}\big(\tilde\gamma_{g,h}^\dagger(s)\tilde\gamma'_{g,h}(s)\big).
\end{equation}
As before, we can map \(\omega_{g,h,k}\) to a \(\mathrm{U}(1)\) valued cocycle, and we have that
\begin{equation}
    \omega \sim \frac{1}{2\pi i}\log \omega^\mathrm{U},
\end{equation}
have the same cohomology class.

\subsection{Map to QFT anomalies}
\label{sec:SpectrumMap}

Both \((\QCA_d)_{d\in\ZZ}\) and \((\inv_d)_{d\in\ZZ}\) are conjectured to be \(\Omega\)-spectra. Further, there is a continuous map \(e_d:\QCA_d \to \inv_d\) defined by acting by a QCA on a fixed product state, \(e_d(V) = V \ket{0}\). It is natural to conjecture that \(e_\bullet\) is a map of \(\Omega\)-spectra, in that the square
\begin{equation}
    \begin{tikzcd}
     \Omega \QCA_{d+1} \arrow[r,"e_{d+1}"] 
     & \Omega \inv_{d+1} \\
     \QCA_{d} \arrow[r,"e_d"] \arrow[u] & \inv_{d} \arrow[u]
    \end{tikzcd}
    \label{eqn:QCAtoInvSquare}
\end{equation}
commutes up to homotopy, where the vertical maps are the homotopy equivalences defining the \(\Omega\)-spectrum, and the top horizontal map is defined by applying \(e_{d+1}\) pointwise to a loop \(\gamma\): \(e_{d+1}(\gamma)(t) = \gamma(t)\ket{0}\).

This conjecture is to be interpreted as an extension of \eqnref{eqn:LatticeToQFTSquare}, relating the structure of lattice and QFT anomalies. Indeed, the map of spectra \(e\) induces a homomorphism
\begin{equation}\label{eqn:InducedCohomMap}
    e^{d+1}_{*} : h^{d+1}_{\QCA}(\BG) \to h^{d+1}_{\inv}(\BG),
\end{equation}
which is interpreted as a map of anomalies via the proposal that \(\LatAnom{d}{G}\) is classified by \(h^{d+1}_{\QCA}(\BG)\) for finite \(G\), and that \(\QFTAnom{d}{G}\) is classified by \(h^{d+1}_{\inv}(\BG)\). In fact, a map of spectra \(e_\bullet\) induces a long exact sequence involving the Abelian groups \(h^{d+1}_{\QCA}(\BG)\) and \(h^{d+1}_{\inv}(\BG)\) (for many \(d\)).
This is, in principle, much more constraining than the existence of an induced homomorphism alone, but we leave the identification of specific consequences to future work.

We conclude this section by outlining some expected properties of \(e^{d+1}_{*}\) in \eqnref{eqn:InducedCohomMap}. As we have commented several time in the context of anomalies, \(e^{d+1}_{*}\) is neither injective nor surjective in general. There are nontrivial lattice anomalies which are IR trivial (\autoref{sec:em-qubit}), and there are QFT anomalies which do not descend from any lattice anomaly~\cite{Kapustin_2401}. However, the homotopy point of view for \(e^{d+1}_{*}\) also reveals more interesting behavior beyond what we have already discussed.

Indeed, a particularly interesting example of a nontrivial lattice anomaly which maps to a nontrivial QFT anomaly is provided by Ref.~\cite{Fidkowski2020beyondcohomology}. Here, a three-dimensional \(\ZZ_2\)-rep is considered in which the nontrivial element of \(\ZZ_2\) maps to a nontrivial QCA. That is, this \(\ZZ_2\)-rep is anomalous, with a nontrivial \(\grpH{1}{\ZZ_2}{\QCAblendclass{3}{}}\) invariant. The authors of Ref.~\cite{Fidkowski2020beyondcohomology} then construct a \(\ZZ_2\)-QCA in four dimensions from this \(\ZZ_2\)-rep, similar to our construction from \autoref{sec:BulkBoundaryCorrespondence}. Applying this QCA to a symmetric product state generates a \(\ZZ_2\)-SPT, the edge of which carries a ``beyond cohomology'' (though, not beyond generalized cohomology) anomaly. In our language, this construction is simply applying \(e^4_*\) to the lattice \(\ZZ_2\)-anomaly. The curious feature of this example is that the invariant associated to the three-dimensional QFT anomaly is valued in \(\grpH{2}{\ZZ_2}{\invblendclass{2}{}}\), where \(\invblendclass{2}{} \cong \ZZ\) are blend classes of invertible states in two dimensions. The generator of \(\invblendclass{2}{}\) is called the \(E_8\)-state~\cite{Kitaev2006BeyondAnyons,Kitaev2011E8,Lan2016E8,Long2024Edge}. 
The obstruction to trivializing the QFT anomaly comes from \(E_8\)-states decorating two-dimensional symmetry defects. 
There is a mismatch in the obstruction dimensions between the lattice anomaly (three-dimensional) and the QFT anomaly (two-dimensional). It would be interesting to understand when and why such dimension shifts occur as a result of applying \(e^{d+1}_*\).

\section{Discussion}
\label{sec:Discussion}

We have provided a definition for an anomaly of a global symmetry in lattice systems: a lattice anomaly is a blend equivalence class of a symmetry representation. We have also made significant progress in both the classification of lattice anomalies, and on characterizing the consequences of lattice anomalies for symmetric Hamiltonians.

One interesting extension of our methods would be to lattice systems where one allows the Hilbert space on each site to be a (separable) infinite-dimensional Hilbert space. This would extend our theory to, for instance, systems of quantum harmonic oscillators, or rotors.
We expect that most of our general arguments will carry over to this setting, but importantly, the classification of QCAs will be different---for example, one can argue that the classification of QCAs in one spatial dimension becomes trivial (for example, any translation can be absorbed into a single infinite dimensional site, so translations are blend equivalent to the identity)---and hence the lattice anomaly classification will be different. Moreover, at least some QFT anomalies that cannot be realized as lattice anomalies with finite-dimensional on-site Hilbert spaces can be realized with infinite-dimensional on-site Hilbert spaces  \cite{Chen_1106}. 
Intriguingly, therefore, it appears that allowing infinite-dimensional on-site Hilbert spaces leads the classification of lattice anomalies to become closer to that of QFT anomalies, since certain IR-trivial anomalies (e.g.\ those related to one-dimensional QCAs) that would map to trivial QFT anomalies disappear, while some QFT anomalies that could not be realized as lattice anomalies in the finite-dimensional case reappear. 
Indeed, it is conceivable that lattice anomalies and QFT anomalies might exactly coincide in this case. However, confirming this would require further study. 

Another possible extension would be to QFTs in the continuum. In the \emph{algebraic QFT} approach to axiomatizing QFT~\cite{HaagBook}, one postulates that a QFT possesses an algebra of local operators, which has similar properties to the local algebras we considered in this paper (see \autoref{subsec:local_algebras}) except that it does not need to satisfy the locally finite-dimensional condition. We can then talk about $G$-reps in terms of the local automorphisms of these local algebras.
However, the main difficulty with moving from the lattice to the continuum is that it is not clear what  the appropriate analog of an ``on-site algebra'' should be.

Returning to systems on the lattice with finite-dimensional Hilbert space per site, another extension would be to symmetries which act as ``quasi-local'' automorphisms rather than QCAs of strictly finite range---that is, a local operator can map to an operator that is not exactly supported on any bounded set, but rather has tails that decay rapidly with distance (for example exponentially)~\cite{Ranard2022converseLRbound}. This is likely the physically relevant class when considering anomalies which arise from a bulk-boundary correspondence with an MBL model in one higher dimension, where the commuting Hamiltonian terms are typically only quasi-local. 
Generally, we do not expect that our general results will be substantially different in this case, but it might be more challenging to provide rigorous proofs of the properties we were able to prove rigorously in the case of strictly finite-range QCAs.

One could also consider crystalline symmetries (such as spatial rotation, reflection, etc.), which still act as automorphisms of the algebra of local operators, but do not qualify as QCAs because they do not satisfy the finite range condition. For QFT anomalies/SPT phases, there is the ``crystalline equivalence principle'' \cite{Thorngren_1612}, which holds that the classification of anomalies with crystalline symmetries is actually the same as the classification of anomalies with the same symmetry group acting internally. It would be interesting to determine whether the same correspondence holds at the level of lattice anomalies.

In this paper we have only considered invertible 0-form symmetries. Recently there has been considerable attention devoted to so-called \emph{generalized symmetries}, such as higher-form symmetries~\cite{Gaiotto_1412} and non-invertible symmetries~\cite{Shao_2308}. 
Generalized symmetries and their anomalies appear in many QFTs of interest and can be used to understand properties of those QFTs. Therefore, it would be interesting to develop a theory of lattice anomalies of generalized symmetries.
In the case of higher-form symmetries, in on-site algebras one can only discuss so-called ``non-topological'' higher-form symmetries~\cite{Seiberg_1909,Qi_2010} in which the symmetry operator is not invariant under deformations of the submanifold on which the symmetry acts. These can, however, reduce to topological higher-form symmetries in the low-energy QFT description. We expect that it should be possible to apply methods similar to those of the current paper to classify anomalies of non-topological higher-form symmetries on the lattice. For example, the toric code in two spatial dimensions has two non-topological $\mathbb{Z}_2$ 1-form symmetries corresponding to the $e$ and $m$ string operators. The fact that $e$ and $m$ strings anti-commute when they intersect an odd number of times is the manifestation of the mixed anomaly of these non-topological 1-form symmetries. (The braiding statistics of topological excitations in a gapped topological ground state that is invariant under the higher-form symmetry can give some information about the anomaly \cite{Kobayashi2025higherform}, but this would only be sensitive to the IR-non-trivial part of the lattice anomaly.)

In the case of non-invertible symmetries, they have been defined on the lattice in certain cases, e.g.~\cite{Seiberg_2307,Okada_2403}.
However, it remains unclear what  general definition of a non-invertible symmetry on the lattice  would be analogous to our definition in terms of QCAs for the invertible case. 
Formulating such a definition would be a necessary first step to generalizing our results to non-invertible symmetries.

Another interesting direction for future work would be to develop the classification of phases of commuting models (or relatedly, the classifation of MBL phases). We showed in this work that the classification of invertible commuting models corresponds to the classification of lattice anomalies, but for non-invertible commuting models the questions remain wide open. This is also closely connected to the classification of local algebras (which appear at the boundary algebras of commuting models). Given that invertible local algebras can be viewed as (invertible) lattice gravitational anomalies, more general local algebras could be viewed as \emph{non-invertible} lattice gravitational anomalies.

Finally, let us comment on the implications of our work to the concept of ``symmetry topological field theory'' (symTFT) \cite{Apruzzi_2112,Freed_2209} that has recently attracted much attention as a way to organize the symmetries of QFTs. The symmetries of a QFT are described by a symTFT in one higher dimension. It was proposed in Ref.~\cite{Chatterjee_2203} that in lattice systems, the symTFT should correspond to the local isomorphism class of the algebra of the symmetric local operators. We note that this is a somewhat different viewpoint on symmetries than the one in this work, since in our work the symmetry group $G$ is fixed, whereas if one only looks at the local isomorphism class of the symmetric algebra, then different symmetry groups sometimes end up being equivalent (which is also true for symTFTs). 
However, we emphasize that it should always be necessary to take into account the differences between lattice systems and QFTs. 
Thus, one should \emph{not} expect that the symmetric algebras in lattice systems should correspond one-to-one with symTFTs, as proposed in Ref.~\cite{Chatterjee_2203}, because the symTFT is a QFT that does not know about the lattice. Instead, given our results (and those of Ref.~\cite{Jones_2307}) that boundary algebras are obtained specifically from commuting models, we expect that for characterizing symmetries on the lattice, the bulk symTFT should be replaced by a bulk \emph{commuting phase}, and the algebra of symmetric operators then corresponds to the boundary algebra of this bulk commuting phase.

\emph{Note added.---} During the preparation of this manuscript, we became aware of related, independent work in Ref.~\cite{Shirley2025anomalyfree},
which also constructs two-dimensional \(G\)-reps with non-trivial \(\grpH{2}{G}{\QCAblendclass{1}{}}\) class, and demonstrates that they admit trivial gapped ground states for symmetric commuting projector Hamiltonians. 
As a consequence of our Theorem~\ref{thm:ObstructionToInvertible}, these commuting projector models must be non-invertible in the sense defined in \autoref{sec:ConsequencesOfAnomalies}.
Note that Ref.~\cite{Shirley2025anomalyfree} used a terminology for ``anomaly'' that is not compatible with ours---they use the terminology ``anomaly-free'' to refer to $G$-reps that we would identify as carrying an IR-trivial lattice anomaly. We disagree with the assertion of Ref.~\cite{Shirley2025anomalyfree} that lattice anomalies (in our terminology) need not be RG-invariant---see \autoref{sec:rg_invariance}.

Additionally, we are aware of related independent work in Refs.~\cite{Czajka2025unpublished,Jones2025unpublished}.

\acknowledgements

The authors thank Maissam Barkeshli, Tyler Ellison, Davide Gaiotto, Jeongwan Haah, Salvatore Pace, Sahand Seifnashri, Wilbur Shirley, Ryan Thorngren, Ruben Verresen, and Carolyn Zhang for useful discussions. This work is supported by the Laboratory for Physical Sciences (YTT, DML), a Stanford Q-FARM Bloch Fellowship (DML), and a Packard Fellowship in Science and Engineering (DML, PI: Vedika Khemani). Research
at Perimeter Institute is supported in part by the Government of Canada through the
Department of Innovation, Science and Economic Development, and by the Province of
Ontario through the Ministry of Colleges and Universities (DVE).

\appendix

\section{Blend equivalence}
\label{appendix:BlendEquivalenceDef}

In this appendix, we precisely define blend equivalence for \(G\)-reps.

Recall that \(G\)-reps \(\repU\) and \(\repW\) on the on-site algebra \(M\) are said to blend across \(R \subseteq \RR^d\) (usually taken to be a half-volume) if there is a \(G\)-rep \(\repV\) and distances \(\xi_g\) for \(g \in G\) such that
\begin{subequations}
\begin{align}
    a \in M\{\mathrm{int}_{\xi_g} R\} &\Rightarrow\repV_{g}[a]= \repU_{g}[a], \\
    a \in M\{\mathrm{int}_{\xi_g} R^c\} &\Rightarrow \repV_{g}[a] = \repW_{g}[a].
\end{align}
\end{subequations}
If \(\repV\) is a blend between \(\repU\) and \(\repW\), we write
\begin{equation}
    \repU \BlendVia{\repV} \repW.
\end{equation}
We further write
\begin{equation}
    \repU \TwoWayBlend \repW
\end{equation}
if there exist \(\repV\) and \(\repV'\) such that \(\repU \BlendVia{\repV} \repW\) and \(\repW \BlendVia{\repV'} \repU\).

It is not obvious that the existence of a blend with a fixed \(R\) is an equivalence relation.
We define \emph{blend equivalence} to be the smallest equivalence relation incorporating the relations \(\repU \BlendVia{\repV} \repW\). That is, if there is a sequence of blends like
\begin{equation}
    \repU=\repU_0 \BlendVia{\repV_0} \repU_1 \xleftarrow{\repV_1} \cdots \BlendVia{\repV_{n-1}} \repU_n = \repW,
    \label{eqn:OneWaySequence}
\end{equation}
with intermediate blends going in either direction, then \(\repU\) is blend equivalent to \(\repW\), which we write \(\repU \blend \repW\). The subset \(R\) is implicit in this notation.

We define a lattice anomaly to be an equivalence class of \(\blend\) with \(R = (-\infty,0]\times \RR^{d-1}\) fixed to be a half-volume,
\begin{equation}
    \LatAnom{d}{G} := \{G\text{-reps}\}/\blend.
\end{equation}
Denote the blend equivalence class of \(\repU\) by \([\repU]\).

Competing notions of blend equivalence, and hence lattice anomaly, may require that blends between \(\repU\) and \(\repW\) exist for many different choices of \(R\). These definitions could conceivably produce smaller equivalence classes. In any case, the cohomology invariants defined in \autoref{sec:CohomologyInvariants} are invariant under the coarsest possible definition of blend equivalence, which is the one we have adopted above.

Even with the same fixed \(R\), we could have made an alternative definition of \(\blend\) in which \(\repU \blend \repW\) when there is a finite sequence
\begin{equation}
    \repU=\repU_0 \TwoWayBlend \repU_1\TwoWayBlend \cdots\TwoWayBlend \repU_n = \repW.
\end{equation}
In fact, these two options result in the same definition of \(\blend\). To see this, observe that if \(\repU \TwoWayBlend \repW\), then we have \(\repU \BlendVia{\repV} \repW\), and so any \(G\)-reps which are equivalent under \(\TwoWayBlend\) are also equivalent using \(\BlendVia{\repV}\) for some \(\repV\). On the other hand, if \(\repU \BlendVia{\repV} \repW\), then we have blends \(\repU \BlendVia{\repV} \repV\) and \(\repV \BlendVia{\repU} \repU\), so \(\repU \TwoWayBlend \repV\). Similarly, \(\repV \TwoWayBlend \repW\), and so we have that \(\repU \BlendVia{\repV} \repW\) implies \(\repU \TwoWayBlend\repV\TwoWayBlend \repW\). Thus, any sequence like Eq.~\eqref{eqn:OneWaySequence} can be replaced by a sequence which is twice as long, but every arrow is \(\TwoWayBlend\), and so the two definitions of \(\blend\) are equivalent.

\begin{figure}
    \centering
    \includegraphics[width=\linewidth]{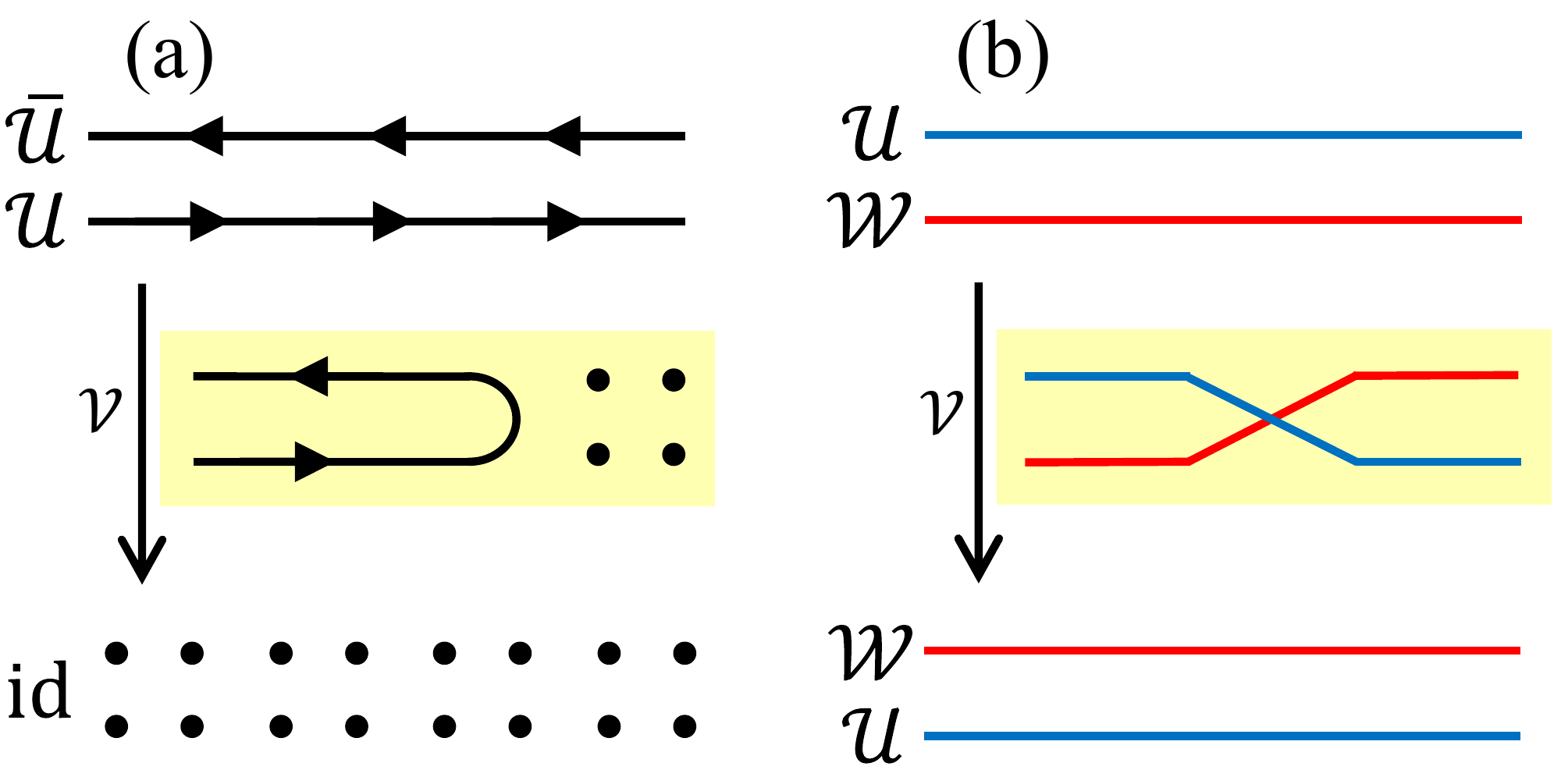}
    \caption{(a)~The reflection of a \(G\)-rep \(\repU\), denoted by \(\overline{\repU}\), acts as an inverse for blend equivalence. Indeed, folding \(\repU\) produces a blend \(\repU \otimes \overline{\repU} \BlendVia{\repV} \id\) (yellow). (b)~Swapping tensor factors in \(R^c\) provides a blend \(\repU \otimes \repW \BlendVia{\repV} \repW \otimes \repU\). In both cases (a,b), the reverse blends \(\id \BlendVia{\repV'} \repU \otimes \overline{\repU}\) and \(\repW \otimes \repU \BlendVia{\repV'}\repU \otimes \repW\) can be constructed similarly.}
    \label{fig:blend_group_laws}
\end{figure}

With \(R\) chosen to be a half-volume, lattice anomalies form an Abelian group under stacking of representations, \(\BlendClass{\repU} \BlendClass{\repW} = \BlendClass{\repU \otimes \repW}\). (Finer choices of blend equivalence typically do not guarantee that inverses exist.) The group operation is well defined: if \(\repU_1 \BlendVia{\repV_1} \repW_1\) and \(\repU_2 \BlendVia{\repV_2} \repW_2\), then the tensor product of blends shows \(\repU_1 \otimes \repU_2 \BlendVia{\repV_1\otimes\repV_2} \repW_1 \otimes \repW_2\); and if \(\repU_1 \blend \repW_1\) and \(\repU_2 \blend \repW_2\), taking a pointwise tensor product of blend sequences \(\repU_{01} \otimes \repU_{02} \TwoWayBlend \cdots \TwoWayBlend \repU_{n1} \otimes \repU_{n2}\) (possibly padding one sequence with constants \(\repW_{1,2}\) until they have the same length) gives a blend \(\repU_1 \otimes \repU_2 \blend \repW_1 \otimes \repW_2\). The identity element is the equivalence class containing on-site \(G\)-reps. An inverse of \(\repU\) is given by the reflection through the plane \(\{0\} \times \RR^{d-1}\), which we write \(\overline{\repU}\). The stack \(\repU \otimes \overline{\repU}\) can be seen to blend with the identity via \emph{folding} of \(\repU\). Finally, the group operation is Abelian, as seen by \emph{crossing} a stack of two \(G\)-reps (\autoref{fig:blend_group_laws}).

\section{IR-trivial implies symmetric gapped Hamiltonian with product-state ground state}
\label{appendix:symmetric_gapped_hamiltonian}
In this appendix we give the proof of the result mentioned in \autoref{sec:IntroLatticeToQFT}: if there is a $G$-rep such that there is a $G$-invariant product state, then that product state can appear as the ground state of a gapped symmetric Hamiltonian (so long as $G$ is finite, or $G$ is a compact  group and the $G$-rep is internal).

Consider a system with lattice sites labelled by a set $\Lambda$. For the moment we will take $\Lambda$ to finite. Let $\mathcal{U}$ be a $G$-rep, and 
suppose we have a $G$-invariant product state $\ket{\Psi} = \bigotimes_{s \in \Lambda} \ket{\phi_s}$. Then at each site $s$ we can introduce the on-site projector $\mathbb{P}_s$ which projects onto $\ket{\phi_s}$. Then the Hamiltonian
\begin{equation}
    H = J \sum_s (\unit - \mathbb{P}_s)
\end{equation}
(with $J > 0$)
has $\ket{\Psi}$ as its unique gapped ground state (with ground-state energy zero and gap $J$). The problem is that even if $\ket{\Psi}$ is $G$-invariant, this does not necessarily imply that $H$ is $G$-invariant.

Therefore, for finite $G$ we introduce the $G$-symmetrized Hamiltonian
\begin{equation}
    \widetilde{H} = \frac{1}{|G|} \sum_{g \in G} \mathcal{U}_g[H].
\end{equation}
Clearly, $\ket{\Psi}$ is still a ground state of $\widetilde{H}$. However, we need to verify that it is the unique ground state, and place a lower bound on the gap.

To show this, first observe that $\mathcal{U}_g[H]$ has the same ground state degeneracy as $H$, therefore $\ket{\Psi}$ is the unique ground state of $\mathcal{U}_g[H]$. Now consider any state $\ket{\Phi}$ that is orthogonal to $\ket{\Psi}$. Then we have
\begin{equation}
\label{eq:H_symmetrization}
    \bra{\Phi} H \ket{\Phi} = \frac{1}{|G|} \sum_{g \in G} \bra{\Phi} \mathcal{U}_g[H] \ket{\Phi}.
\end{equation}
Since $\ket{\Phi}$ is orthogonal to the non-degenerate ground state of $\mathcal{U}_g[H]$, we have
\begin{equation}
    \bra{\Phi} \mathcal{U}_g[H] \ket{\Phi} \geq J. 
\end{equation}
Hence we find that
\begin{equation}
    \bra{\Phi} \widetilde{H} \ket{\Phi} \geq J,
\end{equation}
which shows that the ground state of $\widetilde{H}$ is non-degenerate and it has a gap of at least $J$.

For compact groups, if the $G$-rep is internal then we can just replace the symmetrization of \eqnref{eq:H_symmetrization} by the integral over $g$ with the Haar measure and the argument follows in the same way.

For infinite systems we can make the above arguments rigorous by working in terms of the GNS Hilbert space constructed from the ground state.

\section{Fermionic local algebras}
\label{appendix:fermionic_local_algebra}

In this appendix, we define fermionic matrix algebras and fermionic local algebras.

Firstly, we define a $\mathbb{Z}_2$-graded $*$-algebra to be a $*$-algebra $A$ with a vector space decomposition $A = A_+ \oplus A_-$, such that the adjoint $\dagger$ preserves $A_+$ and $A_-$, and $A_\sigma A_{\sigma'} \leq A_{\sigma \sigma'}$ where $\sigma, \sigma' \in \{ +1, -1 \}$. The idea is that $A_+$ and $A_-$  represent elements that are odd or even respectively under fermion parity.
The $\mathbb{Z}_2$-grading is equivalent to specifying a \emph{parity automorphism} $\mathcal{P} : A \to A$ such that $\mathcal{P}^2 = 1$,
where we define $A_\pm$ to be the $\pm 1$ eigenspaces of $\mathcal{P}$.

We define a $\mathbb{Z}_2$-graded subalgebra of $A$ to be a subalgebra $B \leq A$ such that $B = (B \cap A_+) \oplus (B \cap A_-)$. This is equivalent to saying that $B$ is preserved under $\mathcal{P}$. Given two $\mathbb{Z}_2$-graded $*$-algebras $A$ and $B$, we say a homomorphism $\eta : A \to B$ is $\mathbb{Z}_2$-graded if $\eta(A_+) \leq B_+$ and $\eta(A_-) \leq B_-$, or in other words $\eta \circ \mathcal{P}_A = \mathcal{P}_B \circ \eta$.

We say a $\mathbb{Z}_2$-graded $*$-algebra $A$ with parity automorphism $\mathcal{P}$ is a full fermionic matrix algebra if there exists a finite-dimensional Hilbert space $V$ and a linear operator $P$ on $V$ with $P^2 = 1$ such that $A$ is the algebra of linear operators on $V$, and $\mathcal{P}[a] = P a P$.

Next, we define a fermionic local algebra over a set $X$ to be a $\mathbb{Z}_2$-graded $*$-algebra equipped with a relation $a \tl S$ that satisfies properties 1--\ref*{item:last_before_commutator} from \autoref{subsec:local_algebras}, while \ref*{item:commutator_condition} is replaced with
\begin{enumerate}
    \item[\ref*{item:commutator_condition}'.] If $a \tl S$ and $b \tl S'$ with $S$ and $S'$ disjoint, then:
    \begin{itemize}
        \item If $a \in A_+$ or $b \in A_+$, then $ab = ba$.
        \item If $a,b \in A_-$, then $ab = -ba$.
    \end{itemize}
\end{enumerate}
and we also add the condition
\begin{enumerate}
    \item[9.] The parity automorphism $\mathcal{P}$ is ultra-local.
\end{enumerate}
In what follows, when we talk about subalgebras of a fermionic local algebra, we will always require them to be $\mathbb{Z}_2$-graded. We will also require (ultra-)local isomorphisms to be $\mathbb{Z}_2$-graded.

We also need to modify the definition of the tensor product for fermionic local algebras.
Let $A$ and $B$ be fermionic local algebras. Then we say that a fermionic local algebra $C$ containing both $A$ and $B$ is the internal tensor product of $A$ and $B$ if:
\begin{itemize}
\item $[A_+, B] = [A, B_+] = 0$.
\item $ab + ba = 0$ for $a \in A_-, b \in B_-$.
\item Treating $A$, $B$ and $C$ as vector spaces, there exists an invertible linear map $l : C \to A \otimes B$, such that $l(a) = a \otimes \unit_B$ for all $a \in A$ and $l(b) = \unit_A \otimes b$ for all $b \in B$.
\end{itemize}
More generally, given two fermionic local algebras $A$ and $B$, we say that another fermionic local algebra $C$ is the (external) fermionic tensor-product of $A$ and $B$ if there are $A',B' \leq C$ such that $C$ is the internal fermionic tensor product of $A'$ and $B'$ and there are ultra-local isomorphisms $A \to A'$ and $B \to B'$. One can show that for any $A$ and $B$, $C$ exists and is uniquely determined up to ultra-local isomorphism.

A local representation on a fermionic local algebra is defined analogously to \autoref{subsec:what_is_a_symmetry}. We specify a fermionic symmetry group $G_f$ with a central subgroup $\mathbb{Z}_2^f \leq G_f$. Then a $G_f$-rep on a fermionic local algebra $A$ is a homomorphism from $G_f$ to the group of local automorphisms of $A$
such that the generator of $\mathbb{Z}_2^f$ maps to the parity automorphism of $A$.

Then, we can define stacking of $G_f$-reps. Let $A$ and $B$ be fermionic local algebras and $G_f$-reps $\mathcal{U}$ and $\mathcal{V}$. Then the stacked $G_f$-rep $\mathcal{U} \otimes \mathcal{V}$ is defined on the internal fermionic tensor product $C \geq A,B$ by the property that for all $g \in G_f$, $\mathcal{U}_g \otimes \mathcal{V}_g$ preserves $A$ and $B$, and the automorphisms induced on $A$ and $B$ are equal to $\mathcal{U}_g$ and $\mathcal{V}_g$ respectively.

\section{Technical details of the \texorpdfstring{$\grpH{3}{G}{\QCAblendclass{d-2}{}}$}{H3} invariant}
\label{appendix:H3_details}

We would like to prove that the 3-cochain $\omega_{g,h,k} \in \QCAblendclass{d-2}{}$ defining the \(\grpH{3}{G}{\QCAblendclass{d-2}{}}\) invariant in \autoref{sec:H3Invariant} is indeed a 3-cocycle, and furthermore, that up to 3-coboundaries it is independent of the choice of restrictions. The proof follows precisely the same algebraic steps described in Ref.~\cite[Appendix~B]{Else2014}, so we will not reproduce them here. However, we will mention that in order to for these steps to work, we need to establish the following lemma:
\begin{lemma}
\label{lem:conjugation_lemma}
 Let $U_R$ be a QCA supported on the region $R \subseteq \mathbb{R}^d$, with $\partial R$ being a $d-1$-dimensional hyperplane, let $V$ be a QCA, and let $V_R$ be a QCA supported on $R$ that acts like $V$ in the interior of $R$. Then $V U_R V^{-1} (V_R U_R V_R^{-1})^{-1}$ is a QCA supported in the vicinity of $\partial R$ and has trivial $(d-1)$-dimensional QCA index.
\end{lemma}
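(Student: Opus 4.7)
The plan is to rewrite $W := V U_R V^{-1}$ and $W' := V_R U_R V_R^{-1}$ so that their quotient becomes a group commutator of QCAs, and then apply the Abelianness and conjugation-invariance of the $(d-1)$-dimensional QCA index $Q_{d-1}$, which holds for $d - 1 \geq 1$.

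First I would set $Y := V V_R^{-1}$. By construction $Y$ is a $d$-dimensional QCA that acts as the identity on operators supported sufficiently far inside $R$ (because $V_R$ agrees with $V$ there) and reduces to $V$ on operators supported sufficiently far outside $R$ (because $V_R$ is supported on $R$). From $V = Y V_R$ one obtains $V U_R V^{-1} = Y V_R U_R V_R^{-1} Y^{-1}$, and hence
\[
W W'^{-1} = Y W' Y^{-1} W'^{-1} = [Y, W'],
\]
the group commutator. The support claim follows immediately: deep inside $R$ the commutator is trivial because $Y$ is the identity there; far outside $R^{+O(r)}$ the commutator is trivial because $W'$ is the identity on operators supported there. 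Thus $[Y, W']$ is supported in a strip around $\partial R$.

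Next, I would compute the $(d-1)$-dimensional QCA index of $[Y, W']$, using that $Q_{d-1}$ is Abelian and that the $(d-1)$-dimensional index is invariant under conjugation by any QCA. I would decompose $Y = Y_{R^c} Z$, where $Y_{R^c}$ is a QCA supported on $R^c$ that agrees with $V$ deep in $R^c$, and $Z$ is a correction supported in a thickened strip around $\partial R$. The existence of such a restriction is an instance of the restriction construction used in \autoref{sec:H2Invariant}, applied to the trivial group. Substituting this into the commutator gives
\[
[Y, W'] = Y_{R^c}\,[Z, W']\,Y_{R^c}^{-1}\cdot [Y_{R^c}, W'].
\]
The factor $[Y_{R^c}, W']$ is supported in the strip and trivial away from it because $Y_{R^c}$ is supported on $R^c$ while $W'$ is supported on $R$; after the decomposition it contributes only trivial $(d-1)$-dimensional index. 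The factor $[Z, W']$, viewed as a QCA on the strip, is the commutator of two $(d-1)$-dimensional QCAs obtained by projecting $Z$ and $W'|_{\text{strip}}$, so its $(d-1)$-dimensional index vanishes by Abelianness of $Q_{d-1}$. Conjugation by $Y_{R^c}$ leaves the $(d-1)$-dimensional index unchanged.

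The hard part will be making the decomposition $Y = Y_{R^c} Z$ rigorous with the required support properties, and showing carefully that the remainder $[Y_{R^c}, W']$ contributes only trivial $(d-1)$-dimensional index rather than a genuine boundary ``pumping''. The conceptual point is that the half-volume restriction used in \autoref{sec:H2Invariant} extends to arbitrary $d$-dimensional QCAs without symmetry, since $\grpH{1}{\{1\}}{Q_d}$ is automatically trivial. An alternative route, which the excerpt explicitly invites, is to mimic the 1D algebraic proof of Ref.~\cite{Else2014} step by step, replacing ``$\mathrm{U}(1)$ phase of a unitary on a finite set'' by ``$(d-1)$-dimensional QCA index on the strip'' and invoking Abelianness of $Q_{d-1}$ wherever that argument invokes Abelianness of $\mathrm{U}(1)$.
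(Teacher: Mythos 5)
Your reduction of $W W'^{-1}$ to a commutator is sound and is essentially the paper's own first move. With $Y = V V_R^{-1}$ you get $WW'^{-1} = [Y,W']$, the support argument is correct, and after splitting $Y = Y_{R^c} Z$ you obtain $Y_{R^c}[Z,W']Y_{R^c}^{-1}$; the leftover factor $[Y_{R^c},W']$ is in fact exactly the identity when $Y_{R^c}$ and $W'$ have genuinely disjoint supports, so your worry about it "pumping" something is moot under a strict reading. The paper reaches the same configuration by writing $V = \Sigma_{\partial R} V_R V_{R^c}$ and using that $V_{R^c}$ commutes with $U_R$, which gives $WW'^{-1} = \Sigma_{\partial R} W' \Sigma_{\partial R}^{-1} W'^{-1}$ directly; your $Z$ plays the role of $\Sigma_{\partial R}$, and the existence of this decomposition is asserted without further justification there as well, so that is not where you diverge.

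The genuine gap is in how you evaluate the $(d-1)$-dimensional index of $[Z,W']$. You claim it is "the commutator of two $(d-1)$-dimensional QCAs obtained by projecting $Z$ and $W'|_{\mathrm{strip}}$" and then invoke Abelianness of $\QCAblendclass{d-1}{}$. But $W' = V_R U_R V_R^{-1}$ is a $d$-dimensional QCA supported on the entire half-volume $R$: it does not live on the strip, and it need not admit any restriction to a slab (restrictions of general QCAs are obstructed, and $U_R$ is arbitrary in this lemma), so there is no "$W'|_{\mathrm{strip}}$" to project. Abelianness of $\QCAblendclass{d-1}{}$ therefore does not apply in the form you use it. The correct step --- isolated in the paper as Lemma~\ref{lem:simpler_conjugation_lemma} --- is that conjugation by an \emph{arbitrary} QCA preserves the $(d-1)$-dimensional index of a strip-supported QCA: one compares $W' Z^{-1} W'^{-1}$ with $(W'\otimes W'^{-1})(Z^{-1}\otimes \id)(W'^{-1}\otimes W')$ after stacking ancillas and uses that $W'\otimes W'^{-1}$ is a finite-depth circuit. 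Then $[Z,W'] = Z\cdot(W'Z^{-1}W'^{-1})$ is a product of two strip-supported QCAs whose classes in $\QCAblendclass{d-1}{}$ are inverse to each other, hence trivial. You name conjugation-invariance of the index in your opening sentence but never deploy it at the one point where it is actually needed. Finally, your fallback of "mimicking the 1D algebra of Ref.~\cite{Else2014}" is circular here: this lemma is precisely the input required to make those algebraic steps go through in $d>1$.
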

This plays the analogous role in the derivation to Eq.~(B9) in Ref.~\cite{Else2014}.

In order to prove Lemma \ref{lem:conjugation_lemma}, we will first give the following Lemma:
\begin{lemma}
\label{lem:simpler_conjugation_lemma}
Let $U$ be a QCA in $d$ dimensions, and let $\Sigma$ be a QCA supported in the vicinity of a $k$-dimensional hyperplane in $\mathbb{R}^d$. Then $\Sigma U \Sigma^{-1} = \Gamma U$, where $\Gamma$ is a QCA supported in the vicinity of the $k$-dimensional hyperplane with the trivial $k$-dimensional QCA class.
\begin{proof}
One can show that $U \Sigma^{-1} U^{-1}$ has the same $k$-dimensional QCA index as $\Sigma^{-1}$. 
[Stack with ancillas and compare to \((U \otimes U^{-1})(\Sigma^{-1}\otimes \id)(U^{-1}\otimes U)\), using that \(U \otimes U^{-1}\) is a finite depth circuit.]
Therefore, we have $U \Sigma^{-1} U^{-1} = \Sigma^{-1} \Gamma$ for some QCA $\Gamma$ supported in the vicinity of the $k$-dimensional hyperplane with trivial $k$-dimensional QCA index. Multiplying this equation on the right by $U$ and on the left by $\Sigma$ gives the desired result.
\end{proof}
\end{lemma}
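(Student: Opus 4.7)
The plan is to rewrite $W := V U_R V^{-1}(V_R U_R V_R^{-1})^{-1}$ as a group commutator of QCAs and then reduce triviality of the $(d-1)$-dimensional index to Lemma~\ref{lem:simpler_conjugation_lemma}. Setting $\Sigma := V V_R^{-1}$ and $T := V_R U_R V_R^{-1}$, one has $V U_R V^{-1} = \Sigma T \Sigma^{-1}$, so $W = \Sigma T \Sigma^{-1} T^{-1}$. Since $V_R$ is supported on $R$, so is $T$. Meanwhile, on any operator whose support lies in $\mathrm{int}_r R$ (with $r$ larger than the ranges of $V$ and $V_R$), $V_R^{-1}$ agrees with $V^{-1}$ and hence $\Sigma$ acts as the identity. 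Thus $\Sigma$ is supported on the extended halfspace $(R^c)^{+r}$. A quick case analysis---applying $W$ to operators supported deep inside $R$ (where $\Sigma$ is trivial) or deep inside $R^c$ (where $T$ is trivial)---shows that $W$ acts trivially away from a fixed neighborhood of $\partial R$, establishing the support claim.

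The nontrivial part is controlling the $(d-1)$-dimensional QCA index. I would use that any QCA supported on a halfspace lies in the trivial QCA class, and so admits a circuit representation whose gates can be pruned to a strictly smaller halfspace. Applied to $\Sigma$, this yields a QCA $\Sigma_{\mathrm{bulk}}$ supported in a subhalfspace of $R^c$ pushed inward from $\partial R$ by a distance $\epsilon$, agreeing with $\Sigma$ in the deep interior of $R^c$. Choosing $\epsilon$ greater than the ranges of $T$ and $\Sigma_{\mathrm{bulk}}$ guarantees that $\Sigma_{\mathrm{bulk}}$ and $T$ have disjoint range-extended supports and thus commute. Defining $\Sigma_\partial := \Sigma \Sigma_{\mathrm{bulk}}^{-1}$, one sees that in the deep interior of $R^c$ both factors agree (as $V$), while in $\mathrm{int}_r R$ both are the identity, so $\Sigma_\partial$ is supported in a thin slab around the $(d-1)$-dimensional hyperplane $\partial R$. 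The commutativity of $\Sigma_{\mathrm{bulk}}$ with $T$ now gives
\begin{equation*}
W = \Sigma_\partial \Sigma_{\mathrm{bulk}} T \Sigma_{\mathrm{bulk}}^{-1} \Sigma_\partial^{-1} T^{-1} = \Sigma_\partial T \Sigma_\partial^{-1} T^{-1}.
\end{equation*}
Lemma~\ref{lem:simpler_conjugation_lemma}, applied with $\Sigma \leftarrow \Sigma_\partial$ and $U \leftarrow T$ at codimension $k=d-1$, then identifies this commutator as a QCA supported near $\partial R$ whose $(d-1)$-dimensional QCA class is trivial, completing the argument.

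The main obstacle lies in justifying the decomposition $\Sigma = \Sigma_{\mathrm{bulk}} \Sigma_\partial$. The required facts are that a QCA which acts as the identity on a halfspace is of trivial QCA class, and that one may extract $\Sigma_{\mathrm{bulk}}$ with finite range confined to a sub-halfspace of $R^c$ while keeping the residual $\Sigma_\partial$ confined to an $O(r)$-neighborhood of $\partial R$---in particular, not allowing $\Sigma_\partial$ to extend in uncontrolled fashion tangentially along $\partial R$. Both should follow once $\Sigma$ is written as a finite-depth circuit whose gates can be located relative to $\partial R$, but verifying that the tangential range remains finite requires tracking the local structure of the circuit decomposition carefully.
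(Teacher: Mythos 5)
Your proposal does not prove the statement in question. The statement to be proved is Lemma~\ref{lem:simpler_conjugation_lemma}: for a QCA $U$ and a QCA $\Sigma$ supported near a $k$-dimensional hyperplane, one has $\Sigma U \Sigma^{-1} = \Gamma U$ with $\Gamma$ supported near the hyperplane and of trivial $k$-dimensional class. What you have written is instead an argument for Lemma~\ref{lem:conjugation_lemma} (the claim about $V U_R V^{-1}(V_R U_R V_R^{-1})^{-1}$), and---crucially---your argument \emph{invokes} Lemma~\ref{lem:simpler_conjugation_lemma} as its key input in the final step (``Lemma~\ref{lem:simpler_conjugation_lemma}, applied with $\Sigma \leftarrow \Sigma_\partial$ and $U \leftarrow T$\dots''). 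Relative to the assigned task this is circular: the thing you were asked to establish is nowhere established, only used. (As a proof of Lemma~\ref{lem:conjugation_lemma} your commutator decomposition $W = \Sigma_\partial T \Sigma_\partial^{-1} T^{-1}$ is a reasonable alternative route to the paper's factorization $V = \Sigma_{\partial R} V_R V_{R^c}$, but that is not the statement under review.)

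The missing content is the actual proof that conjugation by a QCA preserves the $k$-dimensional index and the support near the hyperplane. The paper's argument is short: after stacking with ancillas, $U\Sigma^{-1}U^{-1}\otimes \id = (U\otimes U^{-1})(\Sigma^{-1}\otimes\id)(U^{-1}\otimes U)$, and $U\otimes U^{-1}$ is a finite-depth circuit; conjugation by a finite-depth circuit preserves the $k$-dimensional QCA index and keeps the support within a finite thickening of the hyperplane. Hence $U\Sigma^{-1}U^{-1}$ and $\Sigma^{-1}$ have the same index, so $U\Sigma^{-1}U^{-1} = \Sigma^{-1}\Gamma$ with $\Gamma$ of trivial class supported near the hyperplane, and multiplying on the left by $\Sigma$ and on the right by $U$ gives $\Sigma U\Sigma^{-1} = \Gamma U$. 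None of this appears in your proposal; you would need to supply it (or an equivalent) before your argument for Lemma~\ref{lem:conjugation_lemma} has any foundation.
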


Now we can give the proof of Lemma \ref{lem:conjugation_lemma}.
We can write $V = \Sigma_{\partial R} V_{R} V_{R^c}$, where $V_{R^c}$ is supported on $R^c = \mathbb{R}^d \setminus R$, and $\Sigma_{\partial R}$ is supported in the vicinity of $\partial R$. Then we have
\begin{equation}
    V U_R V^{-1} = \Sigma_{\partial R} ( V_R U_R V_R^{-1} ) \Sigma_{\partial R}^{-1},
\end{equation}
where we used the fact that $V_{R^c}$ commutes with $U_R$. Then, by Lemma \ref{lem:simpler_conjugation_lemma}, we have that
\begin{equation}
\Sigma_{\partial R} ( V_R U_R V_R^{-1} ) \Sigma_{\partial R}^{-1} = \Gamma V_R U_R V_R^{-1},
\end{equation}
where $\Gamma$ is supported on the vicinity of $\partial R$ and has trivial $(d-1)$-dimensional QCA class. The result follows.

\section{The full actions of the QCAs in the \texorpdfstring{$e$-$m$ exchange $\ZZ_2$}{e-m exchange Z2}-rep}\label{sec:em-QCA}

In this appendix, we state the detailed action of the toric code $e$-$m$ exchange QCA $U$, the corresponding $\ZZ_2$-rep $\repU$, and their restrictions, defined in \autoref{sec:em-qubit}.

In \autoref{fig:em-QCA}, the action of $U$ and $U^2$, and $\resU_x^2=\bdyU_{x,x}$ on all the single-site Pauli $X$ (red) and $Z$ (blue) operators, including those near the boundary, are shown.
For $\bdyU_{x,x}$, the actions on $X$ and $Z$ are simply that of $U^2$ as shown in the figure with the additional plaquette and star operators in the highlighted hexagons removed.

\begin{figure}
    \centering
    \includegraphics[width=\linewidth]{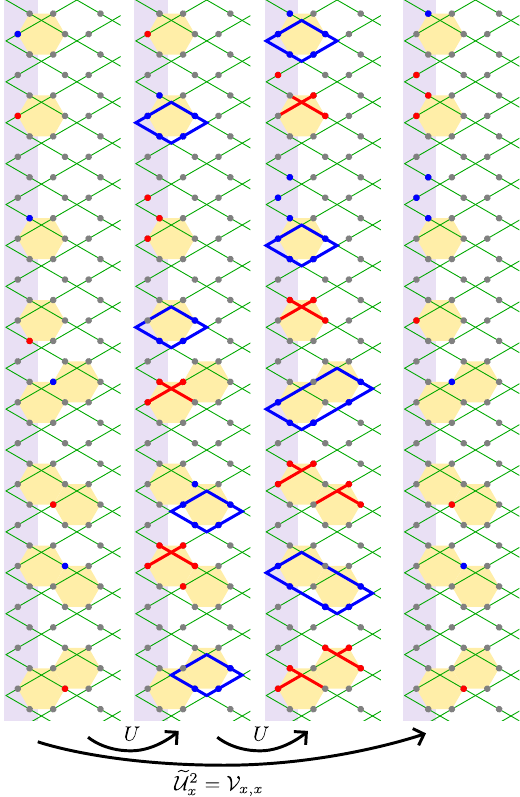}
    \caption{The action of $U$, $U^2$, and $\resU_x^2=\bdyU_{x,x}$ on single-site Pauli operators \(\sigma^x\) (red) and \(\sigma^z\) (blue). Plaquettes highlighted with yellow corresponds to the hexagons where excitations are created by the corresponding single-site Pauli. The left side of the lattice is the boundary. Operators further away from the boundary have the same evolutions as in the last four rows. Note that $\bdyU_{x,x}$ only acts nontrivially on the boundary strip (purple). The meaning of colors follows \autoref{fig:em-stab}.}
    \label{fig:em-QCA}
\end{figure}

\section{Equivalent definitions of invertible commuting model}
\label{appendix:InvertibleIffQCAPreparable}

A commuting model $\{ h_i : i \in I \}$ (with associated Hamiltonian \(H = \sum_{i\in I} h_i\)) in an on-site algebra $A$ is said to be \emph{invertible} if there exists an on-site algebra $\overline{A}$ and a commuting model $\{ \overline{h}_j : j \in J \}$ (with Hamiltonian \(\overline{H}\)) on $\overline{A}$ such the ``stacked'' model
\begin{equation}
    \{ h_i \otimes \mathbb{I} : i \in I \} \cup \{ \mathbb{I} \otimes \overline{h}_j : j \in J \}
\end{equation}
(with Hamiltonian \(H\otimes \unit + \unit \otimes \overline{H}\))
is equivalent by a finite-depth circuit $U_{\mathrm{circ}}$ to a \TrivCommutingModel{} commuting model, i.e.\ one in which each local term acts on a single site, and where the ground state is non-degenerate.

In this appendix we will show that a commuting model is invertible if and only if it is related to an on-site commuting model by a QCA.

First of all, let us suppose $\{ h_i : i \in I \}$ can be related to an on-site commuting model $\{ h_i^{0} : i \in I \}$ by a QCA $U$, that is, $h_i = U[h_i^0]$. Then using the fact that $U \otimes U^{-1}$ is a finite-depth circuit, we immediately find that that $U^{-1}[h_i^0]$ is the stacking inverse for $h_i$.

Conversely, suppose we have an invertible commuting model. The non-degeneracy of the ground state for the stacked model implies non-degeneracy of the ground state for $\overline{H}$, and hence the projected algebra $\overline{A}(\overline{H})$ contains only the scalar multiples of the unit. Hence we can define a local isomorphism $U : A \to A' := (A \otimes \overline{A})(U_{\mathrm{circ}}[\mathbb{I} \otimes \overline{H}])$ through the composition
\begin{multline}
    A \cong A \otimes \overline{A}(\overline{H}) \cong (A \otimes \overline{A})(\mathbb{I} \otimes \overline{H}) \\
    \cong (A \otimes \overline{A})(U_{\mathrm{circ}}[\mathbb{I} \otimes \overline{H}]) = A'.
\end{multline}
For the second isomorphism, we used the stacking property of projected algebras (see Appendix \ref{appendix:AlternativeProjectedAlgebra}).
Note that the fact that $U_{\mathrm{circ}}[\mathbb{I} \otimes \overline{H}]$ is a sum of on-site terms implies that $A'$ is an on-site algebra. However it may not be ultra-locally isomorphic to $A$. Moreover, under the isomorphism $U$, $H \otimes \mathbb{I}$ maps to the image of $U_{\mathrm{circ}}[H \otimes \unit] \in \mathcal{A}(U_{\mathrm{circ}}[\unit \otimes \overline{H}])$ under the projection map to $A'$. Since, by assumption, $U_{\mathrm{circ}}[H \otimes \unit]$ is the sum of on-site terms, it follows that the same is true after the projection. Thus, $U$ maps $H \otimes \unit$ to a sum of on-site terms in $A'$. Non-degeneracy of the ground state of $H \otimes \unit + \unit \otimes \overline{H}$ implies that $U[H]$ has non-degenerate ground state.

Note that a local isomorphism from $A$ to $A'$ is not technically a QCA, since the source and target are \emph{different} on-site algebras. However, one can show that it is possible to rearrange sites in \(A'\) so that it becomes ultra-locally isomorphic to \(A\) using techniques similar to Ref.~\cite[Appendix~A]{Long2024}. Then the local isomorphism can be regarded as mapping \(A\) to \(A\).

\section{Projected and boundary algebras in an infinite system}
\label{appendix:infinite_system}

Projected and boundary algebras, as introduced in \autoref{sec:ConsequencesOfAnomalies}, were defined for finite systems. In this appendix, we construct a definition of a projected algebra which also works in infinite systems. For generality, we will also give the definition of projected algebra inside a general local algebra $A$, without requiring it to be an on-site algebra. We caution the reader that for technical reasons, when we refer to the projected algebra in the arguments in main text, we will more precisely mean the formulation of the projected algebra described in Appendix \ref{appendix:AlternativeProjectedAlgebra}. That version is locally isomorphic, but not necessarily \emph{ultra}-locally isomorphic, to the version described in this appendix.

First we recall that for any $*$-algebra $A$, we can define a \emph{state} on $A$ to be a linear map $\omega : A \to \mathbb{C}$ such that $\omega(a^\dagger) = \omega(a)^*$ and $\omega(a^\dagger a) \geq 0$ for all $a \in A$, and $\omega(\unit) = 1$. In what follows we will find it more convenient to work with ``un-normalized'' states for which we do not demand that $\omega(\unit) = 1$; however we note that every un-normalized state can be obtained simply by multiplying a normalized state by a non-negative real number.

Let us first consider the bosonic case. Suppose we have a commuting Hamiltoinan $H = \sum_{i \in I} h_i$ (recall that in general we interpret this as a formal linear sum).
We require each $h_i$ to be positive, which in the context of a general local algebra we define to mean that $h_i = \gamma_i^\dagger \gamma_i$ for some $\gamma_i \in A$.
We also require the model to be frustration-free, i.e.\ there exists at least one state $\omega$ such that $\omega(H) = 0$. Then, we define $\Omega(H)$ to be the set of all un-normalized states $\omega$ on $A$ such that $\omega(H) = 0$. [Note that given the positivity conditions on the terms of $H$, $\omega(H) = 0$ if and only if $\omega(h_i) = 0$ for all $i \in I$.] We assume that $H$ is frustration-free so that $\Omega(H)$ is not the empty set.

Now we can define the subalgebra $\mathcal{A}(H) \leq A$ according to
\begin{multline}
    \mathcal{A}(H) = \{ a \in A: \omega(a^{\dagger} H a) = \omega(a H a^{\dagger}) = 0 \\ 
    \mbox{ for all $\omega \in \Omega(H)$} \},
\end{multline}
[note that the positivity of the local terms of $H$ ensures that $\omega(a^\dagger H a) = 0$ if and only if $\omega(a^\dagger h_i a) = 0$ for all $i \in I$.]
We can equivalently write
\begin{equation}\label{eqn:InfiniteProjectedAlgebraPreQuotient}
    \mathcal{A}(H) = \{ a \in A : a \omega, a^\dagger \omega \in \Omega(H) \mbox{ for all $\omega \in \Omega(H)$} \},
\end{equation}
where we defined $a\omega$ to be the un-normalized state $(a\omega)(O) = \omega(a^{\dagger} O a)$.

Equation~\eqref{eqn:InfiniteProjectedAlgebraPreQuotient} makes it clear that \(\mathcal{A}(H)\) is closed under multiplication and the adjoint \(a \mapsto a^\dagger\). To show that \(\mathcal{A}(H)\) is closed under addition, one uses the fact that for any positive $h_i \in A$, and any state $\omega$,
\begin{equation}
    \langle a,b \rangle = \omega(a^{\dagger} h_i b) 
\end{equation}
defines a positive-semidefinite Hermitian form on $A$. Therefore, one has a triangle inequality
\begin{equation}
    \langle a + b , a + b \rangle \leq \left(\sqrt{\langle a, a \rangle} + \sqrt{\langle b, b\rangle}\right)^2.
\end{equation}
Therefore, if $\langle a, a \rangle = \langle b,b \rangle =0$ then $\langle a + b, a + b\rangle = 0$.

Next, we define
\begin{equation}
    \mathcal{B}(H) = \{ a \in A : \omega(a^{\dagger} a) = \omega(a a^\dagger) = 0
    \mbox{ for all $\omega \in \Omega(H)$} \},
\end{equation}
Using a triangle inequality as before, one can show that $\mathcal{B}(H)$ is closed under addition, and any element $a \in \mathcal{B}(H)$ satisfies $a \omega = a^\dagger \omega = 0$ for all $\omega \in \Omega(H)$.
Moreover, it is a subalgebra, in fact a two-sided ideal of $\mathcal{A}(H)$. For example, to show that it is a left-ideal, suppose that $a \in \mathcal{A}(H)$ and $b \in \mathcal{B}(H)$. Then
\begin{equation}
    (ab)\omega = a(b\omega) = 0,
\end{equation}
while
\begin{equation}
    (ab)^\dagger \omega = b^\dagger(a^\dagger \omega) = 0,
\end{equation}
using the fact that $a^\dagger \omega \in \Omega(H)$.

Then the projected algebra corresponding to the commuting model \(H\) is the quotient
\begin{equation}
    A(H) = \mathcal{A}(H)/\mathcal{B}(H).
\end{equation}
where we define the locality structure by stating that if $a \in A(H)$, then $a \tl S$ if and only if there is a representative for $a$ in $\mathcal{A}(H)$ which is supported on $S$.

We can perform a similar procedure in the fermionic case (recall the definitions of fermionic local algebra from Appendix~\ref{appendix:fermionic_local_algebra}). The main difference is that, given a fermionic local algebra $A = A_+ \oplus A_-$, we require states to satisfy the condition that $\omega(a) = 0$ for any $a \in A_-$. Physically, this encodes the condition that states with different fermion parity are in different superselection sectors and cannot be coherently superposed. This condition can equivalently be written in terms of the parity automorphism $\mathcal{P}$ as $\mathcal{P} \omega = \omega$, where we defined $(\mathcal{P} \omega)(O) = \omega(\mathcal{P}[O])$.

We can then form the fermionic local algebras $\mathcal{A}(H)$ and $\mathcal{B}(H)$ as before. (One readily verifies that they are $\mathbb{Z}_2$-graded subalgebras of $A$.) One can then show that $\mathcal{A}(H) / \mathcal{B}(H)$ has the structure of a fermionic local algebra.

The definitions described in this appendix, while simple to state, are often somewhat cumbersome to prove results about.
In Appendix~\ref{appendix:AlternativeProjectedAlgebra} below, we will give an alternative equivalent definition of the projected algebra that is sometimes more convenient.

Here we will just mention some results that follow directly from the definitions in the current appendix. We restrict to the case where the ambient local algebra $A$ has the property that $A \{ S \}$ can be given the structure of a finite-dimensional $C^*$-algebra~\cite{Murphy1990CstarBook} for every bounded set $S$, such that for every pair of bounded $S \subseteq S'$, the inclusion $A \{ S \} \to A \{ S' \}$ is a $C^*$-algebra homomorphism. We remark that, in particular, any on-site or invertible local algebra is a locally-$C^*$ algebra.

Firstly, we note that from the above definition, it is clear that normalized states on $A(H)$ are in one-to-one correspondence with normalized states in $\Omega(H)$.
Secondly, if $\Omega(H)$ contains only a single normalized state $\omega$, then $A(H)$ is trivial (i.e. it only contains scalar multiples of the unit). To see this, note that from the result we just mentioned, if $\Omega(H)$ contains only a single normalized state $\omega$, it implies that $A(H)$ is a local algebra that admits only a single normalized state. One can show that such a locally-$C^*$ algebra is necessarily trivial.

\section{An alternative representation of the projected algebra}
\label{appendix:AlternativeProjectedAlgebra}

In this appendix, we give an alternative definition of the projected algebra. For simplicity we will focus on the bosonic case, but equivalent definitions and results can be formulated for the fermionic case.
The intuition is based on the following observation: if $A$ is the algebra of operators on a finite-dimensional Hilbert space, then  the definition of projected algebra can be equivalently formulated as $A(H) = \{ \mathbb{P} a \mathbb{P} : a \in A \}$, where $\mathbb{P}$ is the projector onto the ground state subspace of $H$. Below, we develop a version of this definition that is constructed in a more local way, allowing it to be generalized to infinite systems.
We again restrict to the case where the ambient local algebra $A$ is a locally-$C^*$ algebra (recall the definition from Appendix \ref{appendix:infinite_system}).

Locally-$C^*$ algebras are closely related to another concept, which we call \emph{local nets of $C^*$-algebras}. We define a local net of $C^*$-algebras over a metric space $X$ to be an object  $A$ that assigns to each bounded set $S \subseteq X$ a finite-dimensional $C^*$-algebra $A \{ S \}$, and to each pair of bounded sets $S \subseteq S'$ a $C^*$-algebra homomorphism $\eta_{ S, S' }: A_S \to A_{S'}$, such that $\eta_{S,S} = \id$ and the following diagram commutes for any bounded $S_1, S_2, S_3 \subseteq X$ with $S_1 \subseteq S_2 \subseteq S_3$:
\begin{equation}
    \begin{tikzcd}
         & A \{ S_3 \}  \\
    A\{ S_2 \} \arrow[ur,"\eta_{S_2,S_3}"] & \\
            & A \{ S_1 \} \arrow[lu, "\eta_{S_1,S_2}"] 
            \arrow[uu, "\eta_{S_1,S_3}"']
    \end{tikzcd}.
\end{equation}
We also require that if $S$ and $S'$ are disjoint, then any element of the image of $\eta_{S,S \cup S'}$ commutes with any element of the image of $\eta_{S',S \cup S'}$.

If $A$ and $\widetilde{A}$ are local nets of $C^*$-algebras, we say that an ultra-local homomorphism $\varphi : A \to \widetilde{A}$ is a collection of homomorphisms $\varphi_S : A\{ S \} \to \widetilde{A} \{ S \}$ such that the following diagram commutes for all $S \subseteq S'$:
\begin{equation}
\begin{tikzcd}
    A\{ S' \} \arrow[r,"\varphi_{S'}"] & \widetilde{A} \{ S' \} \\ A \{ S \} \arrow[r, "\varphi_{S}"] \arrow[u, "\eta_{S,S'}"] & \widetilde{A} \{ S \} \arrow[u, "\widetilde{\eta}_{S,S'}"']
    \end{tikzcd}
\end{equation}
We say this is an ultra-local isomorphism if $\varphi_S$ is an isomorphism for all $S$.

We say that a local homomorphism $\varphi : A \to \widetilde{A}$ of range $r$ is a collection of homomorphisms $\varphi_S : A \{ S \} \to \widetilde{A} \{ S^{+r} \}$ such that the following diagram commutes for all $S \subseteq S'$:
\begin{equation}
\begin{tikzcd}
    A\{ S' \} \arrow[r,"\varphi_{S'}"] & \widetilde{A} \{ (S')^{+r} \} \\ A \{ S \} \arrow[r, "\varphi_{S}"] \arrow[u, "\eta_{S,S'}"] & \widetilde{A} \{ S^{+r} \} \arrow[u, "\widetilde{\eta}_{S^{+r},(S')^{+r}}"']
    \end{tikzcd}
\end{equation}
We say that a local homomorphism $\varphi_S$ of range $r$ from $A$ to itself is the identity if $\varphi_S = \eta_{S, S^{+r}}$. In general, we say that a local homomorphism $\varphi : A \to \widetilde{A}$ is a local isomorphism if there exists a local homomorphism $\varphi^{-1}$ such that $\varphi^{-1} \circ \varphi$ and $\varphi \circ \varphi^{-1}$ are the identity.

Given a local net of $C^*$-algebras, we define a subalgebra $B \leq A$ to be a collection of subalgebras $B\{ S \} \leq A \{ S \}$ such that $\eta_{S,S'}(B\{S\}) \leq B(S')$. We say that $B$ is a two-sided ideal of $A$ if $B \{ S \}$ is a two-sided ideal of $ A \{ S \}$ for all $S$. If $B$ is a two-sided ideal of $A$, we can define the quotient net $(A/B) \{ S\} = A \{ S \} / B\{ S \}$.

We define a local net of $C^*$-algebras to be \emph{stable} if the homomorphisms $\eta_{S,S'}$ are injective for all $S,S'$. 
For any unstable local net of $C^*$-algebras $A$, we can define a stable net $\overline{A}$ according to
\begin{equation}
\label{eq:stabilization}
\overline{A} = A / A_0,
\end{equation}
where $A_0$ is the two-sided ideal defined by
\begin{equation}
    A_0 \{ S \} = \bigcup_{S' \supseteq S \, \mathrm{bounded}} \mathrm{ker} \, \eta_{S,S'}.
\end{equation}

Note that any locally-$C^*$ algebra $A$ defines a stable local net of $C^*$-algebras with the homomorphism $\eta_{S,S'}$ being the inclusions $A \{ S \} \to A \{ S '\}$. Conversely, one can argue that any stable local net of $C^*$-algebras defines a locally-$C^*$-algebra; thus, they are essentially equivalent concepts. Moreover, the definitions of (ultra-)local isomorphism, quotients, etc. given above for local nets of $C^*$ algebras correspond to the definitions given for local algebras previously.

Now consider a locally-$C^*$ algebra $A$. We say an element $\mathbb{P} \in A$ is a \emph{projector} if it is positive and satisfies $\mathbb{P}^2 = \mathbb{P}$. We then define a \emph{net of projectors} $\mathbb{P}$ to be an object that assigns a projector $\mathbb{P}_S \in A$ supported on $S$ to each bounded set $S \subseteq X$, such that for any two bounded sets $S \subseteq S'$, we have\footnote{For elements $a,b \in A$, $a \leq b$ means that $b - a$ is positive.} $\mathbb{P}_{S'} \leq \mathbb{P}_S$. We also impose the frustration-free condition: $\mathbb{P}_S \neq 0$ for any bounded set $S \subseteq X$.
Note that for any commuting model $\{ h_i \}$ with $h_i$ positive and supported on $X_i$, we can define the corresponding net of projectors by replacing $h_i$ with a projector $\mathbb{Q}_i$ that annihilates the same set of states (one can show that this always exists in any locally-$C^*$ algebra), and then defining
\begin{equation}
\label{eq:projectors_from_commuting_model}
    \mathbb{P}_S = \prod_{i \in I : X_i \subseteq S} \mathbb{P}_i
\end{equation}
where $\mathbb{P}_i = \unit - \mathbb{Q}_i$. We will also define $\mathbb{Q}_S = \unit - \mathbb{P}_S$.

Given any net of projectors, we can define a projected algebra in an analogous way to Appendix \ref{appendix:infinite_system} by defining $\Omega(\mathbb{P})$ to be the set of states $\omega$ such that $\omega(\mathbb{P}_S) = 1$ for all bounded sets $S$, and then defining
\begin{multline}
    \mathcal{A}(\mathbb{P}) = \{ a \in A : \omega(a \mathbb{Q}_S a^\dagger) = \omega(a^\dagger \mathbb{Q}_S a) = 0 \\ 
    \mbox{ for all bounded $S$ and all $\omega \in \Omega(\mathbb{P})$} \},
\end{multline}
and
\begin{equation}
    \mathcal{B}(\mathbb{P}) = \{ a \in A : \omega(a^\dagger a) = \omega(aa^\dagger) = 0 \mbox{ for all $\omega \in \Omega(\mathbb{P})$} \},
\end{equation}
and
\begin{equation}
    A(\mathbb{P}) = \mathcal{A}(\mathbb{P}) / \mathcal{B}(\mathbb{P}). \label{eq:projector_net_quotient}
\end{equation}

Now we define a \emph{proximity relation} on $X$ to be a symmetric binary relation $\sim$ on $X$. We say that the proximity relation is range-$r$ if $x \sim y$ implies $d(x,y) \leq r$. Then we define $S^{+} = \{ x \in X : x \sim y \mbox{ for some $y \in S$} \}$. With respect to a proximity relation, we say that a net of projectors is \emph{progressive} if for all bounded sets $S \subseteq S'$, we have
\begin{equation}
\mathbb{P}_{S'} = \mathbb{P}_S \mathbb{B},
\end{equation}
for some projector $\mathbb{B}$ that is supported on $(S^c)^{+}$. (Note that this implies that $\mathbb{B}$ and $\mathbb{P}_S$ commute, because the product of two projectors is a projector if and only if they commute.) In particular, given a commuting projector model, \eqnref{eq:projectors_from_commuting_model} defines a progressive net of projectors with respect to the proximity relation
\begin{equation}
    x \sim y \Leftrightarrow \exists i \in I : \{ x, y \} \subseteq X_i.
\end{equation}

Given any progressive net of projectors $\mathbb{P}$ in a locally-$C^*$ algebra $A$, we can define a local net (which might not be stable), which we will call $\mathbb{P} A \mathbb{P}$, according to
\begin{equation}
    (\mathbb{P} A \mathbb{P})\{ S \} = \mathbb{P}_{S} A \{ \mathrm{int} S\} \mathbb{P}_{S},
\end{equation}
with $\mathbb{P}_S$ as the unit of $(\mathbb{P} A \mathbb{P})\{ S \}$, and
with homomorphisms $\eta_{S,S'}(a) = \mathbb{P}_{S'} a \mathbb{P}_{S'}$. Here we have defined $\mathrm{int} S := ((S^c)^{+})^c$. The progressivity ensures that $\eta_{S,S'}$ is a homomorphism, since if $a = \mathbb{P}_S b \mathbb{P}_S$ and $a' = \mathbb{P}_S b' \mathbb{P}_S$ for some $b,b' \in A\{ \mathrm{int} S \}$ then we see that $a$ and $a'$ commute with $\mathbb{B}$, and  $\mathbb{P}_S a' = a'$, and hence we have
\begin{subequations}
\begin{align}
    (\mathbb{P}_{S'} a \mathbb{P}_{S'}) (\mathbb{P}_{S'}a' \mathbb{P}_{S'}) &= \mathbb{P}_{S'} a \mathbb{P}_{S'} a' \mathbb{P}_{S'} \\
    &= \mathbb{P}_{S'} a \mathbb{B} \mathbb{P}_S a' \mathbb{P}_{S'} \\
    &= \mathbb{P}_{S'} \mathbb{B} a b \mathbb{P}_{S'} \\
    &= \mathbb{P}_{S'} ab \mathbb{P}_{S'}
\end{align}
\end{subequations}

We can then form the stabilized net $\overline{\mathbb{P} A \mathbb{P}}$ via \eqnref{eq:stabilization}. Note that this does implicitly depend on a choice of proximity relation, although the choice does not affect the local isomorphism class as long as we consider only finite-range proximity relations. (Thus, different choices will give locally isomorphic, but not necessarily \emph{ultra}-locally isomorphic, results.)

The rest of this appendix will be devoted to proving that $\overline{\mathbb{P} A \mathbb{P}}$ actually agrees (up to local isomorphism) with the projected algebra defined via \eqnref{eq:projector_net_quotient}, and then we will discuss some applications of this result at the end.
Our main result is:
\begin{thm}
\label{thm:equivalent_projected}
    Suppose $\mathbb{P}$ is a net of projectors in a locally-$C^*$ algebra $A$ with respect to a proximity relation of range $r$. Then there exists a local homomorphism (with range $r$) $\varphi : A(\mathbb{P}) \to \overline{\mathbb{P} A \mathbb{P}}$, and an ultra-local homomorphism $\psi : \overline{\mathbb{P} A \mathbb{P}} \to A(\mathbb{P})$ such that $\varphi \circ \psi$ and $\psi \circ \varphi$ are the identity. (More precisely, $\varphi$ maps elements supported on $S$ to elements supported on $S^+$).
\end{thm}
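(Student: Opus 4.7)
The plan is to construct \(\psi\) and \(\varphi\) as natural maps arising from the same underlying operation---sandwiching between projectors \(\mathbb{P}_S\)---and verify they are mutually inverse homomorphisms. Concretely, I would set \(\psi\) to send the stabilization class of \(\mathbb{P}_S a \mathbb{P}_S \in (\mathbb{P}A\mathbb{P})\{S\}\) (with \(a \in A\{\mathrm{int}\,S\}\)) to the class of the same element in \(\mathcal{A}(\mathbb{P})/\mathcal{B}(\mathbb{P}) = A(\mathbb{P})\), and set \(\varphi\) to send \([b] \in A(\mathbb{P})\{S\}\) (with \(b \in \mathcal{A}(\mathbb{P}) \cap A\{S\}\)) to the class of \(\mathbb{P}_{S^+} b \mathbb{P}_{S^+}\) in \(\overline{\mathbb{P}A\mathbb{P}}\{S^+\}\). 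The enlargement \(S \to S^+\) ensures \(b \in A\{\mathrm{int}\,S^+\}\), as required to define an element of \((\mathbb{P}A\mathbb{P})\{S^+\}\).

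First I would establish two basic facts: (i)~\(\mathbb{P}_S a \mathbb{P}_S \in \mathcal{A}(\mathbb{P})\) whenever \(a \in A\{\mathrm{int}\,S\}\), so that \(\psi\) lands in the projected algebra; and (ii)~if \(\mathbb{P}_{S'} a \mathbb{P}_{S'} = 0\) for some \(S' \supseteq S\), then \(\mathbb{P}_S a \mathbb{P}_S \in \mathcal{B}(\mathbb{P})\), so that \(\psi\) descends to the stabilization. Both arguments live in the GNS Hilbert space of a state \(\omega \in \Omega(\mathbb{P})\), in which the GNS vector \(\ket{\omega}\) lies in the range of every \(\mathbb{P}_T\). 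The key algebraic input is the progressivity factorization \(\mathbb{P}_{S'} = \mathbb{P}_S \mathbb{B}\), with \(\mathbb{B}\) a commuting projector supported on \((S^c)^+\); since \(\mathbb{B} \geq \mathbb{P}_{S'}\) we have \(\omega(\mathbb{B}) = 1\) and hence \((\unit - \mathbb{B})\ket{\omega} = 0\). Combined with the fact that \(a \in A\{\mathrm{int}\,S\}\) commutes with \(\mathbb{B}\) (disjoint supports, since \(\mathrm{int}\,S = ((S^c)^+)^c\)), a short calculation using \(\mathbb{P}_S = \mathbb{P}_{S'} + \mathbb{P}_S(\unit - \mathbb{B})\) gives \((\mathbb{P}_S a \mathbb{P}_S)\ket{\omega} = 0\), from which (ii) follows; (i)~is analogous, reducing to \(\mathbb{Q}_T \mathbb{P}_{S\cup T} = 0\) and the same commutation.

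The composition \(\psi \circ \varphi\) is the identity because, for \(b \in \mathcal{A}(\mathbb{P})\) supported on \(S\), the difference \(b - \mathbb{P}_{S^+} b \mathbb{P}_{S^+} = (\unit - \mathbb{P}_{S^+})b + \mathbb{P}_{S^+} b (\unit - \mathbb{P}_{S^+})\) annihilates \(\ket{\omega}\) for every \(\omega \in \Omega(\mathbb{P})\): the second summand because \((\unit - \mathbb{P}_{S^+})\ket{\omega} = 0\), and the first because \(b \in \mathcal{A}(\mathbb{P})\) forces \(b\ket{\omega}\) into the range of \(\mathbb{P}_{S^+}\); the adjoint condition defining \(\mathcal{B}(\mathbb{P})\) is handled identically using \(b^\dagger \in \mathcal{A}(\mathbb{P})\). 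The equality \(\varphi \circ \psi = \id\) is immediate, since \(\mathbb{P}_{S^+}(\mathbb{P}_S a \mathbb{P}_S)\mathbb{P}_{S^+} = \mathbb{P}_{S^+} a \mathbb{P}_{S^+}\) is exactly the image of \(\mathbb{P}_S a \mathbb{P}_S\) under \(\eta_{S,S^+}\) in the stabilization. Multiplicativity of \(\psi\) is immediate from \(\mathbb{P}_S^2 = \mathbb{P}_S\); for \(\varphi\) one needs the commutator-type remainder \(\mathbb{P}_{T^+} a (\unit - \mathbb{P}_{T^+}) b \mathbb{P}_{T^+}\) to lie in the stabilization ideal, which I would again handle by a GNS-style argument leveraging \(a, b \in \mathcal{A}(\mathbb{P})\).

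The hardest step---and the main technical hurdle---is the converse well-definedness direction for \(\varphi\): showing that \(b \in \mathcal{B}(\mathbb{P}) \cap A\{S\}\) forces \(\mathbb{P}_{S'} b \mathbb{P}_{S'} = 0\) for some \(S' \supseteq S^+\). In the finite-volume case this is immediate, because \(\mathbb{P}_{\text{all}}\) is the projector onto the true ground-state subspace, but in infinite volume one only has the local approximants \(\mathbb{P}_{S'}\). My plan is by contradiction: if \(\mathbb{P}_{S'} b \mathbb{P}_{S'} \neq 0\) for all \(S' \supseteq S^+\), then \(\mathbb{P}_{S'} b^\dagger b \mathbb{P}_{S'}\) is a positive nonzero element of the finite-dimensional \(C^*\)-algebra \(A\{S'\}\), so there exists a state \(\phi_{S'}\) on \(A\{S'\}\) with \(\phi_{S'}(\mathbb{P}_{S'}) = 1\) and \(\phi_{S'}(b^\dagger b) > 0\). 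Extending each \(\phi_{S'}\) to a state on \(A\) by Hahn--Banach and taking a weak-\(*\) cluster point should produce \(\omega \in \Omega(\mathbb{P})\) with \(\omega(b^\dagger b) > 0\), contradicting \(b \in \mathcal{B}(\mathbb{P})\). The delicate point is securing a uniform lower bound on \(\phi_{S'}(b^\dagger b)\) in the limit: if \(\|\mathbb{P}_{S'} b^\dagger b \mathbb{P}_{S'}\| \to 0\) as \(S'\) grows, the naive limit yields \(\omega(b^\dagger b) = 0\), and one must either strengthen the hypotheses on the projector net or replace \(\overline{\mathbb{P}A\mathbb{P}}\) by a norm-completed variant to obtain a clean isomorphism.
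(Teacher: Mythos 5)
Your architecture matches the paper's exactly: the same two maps (compression $b \mapsto \mathbb{P}_{S^+} b \,\mathbb{P}_{S^+}$ one way, the inclusion $\mathbb{P}_S A\{\mathrm{int}\,S\}\mathbb{P}_S \hookrightarrow \mathcal{A}(\mathbb{P})$ the other), the same preliminary facts (your (i) is the paper's Lemma on $\mathbb{P}_S a \mathbb{P}_S \in \mathcal{A}(\mathbb{P})$, using progressivity and the commutation of $a$ with $\mathbb{B}$), and the same algebraic verification that the composites are the identity (the paper phrases your GNS computation purely algebraically: $\mathbb{P}_{S'} a \mathbb{P}_{S'} - a = \mathbb{Q}_{S'}a + a\mathbb{Q}_{S'} - \mathbb{Q}_{S'}a\mathbb{Q}_{S'}$ lies in $\mathcal{B}(\mathbb{P})$ because $\mathbb{Q}_{S'} \in \mathcal{B}(\mathbb{P})$ and $\mathcal{B}(\mathbb{P})$ is a two-sided ideal of $\mathcal{A}(\mathbb{P})$; this also disposes of multiplicativity of $\varphi$, which you correctly note reduces to the same point). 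You have also correctly isolated the crux: showing that $b \in \mathcal{B}(\mathbb{P}) \cap A\{S\}$ is annihilated by compression with $\mathbb{P}_{S'}$ at some \emph{finite} stage $S'$.

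However, that crux is exactly where your proposal has a genuine gap, and you acknowledge it yourself: the Hahn--Banach/weak-$*$ cluster-point argument fails precisely when $\|\mathbb{P}_{S'} b^\dagger b\, \mathbb{P}_{S'}\| \to 0$ without ever vanishing, and nothing in your setup rules that out. The paper closes this gap without any limiting argument, using the locally finite-dimensional hypothesis. Define $\mathcal{B}_S^{S'} \subseteq A\{S\}$ to be the elements annihilated (in the $\omega(a^\dagger a)=\omega(aa^\dagger)=0$ sense) by every state $\omega$ with $\omega(\mathbb{P}_{S'})=1$. As $S'$ grows this is an increasing family of subspaces of the finite-dimensional algebra $A\{S\}$, so it stabilizes at some finite $S_*$, and $\mathcal{B}(\mathbb{P})\{S\}$ coincides with the stabilized value $\mathcal{B}_*$. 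Now take a \emph{faithful} state $\omega$ on $A$ (which exists because the norm closure is a separable $C^*$-algebra) and compress it to $\omega' = \mathbb{P}_{S_*}\omega$; then $\omega'$ satisfies $\omega'(\mathbb{P}_{S_*})=1$, so for $b \in \mathcal{B}_*$ one gets $\omega(\mathbb{P}_{S_*} b^\dagger b\, \mathbb{P}_{S_*}) = 0$, and faithfulness forces the \emph{exact} identity $b\,\mathbb{P}_{S_*} = 0$ (and likewise $\mathbb{P}_{S_*} b = 0$) at the finite stage $S_*$ --- stronger than the $\mathbb{P}_{S_*} b\, \mathbb{P}_{S_*} = 0$ you were after, and with no uniform-lower-bound issue to worry about. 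You should replace your compactness argument with this stabilization-plus-faithful-state argument; no strengthening of the hypotheses on the projector net and no norm completion of $\overline{\mathbb{P} A \mathbb{P}}$ is needed.
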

 In particular, this implies that $A(\mathbb{P})$ and $\overline{\mathbb{P} A \mathbb{P}}$ are locally isomorphic.

The proof proceeds by a series of lemmas:

    \begin{lemma}
    \label{lemma:faithful_state}
         We say that a state $\omega$ on a $*$-algebra $A$ is \emph{faithful} if $\omega(a^\dagger a) = 0$ implies $a = 0$. 
         The lemma states that any locally-$C^*$ algebra admits a faithful state.

         \begin{proof}
             The norm closure of a locally-$C^*$ algebra is a separable $C^*$-algebra, and then it is a standard result that any separable $C^*$-algebra admits a faithful state.
         \end{proof}
    \end{lemma}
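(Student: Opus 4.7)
The plan is to follow the two-step strategy sketched in the proof: equip the locally-$C^*$ algebra $A$ with a canonical $C^*$-norm, pass to the norm completion to obtain a separable $C^*$-algebra $\overline{A}$, invoke a standard existence result for faithful states on separable $C^*$-algebras, and finally restrict the resulting state to $A$.

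First I would construct the norm. For each bounded $S \subseteq X$, the subalgebra $A\{S\}$ is a finite-dimensional $*$-algebra. By the locally finite-dimensional assumption together with the other axioms (closure under adjoints, a multiplicative unit in $A\{\emptyset\} \subseteq A\{S\}$), it carries the structure of a finite-dimensional $C^*$-algebra, whose $C^*$-norm $\|\cdot\|_S$ is unique. For $S \subseteq S'$ bounded, the inclusion $A\{S\} \hookrightarrow A\{S'\}$ is an injective unital $*$-homomorphism of finite-dimensional $C^*$-algebras, hence isometric. Therefore the collection $\{\|\cdot\|_S\}$ glues into a well-defined $C^*$-norm $\|\cdot\|$ on $A = \bigcup_S A\{S\}$, independent of the bounded $S$ chosen to contain a given element.

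Next I would take the completion $\overline{A}$ of $A$ in this norm, which is a $C^*$-algebra. In all cases of physical interest (and in particular for $X = \RR^d$ equipped with a locally finite set of lattice sites) there is a countable cofinal family of bounded sets $S_n$, so $A = \bigcup_n A\{S_n\}$ is a countable union of finite-dimensional, hence separable, algebras. Thus $A$ has a countable dense subset, and therefore so does $\overline{A}$, making $\overline{A}$ a separable $C^*$-algebra.

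At this point I would invoke the standard fact that every separable $C^*$-algebra $B$ admits a faithful state: choose a countable dense sequence $\{b_n\}$ in $B$, for each $b_n$ pick a state $\omega_n$ with $\omega_n(b_n^\dagger b_n) = \|b_n\|^2$ (which exists by the $C^*$-identity and the Hahn–Banach construction of states attaining the norm), and set $\omega = \sum_n 2^{-n} \omega_n$; density of $\{b_n\}$ together with continuity of $\omega$ forces $\omega(b^\dagger b) = 0 \Rightarrow b = 0$. Finally I would restrict $\omega$ to $A \subseteq \overline{A}$: if $a \in A$ satisfies $\omega(a^\dagger a) = 0$, then $a^\dagger a = 0$ in $\overline{A}$, hence $a = 0$ in $A$, so the restricted state is faithful in the sense required by the lemma.

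The main technical obstacle is ensuring the $C^*$-norms on the various $A\{S\}$ are mutually compatible, which reduces to the two facts that finite-dimensional $C^*$-algebras admit a unique $C^*$-norm and that injective unital $*$-homomorphisms between them are isometric; both are standard. A secondary point to address is the implicit separability of $X$ needed to make $\overline{A}$ separable, but this is automatic for the lattice systems (with $X = \RR^d$ and locally finite site set) considered throughout the paper.
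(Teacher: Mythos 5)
Your proof is correct and follows essentially the same route as the paper, which simply asserts in two sentences that the norm closure is a separable $C^*$-algebra and invokes the standard fact that separable $C^*$-algebras admit faithful states; you have merely filled in the details (uniqueness of the $C^*$-norm on the finite-dimensional subalgebras, isometry of the inclusions, the $\sum_n 2^{-n}\omega_n$ construction, and the restriction back to $A$). No gaps.
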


\begin{lemma}
\label{lemma:projector_characterization_of_B}
    Let $a \in A$. Then $a \in \mathcal{B}(\mathbb{P})$ if and only if there exists a bounded set $S'$ such that $ a \mathbb{P}_{S'} = \mathbb{P}_{S'} a = 0$
    \begin{proof}
        If $a \mathbb{P}_{S'} = 0 = \mathbb{P}_{S'} a$, then we have $a \mathbb{Q}_{S'} = a$, and hence 
        $\omega(a^\dagger a) = \omega(a^\dagger a \mathbb{Q}_{S'}) = 0$. Similarly we find $\omega(a a^\dagger) = 0$.

        For the converse, first let us define for a bounded set $S'$, $\Omega^{S'}(\mathbb{P})$ to be the states $\omega$ such that $\omega(\mathbb{P}_{S'}) = 1$. Then we have
        \begin{equation}
            \Omega(\mathbb{P}) = \bigcap_{S' \text{ bounded}} \Omega^{S'}(\mathbb{P}).
        \end{equation}
        Now, let us consider some bounded set $S$, and define
        \begin{multline}
            \mathcal{B}_S^{S'} = \{ a \in A\{ S \} : \omega(a^\dagger a) = \omega(a a^\dagger) = 0 \\
            \mbox{ for all $\omega \in \Omega^{S'}(\mathbb{P})$} \},
        \end{multline}
        which is a subalgebra of $A \{ S \}$.
        In particular, define $f(l) = \mathrm{dim} \mathcal{B}_S^{S^{+l}}$. This is an increasing function of $l$ that is valued in the non-negative integers and bounded above by $\mathrm{dim} A \{ S \}$, hence there must be some $l_*$ such that $f(l) = f(l_*)$ for all $l \geq l_*$. This implies that $\mathcal{B}_S^{S'} = \mathcal{B}_{S}^{S_*} := \mathcal{B}_*$ for any bounded $S' \supseteq S_*$, where we defined $S_* = S^{+l_*}$. Therefore, $\mathcal{B}(\mathbb{P}) \{ S \} = \mathcal{B}_*$.

        Now, let $\omega$ be a faithful state on $A$, which exists by Lemma~\ref{lemma:faithful_state}. Now define $\omega' = \mathbb{P}_{S_*} \omega$. Then $\omega'(\mathbb{Q}_{S_*}) = 0$, so $\omega' \in \Omega^{S_*}(\mathbb{P})$.  Therefore, if $a \in \mathcal{B}_*$, then $0 = \omega'(a^\dagger a) = \omega(\mathbb{P}_{S_*} a^\dagger a \mathbb{P}_{S_*})$, which implies $ a \mathbb{P}_{S_*} = 0$. Similarly, we find $\mathbb{P}_{S_*} a = 0$.
    \end{proof}
\end{lemma}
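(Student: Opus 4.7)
The plan is to prove the two directions separately: the easy one by direct computation, and the hard one by reducing the global condition $a \in \mathcal{B}(\mathbb{P})$ to a finite-dimensional statement via a dimension-stabilization argument. For the easy direction, suppose a bounded $S'$ satisfies $a\mathbb{P}_{S'} = \mathbb{P}_{S'} a = 0$. For any $\omega \in \Omega(\mathbb{P})$ we have $\omega(\mathbb{P}_{S'}) = 1$, so $\omega(\mathbb{Q}_{S'}) = 0$. Writing $a = a\mathbb{Q}_{S'}$, I would bound $a^\dagger a = \mathbb{Q}_{S'} a^\dagger a \mathbb{Q}_{S'} \le \|a\|^2 \mathbb{Q}_{S'}$ inside a sufficiently large local $C^*$-algebra $A\{T\}$, and apply $\omega$ to conclude $\omega(a^\dagger a) = 0$; an identical argument gives $\omega(aa^\dagger) = 0$.

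For the converse, assume $a \in \mathcal{B}(\mathbb{P})$ is supported on some bounded set $S$. For each bounded $T \supseteq S$, introduce the local subspace
\[
\mathcal{B}_S^T := \{ b \in A\{S\} : \omega(b^\dagger b) = \omega(bb^\dagger) = 0 \text{ for all } \omega \in \Omega^T(\mathbb{P}) \},
\]
where $\Omega^T(\mathbb{P}) := \{\omega : \omega(\mathbb{P}_T) = 1\}$. Working inside the finite-dimensional $C^*$-algebra $A\{T\}$, states in $\Omega^T(\mathbb{P})$ correspond to density matrices supported on the range of $\mathbb{P}_T$, so I would identify $\mathcal{B}_S^T$ with $\{b : b\mathbb{P}_T = \mathbb{P}_T b = 0\}$. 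The nontrivial direction of this identification uses a faithful state $\omega_0$ on $A$ (guaranteed by Lemma~\ref{lemma:faithful_state}): the normalized functional $\omega'(\cdot) := \omega_0(\mathbb{P}_T \cdot \mathbb{P}_T)/\omega_0(\mathbb{P}_T)$ lies in $\Omega^T(\mathbb{P})$ (well-defined since $\mathbb{P}_T \ne 0$ by frustration-freeness and $\omega_0$ is faithful), and $\omega'(b^\dagger b) = 0$ together with the positivity of $\mathbb{P}_T b^\dagger b \mathbb{P}_T = (b\mathbb{P}_T)^\dagger(b\mathbb{P}_T)$ forces $b\mathbb{P}_T = 0$. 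Because $\mathbb{P}_{T'} \le \mathbb{P}_T$ whenever $T \subseteq T'$, the subspaces $\mathcal{B}_S^T$ are nested increasing in $T$; being bounded above in dimension by $\dim A\{S\}$, they stabilize at some $\mathcal{B}_* = \mathcal{B}_S^{T_*}$ for all $T \supseteq T_*$.

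The main obstacle is then the identification $\mathcal{B}(\mathbb{P})\{S\} = \mathcal{B}_*$: the inclusion $\mathcal{B}_* \subseteq \mathcal{B}(\mathbb{P})\{S\}$ is immediate from $\Omega(\mathbb{P}) \subseteq \Omega^T(\mathbb{P})$, but the reverse requires passing from the intersection $\Omega(\mathbb{P}) = \bigcap_T \Omega^T(\mathbb{P})$ down to a single $\Omega^{T_*}(\mathbb{P})$. I would supply this by a weak-$*$ compactness argument: the $\Omega^T(\mathbb{P})$ form a decreasing nested family of weak-$*$ compact subsets of the state space with intersection $\Omega(\mathbb{P})$, and if $a \in \mathcal{B}(\mathbb{P})\{S\}$ but $a \notin \mathcal{B}_*$, then stabilization gives $a\mathbb{P}_T \ne 0$ for all $T \supseteq T_*$, whence one can pick $\omega_T \in \Omega^T(\mathbb{P})$ with $\omega_T(a^\dagger a) = \|a\mathbb{P}_T\|^2 > 0$. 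A uniform lower bound $\omega_T(a^\dagger a) \ge \delta > 0$ (which I expect to follow from the stabilized structure of the maps $b \mapsto \mathbb{P}_T b \mathbb{P}_T$ on the finite-dimensional quotient $A\{S\}/\mathcal{B}_*$) then yields a weak-$*$ cluster point $\omega_\infty$ lying in every $\Omega^T(\mathbb{P})$ and hence in $\Omega(\mathbb{P})$, yet with $\omega_\infty(a^\dagger a) \ge \delta > 0$, contradicting $a \in \mathcal{B}(\mathbb{P})$. Once $\mathcal{B}(\mathbb{P})\{S\} = \mathcal{B}_*$ is established, setting $S' := T_*$ and invoking the local characterization of $\mathcal{B}_*$ concludes the proof.
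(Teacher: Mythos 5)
Your proposal follows the paper's proof almost step for step: the same auxiliary sets $\Omega^{T}(\mathbb{P})$ and subalgebras $\mathcal{B}_S^{T}$, the same dimension-counting stabilization inside the finite-dimensional space $A\{S\}$, and the same use of a faithful state $\omega_0$ (via the normalized functional $\mathbb{P}_{T_*}\omega_0$) to convert membership in $\mathcal{B}_*$ into the operator equations $a\mathbb{P}_{T_*}=\mathbb{P}_{T_*}a=0$. The easy direction is the same computation, phrased with the operator inequality $\mathbb{Q}_{S'}a^\dagger a\mathbb{Q}_{S'}\le\|a\|^2\mathbb{Q}_{S'}$ rather than Cauchy--Schwarz; both are fine.

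The one place you go beyond the paper is in explicitly isolating the inclusion $\mathcal{B}(\mathbb{P})\{S\}\subseteq\mathcal{B}_*$ as the nontrivial step (the paper passes over it with a bare ``Therefore'' after the stabilization), and this is indeed where the content lies: only $\mathcal{B}_*\subseteq\mathcal{B}(\mathbb{P})\{S\}$ is automatic from $\Omega(\mathbb{P})\subseteq\Omega^{T_*}(\mathbb{P})$, and it is the reverse inclusion that the lemma needs. However, your proposed closure of this step has a genuine gap: the uniform lower bound $\omega_T(a^\dagger a)\ge\delta>0$ is asserted to ``follow from the stabilized structure of the maps $b\mapsto\mathbb{P}_Tb\mathbb{P}_T$,'' but the stabilization is a statement about kernels (equivalently, ranks) of linear maps on the fixed finite-dimensional space $A\{S\}$; it guarantees $a\mathbb{P}_T\ne0$ for every $T\supseteq T_*$, but says nothing about how small $\|a\mathbb{P}_T\|$ becomes as $T$ grows. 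By progressivity one can write $\mathbb{P}_{T'}=\mathbb{P}_T\mathbb{B}$ with $\mathbb{B}$ a commuting projector, so $\|a\mathbb{P}_{T'}\|\le\|a\mathbb{P}_T\|$: the quantity you need to bound below is a decreasing net of strictly positive numbers, and nothing in the rank argument forbids it from tending to $0$, in which case every weak-$*$ cluster point $\omega_\infty$ would satisfy $\omega_\infty(a^\dagger a)=0$ and detect nothing. In fact, the nested-compact-sets argument you invoke shows $\sup_{\omega\in\Omega(\mathbb{P})}\omega(a^\dagger a)=\inf_T\|a\mathbb{P}_T\|^2$, so the missing inclusion is \emph{equivalent} to the uniform lower bound $\inf_T\|a\mathbb{P}_T\|>0$ whenever $a\mathbb{P}_{T_*}\ne 0$ --- you have reformulated the hard step, not supplied it. Closing it requires a further input (e.g., exploiting that the $\mathbb{B}$'s commute with $A\{S\}$ and with $\mathbb{P}_{T_*}$, so that $\|a\mathbb{P}_T\|^2=\|\mathbb{B}\,\mathbb{P}_{T_*}a^\dagger a\mathbb{P}_{T_*}\,\mathbb{B}\|$ must be controlled from below) which appears neither in your write-up nor, explicitly, in the paper.
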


\begin{lemma}
\label{lemma:projected_inclusion}
    Let $a \in A \{ \mathrm{int} S \}$. Then $\mathbb{P}_S a \mathbb{P}_S \in \mathcal{A}(\mathbb{P})$.
    \begin{proof}
    For all $\omega \in \Omega(\mathbb{P})$, and all $S' \supseteq S$, we have
    \begin{subequations}
    \begin{align}
    \omega([\mathbb{P}_S a^\dagger \mathbb{P}_S] \mathbb{P}_{S'} [\mathbb{P}_S a \mathbb{P}_S]) &= \omega([\mathbb{P}_S a^\dagger \mathbb{P}_S] \mathbb{B} \mathbb{P}_S \mathbb{B} [\mathbb{P}_S a^\dagger \mathbb{P}_S]) \\
    &= \omega(\mathbb{B} [\mathbb{P}_S a^\dagger \mathbb{P}_S] [\mathbb{P}_S a^\dagger \mathbb{P}_S] \mathbb{B}) \\
    &= (\mathbb{B} \omega)([\mathbb{P}_S a^\dagger \mathbb{P}_S] [\mathbb{P}_S a^\dagger \mathbb{P}_S]), \\
    &= \omega([\mathbb{P}_S a^\dagger \mathbb{P}_S] [\mathbb{P}_S a^\dagger \mathbb{P}_S]),
    \end{align}
    \end{subequations}
    which shows that $\mathbb{P}_S a \mathbb{P}_S \in \mathcal{A}(\mathbb{P})$. Here we used the fact that $\mathbb{B}$ commutes with $\mathbb{P}_S$ and $a$, and that $\mathbb{B} \omega = \omega$. To see the latter, note that $\omega = \mathbb{P}_{S'} \omega = \mathbb{P}_S \mathbb{B} \omega$, which can only be true if $\mathbb{B} \omega = \omega$.
    \end{proof}
\end{lemma}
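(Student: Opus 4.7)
The plan is to verify directly that $b := \mathbb{P}_S a \mathbb{P}_S$ lies in $\mathcal{A}(\mathbb{P})$, i.e.\ that for every $\omega \in \Omega(\mathbb{P})$ and every bounded $S''$, both $\omega(b^\dagger \mathbb{Q}_{S''} b) = 0$ and $\omega(b \mathbb{Q}_{S''} b^\dagger) = 0$. By the monotonicity of the net of projectors (larger sets give smaller projectors, hence larger $\mathbb{Q}$'s), it is enough to check this for $S'' \supseteq S$: for arbitrary bounded $S''$, one has $\mathbb{P}_{S \cup S''} \leq \mathbb{P}_{S''}$, so $\mathbb{Q}_{S \cup S''} \geq \mathbb{Q}_{S''}$ and a vanishing positive expectation for the larger operator forces vanishing for the smaller one.

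The key tool will be progressivity. Writing $\mathbb{P}_{S''} = \mathbb{P}_S \mathbb{B}$ with $\mathbb{B}$ a projector supported on $(S^c)^+$, I get that (i) $\mathbb{B}$ commutes with $\mathbb{P}_S$, since their product $\mathbb{P}_{S''}$ is itself a projector, and (ii) $\mathbb{B}$ commutes with $a$, since $a$ is supported on $\mathrm{int}\, S = ((S^c)^+)^c$, which is disjoint from the support of $\mathbb{B}$ (axiom~\ref{item:commutator_condition} of local algebras). Using these commutations and $\mathbb{P}_S^2 = \mathbb{P}_S$, I can slide $\mathbb{B}$ all the way out to the left:
\begin{equation}
b^\dagger \mathbb{P}_{S''} b
 = (\mathbb{P}_S a^\dagger \mathbb{P}_S)(\mathbb{P}_S \mathbb{B})(\mathbb{P}_S a \mathbb{P}_S)
 = \mathbb{B}\, b^\dagger b.
\end{equation}

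To conclude, I invoke a standard state-theoretic fact. Since $\mathbb{P}_{S''} \leq \mathbb{B}$ as projectors and $\omega(\mathbb{P}_{S''}) = 1$, we must have $\omega(\mathbb{B}) = 1$, i.e.\ $\omega(\mathbb{I} - \mathbb{B}) = 0$. Then Cauchy--Schwarz applied to the positive semidefinite form $(x,y) \mapsto \omega(x^\dagger y)$ gives $|\omega((\mathbb{I}-\mathbb{B})O)|^2 \leq \omega((\mathbb{I}-\mathbb{B})^2)\, \omega(O^\dagger O) = 0$ for every $O$, hence $\omega(\mathbb{B} O) = \omega(O)$. Applying this with $O = b^\dagger b$ yields $\omega(b^\dagger \mathbb{P}_{S''} b) = \omega(b^\dagger b)$, so $\omega(b^\dagger \mathbb{Q}_{S''} b) = 0$. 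The adjoint condition $\omega(b \mathbb{Q}_{S''} b^\dagger) = 0$ follows by running the same argument with $a$ replaced by $a^\dagger$ (also an element of $A\{\mathrm{int}\, S\}$).

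The main place to be careful is bookkeeping of commutations: one must use progressivity to obtain a $\mathbb{B}$ whose support lies in $(S^c)^+$, and then use that $\mathrm{int}\, S$ was defined precisely as the complement of $(S^c)^+$ so that disjoint-support commutativity in the local algebra applies to $\mathbb{B}$ and $a$. There is no genuine analytic obstacle beyond the state-theoretic step $\omega(\mathbb{B}) = 1 \Rightarrow \omega(\mathbb{B}O) = \omega(O)$, which is classical.
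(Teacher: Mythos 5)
Your proof is correct and follows essentially the same route as the paper's: factor $\mathbb{P}_{S''}=\mathbb{P}_S\mathbb{B}$ by progressivity, commute $\mathbb{B}$ past $\mathbb{P}_S$ and past $a$ (using that $a$ is supported on $\mathrm{int}\,S=((S^c)^+)^c$), and then absorb $\mathbb{B}$ into the state. Your explicit reduction to $S''\supseteq S$ and your Cauchy--Schwarz justification of $\omega(\mathbb{B}O)=\omega(O)$ are just slightly more spelled-out versions of steps the paper leaves terse ($\mathbb{B}\omega=\omega$), not a different argument.
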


\begin{lemma}
\label{lemma:l_lemma}
    Let $a \in \mathcal{A}(\mathbb{P}) \{ S \}$, and let $S' \supseteq S$. Then the linear map
    \begin{equation}
        l_{S,S'} : a \mapsto \mathbb{P}_{S'} a \mathbb{P}_{S'}
    \end{equation}
    satisfies $l_{S,S'}(a) - a \in \mathcal{B}(\mathbb{P})$ and $l_{S,S'}(a) l_{S,S'}(a') - a a' \in \mathcal{B}(\mathbb{P})$.
    
    \begin{proof}
        First of all, we have $\mathbb{P}_{S'} a \mathbb{P}_{S'} - a = \mathbb{Q}_{S'} a + a \mathbb{Q}_{S'} - \mathbb{Q}_{S'} a \mathbb{Q}_{S'}$. Since $\mathbb{Q}_{S'} \in \mathcal{B}(\mathbb{P})$ and $\mathcal{B}(\mathbb{P})$ is a two-sided ideal in $\mathcal{A}(\mathbb{P})$, we find that $l_{S,S'}(a) - a \in \mathcal{B}(\mathbb{P})$.

        From this we find that $l_{S,S'}(a) l_{S,S'}(a') = (a + b) (a' + b') = aa' + b a' + a b' + bb'$ where $b,b' \in \mathcal{B}(\mathbb{P})$. Therefore, again using the fact that $\mathcal{B}(\mathbb{P})$ is a two-sided ideal in $\mathcal{A}(\mathbb{P})$, we find that $l_{S,S'}(a) l_{S,S'}(a') - aa' \in \mathcal{B}(\mathbb{P})$.
    \end{proof}
\end{lemma}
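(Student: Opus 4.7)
My plan is to prove both claims by direct algebraic manipulation, using only two inputs: that $\mathbb{Q}_{S'}:=\unit-\mathbb{P}_{S'}$ lies in $\mathcal{B}(\mathbb{P})$, and that $\mathcal{B}(\mathbb{P})$ is a two-sided ideal in $\mathcal{A}(\mathbb{P})$ (the latter is implicit in the construction of $A(\mathbb{P})$ as the quotient $\mathcal{A}(\mathbb{P})/\mathcal{B}(\mathbb{P})$ established earlier in this appendix, and can be seen directly: if $a\in\mathcal{A}(\mathbb{P})$ and $b\in\mathcal{B}(\mathbb{P})$, then for any $\omega\in\Omega(\mathbb{P})$ we have $a^\dagger\omega\in\Omega(\mathbb{P})$, so $\omega((ab)^\dagger(ab))=(a^\dagger\omega)(b^\dagger b)=0$, and similarly for $(ab)(ab)^\dagger$ and for the right action).

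The first preliminary step is to verify that $\mathbb{Q}_{S'}\in\mathcal{B}(\mathbb{P})$. Since $\mathbb{P}_{S'}$ is a projector, so is $\mathbb{Q}_{S'}$, giving $\mathbb{Q}_{S'}^\dagger\mathbb{Q}_{S'}=\mathbb{Q}_{S'}\mathbb{Q}_{S'}^\dagger=\mathbb{Q}_{S'}$. For any $\omega\in\Omega(\mathbb{P})$ we have $\omega(\mathbb{P}_{S'})=1$ and $\omega(\unit)=1$, hence $\omega(\mathbb{Q}_{S'})=0$. This places $\mathbb{Q}_{S'}$ in $\mathcal{B}(\mathbb{P})$.

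For the first claim, I expand
\begin{equation}
    l_{S,S'}(a)-a=\mathbb{P}_{S'}a\mathbb{P}_{S'}-a=-\mathbb{Q}_{S'}a-a\mathbb{Q}_{S'}+\mathbb{Q}_{S'}a\mathbb{Q}_{S'},
\end{equation}
using $\mathbb{P}_{S'}=\unit-\mathbb{Q}_{S'}$. Each term on the right has $\mathbb{Q}_{S'}\in\mathcal{B}(\mathbb{P})$ as a left or right factor multiplied into $a\in\mathcal{A}(\mathbb{P})$, so by the two-sided ideal property each term lies in $\mathcal{B}(\mathbb{P})$, and hence so does the sum. This proves $l_{S,S'}(a)-a\in\mathcal{B}(\mathbb{P})$.

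For the second claim, write $l_{S,S'}(a)=a+b$ and $l_{S,S'}(a')=a'+b'$ with $b,b'\in\mathcal{B}(\mathbb{P})$ by the first claim. Expanding,
\begin{equation}
    l_{S,S'}(a)\,l_{S,S'}(a')-aa'=ba'+ab'+bb'.
\end{equation}
Each summand on the right is a product of an element of $\mathcal{A}(\mathbb{P})$ (noting that $\mathcal{B}(\mathbb{P})\subseteq\mathcal{A}(\mathbb{P})$) with an element of $\mathcal{B}(\mathbb{P})$, so again by the ideal property each lies in $\mathcal{B}(\mathbb{P})$, and the claim follows.

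There is no real obstacle: once $\mathbb{Q}_{S'}\in\mathcal{B}(\mathbb{P})$ and the ideal property are in hand, the rest is a one-line expansion. The only subtlety worth flagging is that the argument does not require $a$ itself to be supported on $S$ in any special way beyond what $\mathcal{A}(\mathbb{P})\{S\}$ already encodes; the set $S'\supseteq S$ enters only through the choice of projector in $l_{S,S'}$, and the conclusion holds for any such $S'$.
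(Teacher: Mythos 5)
Your proof is correct and follows essentially the same route as the paper's: expand $\mathbb{P}_{S'}a\mathbb{P}_{S'}-a$ in terms of $\mathbb{Q}_{S'}\in\mathcal{B}(\mathbb{P})$, invoke the two-sided-ideal property, and then write $l_{S,S'}(a)=a+b$ with $b\in\mathcal{B}(\mathbb{P})$ to get the multiplicative statement; your extra step of explicitly checking $\mathbb{Q}_{S'}\in\mathcal{B}(\mathbb{P})$ is a welcome addition. One small slip in your parenthetical re-derivation of the ideal property: $\omega\big((ab)^\dagger(ab)\big)=\omega(b^\dagger a^\dagger a b)=(b\omega)(a^\dagger a)$ rather than $(a^\dagger\omega)(b^\dagger b)$ (the latter equals $\omega(ab^\dagger b a^\dagger)$, which is the right quantity for $(ba^\dagger)$, not $ab$); this does not affect the lemma since the ideal property is already established elsewhere in the appendix, but the left- and right-ideal checks use $b\omega=0$ and $a^\dagger\omega\in\Omega(\mathbb{P})$ respectively, so the roles should be kept straight.
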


Using these lemmas, we can give the proof of Theorem~\ref{thm:equivalent_projected}.
\begin{proof}[Proof of Theorem~\ref{thm:equivalent_projected}]
    First, we define $\varphi$ to have components
    \begin{equation}
        \varphi_S : A(\mathbb{P}) \{ S \} \to \overline{(\mathbb{P} A \mathbb{P})} \{ S^{+} \}
    \end{equation}
    that are induced from the linear map $l_{S,S^{+}}$ defined in Lemma~\ref{lemma:l_lemma}. In particular, observe that if $a \in \mathcal{B}(\mathbb{P})\{ S \}$, then by Lemma~\ref{lemma:projector_characterization_of_B}, there exists $S' \supseteq S^{+}$ such that $a \mathbb{P}_{S'} = \mathbb{P}_{S'} a = 0$. This implies that $\mathbb{P}_{S^{+}} a \mathbb{P}_{S^{+}} \in (\mathbb{P} A \mathbb{P})_0 \{ S^{+} \}$. This ensures that $l_{S,S^{+}}$ defines a well-defined linear map $\varphi_S$, and the multiplicative properties of $l$ ensure that $\varphi_S$ is a homomorphism.

    To construct the inverse map, from Lemma~\ref{lemma:projected_inclusion} we have the inclusion
    \begin{equation}
        (\mathbb{P} A \mathbb{P}) \{ S \} \leq \mathcal{A}(\mathbb{P}) \{ S \}
        \label{eq:an_inclusion}
    \end{equation}
    If $a \in (\mathbb{P} a \mathbb{P})_0$, then there exists $S' \supseteq S$ such that $\mathbb{P}_{S'} a \mathbb{P}_{S'} = 0$. From Lemma~\ref{lemma:l_lemma} we have that $\mathbb{P}_{S'} a \mathbb{P}_{S'} - a \in \mathcal{B}(\mathbb{P})$, so we conclude that $a \in \mathcal{B}(\mathbb{P})$. Therefore, the inclusion \eqnref{eq:an_inclusion} induces a homomorphism
    \begin{equation}
        \psi_S :  \overline{\mathbb{P} A \mathbb{P}} \{ S \} \to A(\mathbb{P}) \{ S \},
    \end{equation}
    which defines an ultra-local homomorphism $\psi : \overline{\mathbb{P} A \mathbb{P}} \to A(\mathbb{P})$.
    One then readily verifies (invoking Lemma~\ref{lemma:l_lemma} again) that $\varphi \circ \psi$ and $\psi \circ \varphi$ are the identity. This completes the proof of Theorem~\ref{thm:equivalent_projected}.
\end{proof}

As an example of a result that can be obtained from the new formulation of the projected algebra that would be more challenging to obtain from the original definition, let us prove that, given two nets of commuting projectors $\mathbb{P}$ and $\mathbb{P}'$ on locally-$C_*$ algebras $A$ and $A'$ (we will assume that we fix a proximity relation such that $\mathbb{P}$ and $\mathbb{P}'$ are both progressive), we have an ultra-local isomorphism
\begin{equation}
    \overline{(\mathbb{P} \otimes \mathbb{P}')} \overline{(A \otimes A')} \overline{(\mathbb{P} \otimes \mathbb{P}')} \cong (\overline{\mathbb{P} A \mathbb{P}}) \otimes (\overline{\mathbb{P}' A' \mathbb{P}'}),
\end{equation}
Physically, this encodes the property that the projected/boundary algebra of the stack of two commuting models should be the tensor product of the projected/boundary algebra of the individual commuting models.

Indeed, we have from the definition that
\begin{equation}
    (\mathbb{P} \otimes \mathbb{P}') (A \otimes A') (\mathbb{P} \otimes \mathbb{P}') = (\mathbb{P} A \mathbb{P}) \otimes (\mathbb{P}' A' \mathbb{P}'),
\end{equation}
so the only thing we have to prove is that the stabilization procedure is compatible with the tensor product, which is the content of the following lemma.
\begin{lemma}
    Let $A$ and $\widetilde{A}$ be two local nets of $C^*$ algebras. Then $\overline{A \otimes \widetilde{A}}$ is ultra-locally isomorphic to $\overline{A} \otimes \overline{\widetilde{A}}$.
    \begin{proof}
        First note that
        by a dimension argument similar to the proof of Lemma~\ref{lemma:projector_characterization_of_B}, we can show that for any bounded $S$, there exists a bounded $S' \supseteq S$ such that $A_0 \{ S \} = \mathrm{ker} \, \eta_{S,S'}$ and similarly for $\widetilde{A}_0$ and $(A \otimes \widetilde{A})_0$ (and we can choose $S'$ to be the same for all three). Thus, we have
        \begin{subequations}
        \begin{align}
            \overline{(A \otimes \widetilde{A})} \{ S \} &= (A \otimes \widetilde{A}) / \mathrm{ker} \, (\eta_{S,S'} \otimes \widetilde{\eta}_{S,S'}) \\
            &\cong \mathrm{im} \, (\eta_{S,S'} \otimes \widetilde{\eta}_{S,S'}) \\
            &\cong (\mathrm{im} \, \eta_{S,S'}) \otimes (\mathrm{im} \, \widetilde{\eta}_{S,S'}) \\
            &\cong \overline{A} \{ S \} \otimes \overline{\widetilde{A}} \{ S \}.
        \end{align}
        \end{subequations}
        This defines the ultra-local isomorphism from $\overline{A \otimes \widetilde{A}}$ to $\overline{A} \otimes \overline{\widetilde{A}}$.
    \end{proof}
\end{lemma}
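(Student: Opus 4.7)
The plan is to verify the ultra-local isomorphism one bounded set at a time and then check compatibility with the structure maps. The central idea is that the stabilization $\overline{(\cdot)}\{S\}$ can, at each level $S$, be realized as the image of a single structure map $\eta_{S,S'}$ for $S'$ chosen large enough, so the whole statement reduces to a computation about the image of a tensor product of $C^*$-algebra homomorphisms between finite-dimensional algebras.

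First I would establish a common ``saturation'' scale. Since $A\{S\}$ is finite-dimensional, the chain of subspaces $\ker \eta_{S,S'}$ for $S' \supseteq S$ is increasing but bounded in dimension, so it must stabilize: there exists a bounded $S'_A \supseteq S$ with $A_0\{S\} = \ker \eta_{S,S'_A}$. Performing the same argument for $\widetilde{A}$ and for $A \otimes \widetilde{A}$, and then replacing $S'$ by the union of the three bounded sets obtained this way, I may assume a single $S' \supseteq S$ simultaneously saturates the kernel ideals of all three nets. Then
\begin{equation}
\overline{A}\{S\} \cong \mathrm{im}\,\eta_{S,S'}, \quad \overline{\widetilde{A}}\{S\} \cong \mathrm{im}\,\widetilde{\eta}_{S,S'},
\end{equation}
and similarly $\overline{(A \otimes \widetilde{A})}\{S\} \cong \mathrm{im}(\eta_{S,S'} \otimes \widetilde{\eta}_{S,S'})$, where I use that the structure maps of $A \otimes \widetilde{A}$ are $\eta \otimes \widetilde{\eta}$.

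Second, I would invoke the following linear-algebra/$C^*$-algebra fact: for homomorphisms $\varphi : B_1 \to C_1$ and $\psi : B_2 \to C_2$ between finite-dimensional $C^*$-algebras, $\mathrm{im}(\varphi \otimes \psi) = \mathrm{im}\,\varphi \otimes \mathrm{im}\,\psi$ as subalgebras of $C_1 \otimes C_2$. This follows from the Artin--Wedderburn decomposition (each finite-dimensional $C^*$-algebra is a direct sum of matrix algebras and the homomorphism decomposes block-diagonally into amplifications), or more cheaply from the fact that $\mathrm{im}\,\varphi \cong B_1/\ker\varphi$ and tensor product is exact over $\mathbb{C}$, giving $\mathrm{im}(\varphi \otimes \psi) \cong (B_1 \otimes B_2)/(\ker\varphi \otimes B_2 + B_1 \otimes \ker\psi) \cong \mathrm{im}\,\varphi \otimes \mathrm{im}\,\psi$. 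Applied at scale $S'$, this yields a vector-space (in fact $*$-algebra) isomorphism
\begin{equation}
\varphi_S : \overline{(A \otimes \widetilde{A})}\{S\} \xrightarrow{\cong} \overline{A}\{S\} \otimes \overline{\widetilde{A}}\{S\}.
\end{equation}

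Third, I would verify that the collection $\{\varphi_S\}_{S}$ is ultra-local, i.e.\ intertwines the structure maps of $\overline{(A \otimes \widetilde{A})}$ and of $\overline{A} \otimes \overline{\widetilde{A}}$. For two bounded sets $S \subseteq S''$, by choosing a common saturating $S'$ for both (enlarging if necessary), both sides of the intertwining square are induced by the same map $\eta_{S,S'} \otimes \widetilde{\eta}_{S,S'}$, so commutativity is automatic. Independence of $\varphi_S$ from the particular choice of saturating $S'$ follows because any two such choices can be compared against a common enlargement, and the image stabilizes. The resulting $\varphi$ is clearly a $*$-algebra homomorphism at each level and bijective, hence an ultra-local isomorphism.

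The main obstacle is the tensor-product-of-images identity in step two: one has to be careful that we are computing the right tensor product (the algebraic tensor product of finite-dimensional $C^*$-algebras, which agrees with every $C^*$-tensor product in this setting) and that the convention in the paper's definition of $(A \otimes \widetilde{A})\{S\}$ really is $A\{S\} \otimes \widetilde{A}\{S\}$ with structure maps $\eta_{S,S'} \otimes \widetilde{\eta}_{S,S'}$. Once these bookkeeping points are fixed, the remaining steps are essentially formal.
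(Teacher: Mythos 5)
Your proposal is correct and follows essentially the same route as the paper: stabilize the kernel ideals at a common scale $S'$ via the finite-dimensionality argument, identify each stabilized algebra with the image of a single structure map, and use $\mathrm{im}(\varphi\otimes\psi)\cong\mathrm{im}\,\varphi\otimes\mathrm{im}\,\psi$. Your explicit justifications of the tensor-of-images identity (via exactness of $\otimes_{\mathbb{C}}$) and of the intertwining with the structure maps are exactly the details the paper leaves implicit.
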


Another result that is easy to show is that if $\mathbb{P}$ is the trivial net, i.e.\ $\mathbb{P}_S = \unit$ for all bounded $S$ (and we choose the trivial proximity relation $x \sim y \Leftrightarrow x = y$, then $\overline{\mathbb{P} A \mathbb{P}}$ is ultra-locally isomorphic to $A$. More generally, if there exists a set $W$ such that $\mathbb{P}_S$ is always supported on $W \cap S$ for all bounded $S$, then we can choose the proximity relation such that if $x \in W^c$ and $y \in X$ then $x \sim y \Leftrightarrow x = y$. In that case one has that $\overline{\mathbb{P} A \mathbb{P}} \{ W^c \}$ is ultra-locally isomorphic to $A \{ W^c \}$.

\section{Existence of fermionic representations}
\label{appendix:fermionic_symmetries}

In this appendix, we will prove the statement about fermionic symmetries mentioned in \autoref{subsec:symmetric_constraints}. Recall that a fermionic symmetry is a representation of \(G_f\) on the local automorphisms of a fermionic local algebra, such that the generator of \(\ZZ_2^f \leq G_f\) maps to a fixed fermion parity automorphism \(\mathcal{P}\), which is a part of the data for the fermionic local algebra. 
We prove that if \(A\) is an on-site fermionic algebra and \(G_f\) is a compact Lie group with \(\ZZ_2^f\) contained in its center, then there is an on-site fermionic algebra \(B\) such that \(A \otimes B\) supports a fermionic representation of \(G_f\).

Let $A$ be a full fermionic matrix algebra on a finite-dimensional Hilbert space $\mathcal{H}_A$ with fermion parity operator $P_A$. Assuming that $\mathrm{dim} \mathcal{H}_A > 0$, it must be the case that either the $+1$ parity subspace of $\mathcal{H}_A$ or the $-1$ parity subspace must have dimension $\geq 1$. Without loss of generality let us assume that the $+1$ parity subspace has dimension $m \geq 1$. Then if we tensor with a full fermionic matrix algebra $B$ on an $D$-dimensional Hilbert space $\mathcal{H}_B$, with $P_B$ acting as $-1$, we see that the dimension of the $-1$ parity subspace of $\mathcal{H}_{A \otimes B} := \mathcal{H}_A \otimes \mathcal{H}_B$ is $Dm$.

Since $G_f$ is a compact Lie group it admits a finite-dimensional faithful unitary representation. Decomposing this representation as a sum of irreps, it must be the case that $\mathbb{Z}_2^f$ acts non-trivially in at least one of these irreps in order for the original representation to be faithful. We denote this irrep $U(g)$, and
we set $D$ equal to its dimension.  Since $\mathbb{Z}_2^f$ is contained in the center, by Schur's Lemma it must act as a scalar in an irreducible representation, hence the generator of $\mathbb{Z}_2^f$ must act as $-1$ since we required it to act non-trivially. Now if we form the tensor-product representation $U'(g) = U(g) \otimes \mathbb{I}_m$, it has dimension $Dm$ and still retains the property that the generator of $\mathbb{Z}_2^f$ acts as $-1$. But now we form the Hilbert space $\mathcal{H}_{A \otimes B}$ as described in the previous paragraph. Then we can define a representation $\widetilde{U}(g)$ of $G_f$ on $\mathcal{H}_{A \otimes B}$ which acts block-diagonally as $U'(g)$ on the $-1$-parity subspace and as the identity on the $+1$ parity subspace. It follows that $\mathbb{Z}_2^f$ acts as fermion parity in this representation. Therefore, we have constructed a fermionic representation of $G_f$ on $\mathcal{H}_{A \otimes B}$. By taking tensor-products, it follows that for any on-site fermionic algebra $A$, there exists an on-site fermionic algebra $B$ with the same support such that there is a fermionic representation of $G_f$ on $A \otimes B$.

\section{Stable equivalence classes of local automorphisms}
\label{appendix:stable_laut}
Let $A$ be a local algebra over $\mathbb{R}^{d}$. We define a set $\mathfrak{L}(A)$ according to
\begin{multline}
    \mathfrak{L}(A) = \{ (B,\varphi) : \mbox{$B$ is an on-site algebra over $\mathbb{R}^d$,} \\ \mbox{$\varphi$ is a local automorphism of $A \otimes B$} \}.
\end{multline}
We furthermore define a monoid operation on this according to
\begin{equation}
\label{eq:laut_monoid}
    (B,\varphi) \cdot (B', \varphi') = (B \otimes B',(\varphi \otimes \mathbb{I}_{B'}) \circ (\varphi' \otimes \mathbb{I}_{B})),
\end{equation}
where $\circ$ denotes composition of local automorphisms on $A \otimes B \otimes B'$.

Then we say that $(B'',\varphi'')$ is a \emph{blend} from $(B,\varphi)$ to $(B',\varphi')$ if there exists a finite $\xi$ such that $B''$ is the same as $B'$ on $(\xi,\infty) \times \mathbb{R}^{d-1}$ and the same as $B$ on $(-\infty,\xi) \times \mathbb{R}^{d-1}$, and $\varphi''$ agrees with $\varphi$ and $\varphi'$ for operators supported on $(-\infty,\xi) \times \mathbb{R}^{d-1}$ and $(\xi,\infty) \times \mathbb{R}^{d-1}$ respectively. Then we define blend equivalence through a chain of blends analogously to Appendix \ref{appendix:BlendEquivalenceDef}.

If we denote blend equivalence by the equivalence relation $\sim$, then the monoid operation~\eqref{eq:laut_monoid} is compatible with this equivalence relation, and therefore $\mathcal{L}(A) := \mathfrak{L}(A) / \sim$ defines a monoid. In fact, one can verify that it is in fact a group, i.e.\ every element has an inverse. To see this, note that $[(B,\varphi^{-1})]$ is the inverse element of $[(B,\varphi)]$ (where $[\cdot]$ denotes equivalence classes), by a similar argument to the one that shows that if $U$ is a QCA then $U^{-1}$ is its stacking inverse (\autoref{sec:CommutingBoundaryAnomalies}).

\section{Simple realizations of Abelian quantum doubles}
\label{appendix:SimpleRealizationDouble}

In this appendix, we generalize the result of \autoref{subsec:H2_constraints} that if a \emph{simple realization} of the toric code is symmetric under a \(\ZZ_2\)-rep, then the \(\ZZ_2\)-rep acts as an \(e\)-\(m\) exchange on anyons if and only if it has nontrivial \(\grpH{2}{\ZZ_2}{\QCAblendclass{1}{}}\) class. Namely, we show that in a simple realization of \emph{any} Abelian quantum double model, the action of a \(G\)-rep on anyons determines its \(\grpH{2}{G}{\QCAblendclass{1}{}}\) class.

Consider a commuting model such that the ground state is in the topological phase of the quantum double of a finite Abelian group $T$. Suppose that the ground state of the commuting model is fixed by a \(G\)-rep \(\repU\). Then \(\repU\) maps isolated excitations with a definite topological charge (anyon type) to other isolated excitations, which must also have definite topological charge. Thus, \(\repU\) determines a representation of \(G\) in the group of anyon automorphisms for the topological order of the ground state. Let the group of anyon automorphisms for the \(T\)-quantum-double be \(\mathcal{T}\). Then we have a homomorphism
\begin{equation}\label{eqn:GrepAnyonAuto}
    \repU_*: G \to \mathcal{T}.
\end{equation}
We will show that if the commuting model qualifies as a ``simple realization'' of the $T$-quantum double, then \(\repU_*\) uniquely determines the \(\grpH{2}{G}{\QCAblendclass{1}{}}\) anomaly of \(\repU\).

We define ``simple realization'' analogously to the case of the $\mathbb{Z}_2$-quantum double discussed in \autoref{subsec:H2_constraints}. Firstly, we require that ``simple realizations'' have a boundary algebra that is locally isomorphic (up to tensoring with on-site algebras) to $A^T$, the subalgebra of $T$-symmetric operators in an on-site algebra in which each site carries a regular representation of $T$. There is a second condition to be a ``simple realization'' which we will state below. We note that, in particular, the standard quantum double Hamiltonian~\cite{Kitaev2003FaultTolerant}, viewed as a commuting model, qualifies as a simple realization.

Now we invoke the theory of local automorphisms of $A^T$ as described in Ref.~\cite{Ma2024}. (We will make the assumption that the classification of Ref.~\cite{Ma2024} applies to the stable equivalence classes, which we have called $\mathcal{L}$ in \autoref{subsec:H2_constraints}, though Ref.~\cite{Ma2024} striclty speaking did not consider stable equivalence classes.) In particular, Ref.~\cite{Ma2024} establishes that there is a surjective homomorphism \(k:\mathcal{L}/\QCAblendclass{1}{} \to \mathcal{T}\). Composing this with the homomorphism $\widetilde{\beta} : G \to \mathcal{L}/\QCAblendclass{1}{}$ discussed in Section \autoref{subsec:H2_constraints}, we obtain a homomorphism $\pi : G \to \mathcal{T}$. We state as one of the assumptions about ``simple realizations'' that $\pi$ actually agrees with the action of the symmetry on the bulk anyons in the ground state, i.e. $\pi = \mathcal{U}_*$.

Moreover, we observe that the index of an element $U \in \mathcal{L}$ can be expressed as
\begin{equation}
    \mathcal{I}(U) = a(U) d_{k(U)},
\end{equation}
where \(d_\mathfrak{t}\) is the quantum dimension of \(\mathfrak{t} \in \mathcal{T}\), regarded as an extrinsic defect~\cite{Barkeshli2019Fractionalization}, and $a(U) \in \mathbb{Q}_\times$.
Thus, combined with the general discussion of the boundary trivialization condition in \autoref{subsec:H2_constraints}, we see that the anomaly class in \(\grpH{2}{G}{\QCAblendclass{1}{}}\) is fully determined by $\pi$, and in particular by $d_{\pi(g)}$.

\section{Anyon automorphisms and lattice anomalies in MBL phases of Abelian quantum double models}
\label{appendix:TCBoundaryStructure}

In \autoref{sec:ToricCodeMBL} we conjectured that a \(\ZZ_2\)-rep which preserves the ground state of a Hamiltonian \(H\) in the MBL phase of the toric code acts as an \(e\)-\(m\) exchange anyon automorphism if and only if its \(\grpH{2}{\ZZ_2}{\QCAblendclass{1}{}}\) anomaly is nontrivial (Conjecture~\ref{conj:em_swap}). 
In this appendix, we extend Conjecture~\ref{conj:em_swap} to MBL phases of more general \(T\)-quantum-double models (with \(T\) finite and Abelian) and more general finite symmetry groups \(G\). Further, we make progress towards a proof of our conjecture by making a characterization of the symmetric boundary algebras in such a phase when the symmetry maps LIOMs to other LIOMs. 

The generalization of Conjecture~\ref{conj:em_swap} is stated as follows.

\begin{conjecture}
\label{conj:em_swap_generalized}
Consider an MBL LIOM algebra $L$ that lies in the MBL phase of the \(T\)-quantum-double model~\cite{Kitaev2003FaultTolerant}. Suppose \(\repU\) is a \(G\)-rep which preserves $L$, i.e.\ \(\repU_g[L] = L\). Suppose furthermore that there is a pure state $\ket{\Psi}$ which is a simultaneous eigenstate of all the elements of $L$ and is left invariant by $\repU$. Then \(\repU_*\), the action of \(\repU\) on anyons on top of $\ket{\Psi}$ as in \eqnref{eqn:GrepAnyonAuto}, determines the \(\grpH{2}{G}{\QCAblendclass{1}{}}\) class of \(\repU\).
\end{conjecture}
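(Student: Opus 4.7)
The plan is to parallel the strategy outlined for Conjecture~\ref{conj:em_swap} in \autoref{sec:ToricCodeMBL}, replacing toric-code specific inputs by the general Abelian-quantum-double simple-realization machinery established in Appendix~\ref{appendix:SimpleRealizationDouble}. First I would use that $L$ lies in the MBL phase of the $T$-quantum-double to obtain a finite-depth circuit $V$ with $l_i = V[l_i^{\mathrm{QD}}]$, where $\{l_i^{\mathrm{QD}}\}$ are the standard vertex and plaquette generators of the quantum-double Hamiltonian~\cite{Kitaev2003FaultTolerant}. Conjugating the $G$-rep by $V^{-1}$ and replacing $\ket{\Psi}$ by $V^{-1}\ket{\Psi}$, I may assume throughout that $\repU$ preserves the standard quantum-double LIOM algebra $L_{\mathrm{QD}}$ and fixes its canonical simultaneous eigenstate. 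The payoff of this reduction is that a standard half-volume restriction of $\{l_i^{\mathrm{QD}}\}$ has boundary algebra locally isomorphic (up to stacking with on-site algebras) to the symmetric algebra $A^T$ appearing in Appendix~\ref{appendix:SimpleRealizationDouble}.

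Next I would treat the case in which $\repU$ actually permutes the canonical generators $l_i^{\mathrm{QD}}$. Here the standard half-volume restriction yields a symmetric boundary algebra giving a simple realization of the $T$-quantum-double phase, so the analysis of Appendix~\ref{appendix:SimpleRealizationDouble} applies directly: the boundary trivialization condition \eqnref{eq:cocycle_induction}, combined with the quantum-dimension index formula $\mathcal{I}(U) = a(U) d_{k(U)}$ and the surjection $k:\mathcal{L}/\QCAblendclass{1}{} \twoheadrightarrow \mathcal{T}$ from Ref.~\cite{Ma2024}, expresses $[\omega] \in \grpH{2}{G}{\QCAblendclass{1}{}}$ purely in terms of the induced homomorphism $G \to \mathcal{T}$. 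This homomorphism coincides with $\repU_*$ by the standard argument identifying the permutation of boundary condensates under $\repU$ with the bulk anyon automorphism on $\ket{\Psi}$, via string operators terminating at the boundary. This half of the argument is what the appendix is expected to establish unconditionally, and it yields the claimed determination of the anomaly class whenever the generating set is preserved as a set.

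The main obstacle---and the reason the full statement remains a conjecture---is that preservation of $L_{\mathrm{QD}}$ alone does not force $\repU$ to act by a permutation of any chosen generating set; an arbitrary local automorphism of $L_{\mathrm{QD}}$ is allowed, and such an automorphism may fail to be $G$-covariant with respect to the standard vertex/plaquette decomposition. My plan for this step is to generalize the construction alluded to for the toric code in \autoref{sec:ToricCodeMBL}: starting from the electric and magnetic projectors of the $T$-quantum double, one averages symmetry-adapted combinations over $G$ to produce a $G$-covariant family of local projectors with the same ground space as $\{l_i^{\mathrm{QD}}\}$. The resulting boundary algebra should factorize, up to locally imposed constraints, as the tensor product of the simple-realization algebra $A^T$ with an on-site algebra subject to local relations. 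The hard step, which we do not see how to complete in full generality, is to show that these residual constraints cannot support a local automorphism squaring to translation (or a higher power obstructing triviality of the appropriate cocycle class) other than one descending from a genuine element of $\mathcal{T}$. Granted this rigidity, the anomaly class is again determined by $\repU_*$ via the argument of the preceding paragraph, completing the proof of the conjecture.
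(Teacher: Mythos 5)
Your proposal follows essentially the same route as the paper's own partial argument in Appendix~\ref{appendix:TCBoundaryStructure}: reduce to the standard quantum-double LIOM algebra by a finite-depth circuit, handle the case where the generators are merely permuted via the simple-realization machinery of Appendix~\ref{appendix:SimpleRealizationDouble}, and in the general case pass to a $G$-covariant over-complete family of projectors (the paper uses $\{\mathbb{P}_{v,g},\mathbb{P}_{p,g}\}_{g\in G}$ rather than a group average, but to the same effect) whose boundary algebra factors as $A^T$ tensored with a locally constrained on-site algebra. The statement remains a conjecture precisely because of the gap you flag---ruling out local automorphisms of the constrained factor that square to a translation without descending from a genuine anyon automorphism---and the paper leaves exactly this step open as well.
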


As progress towards proving this conjecture, we characterize the boundary algebras of weakly symmetric MBL Hamiltonians in the \(T\)-quantum-double phase. Namely, we show that there is a weakly symmetric commuting model \(H\) with terms in \(L\) and ground state \(\ket{\Psi}\), and a symmetric restriction \(\widetilde{H}\) such that the boundary algebra is
\begin{equation}\label{eqn:TCBounadryStructure}
    A_W(\widetilde{H}) \cong A^T \otimes A_{\mathrm{cons}},
\end{equation}
where \(A^T\) is locally isomorphic to the boundary algebra of the standard quantum-double-model, and \(A_{\mathrm{cons}}\) is an on-site algebra with local constraints. 
That is, $A_{\mathrm{cons}}$ is a projected algebra of an on-site algebra, as defined in \autoref{sec:CommutingModelsBoundaryAlgebra} and Appendices~\ref{appendix:infinite_system}-\ref{appendix:AlternativeProjectedAlgebra}.

In the boundary algebra of a simple realization of the \(T\)-quantum-double, \(\repU_*\) indeed determines the \(\grpH{2}{G}{\QCAblendclass{1}{}}\) invariant (\autoref{subsec:H2_constraints}, Appendix~\ref{appendix:SimpleRealizationDouble}). 
Completing the proof of Conjecture~\ref{conj:em_swap_generalized} would require extending a similar result to boundary algebras of the form \eqnref{eqn:TCBounadryStructure}.

We now proceed to the construction for \eqnref{eqn:TCBounadryStructure}. 
For concision, we will focus on \(T=\ZZ_2\)---the toric code, with the LIOM algebra being generated by star and plaquette operators---and later explain the generalization to all finite Abelian \(T\).
We choose signs for the toric code star, \(A_v\), and plaquette, \(B_p\), operators such that \(\ket{\Psi}\) is the simultaneous \(+1\)-eigenstate of all \(A_v\) and \(B_p\).

Unlike the simple example in \autoref{sec:em-qubit} where $\repU_x$ simply exchanges the star and plaquette LIOMs for each unit cell, in the current setting the stars and plaquettes can be mapped by \(\repU_g\) to a complicated sum of products of other stars and plaquettes.
To construct an MBL Hamiltonian and boundary algebra which is invariant under each $\repU_g$, we cannot simply choose the bulk commuting terms $h_i$ to be the stars and plaquettes and $\widetilde I$ (the subset of terms included in the restricted Hamiltonian \(\widetilde{H}\)) to index projectors supported within some \(W \subseteq\RR^2\).

Instead, we choose an over-complete set of bulk terms: $\mathbb{P}_{v,g}$ and $\mathbb{P}_{p,g}$, which are projectors defined to annihilate the $+1$-eigenspaces of $\repU_g[A_v]$ and $\repU_g[B_p]$, respectively, where $v$ and $p$ run over all stars and plaquettes, and $g$ runs over all (finitely many) elements of $G$.
Since $\repU$ preserves the LIOM algebra, we have that $\mathbb{P}_{v,g}$ and $\mathbb{P}_{p,g}$ are each a sum of products of some $A_v$ and $B_p$. Further, the ground state is still annihilated by all of these projectors, as \(\repU_g\) preserves that state, and each \(\mathbb{P}_{v, g}\) is supported within a finite range of \(A_v\) (similarly for \(\mathbb{P}_{p,g}\)) as \(\repU_g\) is always internal for finite \(G\).

\begin{figure}
    \centering
    \includegraphics[width=\linewidth]{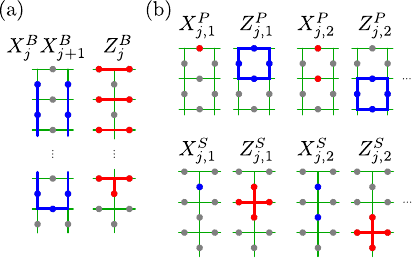}
    \caption{Construction of the boundary algebra for the toric code. Below each operator, a portion of the lattice is shown (with the top being the boundary) showing the definition of the operator on the lattice. Circles indicate qubits and red/blue indicate Pauli $X$/$Z$ operators respectively. (a)~The part of the boundary algebra isomorphic to \(A^{\ZZ_2}\) is generated by anyon string operators terminating on the boundary. (b)~The generators of \(M\), and their correspondence with the generators of an on-site algebra. The $y$ coordinate extends downwards until the region containing all the (original) star and plaquette operators indexed by $\widetilde{I}$.
    }
    \label{fig:tc-bdy-alg}
\end{figure}

To construct the restriction \(\widetilde{H}\), we choose the index set $\widetilde{I}$ to include all projectors \(\mathbb{P}_{p,1}\) and \(\mathbb{P}_{v,1}\) supported in some large region (say a half volume) and all \(\mathbb{P}_{p,g}\) and \(\mathbb{P}_{v,g}\) corresponding to these plaquettes \(p\) and vertices \(v\). Then \(\repU_g\) permutes these projectors, and the associated boundary algebra is symmetric so that the map \(\beta : G \to \mathcal{L}\) can be defined (\autoref{subsec:H2_constraints}). We choose \(W\), the total support of the boundary algebra, to be \(X_{\widetilde{H}}^{+\xi}\), where \(\xi\) is much larger than the range of any $\repU_g$.
Then, the restriction of the $G$-rep, $\widetilde{\repU}_g$, can be chosen so that its action on \(X_{\widetilde{H}}^{+\xi/2}\) is the same as that of $\repU_g$, and that any \(\bdyU_{g,h}\) is supported in \((W^c)^{+\xi/2} \cap W\).

First, we focus on the projectors \(\mathbb{P}_{p,1}\) and \(\mathbb{P}_{v,1}\). \autoref{fig:tc-bdy-alg} illustrates a presentation of the boundary algebra for the projectors with \(g=1\) (the usual toric code boundary) which explicitly demonstrates that it is a tensor product
\begin{equation}
    A_{\bdy} \cong A^{\ZZ_2} \otimes M.
\end{equation}
Here, the subalgebra isomorphic to \(A^{\ZZ_2}\) is represented by anyon loops which begin and end on the boundary of \(W\), which are exchanged by an \(e\)-\(m\) automorphism. 
\autoref{fig:tc-bdy-alg}(a) provides an identification of these strings with the generators \(X^B_j X^B_{j+1}\) and \(Z^B_j\) of a \(\ZZ_2\)-symmetric subalgebra of an on-site algebra with symmetry \(\prod_j Z^B_j\). 
Indeed, the generators for the subalgebra shown in \autoref{fig:tc-bdy-alg}(a) are exactly the same as those for the usual generators of the toric code boundary algebra~\cite{Ho2015Edge,Long2024Edge}.
\autoref{fig:tc-bdy-alg}(b) shows such an isomorphism between the remaining generators of \(A_{\bdy}\) and an on-site algebra \(M\) with with Pauli-\(Z\) and \(X\) generators. We choose Pauli-\(Z\)s to be represented by the plaquette and star operators supported in \((W^c)^{+\xi} \cap W\), labeled  \(Z^S_{j,n}\) and \(Z^P_{j,n}\) in \autoref{fig:tc-bdy-alg}(b), where \(j\) indexes the position of the operator parallel to the boundary, and \(n\) indexes the perpendicular position.

Now we restore $\mathbb{P}_{v,g}$ and $\mathbb{P}_{p,g}$ for \(g \neq 1\). By assumption, these are sums of products of $A_v$ and $B_p$, and so are represented in the boundary algebra for $\mathbb{P}_{v,1}$ and $\mathbb{P}_{p,1}$ by finite sums of products of \(Z^S_{j,n}\) and \(Z^P_{j,n}\).
The symmetric boundary algebra \(A_W(\widetilde{H})\) is the projected algebra of \(A_\bdy\), with projectors represented by $\mathbb{P}_{v,g}$ and $\mathbb{P}_{p,g}$. These projectors are all supported in the on-site part of \(A_\bdy\), so Eq.~\eqref{eqn:TCBounadryStructure} follows.

To extend this construction to all finite Abelian groups \(T = \ZZ_{n_1} \times \cdots \times \ZZ_{n_p}\), we note that the \(T\)-quantum-double LIOMs are equivalent by a finite depth circuit to a stack of \(\ZZ_{n_i}\)-quantum-double LIOMs.
The construction of a Hamiltonian with commuting projector terms \(\mathbb{P}_j\) and boundary algebra as in \autoref{fig:tc-bdy-alg} for the \(\ZZ_2\) toric code extends to each of the \(\ZZ_{n_i}\) factors.
However, the additional terms \(\repU_g[\mathbb{P}_j]\) required to make the Hamiltonian symmetric now act on all the \(\ZZ_{n_i}\) boundaries.
Regardless, the resulting boundary algebra is of the form Eq.~\eqref{eqn:TCBounadryStructure}.

\section{Bulk-boundary correspondence for blend anomalies}
\label{sec:BlendBulkBoundary}

In this appendix we adapt the bulk-boundary correspondence between conjugacy anomalies and \(G\)-QCAs in \autoref{sec:BulkBoundaryCorrespondence} to lattice anomalies defined in terms of blends, as used in the rest of the paper.
An analogous construction can also be made for commuting models, but for the sake of brevity we will only discuss \(G\)-QCAs.

As discussed in \autoref{sec:BulkBoundaryCorrespondence}, formulating this correspondence requires an expanded notion of \(G\)-QCA which includes QCAs with symmetries which do not necessarily act on-site. The expanded definition is presented in Appendix~\ref{sec:G-QCA}. In Appendix~\ref{sec:anomaly_to_symmetric} we construct an injective homomorphism
\begin{equation}
    \alpha : \LatAnom{d}{G} \to \QCAblendclass{d+1}{G}
\end{equation}
which encodes the bulk-boundary correspondence. Indeed, \(\alpha\) has a left inverse given by the boundary algebra map \(\beta\) appearing in \autoref{sec:BulkBoundaryCorrespondence}. Finally, in Appendix~\ref{sec:anomaly_to_QFT}, we show that we continue to have a commuting square
\begin{equation}
\begin{tikzcd}
 \QCAblendclass{d+1}{G} \arrow[r,"e_*"] 
 &\invblendclass{d+1}{G} \arrow[d,"\text{edge}"] \\
 \LatAnom{d}{G} \arrow[u,hook,"\alpha"] \arrow[r,"\varphi"] & \QFTAnom{d}{G}
\end{tikzcd},
\label{eqn:AppendixSquare}
\end{equation}
similar to \eqnref{eqn:LatticeToQFTSquare}, except that the left vertical map may not be an isomorphism.

\subsection{Expanded definition of \texorpdfstring{\(G\)}{G}-QCA}
\label{sec:G-QCA}

In this appendix, we use an alternative definition of \(G\)-QCA, presented in this section for \(d \geq 2\).

As in \autoref{sec:BulkBoundaryCorrespondence}, a \(d\)-dimensional \(G\)-QCA is a pair \((U, \onsiteU)\), where \(\onsiteU\) is a \(d\)-dimensional \(G\)-rep on an on-site algebra \(M\), and \(U\) is a QCA on \(M\) such that
\begin{equation}
    \onsiteU_g U = U \onsiteU_g
\end{equation}
for all \(g \in G\). However, we impose different restrictions on the \(G\)-rep \(\onsiteU\), which is no longer demanded to be on-site. Instead, we demand a weaker set of conditions. First, \(\onsiteU\) must factor as a tensor product over the last coordinate in \(\RR^d\). That is,
\begin{equation}\label{eqn:GQCAFactorization}
    \onsiteU = \bigotimes_{z \in \ZZ} \repU_z,
\end{equation}
where \(\repU_z\) is a \(G\)-rep on \(M\{\RR^{d-1} \times \{z\}\}\). 
Second, each \(\repU_z\) must be anomaly free as a \((d-1)\)-dimensional \(G\)-rep.
Finally, \(\onsiteU\) must admit a symmetric product state, which we denote \(\ket{0}\) (suppressing the dependence on \(\onsiteU\)).
(The last condition is only for convenience when constructing \(e_*\). It can always be guaranteed by stacking with a non-anomalous \(G\)-rep on which \(U\) acts trivially.)

Blending of \(G\)-QCAs is defined as usual.
A \(G\)-QCA \((V, \onsiteV)\) is a blend between \((U, \onsiteU)\) and \((W, \onsiteW)\) across \(R \subseteq\RR^d\), written \((U, \onsiteU) \BlendVia{(V,\onsiteV)} (W, \onsiteW)\), if \(U \BlendVia{V} W\) is a blend of QCAs and \(\onsiteU \BlendVia{\onsiteV} \onsiteW\) is a blend of \(G\)-reps.

\begin{figure}
    \centering
    \includegraphics[width=\linewidth]{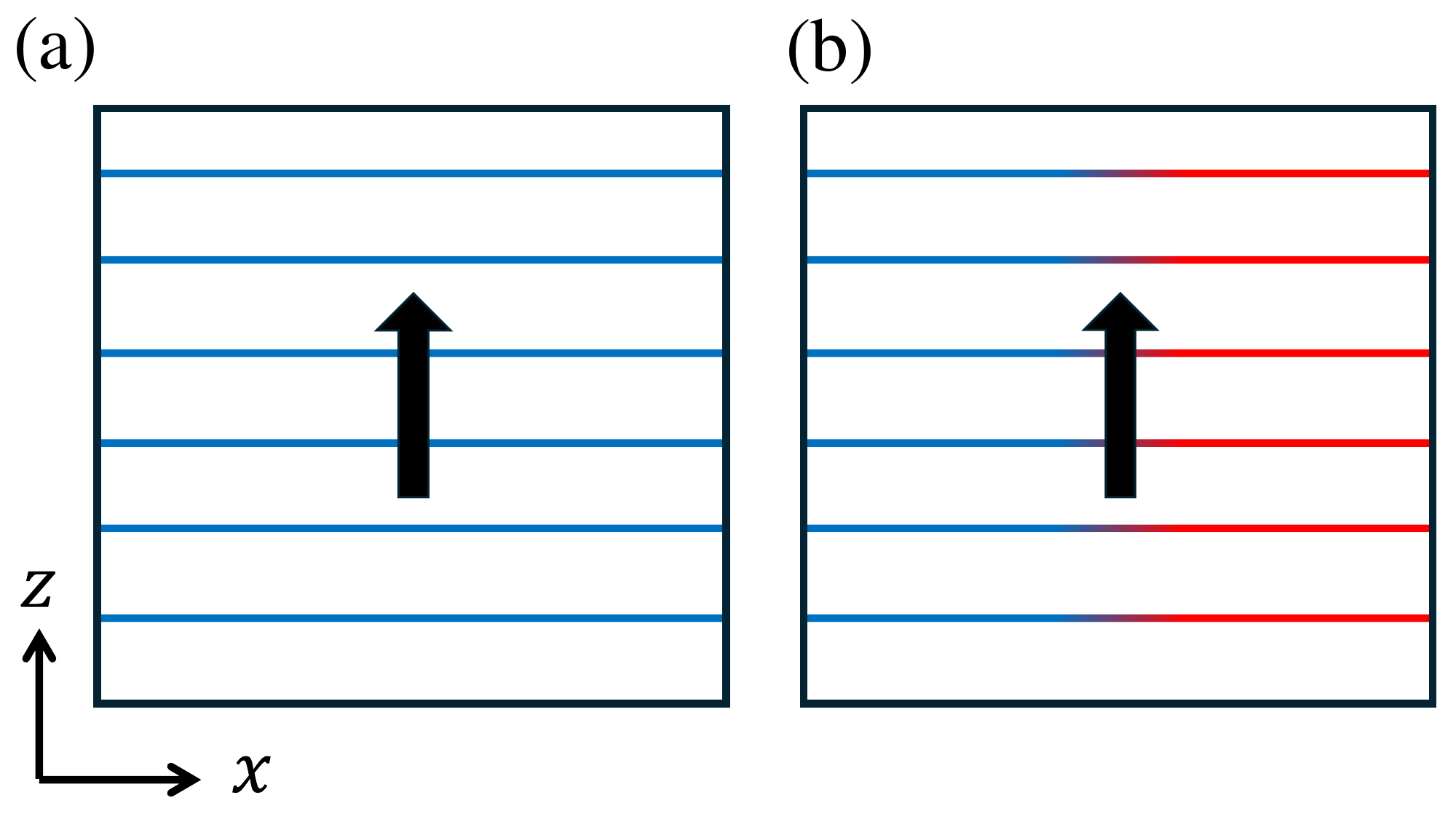}
    \caption{(a)~\(G\)-QCAs are required to be symmetric under a \(G\)-rep which factorizes as a tensor product along the \(z\) axis. For instance, a \(G\)-QCA can act as a translation in the \(z\) direction (arrow) and be invariant under a translationally-invariant stack of \(G\)-reps (blue lines). (b)~However, blends between \(G\)-QCA are defined along another axis (\(x\)). The \(G\)-reps appearing in blends must also factor as tensor products (blue to red lines).}
    \label{fig:GQCA_orientation}
\end{figure}

To define the set of blend equivalence classes of \(G\)-QCAs, which with some abuse of notation we continue to denote \(\QCAblendclass{d}{G}\), we choose \(R = (-\infty, 0]\times \RR^{d-1}\) to be a half volume aligned along the first coordinate in \(\RR^d\). Note that this is a \emph{different} coordinate to that appearing in \eqnref{eqn:GQCAFactorization}. We denote the first coordinate by \(x\) (\autoref{fig:GQCA_orientation}). Equivalence classes are then defined as in Appendix~\ref{appendix:BlendEquivalenceDef}.

\subsection{The bulk and boundary maps}
\label{sec:anomaly_to_symmetric}

The construction of the boundary-to-bulk map \(\alpha\) will be analogous to the similar map in \autoref{sec:BulkBoundaryCorrespondence}. The bulk-to-boundary map \(\beta\) will be identical to \autoref{sec:BulkBoundaryCorrespondence}.

We begin with \(\alpha\). We start with a lattice anomaly \([A,\repU] \in \LatAnom{d}{G}\) with representative \((A, \repU)\), where \(A\) is an invertible algebra and \(\repU\) is a \(G\)-rep on \(A\). We construct a \((d+1)\)-dimensional \(G\)-QCA \((S_A, \onsiteU)\) and define \(\alpha([A,\repU])\) to be the blend class of this \(G\)-QCA.

We begin by constructing the on-site algebra \(M\) on \(\RR^{d+1}\), and the \(G\)-rep \(\onsiteU\) it carries.
Let \((\overline{A}, \overline{\repU})\) be the spatial reflection of \((A, \repU)\), which is an inverse for \((A, \repU)\) by the folding trick of Appendix~\ref{appendix:BlendEquivalenceDef}. That is, their tensor product is anomaly free. Then by \autoref{sec:ConjAnomBlendIffConjugate}, \(\overline{A}\) is also a conjugacy inverse for \(A\), in that there is a local isomorphism
\begin{equation}
    V : A \otimes \overline{A} \to M'_0,
\end{equation}
where \(M'_0\) is on-site. 
Define an anomaly free \(G\)-rep on \(M'_0\) by conjugation of \(\repU \otimes \overline{\repU}\),
\begin{equation}
    \repU'_0 = V (\repU \otimes  \overline{\repU}) V^{-1}.
\end{equation}
In order to ensure that \(\onsiteU\) has a symmetric tensor product state, we include another factor of \(M_0'\) which carries the complex conjugate \(G\)-rep to \(\repU'_0\),
\begin{equation}
    M_0 = M_0' \otimes M_0', \quad
    \repU_0 = \repU_0' \otimes \repU_0'^*.
\end{equation}
To define \(\repU_0'^*\) for a unitary \(G\)-rep, fix any tensor product basis for the local Hilbert spaces of \(M_0'\) and act on \(\repU_{g0}'\) by complex conjugation in this bases. For an antiunitary \(G\)-rep \(\repU_{g0}' = \mathfrak{U}'_{g0} K^{\sigma(g)}\) (where \(K\) is complex conjugation in a fixed basis), define \(\repU_{g0}'^* = \mathfrak{U}'^*_{g0} K^{\sigma(g)}\). With these choices, the maximally entangled state between the \(M_0'\) factors (in a basis fixed by \(K\)) is a symmetric product state.

To complete the construction of the \((d+1)\)-dimensional algebra and \(G\)-rep, we stack these \(M_0\) factors,
\begin{equation}
    M = \bigotimes_{z \in \ZZ} M_z, \quad
    \onsiteU = \bigotimes_{z \in \ZZ} \repU_z,
\end{equation}
where \(M\{\RR^d \times \{z\}\} = M_z \cong M_0\), and \(\repU_z\) is conjugate to \(\repU_0\) under the ultra-local isomorphism identifying \(M_z\) with \(M_0\). The symmetric product state \(\ket{0}\) is again a maximally entangled state between the \(\repU'_0\) and \(\repU_0'^*\) factors.

The definition of the QCA \(S_A\) is the same as in \autoref{sec:BulkBoundaryCorrespondence} (\autoref{fig:InvAlgPumpingMap}), with the complex conjugate factors acting as spectators. That is, act by \(V^{-1} \otimes \id\) on each \(M_z \cong M_0' \otimes M_0'\) factor to transmute these to \(A \otimes \overline{A}\otimes M_0'\), and then translate the \(A\) factors one unit towards positive \(z\) and apply \(V\otimes \id\) to return to the original on-site algebra. Define \(\alpha\) by
\begin{subequations}
  \begin{align}
      \alpha: \LatAnom{d}{G} &\to \QCAblendclass{d+1}{G} \\
      [A,\repU] &\mapsto [S_A, \onsiteU],
  \end{align}  
\end{subequations}
where \([\cdot]\) denotes equivalence classes, similarly to \autoref{sec:BulkBoundaryCorrespondence}.

The bulk-to-boundary map \(\beta\) is the same boundary algebra map from \autoref{sec:BulkBoundaryCorrespondence}. That is, given a \(G\)-QCA \((U, \onsiteU)\), define the invertible boundary algebra, \(A\), across \(W = \RR^d \times (-\infty,0]\) as in \eqnref{eqn:GQCABoundaryAlgebra}, and equip it with a representation of \(G\), \(\repU\), as in \eqnref{eq:QCA_transformed_symmetry}. Note that, to apply \eqnref{eq:QCA_transformed_symmetry}, it is sufficient for \(\onsiteU\) to be a stack of layers of lower-dimensional anomaly free \(G\)-reps as in \eqnref{eqn:GQCAFactorization}, rather than being on-site. Thus, \((A,\repU)\) is a well-defined \(G\)-rep. Define
\begin{subequations}
  \begin{align}
    \beta:  \QCAblendclass{d+1}{G} &\to \LatAnom{d}{G} \\
    [U, \onsiteU] &\mapsto [A,\repU].
  \end{align}  
\end{subequations}

The maps \(\alpha\) and \(\beta\) are well defined. Indeed, if we have a blend of \(G\)-reps \((A, \repU) \BlendVia{(B, \repV)} (C, \repW)\) then we have a blend between \((S_A,\onsiteU)\) and \((S_C, \onsiteW)\) across \(R = (-\infty,0]\times \RR^d\) given by \((S_A, \onsiteU) \BlendVia{(S_B, \onsiteV)} (S_C, \onsiteW)\). It was important here that we defined blends along the \(x\) axis, while we constructed \(S_A\) as a stack along the \(z\) axis. Similarly, if we have a blend \((U, \onsiteU) \BlendVia{(V, \onsiteV)} (Y, \onsiteW)\) (across \(R\)), then the boundary algebra of \((V, \onsiteV)\) (across \(W\)) is a blend between the boundary algebras of \((U, \onsiteU)\) and \((Y, \onsiteW)\).
One can check that \(\alpha\) and \(\beta\) are homomorphisms.

That \(\beta \alpha\) is the identity map on \(\LatAnom{d}{G}\) follows from \autoref{fig:InvAlgPumpingMap}(c), as before. However, it is no longer clear that \(\alpha \beta\) is the identity on \(\QCAblendclass{d}{G}\). One needs to construct a blend across \(R\) between \((U,\onsiteU)\) and a \(G\)-QCA which pumps its boundary algebra across \(W\). Showing that this is possible appears to require stronger results regarding the relation of blends along different axes for \(G\)-QCAs which are not on-site. As such, while we can be sure that \(\alpha\) is injective and that \(\beta\) is surjective, we have not proved that these maps are actually isomorphisms for all \(G\).

\subsection{Lattice anomalies and QFT anomalies}
\label{sec:anomaly_to_QFT}

The square Eq.~\eqref{eqn:AppendixSquare} is completed by constructing the top horizontal map \(e_*\). As in \autoref{sec:BulkBoundaryCorrespondence}, we define a map \(e\) from \(G\)-QCAs to invertible \(G\)-states by applying the QCA to a product state.

Just as we have used an expanded definition of \(G\)-QCA, we must also allow an expanded definition of invertible \(G\)-state. The most general definition of an invertible \(G\)-state is a triple \((A, \repU, \ket{\psi})\) where \((A, \repU)\) is a \(G\)-rep and \(\ket{\psi}\) is a state on \(A\), such that there is a triple \((\overline{A}, \overline{\repU},\ket{\overline{\psi}})\) with \(\overline{A}\) being an inverse for \(A\) and such that \(\ket{\psi}\otimes \ket{\overline{\psi}}\) is related by a finite depth circuit to a product state in \((A\otimes \overline{A}, \repU \otimes \overline{\repU})\). When \(A\) is an on-site algebra, we will usually omit it from the triple. To define the blend classes \(\invblendclass{d}{G}\), we consider the subset of invertible \(G\)-states with \(A = M\) on-site and \(\repU = \onsiteU\) factoring along \(z\) as in \eqnref{eqn:GQCAFactorization}. Then \(\invblendclass{d}{G}\) consists of blend equivalence classes of such states, with the blends being across \(R\) (aligned along \(x\)). The blends must also use \(G\)-reps which factor along \(z\).

This definition of \(\invblendclass{d}{G}\) makes \(e_*\) well-defined. Define the evaluation map \(e\) by applying a \(G\)-QCA \((U, \onsiteU)\) to the symmetric product state \(\ket{0}\).
\begin{equation}
    \repU_g \ket{0} = e^{i \phi_g} \ket{0}
\end{equation}
for all \(g \in G\). Then we simply define \(e\) by the evaluation of \(U\) on this state,
\begin{equation}
    e(U, \onsiteU) = (U\ket{0}, \onsiteU),
\end{equation}
and \(e_*\) is the induced map on blend equivalence classes. Indeed, if \((V, \onsiteV)\) is a blend between \((U, \repU)\) and \((W, \onsiteW)\), then \(e(V, \onsiteV)\) demonstrates a blend between \(e(U, \onsiteU)\) and \(e(W, \onsiteW)\), so \(e_*\) is well-defined.

That \eqnref{eqn:AppendixSquare} commutes follows by showing that any symmetric Hamiltonian \(H\) for a \(G\)-rep \((A,\repU)\) can occur as an edge Hamiltonian for \(e(S_A, \onsiteU)\). This calculation is identical to that for conjugacy anomalies in \autoref{sec:ConjAnomToQFT}.

\section{Homotopy picture for the \texorpdfstring{\(\grpH{3}{G}{\QCAblendclass{d-2}{}}\)}{H3} invariant}
\label{appendix:3Simplex}

In this appendix, we give details for the calculation of the \(\grpH{3}{G}{\QCAblendclass{d-2}{}}\) invariant in the homotopy picture of \autoref{sec:HomotopyTheory}. Namely, we write out the cocycle representative illustrated in \autoref{fig:domain-walls}(c) as a network of blends in \(\Pi(\QCA_{d})\), and show that this network evaluates to the \((d-2)\)-dimensional QCA \(\mathcal{W}_{g,h,k}\), as defined in \eqnref{eqn:WSymmetryDef}.

In the homotopy picture, the cocycle representative for a \(\grpH{3}{G}{\QCAblendclass{d-2}{}}\) class corresponds to the homotopy class of a surface in \(\Pi(\QCA_{d})\),
which is geometrically the surface of a tetrahedron. 
It is constructed as follows. The four \emph{vertices} are
\begin{equation}
    \id,\ \repU_g,\ \repU_{gh}, \text{ and } \repU_{ghk}.
\end{equation}
Assuming all of these are trivial as \(d\)-dimensional QCAs, we have six \emph{edges} connecting these given by blends along the coordinate first \(x_1\):
\begin{widetext}
\begin{equation}
    \id \BlendVia{\resU_g} \repU_g,\quad
    \id \BlendVia{\resU_{gh}} \repU_{gh},\quad
    \id \BlendVia{\resU_{ghk}} \repU_{ghk},\quad
    \repU_g \BlendVia{\repU_g \resU_{h}} \repU_{gh},\quad
    \repU_{gh} \BlendVia{\repU_{gh} \resU_{k}} \repU_{ghk},\quad\text{and }\quad
    \repU_g \BlendVia{\repU_g \resU_{hk}} \repU_{ghk}.
\end{equation}
Assuming that the \((d-1)\)-dimensional QCA defined by
\begin{equation}
    \bdyU_{g,h} 
    = \left(\id \BlendVia{\resU_g} \repU_g \BlendVia{\repU_g\resU_h} \repU_{gh} \InverseBlendVia{\resU_{gh}} \id\right)
    = \resU_g \resU_h \resU^{-1}_{gh}
\end{equation}
is trivial (and similarly for \(\bdyU_{g,hk}\), \(\bdyU_{gh,k}\) and \(\bdyU_{h,k}\)), then we have \((d-1)\)-dimensional blends \(\id \BlendVia{\widetilde{\bdyU}_{g,h}} \bdyU_{g,h}\) along the coordinate \(x_2\). This gives us the following four \emph{faces}:
\begin{subequations}\label{eqn:SimplexFaces}
\begin{equation}
    \begin{tikzcd}
        & \repU_g \arrow[r,"\repU_g \resU_h",""{name=U,below}] & \repU_{gh} \arrow[dr,bend left,"\overline{\resU_{gh}}"] & \\
        \id \arrow[ur,bend left,"\resU_{g}"] \arrow[rrr,"\id"{below},""{name=D,above},xshift=-2pt] &&& \id
        \arrow[Rightarrow,from=D,to=U,"\widetilde{\bdyU}_{g,h}"]
    \end{tikzcd},\qquad
    \begin{tikzcd}
        & \repU_g \arrow[r,"\repU_g \resU_{hk}",""{name=U,below}] & \repU_{ghk} \arrow[dr,bend left,"\overline{\resU_{ghk}}"] & \\
        \id \arrow[ur,bend left,"\resU_{g}"] \arrow[rrr,"\id"{below},""{name=D,above},xshift=-4pt] &&& \id
        \arrow[Rightarrow,from=D,to=U,"\widetilde{\bdyU}_{g,hk}"]
    \end{tikzcd},
\end{equation}%
\begin{equation}
    \begin{tikzcd}
        & \repU_{gh} \arrow[r,"\repU_{gh} \resU_{k}",""{name=U,below}] & \repU_{ghk} \arrow[dr,bend left,"\overline{\resU_{ghk}}"] & \\
        \id \arrow[ur,bend left,"\resU_{gh}"] \arrow[rrr,"\id"{below},""{name=D,above},xshift=-2pt] &&& \id
        \arrow[Rightarrow,from=D,to=U,"\widetilde{\bdyU}_{gh,k}"]
    \end{tikzcd},\qquad\text{and}\qquad
    \begin{tikzcd}
        & \repU_{gh} \arrow[r,"\repU_{gh} \resU_{k}",""{name=U,below}] & \repU_{ghk} \arrow[dr,bend left,"\overline{\repU_g \resU_{hk}}"] & \\
        \repU_g \arrow[ur,bend left,"\repU_g\resU_{h}"] \arrow[rrr,"\repU_g"{below},""{name=D,above},xshift=-2pt] &&& \repU_g
        \arrow[Rightarrow,from=D,to=U," \repU_g \widetilde{\bdyU}_{h,k} "]
    \end{tikzcd}.
\end{equation}
\end{subequations}
The vertical double arrows (\(\Uparrow\)) are blends between the lower arrows and the composition of the upper arrows.

To evaluate the blend associated to this entire complex, we need to rewrite these blends-of-blends such that they are composable. That is, such that the source of one blend is the target of another. This is illustrated in \autoref{fig:domain-walls}(c), which as a composition of blends-of-blends is explicitly
\begin{equation}\label{eqn:FullSimplex}
    \begin{tikzcd}
        \id = \left( \id \BlendVia{\id} \id \right)  \\
        \resU_g \bdyU_{h,k} \resU_g^{-1} = \left( \id \BlendVia{\resU_g} \repU_g \BlendVia{\repU_g \resU_h} \repU_{gh} \BlendVia{\repU_{gh} \resU_k} \repU_{ghk} \InverseBlendVia{\repU_g \resU_{hk}} \repU_g \InverseBlendVia{\resU_g} \id \right) 
        \arrow[u,Rightarrow,"\overline{\resU_g \widetilde{\bdyU}_{h,k} \resU_g^{-1}}",swap]\\
        \bdyU_{g,h}\bdyU_{gh,k} =
        \left( \id \BlendVia{\resU_g} \repU_g \BlendVia{\repU_g\resU_h} \repU_{gh} \BlendVia{\repU_{gh}\resU_k} \repU_{ghk} \InverseBlendVia{\resU_{ghk}} \id\right)
        =(\resU_g \bdyU_{h,k} \resU_{g}^{-1}) \bdyU_{g,hk} 
        \arrow[u,Rightarrow,"\overline{(\resU_g \bdyU_{h,k} \resU_g^{-1})\widetilde{\bdyU}_{g,hk}}",swap] \\
        \bdyU_{g,h} = \left(\id \BlendVia{\resU_g} \repU_g \BlendVia{\repU_g\resU_h} \repU_{gh} \InverseBlendVia{\resU_{gh}} \id\right) \arrow[u,Rightarrow,"\bdyU_{g,h}\widetilde{\bdyU}_{gh,k}",swap] \\
        \id = \left( \id \BlendVia{\id} \id \right) \arrow[u,Rightarrow,"\widetilde{\bdyU}_{g,h}",swap]
    \end{tikzcd}.
\end{equation}
\end{widetext}
Evaluating the vertical compositions using the composition and inversion rules for \(\Pi(\QCA_{d-1})\) gives
\begin{equation}
    \widetilde{\bdyU}_{g,h}  \widetilde{\bdyU}_{gh,k} \widetilde{\bdyU}^{-1}_{g,hk}   (\resU_g \widetilde{\bdyU}^{-1}_{h,k} \resU_g^{-1}) = \mathcal{W}_{g,h,k},
\end{equation}
as claimed. Thus, the homotopy class of this surface is precisely the blend class of \(\mathcal{W}_{g,h,k}\), and the presentations of the \(\grpH{3}{G}{\QCAblendclass{d-2}{}}\) invariant from \autoref{sec:CohomologyInvariants} and \autoref{sec:HomotopyTheory} agree.

\bibliography{lattice_anomalies}

\end{document}